\documentclass[12pt]{article}

\usepackage{graphicx}
\graphicspath{{figs/}{}}

\usepackage{float}
\usepackage[makeroom]{cancel}
\usepackage{color}
\usepackage[normalem]{ulem}
\usepackage{amsmath}
\usepackage{amssymb}
\usepackage{xspace}
\usepackage{pifont}
\usepackage{subcaption}

\usepackage[colorlinks=true,citecolor=blue,linkcolor=blue,urlcolor=blue]{hyperref}
\hypersetup{linktocpage}
\usepackage{tcolorbox}

\usepackage[numbers,sort&compress]{natbib}
\usepackage{environ}
\usepackage{booktabs}
\usepackage{array}
\usepackage{multirow}
\usepackage[export]{adjustbox}
\usepackage{tcolorbox}
\usepackage[dvipsnames]{xcolor}

\usepackage{framed}
\usepackage{physics}
\usepackage{tensor}
\usepackage{array}
\usepackage{xcolor}
\usepackage{float}    % for [H]

\NewEnviron{eqs}{%
\begin{equation}\begin{split}
    \BODY
\end{split}\end{equation}
}
\definecolor{refblue}{RGB}{0,80,180}
\definecolor{refred}{RGB}{140,0,0}
\definecolor{lochiralblue}{RGB}{0,70,140}
\newcommand{\Yes}{\textcolor{ForestGreen}{\textbf{\checkmark}}}
\newcommand{\No}{\textcolor{red}{\textbf{\ding{55}}}}

\newlength{\fighskip} \fighskip=2pt
\newlength{\figvskip} \figvskip=3pt

\newcommand*{\figbox}[2]{{
  \def\figscale{#1}
  \def\arraystretch{0.8}
  \arraycolsep=0pt
  \begin{array}{c}
    \vbox{\vskip\figscale\figvskip
      \hbox{\hskip\figscale\fighskip
        \includegraphics[scale=\figscale]{#2}}}
  \end{array}}}

\usepackage{dcolumn}
\usepackage{bm}
\usepackage{verbatim}
\usepackage{amscd}
\usepackage{amsfonts}
\usepackage{setspace}
\usepackage{amsthm}
\usepackage{enumerate}
\usepackage{mathtools}

\theoremstyle{plain}
\newtheorem{principle}{Principle}
\theoremstyle{plain}

\theoremstyle{plain}

\theoremstyle{plain}

\theoremstyle{plain}
\newtheorem{theorem}{Theorem}
\theoremstyle{plain}
\newtheorem{lemma}{Lemma}
\theoremstyle{plain}
 
\theoremstyle{plain}

\theoremstyle{remark}

\theoremstyle{conjecture}
\newtheorem{conjecture}{Conjecture}
\theoremstyle{observation}
\newtheorem{observation}{Observation}
\theoremstyle{definition}

\theoremstyle{corollary}
\newtheorem{corollary}{Corollary}
\theoremstyle{definition}

\theoremstyle{definition}
\newtheorem{result}{Result}
\theoremstyle{result}
\newtheorem{definition}{Definition}
\theoremstyle{assumption}

\theoremstyle{definition}

\theoremstyle{problem}

\theoremstyle{fact}
\newtheorem{fact}{Fact}

\newcommand{\Z}{\mathbb{Z}}

\newcommand{\Edge}{{\mathcal{E}}}

\newcommand{\DL}[1]{ { \color{cyan} \footnotesize (\textsf{DL}) \textsf{\textsl{#1}} } }

\newcommand{\TE}[1]{ { \color{teal} \footnotesize (\textsf{TE}) \textsf{\textsl{#1}} } }

\definecolor{DeepForestGreen}{RGB}{20,110,20}

\usepackage{authblk}

\begin{document}

\title{
\bf %\boldmath 
Many-body chirality of topological \\ stabilizer states
}

\author[1,2,3]{{Tyler D.~Ellison}}
\author[3,4]{Dongjin Lee}
\author[5]{Zhi Li}
\author[6,3]{Amin Moharramipour}
\author[3,4]{\protect\\Yasamin Panahi}
\author[3]{Beni Yoshida}
% \author[3]{Beni Yoshida%
% \thanks{\href{mailto:byoshida@perimeterinstitute.ca}{byoshida@perimeterinstitute.ca}.}}

\affil[1]{Department of Physics and Astronomy, Purdue University, West Lafayette, IN, 47907, USA}
\affil[2]{Purdue Quantum Science and Engineering Institute, Purdue University, West Lafayette, IN, 47907, USA}
\affil[3]{Perimeter Institute for Theoretical Physics, Waterloo, Ontario N2L 2Y5, Canada}
\affil[4]{Department of Physics and Astronomy, University of Waterloo, Waterloo, Ontario N2L 3G1, Canada}
\affil[5]{IBM Quantum, T.J. Watson Research Center, Yorktown Heights, NY 10598, USA}
\affil[6]{Department of Physics, University of Toronto, Toronto, Ontario M5S 1A7, Canada}
% \affil[*]{{The authors are listed alphabetically.}}

\date{}

\maketitle

\begingroup
% \renewcommand\thefootnote{}
% \footnotetext{\normalfont\footnotesize *Authors are listed alphabetically.}
\renewcommand{\thefootnote}{\fnsymbol{footnote}}
\footnotetext[1]{Authors are listed alphabetically.}
\endgroup

% ---------- Abstract ----------
\begin{abstract}

A defining feature of chirality is the distinction between a system and its mirror image. 
Despite extensive experimental observations of chiral phases and theoretical advances, a quantum-information-theoretic characterization of chirality based solely on the entanglement structure of many-body quantum states remains elusive. Here, we introduce the notion of many-body chirality by formulating it as an obstruction to transforming a quantum state into its complex conjugate through finite-depth local operations. We rigorously establish many-body chirality for stabilizer realizations of $\mathbb{Z}_d^{(k)}$ anyon theories, proving that complex conjugation can be implemented by local quantum channels if and only if the underlying anyon data are mirror invariant. This reveals forms of chirality that evade conventional diagnostics, including examples with vanishing modular commutator, vanishing chiral central charge, and commuting-projector realizations. We further show that this obstruction is intrinsically four-partite, while invisible to tripartite entanglement structure. Finally, we prove that $\mathbb{Z}_d^{(k)}$ states with $d>2$ possess intrinsic many-body imaginarity: their complex phase structure cannot be removed by finite-depth local unitaries. Remarkably, this includes states that are not many-body chiral.

\end{abstract}
\newpage
\tableofcontents

%========================
\section{Introduction}\label{sec:introduction}
%========================

One of the central goals of quantum many-body physics is to understand how complex collective phenomena emerge from simple microscopic degrees of freedom. 
Among these, chirality describes an intrinsic absence of symmetry between the system and its mirror image. 
Chirality plays a fundamental role across physics, from chiral fermions and parity violation in high-energy physics to chiral edge transport in quantum Hall systems.
It is also ubiquitous in many-body systems, as many experimentally realized topological phases, including candidate platforms for topological quantum computation, exhibit intrinsically chiral behavior. Despite this, the characterization of chirality in strongly interacting quantum systems continues to pose a fundamental challenge.

Conventionally, chirality has been diagnosed through dynamical responses to external perturbations, for example via chiral edge transport in quantum Hall systems. 
In some cases, it also manifests as an intrinsic obstruction to constructing local commuting projector Hamiltonians with gapped boundaries~\cite{Kitaev_2006}. 
More generally, it is widely believed that, in topological phases, chirality leaves a robust imprint on the statistics of quasiparticle excitations.

Despite these perspectives, it remains unclear whether chirality can be characterized directly from the entanglement structure of a single ground-state wavefunction. 
Indeed, most existing quantum information-theoretic frameworks for characterizing many-body entanglement fail to capture chiral phases.
This raises a fundamental question: can chirality be detected purely from the entanglement structure of a quantum state?

A key quantitative characterization of chirality in two-dimensional topological phases is given by the \emph{chiral central charge} $c_-$, which governs universal edge properties such as thermal Hall conductance. 
The chiral central charge can be extracted directly from the ground state through a real-space formulation of the TKNN invariant for free-fermion systems~\cite{bellissard1994noncommutative,Kitaev_2006}. 
This approach was later generalized to interacting systems, leading to the notion of the modular commutator~\cite{PhysRevLett.128.176402, PhysRevB.106.075147}. For a many-body density matrix $\rho_{ABC}$, the modular commutator is defined as
\begin{align}
J(A,B,C) \equiv i \Tr(\rho_{ABC}[K_{AB}(\rho),K_{BC}(\rho)]) = \frac\pi3 c_{-},
\end{align}
where $K_{AB}(\rho) = -\log \rho_{AB}$ and $c_-$ is the chiral central charge.
A higher central charge generalization~\cite{PhysRevLett.132.016602} and R\'{e}nyi generalizations have also been proposed~\cite{sheffer2025probingchiraltopologicalstates,gass2025renyilikeentanglementprobechiral}.

Although the modular commutator correctly reproduces the chiral central charge for certain gapped ground states, it is not a complete diagnostic of chirality. In particular, it can be reproduced by spurious short-range entangled states~\cite{Gass_2024}. Moreover, there exist many-body systems that exhibit some signatures of chirality while having a vanishing modular commutator, including 3D Walker–Wang models and certain 2D stabilizer mixed states~\cite{walker201131tqftstopologicalinsulators, lee2025chiralcolorcode}.\footnote{
For stabilizer mixed states, the modular commutator necessarily vanishes since the reduced density matrices $\rho_{AB}$ and $\rho_{BC}$ are sums of commuting stabilizers, and thus $[\rho_{AB},\rho_{BC}]=0$.
}
Importantly, these systems are not merely artificial constructions, but arise naturally in settings relevant to quantum information and topological phases, including models for fault-tolerant quantum computation and higher-dimensional topological systems. 
These examples pose a fundamental tension in how chirality should be characterized. 
Then, what is the appropriate notion of chirality for many-body quantum states?

\subsubsection*{LO-chirality}

In many-body physics, pure-state phases of matter are characterized by finite-depth local unitary (LU) circuits, where two pure states belong to the same phase if they are related by such a circuit.
For mixed states, this notion of phase equivalence is naturally generalized by replacing LU circuits with local operations (LO), also referred to as local quantum channels~\cite{Coser2019classificationof,PhysRevX.14.031044}.

In this paper, we introduce a notion of chirality that can be diagnosed directly from a single many-body wavefunction. Here, we consider chirality as an 
% Our definition is motivated by the observation that time-reversal symmetry acts as complex conjugation, so that chirality can be understood as an
obstruction to relating a state $\rho$ to its complex conjugate $\rho^*$ by local transformations.
\begin{definition}[LU- and LO-chirality]\label{def:lu-lo-chirality}
Let $\rho$ be a many-body quantum state defined on a lattice.
We say that $\rho$ is \emph{LU-chiral} if
\begin{eqnarray}
\rho^\ast \neq U \rho U^\dagger
\end{eqnarray}
for any finite-depth unitary circuit $U$.
We say that $\rho$ is \emph{LO-chiral} if
\begin{eqnarray}
\rho^\ast \neq \Tr_A \left( U (\rho \otimes |0^n \rangle_A \langle 0^n|) U^\dagger \right)
\end{eqnarray}
for any finite-depth unitary $U$ and constant-density ancilla state $|0^n\rangle_A$.
\end{definition}
In this work, we do not impose any symmetry constraints on the finite-depth unitary $U$.
While LU- and LO-chirality coincide for the models studied in this work, they are inequivalent notions in general and can lead to different classifications of many-body states.
We comment briefly on this distinction in Sec.~\ref{subsec:LOvsLu}.

\subsubsection*{Mirror invariance of abelian anyons}

The goal of this work is to study LO-chirality in two-dimensional topological phases. 
To isolate the essential physics, we restrict attention to phases with abelian anyons.
 
In a two-dimensional topological phase, the anyon content can be described by a finite set, together with their braiding statistics and topological spins
\begin{align}
\mathcal{A} = \{ a_1, \cdots, a_n \},\qquad B(a_i,a_j), \ \theta(a_i) \in U(1).
\end{align}
Here, $B(a_i,a_j)$ encodes the phase acquired when $a_i$ braids around $a_j$, while $\theta(a_i)$ characterizes the self-statistics of $a_i$. These data fully determine an abelian anyon theory \cite{Etingof:2015TensorCategories,Davydov2010TheWG}.

Not all anyon theories admit the same type of microscopic realization. A special class of abelian anyon theories, namely those admitting a \emph{Lagrangian subgroup}, can be realized as ground states of commuting-projector Hamiltonians, such as string-net models~\cite{LevinWen2005,LinLevin2014}. 
By contrast, abelian anyon theories without a Lagrangian subgroup are widely believed not to admit commuting-projector realizations~\cite{Levin_2013}. 
Instead, they are expected to be realized by non-commuting Hamiltonians with protected edge modes~\cite{LinLevin2014,Wen1990chiralLuttinger,Wen1995edgeexcitations,WenZee1992,yang2026classification}.
Interestingly, such theories can nevertheless be realized as mixed states~\cite{Sohal_2025, Ellison_2025}, supported on extensively degenerate ground-state subspaces of commuting-projector models. 

Two anyon theories $\mathcal{A}$ and $\mathcal{A}'$ are said to be \emph{isomorphic}, $\mathcal{A} \simeq \mathcal{A}'$, if they can be related by a relabeling of anyons that preserves their braiding and topological spin data. It is generally expected that isomorphic anyon theories correspond to the same phase, although a complete proof is known only in certain cases, such as when a Lagrangian subgroup is present~\cite{KapustinSaulina2010,LinLevin2014}.\footnote{For instance, to the best of our knowledge, it remains an open question whether mixed-state realizations and non-commuting Hamiltonian realizations of the same anyon theory necessarily belong to the same phase.}

Given an anyon theory $\mathcal{A}$, we define its complex conjugate theory $\mathcal{A}^*$ by conjugating all statistical phases:
\begin{align}
\mathcal{A}^* = \{ \overline{a_1}, \cdots, \overline{a_n} \}, \qquad
B(\overline{a_i}, \overline{a_j}) = B(a_i, a_j)^*, \quad
\theta(\overline{a_i}) = \theta(a_i)^*.
\end{align}
This transformation corresponds physically to reversing the orientation of spacetime. 
This motivates the following notion.

\begin{definition}
An abelian anyon theory $\mathcal{A}$ is said to be \emph{mirror invariant} if
\begin{align}
\mathcal{A} \simeq \mathcal{A}^*,
\end{align}
that is, if $\mathcal{A}$ is isomorphic to its complex conjugate.
\end{definition}

From a physical perspective, one expects that LO-chirality will be reflected in the failure of mirror invariance. This leads to the following guiding principle:

\begin{principle}[Chirality and mirror symmetry]
A physical realization of a two-dimensional abelian anyon theory $\mathcal{A}$ is expected to be LO-chiral if and only if $\mathcal{A}$ is not mirror invariant.
\end{principle}

At first sight, this principle may appear to be well established. 
In particular, one might expect it to follow directly from the widely held belief that anyon content fully characterizes a topological phase. Indeed, many previous works implicitly adopt this viewpoint.
However, a precise formulation and proof of this statement turn out to be highly nontrivial, involving both conceptual and technical subtleties.

Early rigorous results focused on detecting long-range entanglement rather than distinguishing different anyon theories.
It was first proved that the toric code ground state on a torus cannot be prepared by a finite-depth quantum circuit, using Lieb–Robinson bounds~\cite{Bravyi_2006}.
The argument, however, relies crucially on ground-state degeneracy.
Later an analytical proof of long-range entanglement on a sphere was provided by introducing the notion of twist products, which beyond establishing long-range entanglement, also computes the $S$ matrix of the underlying anyon data~\cite{Haah_2016}.
A related argument was also given by Bravyi, although it remains unpublished.

In contrast, much less is known about rigorously separating phases with distinct anyon content~\cite{kim2024classifying2dtopologicalphases}.
The situation is even more subtle for mixed states, where the notion of topological order itself requires refinement.
It has been conjectured that mixed-state topological order is classified by premodular anyon theories~\cite{Sohal_2025,Ellison_2025}, and this conjecture has recently been established at a mathematical level in~\cite{ogata2025mixedstatetopologicalorder} using an operator-algebraic framework under the assumption of approximate Haag duality.

In this work, we establish this principle rigorously for a family of stabilizer mixed-state realizations of abelian anyon theories, namely $\mathbb{Z}_d^{(k)}$ theories. 
These are the subtheories of $\Z_d$ toric codes generated by the anyon $em^k$, and admit particularly simple commuting-projector realizations with vanishing modular commutator.
While this family does not exhaust all abelian anyon theories, it is sufficiently general in the sense that more general theories can be obtained from them, after stacking, via anyon condensation; see Appendix~\ref{app:stacking-Zdk}.

\subsubsection*{Chiral central charge}

It is also instructive to compare the notion of mirror-invariance with conventional diagnostics of chirality of an anyon theory.
The chiral central charge $c_{-}$ can be extracted from anyon data via the Gauss sum
\begin{align}
e^{\frac{2\pi i}{8}c_{-}} = \frac{1}{\sqrt{|\mathcal{A}|}}\sum_{a \in \mathcal{A}}\theta(a).
\end{align}
From this expression, one can deduce that anyon theories with $c_{-}\not=0,4 \ (\mathrm{mod}\ 8)$ are necessarily not mirror invariant, and hence their physical realizations must be LO-chiral.
By contrast, there exist theories with $c_{-}=4 \ (\mathrm{mod}\ 8)$ that are LO-non-chiral in our framework.
An important example is the three-fermion (3F) theory, which admits a qubit stabilizer mixed-state realization.

More importantly, we find that chiral central charge is not a faithful diagnostic of LO-chirality.
In particular, there exist families of anyon theories, such as $\mathbb{Z}_{p^2}^{(1)}$ with $p \equiv 3 \ (\mathrm{mod}\ 4)$ that have vanishing chiral central charge (modulo $8$) but are nevertheless LO-chiral due to the lack of mirror invariance.
Furthermore, these models admit pure-state realizations with commuting stabilizer Hamiltonians.
They therefore provide examples of topological phases with gapped boundaries that are nonetheless LO-chiral.\footnote{ 
One may also consider higher Gauss sums
$\tau_n \equiv \frac{1}{\sqrt{|\mathcal{A}|}}\sum_{a \in \mathcal{A}}\theta(a)^n$,
which encode additional topological information of anyon statistics~\cite{Kaidi_2022, geiko2022chernsimons}.
However, the resulting chirality criteria are typically more intricate and involve nontrivial number-theoretic structures.
}

\subsection{Summary of results}

\subsubsection*{LO-chirality}
We begin with LO-chirality for $\mathbb{Z}_d^{(k)}$ anyon theories. Our first main result is the following.
\begin{result}[LO-chirality]
A stabilizer mixed state supporting $\mathbb{Z}_d^{(k)}$ abelian anyon theory is LO-chiral if and only if its anyon content is not mirror invariant. Moreover, a pure state supporting $\mathbb{Z}_{p^{2k}}^{(1)}$ anyon theory is LO-chiral if and only if its anyon content is not mirror invariant.
\end{result}

For mixed-state realizations, this result shows that a broad class of states in this family possesses intrinsic chirality, even though conventional entanglement-based diagnostics such as the modular commutator vanishes identically. A representative class is given by $\mathbb{Z}_p^{(1)}$ with odd prime $p$ satisfying $p \equiv 3 \ (\mathrm{mod}\ 4)$, whose anyon theories have nonzero chiral central charge. 
This phenomenon also extends to pure-state realizations. In particular, the $\mathbb{Z}_{p^2}^{(1)}$ theories with odd prime $p$ satisfying $p \equiv 3 \ (\mathrm{mod}\ 4)$ provide examples of LO-chiral pure states with vanishing chiral central charge. These models are especially striking because they admit commuting-projector realizations and gapped boundaries, yet remain LO-chiral.

The key idea behind the proof is as follows.
Suppose that a finite-depth local unitary (or more generally, a local quantum channel) maps $\rho^*$ to $\rho$ despite $\mathcal{A}\not\simeq \mathcal{A}^*$. 
This would transform a realization of $\mathcal{A}^*$ into $\rho$, so that the state $\rho$ support excitations\footnote{Here, and throughout, we use the term ``excitation'' loosely. This can be made more precise for a mixed state by considering its canonical purification, as in Sec.~\ref{sec:2.2}.} corresponding to both $\mathcal{A}$ and $\mathcal{A}^*$, leading to a contradiction.

The nontrivial part is to make this argument rigorous. While it may appear obvious that $\rho$ supports only the anyons in $\mathcal{A}$, one must exclude possibilities, such as finely tuned string operators generating excitations outside $\mathcal{A}$ or coherent superpositions of different anyon charges.
To address this, we establish conservation and factorization of anyon charges, ensuring that excitations created by local operators carry well-defined, localized charges in $\mathcal{A}$.
We then show that anyon statistics depends only on the anyon type and is invariant under local transformations. We prove this by showing that anyon excitations can be locally concentrated into point-like objects through a recovery argument analogous to the proof of Uhlmann’s theorem. This enables a direct comparison of the statistical phases associated with $\mathcal{A}$ and $\mathcal{A}^*$.

As a byproduct of our proof, we show that abelian anyon theories $\mathcal{A}$ and $\mathcal{A}'$ that are not isomorphic cannot be related by two-way LO transformations, and therefore correspond to distinct quantum phases.
While this is widely expected, our proof clarifies which microscopic properties, such as charge conservation, factorization, and locality, are responsible for this distinction.

Finally, our result has a corollary concerning the classification of mixed-state phases.
Noting that the canonical purification $|\sqrt{\rho}\rangle$ encodes all information about the mixed state $\rho$, one might expect that mixed-state phases can be classified by studying the phases of their canonical purifications. However, we show that this expectation is not correct: two mixed states may belong to distinct phases even when their canonical purifications are related by local unitaries.

Let $\rho$ be an LO-chiral $\mathbb{Z}_{d}^{(1)}$ mixed state and $\rho^*$ its complex conjugate. Their canonical purifications are given by
\begin{align}
|\sqrt{\rho}\rangle \propto (\rho \otimes I )|\mathrm{EPR}\rangle,\qquad
|\sqrt{\rho^*}\rangle \propto (\rho^* \otimes I )|\mathrm{EPR}\rangle,
\end{align}
where we use that $\rho$ has a flat spectrum, so that $\sqrt{\rho}\propto \rho$.
We note that $|\sqrt{\rho}\rangle$ and $|\sqrt{\rho^*}\rangle$ belong to the same pure-state phase, since
\begin{align}
\mathrm{SWAP} |\sqrt{\rho}\rangle \propto (I \otimes \rho )|\mathrm{EPR}\rangle = (\rho^{*} \otimes I )|\mathrm{EPR}\rangle \propto |\sqrt{\rho^*}\rangle \ .
\end{align}
Here, $\mathrm{SWAP}$ can be realized by a finite-depth unitary circuit. We also used the identity $O\otimes I|\mathrm{EPR}\rangle = I \otimes O^T |\mathrm{EPR}\rangle$ together with the Hermiticity of $\rho$.

We conclude that classification based on canonical purification fails to distinguish certain mixed-state phases, in particular $\rho$ and $\rho^*$.

\begin{corollary}[Purification is not sufficient]
For an LO-chiral $\mathbb{Z}_{d}^{(1)}$ mixed state $\rho$, we have
\begin{align}
\rho \not\simeq \rho^*, \qquad
|\sqrt{\rho}\rangle \simeq |\sqrt{\rho^*}\rangle.
\end{align}
In other words, classification of pure-state phases via canonical purification is insufficient for classifying mixed-state topological phases.
\end{corollary}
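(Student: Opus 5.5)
The corollary has two halves, and I will prove them separately. The first half, $\rho\not\simeq\rho^*$, will not be reproved here. By hypothesis $\rho$ is an LO-chiral $\mathbb{Z}_d^{(1)}$ mixed state. By Definition~\ref{def:lu-lo-chirality}, this means no finite-depth unitary circuit together with ancillas and partial trace maps $\rho$ to $\rho^*$. Hence $\rho$ and $\rho^*$ are not related by local operations and lie in distinct mixed-state phases. The nontrivial input is the LO-chirality result, which I take as given. It applies whenever $\mathbb{Z}_d^{(1)}$ is not mirror invariant, for example $d=p$ or $d=p^2$ with $p\equiv 3 \pmod 4$.

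For the second half I use the canonical purification. The key structural fact is that a stabilizer mixed state is proportional to a projector, $\rho = 2^{-?}\!\cdot\! P$; more precisely $\rho = P/\Tr P$, where $P=\prod_g \frac{1}{d}\sum_{j}g^j$ is the projector onto the joint $+1$ eigenspace of the stabilizer group. Hence $\rho$ has a flat spectrum and $\sqrt{\rho}\propto\rho$. This gives
\begin{align}
|\sqrt{\rho}\rangle\propto(\rho\otimes I)|\mathrm{EPR}\rangle, \qquad |\sqrt{\rho^*}\rangle\propto(\rho^*\otimes I)|\mathrm{EPR}\rangle,
\end{align}
where $|\mathrm{EPR}\rangle=\bigotimes_{\text{sites}}\sum_{i}|i\rangle|i\rangle$ is a product of on-site maximally entangled qudit pairs.

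Next I apply the transpose trick $(O\otimes I)|\mathrm{EPR}\rangle=(I\otimes O^T)|\mathrm{EPR}\rangle$, together with $\rho^T=\rho^*$, which holds by Hermiticity. This gives $(\rho\otimes I)|\mathrm{EPR}\rangle=(I\otimes\rho^*)|\mathrm{EPR}\rangle$. The state $|\mathrm{EPR}\rangle$ is invariant under swapping the system and purifying copies, since it is symmetric on each site. Therefore
\begin{align}
\mathrm{SWAP}\,(I\otimes\rho^*)|\mathrm{EPR}\rangle=(\rho^*\otimes I)|\mathrm{EPR}\rangle ,
\end{align}
which is proportional to $|\sqrt{\rho^*}\rangle$. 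The normalizations agree because $\Tr\rho^2=\Tr(\rho^*)^2$.

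Finally, I check locality. I pair each physical qudit with its purifying partner on the same site. Then $\mathrm{SWAP}=\bigotimes_{\text{sites}}\mathrm{SWAP}_{\text{site}}$ is a depth-one circuit of on-site gates. So $|\sqrt{\rho}\rangle\simeq|\sqrt{\rho^*}\rangle$ as pure-state phases.

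The only step needing care is making sure the flat-spectrum claim and the site-wise pairing hold for the particular stabilizer realization of $\mathbb{Z}_d^{(1)}$. Both hold for any stabilizer mixed state. The genuine content, namely LO-chirality, comes from the main theorem.
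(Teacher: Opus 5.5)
Your proof is correct and matches the paper's argument. The flat spectrum gives $\sqrt{\rho}\propto\rho$; the transpose identity on $|\mathrm{EPR}\rangle$ with $\rho^T=\rho^*$, followed by the on-site (depth-one) SWAP, yields $|\sqrt{\rho^*}\rangle$; and $\rho\not\simeq\rho^*$ is just the LO-chirality hypothesis. Minor points: delete the stray ``$2^{-?}$'', and note that the flat-spectrum step is not actually needed, since $(\sqrt{\rho})^T=\sqrt{\rho^*}$ holds for any density matrix.
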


\subsubsection*{$n$-partite chirality}

Next, we address the question concerning the entanglement structure required to realize many-body chirality. 
Rather than imposing geometric locality constraints, we restrict the allowed transformations to have a multipartite structure, and ask whether the state can be mapped to its complex conjugate. 
This leads to the notion of $n$-partite chirality, introduced in Ref.~\cite{vardhan2025chiralitymagicquantumcorrelations}. 

\begin{definition}[$n$-partite chirality]\label{def:n-partite-chirality}
Let $\rho_{A_1A_2\cdots A_n}$ be a many-body quantum state partitioned into subsystems $A_1,A_2,\dots,A_n$.
We say that $\rho_{A_1A_2\cdots A_n}$ is \emph{$n$-partite LO-chiral} if
\begin{eqnarray}
\rho_{A_1A_2\cdots A_n}^* \neq \Tr_{R} \big( U (\rho_{A_1A_2\cdots A_n} \otimes |0^n\rangle_{R} \langle 0^n|) U^\dagger \big)
\end{eqnarray}
for any unitary of the form $U = \bigotimes_{j=1}^n U_{A_j R_j}$ and any constant-density ancilla state $|0^n\rangle_R$. 
\end{definition}
This definition characterizes the minimal multipartite structure needed to detect chirality.
A previous work~\cite{vardhan2025chiralitymagicquantumcorrelations} showed that qubit stabilizer states are not $n$-partite chiral for any $n$.
This no-go result can be circumvented in qudit stabilizer systems, as we demonstrate.

\begin{result}[$n$-partite chirality]
A stabilizer mixed state supporting $\mathbb{Z}_d^{(k)}$ abelian anyon theory is four-partite chiral if and only if the $\mathbb{Z}_d^{(k)}$ anyon theory is not mirror-invariant. At the same time, these states are not three-partite chiral: their tripartite entanglement reduces to combinations of EPR- and GHZ-type correlations.
\end{result}

The proof of four-partite chirality follows a similar strategy to that of LO-chirality, but involves additional subtleties.
The key challenge is that the subsystems are not geometrically separated, which obscures localization of anyonic excitations.
In particular, this makes it nontrivial to assign charges near tri-junctions where three subsystems meet.
To address this, we introduce the notion of \emph{edge charges}, which assign anyonic charge to boundaries between subsystems. 
This allows us to establish conservation and factorization of charges even in the absence of spatial locality.

Our result has a key implication. 
We have already argued that the modular commutator, which is tri-partite entanglement measure, does not serve as faithful measure of chirality. 
This naturally raises the question of whether a genuine entanglement-based measure of chirality exists, and if so, what minimal degree of multipartiteness is required.
Our result suggests that any such measure for mixed states must involve at least four parties, as no tripartite entanglement measure can faithfully capture an obstruction to complex conjugation.

Finally, we highlight an important subtlety. 
Given that the LO-chirality proof extends straightforwardly to pure-state realizations, one might expect that the four-partite chirality result also extends. 
However, this is not the case. 
We point out an obstruction in the $\mathbb{Z}_{p^{2k}}^{(1)}$ models that prevents a straightforward extension of our proof to pure states.

\subsubsection*{Many-body imaginarity}

The notion of chirality naturally raises a fundamental question concerning the role of complex phases in quantum many-body systems. 
While physical observables are Hermitian and therefore have real eigenvalues, quantum states themselves are described by complex amplitudes. 
This raises the question of whether complex phases are an essential feature of quantum mechanics, or whether they can be removed by local basis transformations. See Refs.~\cite{Renou_2021, hoffreumon2025quantumtheorydoesneed,ying2025quantumtheoryneedscomplex} for recent debates on this issue. 

From a many-body perspective, this motivates the following notion. If a quantum state cannot be transformed into a representation with real coefficients by any local basis transformation, we say that it possesses an intrinsic complex phase structure. 
We formalize this notion as follows.

\begin{definition}[LU-imaginarity]
Let $\rho$ be a many-body quantum state defined on a lattice.
We say that $\rho$ is \emph{LU-imaginary} if
\begin{eqnarray}
\sigma \neq \sigma^* \qquad \sigma = U \rho U^\dagger,
\end{eqnarray}
for any finite-depth unitary circuit $U$.\footnote{LO-imaginarity would be trivial as a depolarizing channel transforms any $\rho$ into a maximally mixed state.}
\end{definition}

Our notion of LU-imaginarity is closely related to recent developments in the resource theory of imaginarity~\cite{Wu_2021,Wu_2024}. 
Here, we extend this perspective to many-body systems by adopting LU transformations as free operations.

By construction, LU-imaginarity is a weaker condition than LU-chirality: any LU-chiral state is necessarily LU-imaginary. A natural question is whether there exist many-body quantum states that are LU-imaginary but not LU-chiral.

\begin{result}[LU-imaginarity]
All stabilizer mixed states supporting $\mathbb{Z}_d^{(1)}$ anyon theory for $d>2$ are LU-imaginary. Moreover, all pure-states supporting $\mathbb{Z}_{p^{2k}}^{(1)}$ anyon theory are LU-imaginary. 
\end{result}

This implies that any lattice realization of the state necessarily involves intrinsically complex amplitudes (in a local basis) that cannot be removed by local transformations.
Notably, this includes examples that are not LO-chiral. The simplest instance is the $\mathbb{Z}_5^{(1)}$ model, which is LU-imaginary but LO-non-chiral. 
% This is particularly striking as, at the level of the effective Lagrangian in topological quantum field theory, the theory appears to be real~\cite{geiko2022chernsimons}.\DL{I am actually not sure about this statement. In~\cite{geiko2022chernsimons}, the necessary condition that MTC to be time reversal is that its all of higher Gauss sum is 1. (page 28) But $\mathbb{Z}_5^{(1)}$ theory has non-trivial higher Gauss sum.} \TE{Right, it is consistent with the field-theory predictions.}

Our result can be also viewed as a many-body generalization of the necessity of complex phases in topological wavefunctions. In particular, Hastings showed that the double semion ground state exhibits an intrinsic sign structure, in the sense that no local basis transformation can render all amplitudes nonnegative~\cite{Hastings_2015}.
In contrast, our result demonstrates that $\mathbb{Z}_d^{(1)}$ anyon theories ($d>2$) exhibit an intrinsic \emph{imaginary} structure where complex phases cannot be eliminated by any local transformation.

\subsection*{Plan of the paper}

The paper is organized as follows. In Sec.~\ref{sec2}, we introduce stabilizer mixed (and in some cases, pure) states that support $\mathbb{Z}_d^{(k)}$ anyons and present a concrete criterion for mirror invariance.
In Sec.~\ref{sec:3}, we prove the LO-chirality theorem, showing that $\rho \mapsto \rho^\ast$ is possible if and only if the underlying anyon theory is mirror invariant.
In Sec.~\ref{sec4:four=partite-proof}, we consider $n$-partite chirality and show that our stabilizer mixed states are four-partite chiral but nonchiral under three-partition. In Sec.~\ref{sec5}, we prove intrinsic imaginarity for stabilizer mixed states. In Sec.~\ref{sec6:discussion}, we conclude with discussions.

%\newpage

\section{Honeycomb model and $\mathbb{Z}_d^{(k)}$ anyons}
\label{sec2}

In this section, we introduce a concrete family of two-dimensional stabilizer mixed-state models realizing $\mathbb{Z}_d^{(k)}$ anyons, and review the associated anyon theory, including its braiding statistics, topological spins, and the mirror conjugation. 
We focus on this family as it admits an explicit lattice construction while remaining sufficiently broad.
In fact, arbitrary modular abelian anyon theories can be realized by stacking multiple copies of $\mathbb{Z}_d^{(k)}$ and condensing anyons~\cite{Davydov2010TheWG, Davydov2013}.

We first introduce the honeycomb stabilizer model in Sec.~\ref{sec:2.1}. We then discuss strong and weak symmetries of the model in Sec.~\ref{sec:2.2}. In Sec.~\ref{sec:2.3}, we explicitly show that its strong symmetries realize $\mathbb{Z}_d^{(k)}$ anyons by computing anyon statistics using string operators. The necessary and sufficient conditions for $\mathbb{Z}_d^{(k)}$ anyon theory to be invariant under complex conjugation are derived in Sec.~\ref{sec:2.4}.

\subsection{Honeycomb model}
\label{sec:2.1}

We start by introducing an explicit mixed state that realizes the $\mathbb{Z}_d^{(k)}$ anyon theory on a honeycomb lattice, inspired by the construction in Ref.~\cite{lee2025chiralcolorcode}. In general, the mixed state can be defined on any tri-valent, three-colorable lattice, but for simplicity, we focus on the honeycomb lattice.
We use labels $A,B,C$ for the colors of the plaquettes of the honeycomb lattice. We note that the vertices of the honeycomb model are bipartite, meaning that one can assign one of two labels (black or white) to each vertex. 
At each vertex, we assign a single qudit of dimension $d$. 
The model is specified in terms of generalized Pauli operators $X$ and $Z$ satisfying
\[
X^d = Z^d = I, \qquad ZX = \omega XZ ,
\]
where $\omega = e^{2\pi i/d}$.
Furthermore, we work in the computational basis where $X$ matrices are real (shift operator) and $Z$ matrices are diagonal with pure phases (clock operator).

\begin{definition}[Honeycomb model]\label{def:honeycombmodel}
The honeycomb model for the realization of $\mathbb{Z}_d^{(k)}$ anyons is defined by the stabilizer mixed state $\rho_{\cal S}$, associated with the stabilizer group ${\cal S}$ generated by local plaquette stabilizers. The plaquette stabilizer generators are given by 
\begin{align}
S_{f_A} = \figbox{3.0}{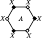}\ , \qquad S_{f_B} = \figbox{3.0}{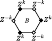}\ , \qquad S_{f_C} = \figbox{3.0}{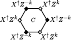} \label{eq:stabilizer-generators-mixed_state} \ .
\end{align}
Here $f_A$, $f_B$, and $f_C$ denote faces of color $A$, $B$, and $C$, respectively, and $d$ and $k$ are taken to be coprime.\footnote{The construction simply generalizes to the case when $d$ and $k$ are not coprime by dropping the $k$ exponent of the $Z$ operators in $S_{f_B}$.}
The associated stabilizer mixed state is the normalized projector onto the common $+1$ eigenspace of the plaquette stabilizers,
\begin{align}
\rho_{{\cal S}} \equiv \frac{1}{d^n}\sum_{O\in{\cal S}} O ,
\label{eq:mixed_state}
\end{align}
where $d$ is the local dimension of qudits and $n$ is the total number of qudits.
%which form basis for strong symmetries. 
Logical operators are generated from edge operators
\begin{align}
g_{e_{BC}} = \figbox{3.0}{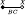} \ , \qquad
g_{e_{AC}} = \figbox{3.0}{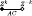} \ , \qquad
g_{e_{AB}} = \figbox{3.0}{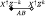} \ \label{eq:edge_operators} \ .
\end{align}
%which form basis for weak symmetries. 
\label{def:honeycomb-mixed}
\end{definition}
Here, the edge label indicates the colors of the two faces adjacent to the edge; for instance, $e_{BC}$ denotes an edge shared by a $B$ face and a $C$ face. The operator assignment is read from the endpoint colors (filled and open dots denote black and white vertices, respectively). Equivalently, for an edge $e=\langle v_\bullet,v_\circ\rangle$ with black endpoint $v_\bullet$ and white endpoint $v_\circ$, we have
\begin{align}
\label{eq:logicalop-endpointlabel}
g_{e_{BC}} = X_{v_\bullet}X_{v_\circ}, \ \qquad
g_{e_{AC}} = Z^k_{v_\bullet}Z^{-k}_{v_\circ}, \ \qquad
g_{e_{AB}} = \bigl(X^\dagger Z^{-k}\bigr)_{v_\bullet}
              \bigl(X^\dagger Z^{k}\bigr)_{v_\circ}.
\end{align}
With the same convention, the plaquette stabilizers in Eq.~\eqref{eq:stabilizer-generators-mixed_state} are generated by products of edge operators around elementary plaquette loops.
This construction naturally arise, for instance, as boundary theories of three-dimensional systems~\cite{lee2025chiralcolorcode} or as mixed states generated by decoherence processes~\cite{Ellison_2023}. 

\subsubsection*{Pure-state realizations}

Some $\mathbb{Z}_d^{(k)}$ anyon theories can be captured by pure states. Here, we focus on a particular subclass of such anyon theories, namely $\mathbb{Z}_{p^{2m}}^{(1)}$. These can be realized as a ground state of a commuting-projector stabilizer Hamiltonian. Notably, the $\mathbb{Z}_{p^{2m}}^{(1)}$ anyon theories have a vanishing chiral central charge, so they do not exhibit the conventional signature of chirality.\footnote{Moreover, they admit a gapped boundary, and hence, can be realized by stabilizer models~\cite{Ellison2022TQDs}.}
By a \emph{pure-state} realization, we mean that the stabilizer Hamiltonian has a unique ground state when defined on a sphere.

\begin{definition}[$\mathbb{Z}_{p^{2m}}^{(1)}$ pure state]\label{def:Zp1pure}
The local stabilizer generators of the $\mathbb{Z}_{p^{2m}}^{(1)}$ pure state are given by 
\begin{eqnarray}
\label{eq:stabilizers-purestate}
S_{f_A} = \figbox{3.0}{fig_SA.pdf}\ , \qquad S_{f_B} = \figbox{3.0}{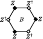}\ , \qquad S_{f_C} = \figbox{3.0}{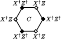} 
\cr\cr
S_{e_{BC}} = \figbox{3.0}{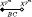} \ , \qquad \quad
S_{e_{AC}} = \figbox{3.0}{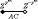} \ , \qquad \;\;
S_{e_{AB}} = \figbox{3.0}{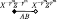} \ \quad
\end{eqnarray}
\end{definition}

This pure-state realization supports the same anyon content as the mixed-state construction.
The key observation is that taking the $p$-th power of the edge operators, $g_e^p$, makes them mutually commuting, allowing them to be included in the stabilizer group. 

Let us verify that this construction indeed defines a pure state on a sphere using a counting argument. For this, we consider an arbitrary tri-valent, three-colorable lattice, since a honeycomb lattice is not consistent with a sphere.
Let $V$, $E$, $F$ be the number of vertices, edges, and faces of the lattice, respectively.
There are $V$ qudits of dimension $p^{2m}$, together with $F-1$ independent plaquette stabilizers of order $p^{2m}$ and $E-F$ independent edge stabilizers of order $p^m$. 
Thus, the number of logical degrees of freedom $N_{\rm logical}$ is 

\begin{eqnarray}
N_{\rm logical} &=& V - (F-1) - \frac{1}{2}(E-F) = \frac{1}{2}(2 - \chi) = 0.
\end{eqnarray}
Here, the contribution of edge stabilizers are reduced by a factor $\frac{1}{2}$, since these stabilizers are of order $p^m$, as opposed to order $p^{2m}$. 
We have also used the tri-valent condition ($3V=2E$) in the second equality, and $\chi=2$ is the Euler characteristic of a sphere. Thus, on a sphere, the stabilizer generators determine a unique pure state. 

In the next section, we show that this pure-state realization of $\mathbb{Z}_{p^{2m}}^{(1)}$ anyons is LO-chiral when $p=3 \ (\mathrm{mod}\ 4)$.
This provides an explicit example of a two-dimensional \emph{stabilizer pure state} exhibiting chirality, despite having chiral central charge $c_{-}=0 \text{ mod } 8$.

\subsection{Strong and weak symmetry}
\label{sec:2.2}

To discuss the anyon theory associated with these stabilizer states, we first recall the notions of strong and weak symmetries in mixed states~\cite{Ellison_2025, Ma_2023,Lee_2025, Ma-Turzillo2025, Sala2024,Lessa_2025,Kuno2024,Orito_2025,Czhang2026}.

\begin{definition}[Strong and weak symmetry]
A unitary operator $O$ is a strong symmetry of $\rho$ if 
\begin{align}
O \rho = \rho = \rho O^{\dagger}. \label{eq:strong}
\end{align}
A unitary operator $g$ is a weak symmetry of $\rho$ if 
\begin{align}
g \rho g^{\dagger} = \rho. \label{eq:weak}
\end{align}
\end{definition}

\begin{fact}\label{fact:strong_commutation}
Weak and strong symmetry operators commute when acting on $\rho$:
\begin{align}
[O,g]\rho = 0 \label{eq:strong_commutation}
\end{align}
where $O$ and $g$ are strong and weak symmetry unitaries, respectively.
\end{fact}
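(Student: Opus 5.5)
We compute $O g \rho$ and $g O \rho$ directly from the defining relations and show that both equal $g\rho$. The only inputs are the two definitions: strong symmetry gives $O\rho=\rho=\rho O^\dagger$, and weak symmetry gives $g\rho g^\dagger=\rho$. Since $g$ is unitary, the weak condition can be rewritten as $g\rho=\rho g$, which says that $g$ commutes with $\rho$.

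\textbf{First term.} Using $g\rho = \rho g$ and then $O\rho=\rho$, we get
\begin{align}
O g \rho = O \rho g = \rho g = g\rho .
\end{align}

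\textbf{Second term.} Applying $O\rho = \rho$ directly gives $g O \rho = g\rho$.

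\textbf{Combining.} Subtracting the two results yields $[O,g]\rho = g\rho - g\rho = 0$, which is the claim.

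If one also wants the right-action version, $\rho[O,g]=0$, the same steps apply on the other side. Taking the adjoint of $\rho O^\dagger=\rho$ gives $O\rho=\rho$, and the defining relation $\rho O^\dagger=\rho$ becomes $\rho = \rho O$ after multiplying on the right by $O$ and using unitarity. With these in hand,
\begin{align}
\rho O g = \rho g = g\rho, \qquad \rho g O = g\rho O = g\rho .
\end{align}

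The only point that needs any care is converting the weak condition $g\rho g^\dagger=\rho$ into the commutation relation $[g,\rho]=0$. This step uses unitarity of $g$, by multiplying on the right by $g$. Everything else is a one-line substitution, so I expect no real obstacle.
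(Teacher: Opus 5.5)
Your proposal is correct and is essentially the paper's argument: the paper uses the same chain $Og\rho = Og\rho g^{\dagger}g = O\rho g = \rho g = g\rho g^{\dagger}g = g\rho = gO\rho$. The only difference is that you state $g\rho=\rho g$ once up front rather than inserting $g^{\dagger}g$ inline, and your right-action version $\rho[O,g]=0$ is a correct addition that the paper does not need.
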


The commutativity holds when acting on $\rho$, even though $O$ and $g$ need not commute as operators on the full Hilbert space. 
Fact~\ref{fact:strong_commutation} follows directly from 
\begin{align}
Og \rho = Og \rho g^{\dagger} g = O \rho g = \rho g = g \rho g^{\dagger} g
= g\rho = g O \rho. 
\end{align}

For a stabilizer mixed state, the notion of strong and weak symmetries naturally emerges from its association with a code~\cite{Sohal_2025,Ellison_2025}.
Let $\rho_{\mathcal{S}}$ be the maximally mixed state stabilized by the Pauli stabilizer group $\mathcal{S}$ (i.e. the maximal entropy state in the $+1$ eigenstate space of stabilizer generators in $\mathcal{S}$).  
\begin{align}
\label{eq:def-stabilizer-mixedstate}
\rho_{\mathcal{S}} \equiv \frac{1}{d^n} \sum_{O \in \mathcal{S}} O
\end{align}
where $d$ is the local dimension of qudits and $n$ is the total number of qudits. 
We can verify the following properties.

\begin{fact}[Stabilizer mixed state]
A stabilizer mixed state $\rho_{\mathcal{S}}$ satisfies:
\begin{enumerate}[i)]
\item Stabilizer operators are strong symmetries of $\rho_{\mathcal{S}}$. 
\item Logical unitary operators of the stabilizer code $\mathcal{S}$ are weak symmetries of $\rho_{\mathcal{S}}$.
\item Any strong symmetry unitary $U$ acts trivially on the code subspace:
\begin{align}
U |\psi\rangle = |\psi\rangle \qquad \text{for all $|\psi\rangle$ in $\rho_{\mathcal{S}}$}.
\end{align}
\end{enumerate}
\end{fact}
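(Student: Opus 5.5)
The three claims will be verified directly from the definition $\rho_{\mathcal{S}} = d^{-n}\sum_{O\in\mathcal{S}} O$. The key tool is that $P \equiv d^{n-r}\rho_{\mathcal{S}}$ is the orthogonal projector onto the code space, where $r$ is the number of independent generators. I would first check that $P$ is idempotent and Hermitian using the group structure. For each $S\in\mathcal{S}$, left multiplication $O\mapsto SO$ is a bijection of $\mathcal{S}$. Hence
\begin{align}
S\,\rho_{\mathcal{S}} = \frac{1}{d^n}\sum_{O\in\mathcal{S}} SO = \rho_{\mathcal{S}} .
\end{align}
Similarly $\rho_{\mathcal{S}} S^\dagger = \rho_{\mathcal{S}}$, since $O\mapsto OS^\dagger$ is also a bijection ($S^\dagger=S^{-1}\in\mathcal{S}$). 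This is exactly the strong symmetry condition~\eqref{eq:strong}, which proves (i). The same argument with $S=O'$ summed over $\mathcal{S}$ gives $P^2=P$. Hermiticity follows because $\mathcal{S}$ is closed under $\dagger$. Here I assume, as is implicit, that $-I$-type phases $\omega^j I$ with $j\neq 0$ are not in $\mathcal{S}$, so that $P\neq 0$.

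For (ii), a logical unitary $L$ is by definition an operator preserving the code space, or in the Pauli setting an element of the normalizer. In the normalizer case, $L O L^\dagger \in \mathcal{S}$ for every $O\in\mathcal{S}$, and conjugation by $L$ is a bijection of $\mathcal{S}$. Therefore
\begin{align}
L\rho_{\mathcal{S}}L^\dagger = \frac{1}{d^n}\sum_{O} LOL^\dagger = \rho_{\mathcal{S}} .
\end{align}
For a general code-space-preserving unitary, $LP = PLP$. Taking the adjoint and using unitarity gives $L^\dagger P = P L^\dagger P$, so $L$ commutes with $P$, and $LPL^\dagger = P$ again yields~\eqref{eq:weak}. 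The one point to handle carefully is the convention that logical Paulis commute with stabilizers exactly, rather than only up to phases. For qudit Paulis this holds precisely for the normalizer elements, and I would state it that way.

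For (iii), let $U$ be a strong symmetry, so $U\rho_{\mathcal{S}}=\rho_{\mathcal{S}}$, equivalently $UP=P$. Every code state satisfies $|\psi\rangle = P|\psi\rangle$, hence
\begin{align}
U|\psi\rangle = UP|\psi\rangle = P|\psi\rangle = |\psi\rangle .
\end{align}
The main (mild) obstacle is the normalization identity $\rho_{\mathcal{S}} \propto P$. Once it is in place, each item is a one-line consequence of group closure, so I would establish that identity first and treat it carefully.
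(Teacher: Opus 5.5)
Your proposal is correct and is the same direct verification the paper intends: the paper states this Fact without proof, as an immediate consequence of $\rho_{\mathcal{S}}=d^{-n}\sum_{O\in\mathcal{S}}O$ being proportional to the code-space projector, which is exactly the group-averaging argument you give. One cosmetic fix is to write $P=(d^n/|\mathcal{S}|)\,\rho_{\mathcal{S}}$ instead of $d^{n-r}\rho_{\mathcal{S}}$, since for composite $d$ (e.g.\ the order-$p^m$ edge stabilizers of the $\mathbb{Z}_{p^{2m}}^{(1)}$ model) $|\mathcal{S}|$ need not equal $d^r$; your argument only uses the proportionality, so nothing else changes.
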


% In the stabilizer mixed state $\rho_{\mathcal{S}}$, strong symmetries act trivially on the code subspace, while weak symmetries correspond to nontrivial logical operations. 
In particular, for the honeycomb stabilizer mixed states, the strong symmetries are generated by plaquette stabilizers in Eq.~(\ref{eq:stabilizer-generators-mixed_state}), while the weak symmetries are generated by edge logical operators in Eq.~(\ref{eq:edge_operators}). 
Strong symmetries of the honeycomb mixed state will be used to construct string operators, allowing us to identify the anyon theory associated with the mixed state. 
 
\subsubsection*{Canonical purification}

To discuss the anyon theory, we find it convenient to work with the canonical purification. Thus, we first review the canonical purification of a stabilizer mixed state.

\begin{definition}[Canonical purification]
Given a mixed state $\rho$ on $A$, its canonical purification is defined as
\begin{align}
|\Psi_{\rho}\rangle_{AA'} \equiv \sum_j  \sqrt{\rho}|j\rangle_A \otimes |j\rangle_{A'} = \sum_j  |j\rangle_A \otimes \sqrt{\rho^*}|j\rangle_{A'}.
\end{align}
One can equivalently write
\begin{align}
|\Psi_{\rho}\rangle_{AA'} = \sum_{j}\sqrt{p_j}|\psi_j\rangle \otimes |\psi_j^*\rangle, \qquad \rho = \sum_{j} p_j |\psi_j\rangle \langle \psi_j |.
\end{align}
\end{definition}

It is convenient to graphically denote the canonical purification:
\begin{align}
|\Psi_{\rho}\rangle_{AA'} 
= \ \figbox{3.0}{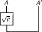} \ 
= \ \figbox{3.0}{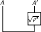} \ .
\end{align}
Here, the complex conjugation arises due to bringing $\sqrt{\rho}$ to the other side, which implements a transpose $T$.
(Since $\rho$ is hermitian, we have $\rho^T = \rho^*$.)

When $\rho_{\mathcal{S}}$ is a stabilizer mixed state, its canonical purification $|\Psi_{\rho_{\mathcal{S}}}\rangle$ is a stabilizer pure state defined on a doubled Hilbert space $A\otimes A'$. 
Namely, its stabilizer group is generated by 
\begin{align}
\mathcal{S}_{\Psi_{\rho_{\mathcal{S}}}} \equiv \big\langle \mathcal{S}\otimes I, \ I \otimes \mathcal{S}^* , \
\mathcal{G} \otimes \mathcal{G}^* \big\rangle
\end{align}
where $\mathcal{G}$ denotes the group of Pauli logical operators of the stabilizer code $\mathcal{S}$.
Elements in $\mathcal{G} \otimes \mathcal{G}^*$ commute with each other since the phases cancel exactly due to the complex conjugation.

\begin{fact}\label{fact:strongweak}
Strong symmetry unitary $U$ of $\rho$ acts trivially on the canonical purification:
\begin{align} \label{eq:fact3_strong}
(U\otimes I) |\Psi_{\rho}\rangle =  |\Psi_{\rho}\rangle \qquad \text{$U$ is a strong symmetry}.
\end{align}
While weak (but not strong) symmetry unitary $g$ changes $|\Psi_{\rho}\rangle$, $(g\otimes g^*)$ acts trivially:
\begin{align} \label{eq:fact3_weak}
(g\otimes I) |\Psi_{\rho}\rangle \not= |\Psi_{\rho}\rangle,~~~~(g\otimes g^*) |\Psi_{\rho}\rangle =  |\Psi_{\rho}\rangle \qquad \text{$g$ is a weak symmetry}.
\end{align}
\end{fact}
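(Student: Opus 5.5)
The plan is to prove both identities in Fact~\ref{fact:strongweak} directly from the definition of the canonical purification, using the defining relations of strong and weak symmetries together with the transpose identity $(O\otimes I)|\mathrm{EPR}\rangle = (I\otimes O^T)|\mathrm{EPR}\rangle$. Here $|\mathrm{EPR}\rangle=\sum_j|j\rangle|j\rangle$, so that $|\Psi_\rho\rangle = (\sqrt{\rho}\otimes I)|\mathrm{EPR}\rangle$.

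For the strong symmetry statement, I first show that $U\sqrt{\rho}=\sqrt{\rho}$. Since $U\rho=\rho$, the unitary $U$ acts as the identity on the support (range) of $\rho$. Indeed, for any $|v\rangle=\rho|w\rangle$ we have $U|v\rangle = U\rho|w\rangle = \rho|w\rangle = |v\rangle$. The operator $\sqrt{\rho}$ has the same range as $\rho$, being diagonal in the same eigenbasis with the same nonzero eigenvectors. Hence $U\sqrt{\rho}=\sqrt{\rho}$, and therefore $(U\otimes I)|\Psi_\rho\rangle=(U\sqrt{\rho}\otimes I)|\mathrm{EPR}\rangle=|\Psi_\rho\rangle$.

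For the weak symmetry statement, I write
\begin{align}
(g\otimes g^*)|\Psi_\rho\rangle=(g\sqrt{\rho}\otimes I)(I\otimes g^*)|\mathrm{EPR}\rangle=(g\sqrt{\rho}\,g^\dagger\otimes I)|\mathrm{EPR}\rangle,
\end{align}
using $(g^*)^T=g^\dagger$ when moving $g^*$ across the EPR state. Since $g\rho g^\dagger=\rho$ and conjugation by a unitary commutes with the continuous functional calculus, we get $g\sqrt{\rho}g^\dagger=\sqrt{g\rho g^\dagger}=\sqrt{\rho}$. This gives $(g\otimes g^*)|\Psi_\rho\rangle=|\Psi_\rho\rangle$.

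It remains to show $(g\otimes I)|\Psi_\rho\rangle\neq|\Psi_\rho\rangle$ when $g$ is not strong. The map $X\mapsto (X\otimes I)|\mathrm{EPR}\rangle$ is injective, so equality would force $g\sqrt{\rho}=\sqrt{\rho}$. Multiplying on the right by $\sqrt{\rho}$ gives $g\rho=\rho$. Taking the adjoint and using weak invariance then gives $\rho g^\dagger=\rho$, so $g$ would be a strong symmetry, a contradiction. Interpreted with "weak but not strong" meaning that $g\rho\neq\rho$ (equivalently that $g$ acts nontrivially on the support of $\rho$), this argument closes the proof. For the honeycomb model, this is exactly the statement that nontrivial logical operators $g$ move the purification.

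The only mildly delicate point is the functional-calculus step for $\sqrt{\rho}$. For stabilizer states it is trivial, because $\sqrt{\rho_{\mathcal S}}\propto\rho_{\mathcal S}$ due to the flat spectrum.
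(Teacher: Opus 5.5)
Your proof is correct and follows essentially the paper's route. The paper justifies this fact through the equivalences $g\rho g^\dagger=\rho \iff g\sqrt{\rho}\,g^\dagger=\sqrt{\rho}$ and $g\rho=\rho \iff g\sqrt{\rho}=\sqrt{\rho}$, plus the graphical check $\langle\Psi_\rho|(U\otimes I)|\Psi_\rho\rangle=\Tr(U\rho)=1$; your range argument, functional-calculus step, transpose identity, and injectivity of $X\mapsto (X\otimes I)|\mathrm{EPR}\rangle$ simply make these explicit (and invoking weak invariance at the end is unnecessary, since $\rho g^\dagger=\rho$ follows from $g\rho=\rho$ by taking the adjoint).
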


Here, it is useful to graphically verify these properties. For example,
\begin{align}
\langle \Psi_{\rho} | (U\otimes I) | \Psi_{\rho}\rangle 
= \figbox{3.0}{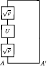} = \Tr(U\rho) = 1.
\end{align}
In fact, Fact \ref{fact:strongweak} holds generally for arbitrary mixed state $\rho$ and is not limited to stabilizer mixed state.
This follows from the equivalence that, for any unitary $g$ and mixed state $\rho$, we have $g\rho g^\dagger=\rho$ if and only if $g\sqrt\rho g^\dagger=\sqrt\rho$, and $g\rho=\rho$ if and only if $g\sqrt{\rho}=\sqrt{\rho}$.

Note that single-layer marginals of $(g\otimes I)|\Psi_{\rho}\rangle$ on $A$ and $A'$ remain the same under the action of a weak symmetry operator $(g\otimes I)$:
\begin{align}
\Tr_{A'}\big[(g\otimes I)|\Psi_{\rho}\rangle \langle \Psi_{\rho}| (g^{\dagger}\otimes I)\big] = \rho,\qquad \Tr_{A}\big[(g\otimes I)|\Psi_{\rho}\rangle \langle \Psi_{\rho}| (g^{\dagger}\otimes I)\big] = \rho^*.
\end{align}
This suggests that weak symmetry operator $g$ changes the entanglement pattern between $A$ and $A'$ while keeping the single-layer reduced states invariant.

\subsubsection*{Canonical purification of the honeycomb model}

The canonical purification $|\Psi_{\rho_{\mathcal{S}}}\rangle$ of the honeycomb stabilizer mixed state is defined on two layers of honeycomb lattices. 
Stabilizer generators include $\mathcal{S}\otimes I$ on the first layer as well as the complex conjugations $I \otimes \mathcal{S}^*$ on the second layer:
\begin{align}
S_{f_A}^* = \figbox{3.0}{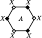}\ , \qquad S_{f_B}^* = \figbox{3.0}{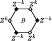}\ , \qquad S_{f_C}^* = \figbox{3.0}{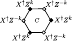} \ 
\end{align}
which realizes the $\mathbb{Z}_{d}^{(-k)}$ model. 
It is illuminating to observe that complex conjugation $\mathcal{S}^*$ can be implemented by exchanging the bipartition labels (black and white sublattices).
Equivalently, this corresponds to a global reflection of the lattice.
This relation suggests that our characterization of chirality in terms of complex conjugation is closely connected to spatial reflection symmetry.

Additional stabilizer generators arise from $\mathcal{G}\otimes \mathcal{G}^*$, which couple two layers, and are given by the following four-body operators:  
\begin{align}
g_{e_{BC}} \otimes g_{e_{BC}}^*  = \figbox{3.0}{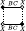}  \ , \quad
g_{e_{AC}} \otimes g_{e_{AC}}^*  =  \  \figbox{3.0}{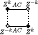} , \quad
g_{e_{AB}} \otimes g_{e_{AB}}^* =  \figbox{3.0}{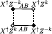} 
\label{Eq:weak_symmetry_stabilizer}
\end{align}
where the top and bottom represent two different layers.
While the edge operators $g_e$ generally do not commute, their doubled versions $g_e \otimes g_e^*$ commute due to cancellation of phases under complex conjugation. 

It is useful to note that the canonical purification $|\Psi_{\rho_{\mathcal{S}}}\rangle$, defined on two layers of honeycomb lattices, can be viewed as the thin 2D limit of the 3D $\mathbb{Z}_d^{(k)}$ anyon theory. 
From this perspective, the honeycomb construction captures boundary topological structures analogous to those appearing in Walker–Wang-type models~\cite{lee2025chiralcolorcode}.

\subsection{String operator and emergent $\mathbb{Z}_d^{(k)}$ anyons}
\label{sec:2.3} 

Here, we briefly review how the $\mathbb{Z}_d^{(k)}$ anyons emerge in the stabilizer mixed state $\rho_{\mathcal{S}}$.
It is convenient to work in the canonical purification picture $|\Psi_{\rho_{\mathcal{S}}}\rangle$, where weak-symmetry stabilizers $g \otimes g^*$ appearing in Eq.~\eqref{Eq:weak_symmetry_stabilizer} are included with the strong-symmetry stabilizers $S \otimes I$.
In this picture, anyons are identified as localized excitations created by string-like Pauli operators acting on the first copy.
A crucial point is that these string operators are not allowed to violate either strong-symmetry stabilizers $S\otimes I$ or weak-symmetry stabilizers $g\otimes g^*$ along the length of the string.
This constraint ensures that only the $\mathbb{Z}_d^{(k)}$ subtheory of the $\mathbb{Z}_d$ toric-code anyons survives as a consistent set of localized excitations.
This prescription is closely related to characterizing the anyon theory directly in terms of strong 1-form symmetries of $\rho_{\mathcal{S}}$~\cite{Sohal_2025,Ellison_2025}.

The simplest way to understand the $\mathbb{Z}_d^{(k)}$ anyon theory is to view it as a subset of the $\mathbb{Z}_d$ toric code anyons, where its anyon set is given by
\begin{align}
\mathcal{A}_d^{(k)} \equiv \{ a^j \}_{j=0}^{d-1}, \qquad a \equiv em^{k},
\end{align}
where $e$ and $m$ are charge and flux in the $\mathbb{Z}_d$ toric code. 
We call $a\equiv em^{k}$ an \emph{elementary anyon} generating $\mathbb{Z}_d^{(k)}$ anyons. 
Anyon braiding and self statistics are given by
\begin{align}\label{def:anyontheory}
B(a^{j_1}, a^{j_2}) = (\omega)^{2k j_1 j_2}, \qquad \theta(a^j) = (\omega)^{k j^2},\qquad \omega=e^{\frac{2\pi i}{d}}.
\end{align}
Several examples are discussed in Appendix~\ref{app:A1}.

Next, let us construct string operators and verify that $\mathbb{Z}_d^{(k)}$ anyons indeed emerge in the honeycomb mixed state $\rho_{\mathcal{S}}$. 

\subsubsection*{String operators}

Observe that face stabilizers of $\rho_{\mathcal{S}}$ defined in Eq.~\eqref{eq:stabilizer-generators-mixed_state} satisfy the global constraint, $\prod_f S_f = I$.
This relation suggests that one can form a strong symmetry loop stabilizer operator by multiplying $S_f$ on a connected region $R$:
\begin{align}
W_{\ell} = \prod_{f\in R} S_f,
\end{align}
where $\ell$ represents the boundary (loop) of $R$.
By truncating such loop operators, one obtains open string operators. An example is shown in Fig.~\ref{fig_loop}, where $W_{\ell}$ is decomposed into two string operators $M$ and $M'$.

\begin{figure}[h!]
\centering
\raisebox{\height}{\hspace{5pt}}\raisebox{-0.85\height}{\includegraphics[width=0.28\textwidth]{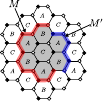}}
\hspace{10pt}
\caption{
A loop-like stabilizer $W_{\ell}$, and its truncations $M$ and $M'$. 
}
\label{fig_loop}
\end{figure}

A key property of these string operators, which arise from truncations of a strong symmetry loop $W_{\ell}$, is that they commute with both strong and weak symmetry generators ($S_f$ and $g_e$), except near their endpoints.
The resulting endpoint excitations can be interpreted as anyonic excitations of the stabilizer mixed state $\rho_{\mathcal{S}}$. 
Namely, in the canonical purification picture $|\Psi_{\rho_{\mathcal{S}}}\rangle$, a string operator $M\otimes I$ creates violations of $S_f \otimes I$ and $g_e \otimes g_e^*$ localized near the endpoints of the string.

It is useful to explicitly characterize the excitation pattern created by a string operator $M$:
\begin{align}
\figbox{3.0}{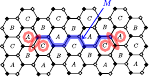} \label{eq:anyon_pair}
\end{align}
where violations of stabilizers are indicated by circles.
A string operator $M$ creates two violations of $S_f \otimes I$ and one violation of an edge-type stabilizer ($g_e \otimes g_e^*$) at each endpoint.
The expectation values of violated stabilizers are
\begin{align}
\figbox{3.0}{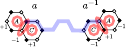}
\end{align}
where the exponents $j$ in $\omega^{jk}$ (i.e., $\omega^k$ or $\omega^{-k}$) are indicated.
These excitations can be identified with anyons $a$ and $a^{-1}$ in $\mathbb{Z}_d^{(k)}$, as we will verify below.
% \ZL{This figure is for $k=1$, right? Did we say that at somewhere? Otherwise, how does $k$ in Eq.\ref{eq:anyonbraiding} appear?} \TE{Sorry, do you have in mind $j$? $k$ is the index for the $\mathbb{Z}_d^{(k)}$, so it will determine the braiding relations of the generators in Eq.\ref{eq:anyonbraiding}.}
% \ZL{In is noted above this figure that the numbers ($\pm 1$) in the figure are eigenvalues of $S$ operators, so I think for $Z_d^{(k)}$ model the number should be $\pm k$. This also explains why the calculation in Eq.\ref{eq:anyonbraiding} will end up containing a $k$, giving rise to the desired braiding relation. } \TE{Oh, I see. Good catch. Or we can just change $\omega^j$ to $\omega^{jk}$ in the description below.}

Here, we observe that elementary anyonic excitations in this model involve pairs of plaquette-stabilizer violations.
One may then wonder whether other types of localized excitations are possible.
For instance, one might attempt to construct an isolated charge associated with a single plaquette violation.
However, string operators creating such single-plaquette excitations necessarily violate weak-symmetry stabilizers along the bulk of the string.
As a result, these excitations do not define localized anyons in the mixed-state theory.
This illustrates the usefulness of the canonical purification picture, where valid anyons are characterized by string operators that preserve both strong- and weak-symmetry stabilizers away from their endpoints.

As we demonstrate below, these string operators generate the $\mathbb{Z}_d^{(k)}$ anyons in the mixed state $\rho_{\mathcal{S}}$~\cite{lee2025chiralcolorcode}.
At a heuristic level, one is naturally led to expect that these string operators exhaust all possible anyonic excitations in $\rho_{\mathcal{S}}$.
A major technical challenge, however, is to rigorously establish that no additional anyon types can emerge beyond this construction.
In particular, one must rule out the possibility that more general or finely tuned string operators could generate excitations with different statistics, or that coherent superpositions of anyons could effectively mimic distinct anyon types.

In Sec.~\ref{sec4:LO chirality proof}, we prove rigorously that the $\mathbb{Z}_{d}^{(k)}$ anyons indeed exhaust all localized anyonic excitations in $\rho_{\mathcal{S}}$.
More precisely, we show that any localized excitation must correspond, up to local unitary transformations acting near its endpoints, to one of the anyons in the $\mathbb{Z}_d^{(k)}$ theory.
We further prove that the braiding statistics depend only on the anyon type itself.

\subsubsection*{Braiding statistics}

Anyon charges can be (partially) characterized by the braiding statistics. 
Let $W_{\ell}$ be a loop stabilizer which encloses an anyon $a$:
\begin{align}
\figbox{2.5}{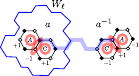}.
\end{align}
Recalling that $W_{\ell}$ is a product of $S_f$ inside $\ell$, we have
\begin{align}\label{eq:anyonbraiding}
B(a,a) = \Tr [M^{\dagger} W_{\ell} M \rho_{\mathcal{S}}] = \langle \Psi_{\rho_{\mathcal{S}}}|M^{\dagger} W_{\ell} M |\Psi_{\rho_{\mathcal{S}}} \rangle =  \omega^{2k}. 
\end{align}
In general, we find
\begin{align}
\text{$\langle W_{\ell} \rangle = \omega^{2kj}$ \quad  \text{for} \ $a^j$ anyon},
\end{align}
suggesting $B(a^j,a)= \omega^{2kj}$. 

When $d$ is odd and $k$ is coprime to $d$, anyon types $a^j$ can be fully characterized by the expectation value of $W_{\ell}$: $\langle W_{\ell} \rangle = \omega^{2kj}$.
For even $d$, however, anyons $a^{\frac{d}{2}}$ cannot be detected in this way, since 
\begin{align}
\langle W_{\ell} \rangle = \omega^{2k \frac{d}{2}}=1 \quad  \text{for} \ \text{$a^{\frac{d}{2}}$ anyon}
\end{align}
reflecting the fact that $a^{\frac{d}{2}}$ is transparent and braids trivially with all other anyons.\footnote{This can also be understood by recalling the 3D realization of $\mathbb{Z}_d^{(k)}$ models, such as the Walker-Wang model or the chiral color code \cite{lee2025chiralcolorcode}. 
In these models, the anyons $a^{\frac{d}{2}}$ correspond to $\mathbb{Z}_2$-charged fermions or bosons depending on whether $d=2$ or $d=0 \ (\mathrm{mod}\ 4)$.
Such excitations are not confined to the two-dimensional boundary, but can propagate into the three-dimensional bulk.
This implies that these excitations cannot be fully characterized by operators localized on a two-dimensional surface.
}
To capture these transparent anyons, it is again useful to work in the canonical purification picture $|\Psi_{\rho_{\mathcal{S}}}\rangle$.
% \TE{Here. More explicit explanation of extended loop.}
% \ZL{How about this: 
In the doubled system we define $\tilde{W}_\ell$ as the product of $g_e\otimes g_e^*$ over all edges inside and along the loop $\ell$. 
Because of Eq.~\eqref{eq:stabilizers-purestate}, the product of operators inside the loop cancels, leaving a loop-like operator supported on the doubled system along $\ell$ (and its mirror).
We term such an operator an extended loop operator.
It satisfies $\langle \tilde{W}_{\ell} \rangle = \omega^{kj}$ for the $a^j$ anyon.
% }
% One can construct extended loop operators by combining stabilizers over a region $R$ and its mirror region $R'$, involving operators of the form $S_f \otimes I$, $I \otimes S_f^* $, and $g_e \otimes g_e^* $. \newtext{More explicitly, these stabilizers are multiplied so that the local Pauli contributions away from the boundary of the doubled region $R\cup R'$ cancel, leaving a loop-like operator supported near that boundary. The doubled weak-symmetry stabilizers $g_e\otimes g_e^*$ provide the interlayer part of the extended loop, allowing it to probe charges in the effective bulk region between the two layers. These extended loop operators can therefore detect the $\mathbb{Z}_2$ charges that are invisible to ordinary loop operators $W_\ell$.}
% % These loop operators probe $\mathbb{Z}_2$ charges supported not only on the individual layers but also in the effective bulk region between them.

This feature introduces a mild technical subtlety in our analysis. In what follows, we will primarily focus on the case where $d$ is odd, for which all anyon types can be detected via loop operators. Nevertheless, our conclusions extend to even $d$ whenever the anyon theory is not invariant under complex conjugation.

\subsubsection*{Topological spin}

The topological spin (the self statistics) of an anyon can be extracted by the T-junction process, which involve three string operators $M_1, M_2, M_3$ (Fig.~\ref{fig_T_junction}):
\begin{align}  
\theta(a) = \Tr\Big[ M_3^{\dagger} M_2^{\dagger} M_1^{\dagger}  M_3 M_2 M_1 \rho_{\mathcal{S}}\Big] = \omega^k.
\end{align}
Moreover, one verifies $\theta(a^j) = \omega^{kj^2}$.

Importantly, the angle $\theta(a)$ is a topological invariant: it is independent of the specific choice of string operators $M_1, M_2, M_3$. An example of $M_1, M_2, M_3$, resulting from loop stabilizer operators, is illustrated in Fig.~\ref{fig_T_junction}.

\begin{figure}[h!]
\centering
\raisebox{\height}{\hspace{5pt}}\raisebox{-0.7\height}{\includegraphics[width=0.7\textwidth]{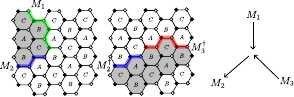}}
\hspace{10pt}
\caption{
Construction of string operators $M_1, M_2, M_3$ for T-junction.
}
\label{fig_T_junction}
\end{figure}

\subsection{$\mathbb{Z}_d^{(k)}$ anyons in mirror}
\label{sec:2.4}

Finally, we study when the $\mathbb{Z}_d^{(k)}$ anyon theory remains invariant under complex conjugation. The condition we derive in Theorem~\ref{thm:invcondition} will serve as the key criterion to determine when the $\mathbb{Z}_d^{(1)}$ mixed state becomes LO- or $n$-partite chiral in the following sections.

The complex conjugates of the $\mathbb{Z}_d^{(k)}$ anyons are obtained by replacing $k$ with $-k$, under which the anyon statistics are complex conjugated. Thus, our goal is to determine when two anyon theories, $\mathbb{Z}_d^{(k)}$ and  $\mathbb{Z}_d^{(-k)}$, are isomorphic:
\begin{align}
\mathcal{A}_d^{(k)} \underset{?}{\simeq} \mathcal{A}_d^{(-k)} 
\end{align}
meaning that the two theories have identical anyon statistics after appropriate relabelling.

We begin by recalling a basic fact that the $\mathbb{Z}_d^{(k)}$ anyon theory factorizes into prime-power building blocks, the $\mathbb{Z}_{p^m}^{(k)}$ anyons. 
By writing $d$ as the product of primes $p_j$: 
\begin{align}
d = p_1^{m_1} \cdots p_n^{m_n},
\end{align}
the following can be shown by the Chinese remainder theorem (CRT). (see Appendix~\ref{A6}).

\begin{lemma}[Factorization]\label{lemma:factorization}
The $\mathbb{Z}_d^{(k)}$ anyon theory is equivalent to the following tensor product:
\begin{align}
\mathcal{A}_d^{(k)} \simeq \boxtimes_{j} \mathcal{A}_{d_j}^{(k e_j)}
\end{align}
where $d_j \equiv p_j^{m_j}$ and $e_j \equiv \frac{d}{d_j}$.
\end{lemma}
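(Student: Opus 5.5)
We construct an explicit relabeling isomorphism between $\mathcal{A}_d^{(k)}$ and $\boxtimes_j \mathcal{A}_{d_j}^{(k e_j)}$ and check that it preserves both the topological spin and the braiding. Write $a$ for the generator of $\mathcal{A}_d^{(k)}$ and $b_j$ for the generator of $\mathcal{A}_{d_j}^{(k e_j)}$. In the stacked theory, spins multiply: $\theta(\prod_j b_j^{x_j}) = \prod_j \omega_{d_j}^{k e_j x_j^2}$, where $\omega_{d_j}=e^{2\pi i/d_j}$. Since $\theta$ determines $B$ through $B(x,y)=\theta(xy)/(\theta(x)\theta(y))$, it suffices to match the spins under a group isomorphism.

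\textbf{Step 1 (group isomorphism).} The Chinese remainder theorem gives a ring isomorphism $\mathbb{Z}_d \cong \prod_j \mathbb{Z}_{d_j}$, namely $x \mapsto (x \bmod d_j)_j$. We therefore define
\begin{align}
\phi(a^x) = \prod_j b_j^{\,x \bmod d_j}.
\end{align}
This map is a bijective group homomorphism.

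\textbf{Step 2 (spin check).} We need
\begin{align}
e^{2\pi i k x^2/d} = \prod_j e^{2\pi i k e_j x_j^2/d_j}, \qquad x_j \equiv x \pmod{d_j}.
\end{align}
Note that $e_j x_j^2/d_j \equiv e_j x^2/d_j \pmod 1$, because $x_j^2 \equiv x^2 \pmod{d_j}$. So the right-hand side equals $\exp\bigl(2\pi i k x^2 \sum_j e_j/d_j\bigr)$. Since $e_j/d_j = d/d_j^2$, this is not directly $1/d$. The naive map therefore fails, and this is the main obstacle. The fix is to rescale the CRT components.

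\textbf{Step 2, corrected.} Choose $x_j \equiv e_j^{-1} x \pmod{d_j}$, which is legitimate because $\gcd(e_j,d_j)=1$. Then
\begin{align}
\frac{k e_j x_j^2}{d_j} \equiv \frac{k e_j^{-1} x^2}{d_j} \pmod 1,
\end{align}
where $e_j^{-1}$ denotes the inverse of $e_j$ modulo $d_j$. It then suffices to show
\begin{align}
\sum_j \frac{e_j^{-1}}{d_j} \equiv \frac{1}{d} \pmod 1,
\end{align}
which is equivalent to $\sum_j e_j^{-1} e_j \equiv 1 \pmod d$. This congruence is exactly the CRT idempotent decomposition of $1$: the term $e_j^{-1}e_j$ is $1 \bmod d_j$ and $0 \bmod d_i$ for $i\ne j$. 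Multiplying by $k x^2$ gives the spin equality, modulo one subtlety. The quantity $x_j^2$ is only defined modulo $d_j$, and it is multiplied by $k e_j/d_j$, so the phase is well defined. For even $d$ (the case $p_j=2$), $\theta(a^x)=\omega^{kx^2}$ is still well defined modulo $d$, because $(x+d)^2\equiv x^2 \pmod d$; all that is needed is consistency modulo $d_j$, which holds.

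\textbf{Step 3 (conclusion).} The map $a^x \mapsto \prod_j b_j^{e_j^{-1}x \bmod d_j}$ is a group isomorphism that preserves $\theta$. It therefore preserves $B$, as a direct check of $\omega^{2kxy}$ against the product confirms by the same idempotent identity. This establishes the stated equivalence. We expect the key subtlety to be choosing the twisted CRT embedding (the factor $e_j^{-1}$) rather than the plain reduction map.
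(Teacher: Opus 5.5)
Your proof is correct and constructs the same isomorphism as the paper. The paper identifies $\mathcal{A}_{d_j}^{(ke_j)}$ with the subgroup of $\mathcal{A}_d^{(k)}$ generated by $a^{e_j}$, i.e.\ $b_j\mapsto a^{e_j}$, and your twisted CRT map $a^x\mapsto\prod_j b_j^{\,e_j^{-1}x \bmod d_j}$ is exactly its inverse. Working in the paper's direction is more economical: it only needs $\theta(a^{e_j})=\omega_d^{k e_j^2}=\omega_{d_j}^{k e_j}$, trivial mutual braiding $\omega^{2k e_j e_{j'}}=1$ for $j\neq j'$, and the CRT fact that the $a^{e_j}$ generate the group, so your naive-map detour and the idempotent identity $\sum_j e_j^{-1}e_j\equiv 1 \pmod d$ are unnecessary.
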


An integer $q$ is a \emph{quadratic residue} modulo $d$ if there exists $\lambda$ such that
\begin{align}
\lambda^2 = q \qquad (\mathrm{mod\ } d).
\end{align}
Otherwise, $q$ is a \emph{quadratic non-residue}. 

\begin{lemma}[Quadratic residue]\label{lemma:residue}
The $\mathbb{Z}_d^{(k)}$ anyon theory is isomorphic to its complex conjugation, namely $\mathcal{A}_d^{(k)}\simeq \mathcal{A}_d^{(-k)}$ if and only if $-1$ is a quadratic residue modulo $d' = \frac{d}{\mathrm{gcd}(d,k)}$.
\end{lemma}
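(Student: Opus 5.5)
The plan is to reduce the question to an algebraic condition on the quadratic form $q(j)=kj^2 \bmod d$, which determines the theory, and then solve that condition number-theoretically. By Eq.~\eqref{def:anyontheory}, the data of $\mathcal{A}_d^{(k)}$ are $\theta(a^j)=\omega^{kj^2}$, with braiding given by the associated bilinear form $B(a^{j_1},a^{j_2})=\omega^{2kj_1j_2}$. An isomorphism $\mathcal{A}_d^{(k)}\simeq\mathcal{A}_d^{(-k)}$ is a relabeling preserving braiding and spin, and we take it to be a group automorphism of the fusion group $\mathbb{Z}_d$, as is standard for abelian anyon theories. Every such automorphism has the form $j\mapsto \lambda j$ with $\lambda$ a unit mod $d$. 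Since $B$ is determined by $\theta$ via $B(x,y)=\theta(xy)/(\theta(x)\theta(y))$, it suffices to match spins. The condition then reads
\[
\omega^{k\lambda^2 j^2}=\omega^{-kj^2}\ \ \forall j \iff k(\lambda^2+1)\equiv 0 \pmod d .
\]
Setting $g=\gcd(d,k)$ and writing $k=gk'$ with $\gcd(k',d')=1$, this is equivalent to $\lambda^2\equiv -1 \pmod{d'}$, where $d'=d/g$.

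For the direction ($\Rightarrow$), an isomorphism supplies such a unit $\lambda$. Reducing the congruence above mod $d'$ shows that $-1$ is a quadratic residue mod $d'$.

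For the direction ($\Leftarrow$), suppose $\mu^2\equiv-1 \pmod{d'}$. Then $\mu$ is automatically a unit mod $d'$, but we need a unit $\lambda$ mod $d$ with $\lambda\equiv\mu \pmod{d'}$. The main obstacle is this lifting step: $\mu$ itself need not be coprime to $d$. I would handle it with the CRT decomposition of Lemma~\ref{lemma:factorization}, or directly as follows. For each prime $p\mid d$ with $p\nmid d'$, choose the residue of $\lambda$ mod $p^{v_p(d)}$ to be $1$. For primes $p\mid d'$, take $\lambda\equiv\mu \pmod{p^{v_p(d)}}$; this is a unit because $p\nmid\mu$. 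CRT glues these local choices into a unit $\lambda$ mod $d$.

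Finally, check that $k(\lambda^2+1)\equiv 0 \pmod d$. Write this as $gk'(\lambda^2+1)\equiv 0 \pmod{gd'}$, which reduces to $\lambda^2+1\equiv 0 \pmod{d'}$. That holds by the construction of $\lambda$, since $\lambda\equiv\mu$ modulo every prime-power factor of $d'$. The resulting map $a^j\mapsto \bar a^{\lambda j}$ is therefore an isomorphism of anyon theories.
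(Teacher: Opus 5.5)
Your proposal is correct and follows essentially the same route as the paper. For the forward direction, matching spins under $a^*\sim a^\lambda$ gives $k(\lambda^2+1)\equiv 0 \pmod d$, i.e.\ $\lambda^2\equiv -1 \pmod{d'}$; for the converse, you use the same CRT lift of $\mu$ to a unit mod $d$, taking $\lambda\equiv 1$ at primes dividing $d$ but not $d'$. Your added justifications fill in details the paper leaves implicit: that automorphisms of $\mathbb{Z}_d$ are multiplication by units, that braiding is determined by spins, and the explicit reduction via $k=gk'$ with $\gcd(k',d')=1$.
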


\begin{proof}
Let $a$ and $a^*$ be the elementary anyons for $\mathbb{Z}_d^{(k)}$  and $\mathbb{Z}_d^{(-k)}$ respectively. 
Assuming that $\mathcal{A}_d^{(k)} \simeq \mathcal{A}_d^{(-k)}$, there exists $\lambda$ such that $a^* \sim a^{\lambda}$ where $\sim$ represents an identification of anyon labels between two theories.
Since $\theta(a^*)=\omega^{-k}$ and $\theta(a^{\lambda})=\omega^{k \lambda^2}$, we must have 
\begin{align}
k\lambda^2 = - k \qquad (\mathrm{mod\ } d)
\end{align}
which is equivalent to
\begin{align}
\lambda^2 = - 1 \qquad (\mathrm{mod\ } d')
\end{align}
where $d' = \frac{d}{\mathrm{gcd}(d,k)}$. Hence, $-1$ is quadratic residue modulo $d'$. 

% The converse can be proven directly: given $\lambda^2 = -1$ (mod $d'$), we can identify $a^* \sim a^\lambda$. 
Conversely, given $\mu^2=-1$ (mod $d'$), let us find $\lambda$ such that $\lambda^2 = -1$ (mod $d'$) and $\mathrm{gcd}(\lambda,d)=1$.
Note that $\mu$ is coprime to $d'$, so a sufficient condition is $\lambda=1$ (mod $p$) for all $p\mid d$ but $p\nmid d'$, and $\lambda=\mu$ (mod $d'$).
Such $\lambda$ exists due to the CRT.
Then we can identify $a^* \sim a^\lambda$. 
It is straightforward to check $\mathcal{A}_d^{(k)}\simeq \mathcal{A}_d^{(-k)}$ under this identification.
% \ZL{edited the converse direction. I think the previous proof has a gap. We need $\mathrm{gcd}(\lambda,d)=1$.}
\end{proof}

From now on, we focus on the case where $(d,k)$ are coprime, as the analysis for non-coprime cases reduces to the case $d'=\frac{d}{\mathrm{gcd}(d,k)}$. 
The factorization property (Lemma~\ref{lemma:factorization}) implies that it suffices to focus on the $\mathbb{Z}_{d}^{(k)}$ anyon theory with $d= p^m$ and coprime $(d,k)$.
Namely, to determine whether $\mathbb{Z}_d^{(k)}$ anyon theory is invariant under its mirror image, we only need to verify whether $-1$ is a quadratic residue modulo $p^m$. We shall use the following number theoretic facts. 

\begin{fact}
\begin{enumerate}[i)]
\item Let $p$ be an odd prime. Then, $-1$ is quadratic residue modulo $p^m$ if and only if $p = 1$ $(\mathrm{mod\ } 4)$. 
\item For $p=2$, $-1$ is quadratic residue modulo $2^m$ only for $m=1$. 
\end{enumerate}
\end{fact}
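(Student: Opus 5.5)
The plan is to prove the two parts of the Fact separately with elementary number theory. Part (i) reduces to a statement modulo $p$ via Hensel lifting. Part (ii) is settled by a congruence modulo $4$.

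\emph{Part (i).} First reduce from $p^m$ to $p$. If $\lambda^2\equiv -1 \pmod{p^m}$, then reducing modulo $p$ gives a solution mod $p$. Conversely, suppose $\lambda_1^2\equiv -1\pmod p$. The polynomial $f(x)=x^2+1$ has derivative $f'(\lambda_1)=2\lambda_1$, which is nonzero mod $p$ because $p$ is odd and $\lambda_1\not\equiv 0$. By Hensel's lemma the root lifts uniquely to a root mod $p^m$ for every $m$. Concretely, given $\lambda_j^2+1=p^j t$, set $\lambda_{j+1}=\lambda_j+p^j s$ with $2\lambda_j s\equiv -t \pmod p$.

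It then remains to show that $-1$ is a square mod $p$ if and only if $p\equiv 1\pmod 4$. I would work in the cyclic group $(\mathbb{Z}/p)^\times$ of order $p-1$. A solution $\lambda$ of $\lambda^2\equiv-1$ is precisely an element of order $4$. A cyclic group of order $p-1$ has an element of order $4$ if and only if $4\mid p-1$. If one prefers to avoid invoking cyclicity, Euler's criterion $(-1)^{(p-1)/2}$ gives the same conclusion. Alternatively, Wilson's theorem supplies the explicit root $\lambda=\left(\tfrac{p-1}{2}\right)!$ when $p\equiv1\pmod4$.

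\emph{Part (ii).} For $m=1$ we have $1^2=1\equiv-1\pmod 2$, so $-1$ is a residue. For $m\ge 2$, any solution mod $2^m$ reduces to a solution mod $4$. The squares mod $4$ are only $0$ and $1$, and $-1\equiv 3$ is not among them, so no solution exists.

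The only step with any substance is the Hensel lift in (i). Its key point is the non-vanishing of $2\lambda$ mod $p$, and this is exactly what fails for $p=2$, which explains the different behavior in (ii). I do not expect any real obstacle.
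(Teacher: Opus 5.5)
Your proof is correct. The paper gives no proof of this Fact; it quotes it as standard number theory and uses it only through the quadratic-residue lemma to obtain Theorem~\ref{thm:invcondition}. Your argument is the standard one: Hensel lifting of $x^2+1$ from $p$ to $p^m$, using $2\lambda\not\equiv 0 \pmod p$; then an element of order $4$ in the cyclic group $(\mathbb{Z}/p)^\times$, or equivalently Euler's criterion; then the mod-$4$ obstruction for $2^m$ with $m\ge 2$. It fills the gap completely, and it matches the paper's later remark that lifting from $m=1$ to $m>1$ is straightforward.
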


This leads to the necessary and sufficient condition for $\mathcal{A}_d^{(k)} \simeq \mathcal{A}_d^{(-k)}$. 

\begin{theorem}[Mirror invariance]
\label{thm:invcondition}
The $\mathbb{Z}_d^{(k)}$ anyon theory (with coprime $d,k$) is isomorphic to its complex conjugation (the $\mathbb{Z}_d^{(-k)}$ anyon theory) if and only if
\begin{align}
d = \prod_j p_j^{m_j} \quad \mbox{or} \quad d = 2 \prod_j p_j^{m_j} 
\end{align}
where $p_j$ are odd primes satisfying $p_j = 1$ $(\mathrm{mod\ } 4)$.
\end{theorem}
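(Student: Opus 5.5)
The plan is to reduce the statement to a purely number-theoretic criterion and then assemble it prime by prime. Since $\gcd(d,k)=1$, we have $d'=d$. Lemma~\ref{lemma:residue} then says that $\mathcal{A}_d^{(k)}\simeq\mathcal{A}_d^{(-k)}$ holds if and only if $-1$ is a quadratic residue modulo $d$. So it suffices to show that $-1$ is a square mod $d$ exactly when $d$ has the stated form. (One could instead go through Lemma~\ref{lemma:factorization} and check mirror invariance of each factor $\mathcal{A}_{d_j}^{(ke_j)}$. That route would require justifying that an isomorphism of a product of coprime-order theories restricts to the prime-power factors, which is automatic because the factors are the $p_j$-primary parts. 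The direct route through Lemma~\ref{lemma:residue} avoids this, so I will use it and keep the factorization only as motivation.)

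The first step is the CRT reduction. Write $d=\prod_j d_j$ with $d_j=p_j^{m_j}$. Then $x^2\equiv-1 \pmod d$ is solvable if and only if $x^2\equiv-1\pmod{d_j}$ is solvable for every $j$. One direction is reduction modulo each $d_j$; the other is gluing local solutions with the CRT isomorphism $\mathbb{Z}_d\cong\prod_j\mathbb{Z}_{d_j}$.

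The second step applies the number-theoretic Fact stated before the theorem to each prime-power factor. For odd $p$, the congruence is solvable mod $p^m$ if and only if $p\equiv1\pmod 4$. For $p=2$, it is solvable mod $2^m$ if and only if $m\le1$, and $m=0$ is simply the absence of the factor. Combining these, $-1$ is a quadratic residue mod $d$ precisely when every odd prime divisor of $d$ is $\equiv1\pmod4$ and $4\nmid d$. That is exactly $d=\prod_j p_j^{m_j}$ or $d=2\prod_j p_j^{m_j}$ with all $p_j\equiv1\pmod 4$, which is the claim.

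If the Fact needs justification, I would argue as follows. For odd $p$, the group $(\mathbb{Z}/p^m)^\times$ is cyclic of order $p^{m-1}(p-1)$, so it contains an element of order $4$, namely a square root of $-1$, if and only if $4\mid p-1$. For $2^m$ with $m\ge2$, a solution of $x^2\equiv-1\pmod 4$ would need $x$ odd, but then $x^2\equiv1\pmod 4$, a contradiction. The only mildly delicate point in the whole argument is the bookkeeping in Lemma~\ref{lemma:residue}: one must check that the relabeling $a^*\sim a^\lambda$ uses a $\lambda$ coprime to $d$, so that it is a bijection of anyon labels. This is already handled in that lemma, and here it is trivially satisfied because a square root of $-1$ mod $d$ is automatically a unit. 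I therefore expect no real obstacle beyond careful statement of the CRT step.
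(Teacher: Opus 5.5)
Your proof is correct and follows essentially the paper's route: Lemma~\ref{lemma:residue} with $d'=d$, reduction to prime-power factors by the Chinese remainder theorem, and then the number-theoretic Fact for each prime. The only difference is that you apply the CRT directly to the congruence $x^2\equiv -1$ instead of to the anyon theory via Lemma~\ref{lemma:factorization}. This spares you from having to argue that an isomorphism of the product theory respects its prime-power factors.
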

Notably, the requirement for mirror invariance does not coincide with the condition for a zero chiral central charge or the existence of gapped boundary. 
We provide further discussion in Appendix~\ref{A7}.

\section{LO-chirality}\label{sec:3}
\label{sec4:LO chirality proof}

In the previous section, we have established the necessary and sufficient condition for when the anyon content satisfies $\mathcal{A}_d^{(k)}\simeq \mathcal{A}_d^{(-k)}$.
We now show that this criterion coincides exactly with the condition for LO-chirality. 
\begin{theorem}[LO-chirality]\label{thm:LO-chirality}
The $\mathbb{Z}_d^{(k)}$ mixed state $\rho$, as defined in Definition~\ref{def:honeycomb-mixed} is LO-chiral if and only if $\mathcal{A}_d^{(k)}\not\simeq \mathcal{A}_d^{(-k)}$.
\end{theorem}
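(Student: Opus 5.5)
We prove the two directions separately. The \emph{``if mirror invariant then not chiral''} direction is constructive, and the converse is by contradiction using anyon data.

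\textbf{Constructive direction.} Assume $\mathcal{A}_d^{(k)}\simeq\mathcal{A}_d^{(-k)}$. By Lemma~\ref{lemma:residue} and its proof there is $\lambda$ with $\gcd(\lambda,d)=1$ and $\lambda^2\equiv-1 \pmod d$. We look for a finite-depth Clifford circuit $U$ with $U\mathcal{S}U^\dagger=\mathcal{S}^*$ as stabilizer groups; then $U\rho_{\mathcal S}U^\dagger=\rho_{\mathcal S^*}=\rho_{\mathcal S}^*$, so no ancillas are needed and the state is not even LU-chiral. Two observations make this feasible:
\begin{itemize}
\item Complex conjugation acts as $k\mapsto -k$, equivalently as exchange of the black and white sublattices (a reflection), so $\mathcal S^*$ is the $\mathbb{Z}_d^{(-k)}$ honeycomb model.
\item On the level of string operators, the isomorphism should be realized by a locality-preserving automorphism of the qudit Pauli group sending the $a$-string to the $a^{\lambda}$-string.
\end{itemize}
Concretely, we try single-site (or per-edge) symplectic maps: the multiplier $X\mapsto X^{\lambda}$ combined with a phase-type gate ($Z\mapsto Z$, $X\mapsto XZ^{c}$) acting on black and white vertices, with parameters chosen so that each plaquette generator $S_f$ is mapped into $\mathcal S^*$. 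Since each $S_f$ is a product of edge operators $g_e$, it suffices to map the edge operators $g_{e_{BC}},g_{e_{AC}},g_{e_{AB}}$ to products of the conjugated ones. This reduces to a $2\times2$ symplectic problem over $\mathbb{Z}_d$ (on the CRT prime-power factors via Lemma~\ref{lemma:factorization}) whose solvability is exactly $\lambda^2\equiv-1$. If a single-site map fails, we allow a depth-two circuit of controlled-sum gates along edges. For $d=2\prod p_j^{m_j}$, the factor $2$ is handled separately, since $-1\equiv1\pmod 2$ and the identity already works there.

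\textbf{Obstruction direction.} Suppose a local channel $\mathcal{N}$, built from a finite-depth $U$ with ancillas, satisfies $\mathcal{N}(\rho)=\rho^*$, while $\mathcal{A}\not\simeq\mathcal{A}^*$. The plan proceeds in four steps.
\begin{enumerate}
\item Dilate to the canonical purification and use Stinespring locality to transport string operators. An $a$-string $M$ of $\rho$, conjugated through $U$ and traced over the environment, yields a localized excitation of $\rho^*$. Equivalently, pulling back an $\overline{a}$-string of $\rho^*$ along the Heisenberg-picture dual channel gives an excitation of $\rho$ whose pairwise braiding and T-junction phases are complex conjugated.
\item Prove \emph{charge conservation and factorization}. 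Any operator supported near a string that preserves all $S_f\otimes I$ and $g_e\otimes g_e^*$ away from its endpoints creates, up to local unitaries near the endpoints, a definite charge $a^j\in\mathcal{A}_d^{(k)}$. Here we exploit the stabilizer structure: loop operators $W_\ell$ (and the extended loops $\tilde W_\ell$ for even $d$) measure the charge, and their eigenvalues are sharp because they are strong symmetries.
\item Show that the topological spin computed by the T-junction depends only on this charge label. This is done by locally concentrating the excitation via an Uhlmann-type recovery map, so that different string realizations of the same charge differ by endpoint-local unitaries, which cancel in the commutator $M_3^\dagger M_2^\dagger M_1^\dagger M_3M_2M_1$.
\item Conclude. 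The transported generator $\overline a$ realizes some $a^{\lambda}$ with $\theta=\omega^{-k}$ and the braiding of the transported theory. This forces $\lambda^2\equiv-1$, contradicting Lemma~\ref{lemma:residue} and Theorem~\ref{thm:invcondition}.
\end{enumerate}

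\textbf{Main obstacle.} The hardest part is steps 2--3: establishing rigorously that every transported excitation has a well-defined charge rather than a coherent superposition, and that the statistics are invariant under the finite-depth channel, which is not unitary. I would attack this first in the purification picture, where the channel becomes a unitary on system plus environment. The key tools there are the flat spectrum $\sqrt{\rho}\propto\rho$ and the fact that strong symmetries act trivially, so that the expectation value of a loop operator $W_\ell$ around a region pins down the charge exactly.
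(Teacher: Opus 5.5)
Your constructive direction is the paper's argument, and you can simplify it. The multiplier permutation $|j\rangle\mapsto|\lambda j\rangle$ already gives $X\mapsto X^{\lambda}$ and $Z\mapsto Z^{\lambda^{-1}}=Z^{-\lambda}$, because $\lambda^2\equiv-1$. This single-site map sends each plaquette generator $S_f$ to $(S_f^*)^{\lambda}$: the phases produced by the $XZ^{\pm k}$ factors cancel between black and white vertices. So you need no phase gate, no controlled-sum fallback, and no separate handling of the factor $2$, since a unit $\lambda$ with $\lambda^2\equiv-1 \pmod d$ exists directly by CRT.

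The obstruction direction has the paper's architecture: transport strings, fix charges, concentrate, then use T-junction invariance. However, three of the mechanisms you state would not go through.

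\textbf{Step 1 (LO reduction).} Pulling an $\overline{a}$-string back along the Heisenberg dual $\mathcal{N}^\dagger(N)=\langle 0|U^\dagger(N\otimes I)U|0\rangle$ gives a non-unitary contraction. Moreover, $\mathcal{N}^\dagger$ is not multiplicative, so neither the string property nor the T-junction phase is transported. You must keep the ancillas and prove the extension statement the paper isolates. Let $\sigma_{AR}=U(\rho\otimes|0\rangle\langle 0|)U^\dagger$, so that $\rho^*=\mathrm{Tr}_R\,\sigma_{AR}$. If $W$ is a strong symmetry of $\rho^*$, then $(W\otimes I_R)\sigma_{AR}=\sigma_{AR}$, because $\sigma_{AR}$ is supported in $\mathrm{supp}(\rho^*)\otimes\mathcal{H}_R$. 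Only with this is $U^\dagger(W\otimes I)U$ a unitary strong symmetry of $\rho\otimes|0\rangle\langle0|$. Its truncations are then genuine strings, and they satisfy $\mathrm{Tr}[\widetilde N_3^\dagger\widetilde N_2^\dagger\widetilde N_1^\dagger\widetilde N_3\widetilde N_2\widetilde N_1(\rho\otimes|0\rangle\langle 0|)]=\mathrm{Tr}[N_3^\dagger N_2^\dagger N_1^\dagger N_3N_2N_1\rho^*]$.

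\textbf{Step 2 (sharp charges).} Sharpness of the endpoint charge does not follow from $W_L$ being a strong symmetry. It is one for $\rho$, not for $\sigma=\widetilde N\rho\widetilde N^\dagger$, and ruling out superpositions is exactly what has to be shown. The paper combines three facts:
\begin{itemize}
\item Conservation: $\mathrm{Tr}[W_LW_R\sigma]=1$. This uses a loop $W_{\mathrm{int}}$ commuting with $\widetilde N$, chosen so that $W_LW_{\mathrm{int}}W_R$ encloses the whole string.
\item Factorization: $\mathrm{Tr}[W_LW_R\sigma]=\mathrm{Tr}[W_L\sigma]\,\mathrm{Tr}[W_R\sigma]$. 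This follows from $\rho_{L^+R^+}=\rho_{L^+}\otimes\rho_{R^+}$ together with the lightcone of $\widetilde N$.
\item The bound $\|W\|\le 1$, which forces both expectation values to be unit-modulus eigenvalues.
\end{itemize}

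\textbf{Step 3 (concentration).} Uhlmann's theorem needs as input that $(\widetilde N\otimes I)|\Psi_\rho\rangle$ and $(M\otimes I)|\Psi_\rho\rangle$ have equal marginals on the complement of $L\cup R$, second layer included. The paper proves this with a logical twirl, showing both are maximally mixed in the same stabilizer subspace. Even then, Uhlmann only yields a joint unitary on $L\cup R$. For the endpoint corrections to cancel in the T-junction, the three strings share the ball $R_0$, so a single $U_0$ must be reused and the $U(1)$ phases fixed. This requires $U_{LR}=U_L\otimes U_R$, which the paper obtains from its four-partition decomposition lemma and an entropy comparison.

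As written, steps 1--3 identify what must be proved but not how to prove it.
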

One nontrivial aspect is to rigorously establish that the $\mathbb{Z}_d^{(k)}$ mixed state supports only the anyons in $\mathcal{A}_d^{(k)}$. Namely, one must rule out more complicated string operators whose endpoints carry charges outside $\mathcal A_d^{(k)}$, or coherent superpositions of different anyon charges.
A substantial portion of this section is devoted to establishing this property.

Although we focus on the specific models introduced in Definition~\ref{def:honeycomb-mixed}, we expect that our arguments extend more broadly to fixed-point mixed states realizing abelian chiral anyon theories.
This expectation is partly motivated by a recent result of Ref.~\cite{kim2024classifying2dtopologicalphases}, which showed that anyon theories with a Lagrangian subgroup can be transformed into string-net models via quasi-local quantum circuits.
Our analysis may provide useful insights toward such constructions for mixed states. 

For clarity, we focus on the $\mathbb{Z}_3^{(1)}$ anyon theory, where the proof can be presented in a particularly transparent form. 
The generalization to arbitrary cases with $\mathcal{A}_d^{(k)}\not\simeq \mathcal{A}_d^{(-k)}$ follows straightforwardly.
We will first establish the result in the LU setting, where the notation is simpler, and then explain how the argument extends to the LO setting in Sec.~\ref{sec:LO_generalization}. 

\subsection{Overview of the proof}
For the sake of deriving a contradiction, suppose that there exists a finite-depth local unitary $U$ of depth $r$ such that 
\begin{align} \label{eq: FDLU to rhostar}
U \rho U^{\dagger} = \rho^*.
\end{align} 
Recall that the anyon content of $\mathbb{Z}_3^{(1)}$ and $\mathbb{Z}_3^{(-1)}$ are distinct, i.e. $\mathcal{A}_3^{(1)}\not\simeq \mathcal{A}_3^{(-1)}$.
In particular, for $a\equiv em \in \mathcal{A}_3^{(1)}$ and $a^*\equiv em^{2} \in \mathcal{A}_3^{(-1)}$, we have
\begin{align}
\theta(a) = \theta(a^2) = \omega, \qquad \theta(a^*) = \theta({a^*}^2) = \omega^{2}.
\end{align}

\subsubsection*{String operators}

Consider Pauli string operators $N_1, N_2, N_3$ for $\rho^*$ that are part of loop stabilizers and create pairs of $a^*$ and ${a^*}^2$ anyons in $\rho^*$. 
Define their conjugates under $U$:
\begin{align}
\widetilde{N}_j \equiv U^{\dagger} N_j  U.
\end{align}
Since $U$ has depth $r$, each $\widetilde{N}_j$ is a ``fattened'' string operator of width $\sim 2r$. 

By construction, $\widetilde{N}_j $ commute with both strong and weak symmetry operators ($S_f$ and $g_e$) when acting on $\rho$, except near its endpoints where it creates excitations in $\rho$. Namely, since $N_j$ is part of a loop stabilizer for $\rho^*$, there exists another string operator $N_j'$ such that $W_j = N_j N_j'$ forms a loop stabilizer: 
\begin{align}
W_j  = \figbox{4.0}{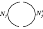}
\end{align}
Since $W_{j}$ is a strong symmetry of $\rho^*$, its conjugate $\widetilde{W_{j}}\equiv U^{\dagger} W_{j} U$ is a strong symmetry of $\rho$:
\begin{align}
\widetilde{W_{j}}\rho =  U^{\dagger} W_{j} U U^{\dagger} \rho^* U =
U^{\dagger} W_{j} \rho^* U =  U^{\dagger} \rho^* U  = \rho. \label{eq:strong_preserved}
\end{align}
Hence, $\widetilde{W_{j}}$ commutes with all strong and weak symmetries  when acting on $\rho$ (Fact.~\ref{fact:strong_commutation}). 
Since $\widetilde{W_{j}} = \widetilde{N}_j \widetilde{N}_{j}'$ and $\widetilde{N}_j,\widetilde{N}_{j}'$ overlap only at balls of radius $\sim r$, it follows that $\widetilde{N}_j'$ also commutes with strong and weak symmetry operators ($S_f,g_e$) away from its endpoints and therefore (at most) creates a pair of excitations in $\rho$. 

\subsubsection*{Fixed anyon charges}

We first show that $\widetilde{N}_j $ must create anyons $1$, $a$, or $a^2$ at its endpoints (i.e. charges $a^x$ and $a^{-x}$ for $x=0,1,2$):
\begin{align}
\figbox{4.0}{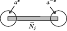}
\end{align}
That is, $\widetilde{N}_j $ creates excitations with \emph{fixed anyon charges} localized within balls of radius $\sim r$. 
Here, anyon charges are measured by eigenvalues of loop stabilizer operators $W_{\ell}$.

While one might expect this property intuitively, establishing it rigorously requires ruling out superpositions or any other unexpected forms of excitations. 
We establish this fact via a standard lightcone argument in Sec.~\ref{sec:anyon_charge}. 

\subsubsection*{Concentrating anyons}

Now consider the T-junction process for $\widetilde{N}_j$. On the one hand, the unitary equivalence between $\widetilde{N}_j$ and ${N}_j$ implies:
\begin{align}  
\Tr\Big[ \widetilde{N}_3^{\dagger} \widetilde{N}_2^{\dagger} \widetilde{N}_1^{\dagger}  \widetilde{N}_3 \widetilde{N}_2 \widetilde{N}_1 \rho\Big] = \Tr\Big[ N_3^{\dagger} N_2^{\dagger} N_1^{\dagger}  N_3 N_2 N_1 \rho^{*}\Big] = \omega^2. \label{eq:T-junction-conjugate}
\end{align}
Here we used the fact that $N_{j}$ are string operators for $a^*$ anyons in $\rho^*$, together with $U \rho U^{\dagger} = \rho^*$. On the other hand, $\rho$ only supports the anyons $a$ and $a^2$ anyons (with $\theta=\omega$). Thus, the T-junction process for $\widetilde{N}_{j}$ should instead yield $\omega$. Although this may seem intuitive, establishing this rigorously requires a careful argument. In particular, we need to show that the anyons created at an endpoint of $\widetilde{N}_{j}$ can be concentrated to a single excitation. 

To this end, we construct localized unitaries $U_{\text{left}}, U_{\text{right}}$, supported on balls of radius $\sim r$, that concentrate all endpoint excitations into a single anyon with a definite charge: 
\begin{align}
\figbox{4.0}{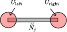} \ = \ \figbox{4.0}{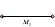} 
\end{align}
This construction follows from the uniqueness of Schmidt decomposition, together with certain decoupling properties, as discussed in Sec.~\ref{sec:concentration1}. 
Thus, on $\rho$, the action of $\widetilde{N}_j$ is equivalent to that of a Pauli string operator $M_{j}$ up to these local unitaries. 
In the canonical purification picture, this reads 
\begin{align}
(U_{\text{left}}U_{\text{right}} \widetilde{N}_j\otimes I) |\Psi_{\rho}\rangle = (M_j\otimes I) |\Psi_{\rho}\rangle. 
\end{align}

Finally, we show that the T-junction process is invariant under such local unitary operators. 
As shown in Sec.~\ref{sec:invariance}, we conclude 
\begin{align}  
\Tr\Big[ \widetilde{N}_3^{\dagger} \widetilde{N}_2^{\dagger} \widetilde{N}_1^{\dagger}  \widetilde{N}_3 \widetilde{N}_2 \widetilde{N}_1 \rho\Big] = \Tr\Big[ M_3^{\dagger} M_2^{\dagger} M_1^{\dagger}  M_3 M_2 M_1 \rho \Big]  = \omega. 
\end{align}
This contradiction rules out the existence of any finite-depth local unitary $U$ such that $U\rho U^{\dagger}=\rho^*$.

\subsection{Anyon charges are fixed}\label{sec:anyon_charge}

We begin by proving that a conjugated string operator $\widetilde{N}_j$ creates a pair of excitations with definite anyon charges at its endpoints. 

\begin{lemma}[Anyon charges]\label{lemma:anyon_charge}
Let $\widetilde{N}$ be a fattened string operator
obtained by conjugating a Pauli string with a local unitary circuit of depth $r$. Let $L$ and $R$ be balls of radius $2r$ containing the two endpoints of $\widetilde N$ %Choose slightly larger balls $L^+$ and $R^+$ around them 
and let $W_L$ and $W_R$ be loop stabilizers enclosing balls $L$ and $R$, respectively.
\begin{align}
 \figbox{4.0}{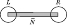} \ .
\end{align}
Specifically, for $\sigma = \widetilde{N} \rho \widetilde{N}^{\dagger}$, we have
\begin{align}
\Tr\big[W_{L} \sigma   \big] = \omega^j, \qquad
\Tr\big[ W_{R}\sigma   \big] = \omega^{-j}
\label{eq:anyon_decomposition}
\end{align}
for some $j=0,1,2$.
\end{lemma}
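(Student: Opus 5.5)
The plan is to reduce $\Tr[W_L\sigma]$ to the expectation value of a \emph{local} operator supported near the point where the loop crosses the fattened string, and then to use local indistinguishability (the topological order condition) of the stabilizer code $\mathcal{S}$ to show that this operator acts on $\rho$ as a pure phase. Write $\Pi$ for the projector onto the code space, so that $\rho\propto\Pi$. Since $W_L\rho=\rho$, we have
\begin{align}
\Tr[W_L\sigma]=\Tr[\widetilde N^\dagger W_L \widetilde N W_L^\dagger\,\rho] \equiv \Tr[V_L\,\rho], \qquad V_L\equiv \widetilde N^\dagger W_L \widetilde N W_L^\dagger .
\end{align}
The ball $L$ of radius $2r$ contains the endpoint region of $\widetilde N$, which has width $\sim 2r$. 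Hence the loop $W_L$ stays a distance $\gtrsim r$ away from where $\widetilde N$ fails to be string-like, and it meets the support of $\widetilde N$ only in a bounded region $X$ where the loop crosses the string. Away from $X$ the operators $W_L$ and $\widetilde N$ act on disjoint sites, so $V_L$ is a unitary supported on a neighbourhood of $X$ of radius $O(r)$.

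The key step is to show that $V_L$ maps the code space to itself, i.e.\ $[V_L,S_f]\Pi=0$ and $[V_L,g_e]\Pi=0$ for all faces $f$ and edges $e$. For generators far from $X$ this holds trivially by support. For generators near $X$, the crossing region lies far from both endpoints of $\widetilde N$. There we use the fact established earlier: the conjugated loop $\widetilde W=\widetilde N\widetilde N'$ is a strong symmetry of $\rho$ (Eq.~\eqref{eq:strong_preserved}), so by Fact~\ref{fact:strong_commutation} it commutes with all $S_f,g_e$ on $\rho$. Since $\widetilde N'$ is supported away from $X$, restricting by lightcone/support shows that $\widetilde N$ commutes on $\rho$ with every generator supported near $X$. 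The loop $W_L$ is itself a strong symmetry, so it also commutes with these generators on $\rho$. Combining the two gives $[V_L,S_f]\Pi=[V_L,g_e]\Pi=0$, and hence $V_L\Pi=\Pi V_L\Pi$.

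Now the code $\mathcal S$ has macroscopic distance on the relevant geometry: its logical operators $g_e$ generate strings that must span the system. Therefore the standard TQO condition applies to any operator supported on a ball of radius $O(r)\ll$ system size, giving $\Pi V_L\Pi=c\,\Pi$. Because $V_L$ is unitary and preserves the code space, $|c|=1$, and $\sigma$ is an eigenstate of $W_L$ with eigenvalue $c$. The spectrum of the Pauli loop $W_L$ lies in $\{1,\omega,\omega^2\}$, so $c=\omega^{j}$ for some $j\in\{0,1,2\}$. The same argument applied to $R$ gives $\Tr[W_R\sigma]=\omega^{j'}$.

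To relate $j'$ to $j$, take a large loop $W_{LR}$ enclosing both $L$ and $R$ (equal to $W_LW_R$ times stabilizers in the complementary region, or chosen homologous to $W_LW_R$). This loop does not cross $\widetilde N$ at all, so $W_{LR}\sigma=\sigma$. Since $W_LW_R\sigma$ has definite eigenvalues, this forces $\omega^{j+j'}=1$, i.e.\ $j'=-j$. I expect the main obstacle to be the middle step: making precise that $\widetilde N$ commutes on $\rho$ with generators near the crossing, using only $\widetilde W$'s strong-symmetry property and supports, without assuming anything about $\widetilde N$ beyond its lightcone structure.
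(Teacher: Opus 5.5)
There is a genuine gap in the step that is supposed to make $\sigma$ an eigenstate of $W_L$. Your reduction to the local unitary $V_L=\widetilde N^\dagger W_L\widetilde N W_L^\dagger$ near the crossing region $X$ is fine. Your last paragraph is essentially the paper's conservation argument via $W_{\mathrm{int}}$, and is also fine.

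First, the TQO condition does not hold for this code. The edge operators $g_e$ are two-body logical operators that act nontrivially on the extensively degenerate code space of $\mathcal S$. So the distance is $O(1)$, and a local operator need not satisfy $\Pi V\Pi=c\,\Pi$. Your claim that the $g_e$ ``generate strings that must span the system'' is false.

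You could try to replace TQO by a commutant argument: an operator that preserves the code space and commutes there with all $g_e$ acts on it as a scalar. But that rests on your middle step, which is not justified. From $[\widetilde N,G]\rho=0$ and $[W_L,G]=0$ you get $V_LG\rho=\widetilde N^\dagger G\,W_L\widetilde N\rho$, whereas $GV_L\rho=G\widetilde N^\dagger W_L\widetilde N\rho$. So you need $\widetilde N^\dagger$ to commute with $G$ on the state $W_L\widetilde N\rho$. You only know this on $\widetilde N\rho$, and $W_L\widetilde N\rho\propto\widetilde N\rho$ is exactly what is being proved.

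``Commuting on $\rho$'' is not preserved under such products. Already for two qubits, take $\Pi$ the projector onto $Z_1=1$, $W_L=Z_1$ and $G=X_2$. One can choose a unitary $\widetilde N$ with $[\widetilde N,G]\Pi=0$ but $[V_L,G]\Pi\neq 0$, and even with $V_L$ not preserving the code space. So the argument as written does not close; some further input about the two endpoints is needed.

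The missing ingredient is the decoupling of the two endpoints. Conservation also has to be used \emph{before} definiteness, not after it as in your plan. The paper argues as follows.
\begin{itemize}
\item $L^+$ and $R^+$ have zero mutual information in $\rho$. Since $\widetilde N$ has depth $r$, $\sigma_{\tilde L\tilde R}=\sigma_{\tilde L}\otimes\sigma_{\tilde R}$, so $\Tr[W_LW_R\sigma]=\Tr[W_L\sigma]\,\Tr[W_R\sigma]$.
\item Conservation gives $\Tr[W_LW_R\sigma]=1$.
\item Since $|\Tr[W_L\sigma]|,|\Tr[W_R\sigma]|\le 1$, both factors must have unit modulus. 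That is only possible if $\sigma$ lies in a single eigenspace of $W_L$, and of $W_R$ with the opposite eigenvalue.
\end{itemize}

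Your local-operator picture can be repaired the same way. For distant $L,R$ one has $\widetilde N^\dagger W_LW_R\widetilde N(W_LW_R)^\dagger=V_LV_R$, with $V_L,V_R$ supported on distant balls, so by factorization of $\rho$ on distant regions $1=\Tr[V_LV_R\rho]=\Tr[V_L\rho]\,\Tr[V_R\rho]$. Hence $|\Tr[V_L\rho]|=1$, which gives $V_L\rho=c\rho$ and therefore $W_L\sigma=c\,\sigma$.
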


In other words, any LU-string operator $\widetilde{N}$ always creates a pair of anyons with \emph{fixed anyon charges} at endpoints, and the total charge is neutral. 
Namely, it does not create superpositions of anyonic excitations on $L$ and $R$.
The assumption that $\widetilde{N}$ is implementable by a finite-depth local unitary is crucial here, and thus the argument applies to abelian anyons only. 
Here, we focus on odd $d$ (in particular $d=3$), where loop stabilizers $W_{L}$ and $W_{R}$ suffice to characterize the anyon charge. 
For even $d$, the statement generalizes by using extended loop operators in the canonical purification (as discussed earlier).

\begin{proof}
Let $L^+$ and $R^+$ be balls of radius $\sim 4r$ that contain $L$ and $R$ respectively. Similarly, we introduce radius $\sim 3r$ balls $\tilde{L}$ and $\tilde{R}$ that contain $L$ and $R$ respectively.
For a sufficiently long string, $L^+$ and $R^+$ are decoupled: 
\begin{align}
\rho_{L^{+}R^{+}} = \rho_{L^{+}}\otimes \rho_{R^{+}}
\end{align}
since the mutual information between $L^+$ and $R^+$ vanishes in $\rho$.
As $\widetilde{N}$ has depth $r$, its action on $\widetilde{L}$ and $\widetilde{R}$ depend only on the degrees of freedom inside $L^+$ and $R^+$, respectively. Hence it cannot generate correlations between $\widetilde{L}$ and $\widetilde{R}$, so for $\sigma = \widetilde{N} \rho\widetilde{N}^{\dagger}$,
\begin{align}
\sigma_{\tilde{L}\tilde{R}} = \sigma_{\tilde{L}}\otimes \sigma_{\tilde{R}}.
\end{align}
Since $W_L$ and $W_R$ are supported on $\tilde{L}$ and $\tilde{R}$ respectively, we have 
\begin{align} \label{eq: factorize loop ends}
\Tr[ W_{ L} W_{ R} \sigma ] =\Tr[ W_{ L}\sigma ] \Tr[ W_{ R}\sigma ]  
\end{align}

Consider the loop stabilizer $W_{\mathrm{int}}$ connecting $W_{ L}$ and $W_{ R}$, such that $W_{\mathrm{int}}W_{ L}W_{ R}$ encloses $\widetilde{N}$:
\begin{align}
\figbox{4.0}{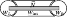} \ = \  \figbox{4.0}{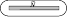}
\end{align}
Since $\widetilde{N}$ commutes with all stabilizer generators outside $L$ and $R$, it commutes with $W_{\mathrm{int}}$. %\TE{This is where having $W_L$ and $W_R$ strictly inside of $L$ and $R$ is an issue. We need $W_{\mathrm{int}}$ to be supported outside of $L$ and $R$. The resolution is clear though, we just need to have $W_L$ and $W_R$ to be supported in an annulus outside of $L$ and $R$. } \DL{Now the $W_L$ and $W_R$ are enclosing $L$ at its boundary. How do you think?} 
Hence, we have 
\begin{align}
\Tr[ W_{L} W_{ R} \sigma ] = \Tr[ W_{L} W_{ R} \widetilde{N} \rho\widetilde{N}^{\dagger} ]
= \Tr[ W_{L} W_{\mathrm{int}} W_{R} \widetilde{N} \rho\widetilde{N}^{\dagger} ].
\end{align}
Here, $W_{L} W_{\mathrm{int}} W_{R}$ is a large loop stabilizer enclosing the support of $\widetilde{N}$ and thus commutes with it. 
Hence we have 
\begin{align}
\Tr[ W_{L} W_{\mathrm{int}} W_{R} \widetilde{N} \rho\widetilde{N}^{\dagger} ] = 
\Tr[ \widetilde{N} W_{L} W_{\mathrm{int}} W_{R}\rho\widetilde{N}^{\dagger} ] =1.
\end{align}
Hence, we arrive at 
\begin{align}
\Tr[ W_{ L}\sigma ] \Tr[ W_{ R}\sigma ]  = \Tr[ W_{ L} W_{ R} \sigma ] = 1.
\label{eq:endpoint-charge-neutrality}
\end{align}

Since the operator norms of $W_{ L}$ and $W_{R}$ are bounded above by 1, $|\Tr[W_L\sigma]|$ and $|\Tr[W_R\sigma]|$ are bounded above by $1$; together with Eq.~\eqref{eq:endpoint-charge-neutrality}, this implies that both expectation values have unit modulus. 
Recalling that $W_L$ and $W_R$ have eigenvalues $\omega^j$, we obtain Eq.~\eqref{eq:anyon_decomposition} as desired.
% \YP{Slightly changed the wording, I think it is better to avoid ``maximum eigenvalue'' since the eigenvalues of $W_L,W_R$ are phases rather than real ordered numbers...}
% $\text{Tr}[W_{L} \sigma]$ and $\text{Tr}[W_R \sigma]$ should be same as the maximum eigenvalues of $W_{L}$ and $W_{R}$, respectively.  
% Recalling that $W_{L}$ and $W_{R}$ have eigenvalues $\omega^j$, we arrive at the desired result.
\end{proof}

\subsection{Concentrating anyons locally}\label{sec:concentration1}

The previous Lemma~\ref{lemma:anyon_charge} established that an LU-string operator $\widetilde{N}$ creates a pair of excitations with definite anyon charges inside balls $L$ and $R$ at its endpoints. 
Here, we show that these delocalized charges can be concentrated into point-like anyons by applying a local unitary $U_{LR} = U_{L}\otimes U_{R}$, supported only on $L$ and $R$. 

\begin{lemma}[Anyon concentration]\label{lemma_concentration}
Let $\widetilde{N}$ and $M$ be a pair of LU-string operators that create the same anyon charges inside balls $L$ and $R$ at their endpoints (as measured by loop stabilizers $W_L$ and $W_R$ respectively): 
\begin{align}
\figbox{4.0}{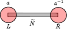} \ , \qquad 
\figbox{4.0}{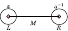}
\end{align}
Then, there exists a unitary operator $U_{LR}=U_{L}\otimes U_{R}$ such that 
\begin{align}
U_{LR}\widetilde{N} \rho = M \rho
\end{align}
or, equivalently in the canonical purification picture,
\begin{align}
(U_{LR}\widetilde{N}\otimes I) |\Psi_{\rho}\rangle = (M\otimes I) |\Psi_{\rho}\rangle.
\end{align}
\end{lemma}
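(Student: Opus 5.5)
The plan is to work in the canonical purification and use the uniqueness of purifications, which is the mechanism behind Uhlmann's theorem. Set $|\Phi\rangle \equiv (\widetilde{N}\otimes I)|\Psi_\rho\rangle$ and $|\Phi'\rangle \equiv (M\otimes I)|\Psi_\rho\rangle$. Both are pure states on $AA'$. The goal is a unitary $U_L\otimes U_R$, acting only on the first layer inside $L$ and $R$, that maps $|\Phi\rangle$ to $|\Phi'\rangle$. Since $\widetilde{N}$ and $M$ are unitaries acting on the first layer only, both states have the same marginal on the second layer $A'$, namely $\rho^*$. This alone does not localize the unitary, so the argument has to go region by region.

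\textbf{Step 1: agreement away from the endpoints.} Let $E$ be the complement of $L\cup R$ in the first layer, together with the whole second layer $A'$. I will show that $\Phi_E=\Phi'_E$. Away from the endpoints, both string operators commute with all strong and weak symmetry generators, so both states are stabilized by every generator of the canonical-purification stabilizer group that is supported outside $L\cup R$. Generators near the string path that are only ``almost'' commuting have to be handled with care. For this I would use the fact that $\widetilde{N}M^\dagger$ acts on $\rho$, away from $L,R$, like a product of strong symmetries: it is a fattened closed-string-like object whose interior can be filled with stabilizers, because the charges match as measured by $W_L$ and $W_R$. The marginal of a stabilizer state on a region is fixed by the stabilizers supported there, and the eigenvalues of the loop operators enclosing $L$ and $R$ agree by hypothesis, via Lemma~\ref{lemma:anyon_charge}. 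Together these should give equality of the reduced states on $E$ up to the local structure near $L$ and $R$. I expect this to be the \textbf{main obstacle}. One must rule out that $\widetilde{N}$ and $M$ differ by a nontrivial weak-symmetry (logical) string, or by something detectable only far away. The route I would try first is the decoupling argument of Lemma~\ref{lemma:anyon_charge}, combined with the cleaning-type statement that $\widetilde{N}M^\dagger$ restricted to the bulk of the string lies in the strong-symmetry group up to operators supported in $L\cup R$.

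\textbf{Step 2: Uhlmann in two stages.} Once $\Phi_E=\Phi'_E$ holds, the two states are purifications of the same state on $E$, with purifying system $L\cup R$. Uniqueness of the Schmidt decomposition then gives a unitary $V_{LR}$ on $L\cup R$ with $V_{LR}|\Phi\rangle=|\Phi'\rangle$. To make it factorize, I would run the argument twice. Applying it with $E$ enlarged to $E\cup R$ gives a unitary $U_L$ on $L$; this requires $\Phi_{ER}=\Phi'_{ER}$. That equality would follow from decoupling: $\Phi_{L^+R^+}$ factorizes, as in the proof of Lemma~\ref{lemma:anyon_charge}, and each endpoint carries a definite charge. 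Hence the state on $E\cup R$ is the state on $E$ glued with the $R$-local data, and the $R$-local data agrees because it is determined by the charge at $R$ together with the stabilizer structure. A symmetric argument gives $U_R$, and then $U_L\otimes U_R$ works. A small subtlety is that Uhlmann's unitary may need slightly enlarged balls, for dimension reasons; taking $L,R$ of radius $2r$ leaves enough room.

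\textbf{Step 3: translating back.} Finally, $(U_{LR}\widetilde{N}\otimes I)|\Psi_\rho\rangle=(M\otimes I)|\Psi_\rho\rangle$ is equivalent to $U_{LR}\widetilde{N}\sqrt{\rho}=M\sqrt{\rho}$. Because the state is a stabilizer state with flat spectrum, $\sqrt\rho\propto\rho$, so this is the same as $U_{LR}\widetilde{N}\rho=M\rho$, which gives the equivalent operator form of the lemma.
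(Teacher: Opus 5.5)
\textbf{Gap 1: Step 2 rests on a false claim.} Your factorization step assumes $\Phi_{ER}=\Phi'_{ER}$, that is, agreement of the two states on everything except $L$. This fails in general, because a definite charge at $R$ does not fix the local state there. For example, take $\widetilde N = V_R M$ with $V_R$ a short Pauli string segment inside $R$ that moves the endpoint. This is an admissible pair with the same charges, yet $|\Phi\rangle$ and $|\Phi'\rangle$ violate different first-layer face stabilizers inside $R$, so their marginals on $E\cup R$ differ. The two-stage combination is also inconsistent as stated. If the equality did hold, Uhlmann would already give $U_L|\Phi\rangle=|\Phi'\rangle$ with no $U_R$ needed. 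Composing with a $U_R$ that satisfies the same equation would then spoil it, since $U_LU_R|\Phi\rangle=U_L|\Phi'\rangle$.

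The only agreement you can use is on $(LR)^c$, so factorization has to come from decoupling. The paper does this as follows.
\begin{enumerate}
\item Split $(LR)^c$ into halves $K_L,K_R$, each occupying both layers, with $I(L:K_RR)=I(R:K_LL)=0$ in both states.
\item Show that each state has the form $(V_{K_L}\otimes V_{K_R})\,|\psi\rangle_{L\hat L}\otimes|\psi\rangle_{\hat K_L\hat K_R}\otimes|\psi\rangle_{R\hat R}$.
\item Use agreement on $K_LK_R$ to show that the isometries obtained for $|\Phi'\rangle$ also reduce $|\Phi\rangle$ to $|\tilde\phi_{lr}\rangle_{L\hat LR\hat R}\otimes|\phi_k\rangle_{\hat K_L\hat K_R}$.
\item Use the identity $S(LK_L)=\tfrac12 I(K_L:K_R)$, valid in both states, which share the $K_LK_R$ marginal. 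It forces $S(L\hat L)=0$, so $|\tilde\phi_{lr}\rangle$ is a product across $L\hat L\,|\,R\hat R$.
\item The $\hat L$ and $\hat R$ marginals agree, so Uhlmann applied separately gives $U_L$ on $L$ and $U_R$ on $R$.
\end{enumerate}

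\textbf{Gap 2: Step 1 is incomplete.} You argue that every generator supported in $(LR)^c$ stabilizes both states and that $W_{\partial L}$ takes the same value in both. This only shows that both marginals on $(LR)^c$ lie in the same code subspace of $\mathcal{S}_{(LR)^c}$, which is also how the paper begins. However, $|\Phi'\rangle$ is entangled across the $LR$ cut, so this restricted code has a nontrivial logical space. Moreover, $\widetilde N$ is not Pauli, so $|\Phi\rangle$ is not a stabilizer state whose marginal is fixed by its stabilizers. You still need to show that $\Tr_{LR}|\Phi\rangle\langle\Phi|$ is maximally mixed within that subspace.

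The paper does this by shrinking to balls $L^-\subset L$ and $R^-\subset R$ and twirling over the logical Paulis of $\mathcal{S}_{(L^-R^-)^c}$. These logicals are either second-layer operators, which commute with the first-layer strings, or are supported in the annuli $L\setminus L^-$ and $R\setminus R^-$, which are traced out. So the twirl leaves $\Tr_{LR}$ unchanged while producing the maximally mixed code state. The cleaning statement you propose instead is that $\widetilde N M^\dagger$ acts on $\rho$ as a strong symmetry up to operators supported in $L\cup R$. That is essentially the content of the lemma itself, so it cannot serve as the input.
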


Before starting the proof, let us emphasize why this statement is non-trivial.
It is relatively straightforward to show, e.g. using the Petz recovery map, that there exists a local quantum channel capable of concentrating anyons to single points. 
However, this is not sufficient for our purposes, as we must establish the existence of a local \emph{unitary} operator that achieves the same task.

At first glance, one might expect that applying standard quantum error correction, e.g. by measuring stabilizers and annihilating anyon pairs, would concentrate anyons to a single point. 
However, such a procedure naturally produces a quantum
channel rather than the unitary operation required here.
In the canonical purification picture, this issue is sharpened by the fact that the two states $(\widetilde{N}\otimes I)|\Psi_\rho\rangle$ and $(M\otimes I)|\Psi_\rho\rangle$ contain excitations not only on strong symmetry stabilizers $S_f\otimes I$, but also on weak symmetry stabilizers $g_e\otimes g_e^*$.
Measuring $g_e \otimes g_e^*$ necessarily involves the second-layer Hilbert space, and therefore does not correspond to a unitary supported solely on the first layer.

Below, we establish the existence of such a unitary operator by relying on the uniqueness of Schmidt decomposition, together with certain decoupling properties. 

\subsubsection*{Schmidt decomposition}

We begin with a standard fact. 

\begin{fact}[Schmidt decomposition]\label{lemma:Uhlmann}
Let $|\Phi\rangle$ and $|\tilde{\Phi}\rangle$ be pure states on a bipartition $A\otimes B$. 
If their marginals on $B$ coincide, 
\begin{align}
\Tr_{A}\big[ |\Phi\rangle \langle \Phi |  \big] = \Tr_{A}\big[ |\tilde{\Phi}\rangle \langle \tilde{\Phi} |  \big]
\end{align}
then there exists a unitary $U_A \otimes I_B$ such that
\begin{align}
(U_A \otimes I_B)  |\Phi\rangle =  |\tilde{\Phi}\rangle.
\end{align}
\end{fact}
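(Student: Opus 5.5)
The plan is to prove Fact~\ref{lemma:Uhlmann} directly from the Schmidt decomposition of each state, building the unitary on $A$ by matching Schmidt vectors. Write
\begin{align}
|\Phi\rangle = \sum_{i} \sqrt{\lambda_i}\, |a_i\rangle_A |b_i\rangle_B, \qquad |\tilde{\Phi}\rangle = \sum_{i} \sqrt{\tilde\lambda_i}\, |\tilde a_i\rangle_A |\tilde b_i\rangle_B,
\end{align}
with $\{|a_i\rangle\}$, $\{|\tilde a_i\rangle\}$ orthonormal in $A$ and $\{|b_i\rangle\}$, $\{|\tilde b_i\rangle\}$ orthonormal in $B$. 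Tracing out $A$ gives $\rho_B=\sum_i \lambda_i |b_i\rangle\langle b_i|$ and $\tilde\rho_B=\sum_i\tilde\lambda_i|\tilde b_i\rangle\langle\tilde b_i|$. These are equal by hypothesis, so they have the same spectrum, and after relabeling $\lambda_i=\tilde\lambda_i$.

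The main obstacle is degenerate eigenvalues. In that case the $B$-Schmidt vectors are not uniquely determined, so we cannot assume $|b_i\rangle=|\tilde b_i\rangle$. I would handle this by fixing one eigenbasis of $\rho_B$ and rewriting both states in it. Within each eigenspace of eigenvalue $\lambda>0$, the vectors $|\tilde b_i\rangle$ and $|b_i\rangle$ are related by a unitary $V^{(\lambda)}_{ij}$, that is, $|\tilde b_i\rangle=\sum_j V^{(\lambda)}_{ji}|b_j\rangle$. Substituting this gives
\begin{align}
|\tilde\Phi\rangle=\sum_j\sqrt{\lambda_j}\,|a'_j\rangle|b_j\rangle, \qquad |a'_j\rangle=\sum_i V_{ji}|\tilde a_i\rangle .
\end{align}
The new vectors $|a'_j\rangle$ are still orthonormal because $V$ is unitary. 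A cleaner way to say the same thing is to note that $|\Phi\rangle=(\sqrt{\rho_B}$-twisted$)$ data determine the vector $\sum_j\sqrt{\lambda_j}|a_j\rangle|b_j\rangle$ up to an isometry on $A$.

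With both states written against the same $B$ basis $\{|b_j\rangle\}$ and the same coefficients, define $U_A$ on the span of $\{|a_j\rangle:\lambda_j>0\}$ by $U_A|a_j\rangle=|a'_j\rangle$. This is an isometry between two subspaces of equal dimension in the finite-dimensional space $A$. Extend it to a full unitary on $A$ by mapping any orthonormal completion of one subspace onto any orthonormal completion of the other; the orthogonal complements have equal dimension, so this is always possible. Then $(U_A\otimes I_B)|\Phi\rangle=|\tilde\Phi\rangle$ holds term by term, which completes the argument.

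As an equivalent check, both states can be written as $(W\otimes\sqrt{\rho_B})$ applied to an unnormalized maximally entangled vector, where $W$ is an isometry depending on the state. The required $U_A$ is then any unitary extension of $\tilde W W^\dagger$ restricted to the range of $W$.
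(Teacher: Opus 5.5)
Your proof is correct and follows essentially the same approach as the paper: the paper states this as a standard fact without proof and later attributes it to uniqueness of the Schmidt decomposition, which is exactly your argument. Your block-unitary change of basis within each eigenspace of $\rho_B$ correctly handles the degenerate case. Extending the partial isometry $|a_j\rangle\mapsto|a'_j\rangle$ to a unitary on $A$ is valid because both spans have dimension $\mathrm{rank}\,\rho_B$.
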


Next, we show that the states created by $\widetilde{N}$ and $M$ agree on the complement of $L\cup R$ in the canonical purification picture (Fig.~\ref{fig_marginals}). 
Note that, since we are working with the canonical purification, $(LR)^{c}$ includes the second-layer Hilbert space.

\begin{figure}
\centering
\raisebox{\height}{\hspace{5pt}}\raisebox{-0.85\height}{\includegraphics[width=0.7\textwidth]{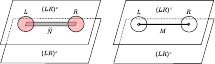}}
\hspace{10pt}
\caption{
In the canonical purification picture, the two states $|\tilde{\Phi}\rangle \equiv (\widetilde{N}\otimes I) |\Psi_{\rho}\rangle$ and $|\Phi \rangle \equiv (M\otimes I) |\Psi_{\rho}\rangle$ coincide on $(LR)^c$. 
}
\label{fig_marginals}
\end{figure}

\begin{lemma}[Marginals]\label{lemma:marginals}
With $|\tilde{\Phi}\rangle \equiv (\widetilde{N}\otimes I) |\Psi_{\rho}\rangle$, and $|\Phi \rangle \equiv (M\otimes I) |\Psi_{\rho}\rangle$, we have
\begin{align}
\Tr_{LR} \big( |\tilde{\Phi}\rangle \langle  \tilde{\Phi} | \big) = \Tr_{LR} \big( |\Phi\rangle \langle  \Phi  | \big). \label{eq:marginal}
\end{align}
\end{lemma}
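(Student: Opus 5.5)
We will show that the reduced states on $(LR)^c$, which includes the entire second layer $A'$, are determined solely by the anyon charges at the two endpoints. Since $\widetilde N$ and $M$ create the same charges, the two reduced states must then coincide.

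The first step is to move the string operators into a common form away from the endpoints. Let $W=W_L W_{\mathrm{int}} W_R$ be a large loop stabilizer, as in the proof of Lemma~\ref{lemma:anyon_charge}, and let $N'$ be the complementary string with $\widetilde N N'=\widetilde W$. More directly, compare $\widetilde N$ and $M$ through the operator $V\equiv M^\dagger\widetilde N$. Away from $L\cup R$, both strings commute with all strong generators $S_f$ and weak generators $g_e$ on $\rho$. In the canonical purification, this means both $|\tilde\Phi\rangle$ and $|\Phi\rangle$ are $+1$ eigenstates of every stabilizer of $\mathcal S_{\Psi_{\rho_\mathcal S}}$ supported in $(LR)^c$, namely $S_f\otimes I$, $I\otimes S_f^*$, and $g_e\otimes g_e^*$. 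In addition, both states are eigenstates of the loop operators $W_L$ and $W_R$, with equal eigenvalues $\omega^{\pm j}$ (Lemma~\ref{lemma:anyon_charge} together with the hypothesis).

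The second step expresses each marginal in stabilizer form. For any state $|\chi\rangle$ on the doubled system that is fixed by all stabilizers supported in $(LR)^c$, the reduced density matrix on $(LR)^c$ is determined by the expectation values of Pauli operators $P$ supported on $(LR)^c$. For the stabilizer pure state $|\Psi_\rho\rangle$ itself, $\langle P\rangle$ vanishes unless $P$ lies, up to phase, in the stabilizer group $\mathcal S_{\Psi}$. We want to reduce to this situation. Write $|\tilde\Phi\rangle\langle\tilde\Phi|$ and expand $\Tr_{LR}$ in Paulis. For a Pauli $P$ on $(LR)^c$,
\begin{equation}
\langle\tilde\Phi|P|\tilde\Phi\rangle=\langle\Psi_\rho|\widetilde N^\dagger P\widetilde N|\Psi_\rho\rangle .
\end{equation}
If $P$ is not in $\mathcal S_\Psi$ restricted to $(LR)^c$, then there is a stabilizer $T$ supported in $(LR)^c$, possibly of a larger loop type, that anticommutes with $P$ in the $\omega$ sense. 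We then use that $T$ fixes $|\tilde\Phi\rangle$ to conclude that the expectation value vanishes for both states. If instead $P$ equals a product of stabilizers restricted to $(LR)^c$, the expectation value depends only on the eigenvalues of the cut stabilizers, which are precisely the loop operators around $L$ and $R$. These have equal values $\omega^{\pm j}$ for the two states. In each case the expectations agree.

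The main obstacle is the case analysis in the second step: showing that the algebra of Pauli operators on $(LR)^c$ modulo stabilizers is detected exactly by $W_L$ and $W_R$, together with the stabilizers strictly inside $(LR)^c$. This is a cleaning or topological-order condition for the doubled stabilizer code, and it must hold for the annular regions, including the second layer. I would establish it by direct computation for the honeycomb purification. The key fact is that a Pauli operator on $(LR)^c$ commuting with all stabilizers there is, up to stabilizers, a product of the loop operators $W_L$ and $W_R$, or of their extended versions for even $d$. Expectation values of non-Pauli observables then follow by linearity. A further subtlety is that $\widetilde N$ is not Pauli, so the vanishing argument must use only the eigenstate property of $|\tilde\Phi\rangle$, never its Pauli structure. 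This is why I phrase every step through the stabilizers $T$ acting on $|\tilde\Phi\rangle$ rather than through commutation with $\widetilde N$ itself.
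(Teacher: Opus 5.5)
Your first step is sound and matches the first half of the paper's proof. Both $|\tilde\Phi\rangle$ and $|\Phi\rangle$ are $+1$ eigenstates of every local generator supported in $(LR)^c$, and they share the eigenvalues of $W_{\partial L}$ and $W_{\partial R}$. So both marginals lie in the same stabilizer subspace of $\mathcal{S}_{(LR)^c}$.

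The gap is your ``key fact'' that any Pauli operator on $(LR)^c$ commuting with all stabilizers supported there is, modulo those stabilizers, a product of $W_L$ and $W_R$. This is false. It would mean the code $\mathcal{S}_{(LR)^c}$ has no logical qudits, i.e.\ that $\Tr_{LR}(|\Phi\rangle\langle\Phi|)$ is pure. In fact its code space has dimension $e^{S_\Phi(LR)}$, which grows with the size of $L$, since the first-layer ball is entangled with, among other things, the second-layer qudits beneath it.

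Concretely, take an edge $e$ inside $L$. The operator $I\otimes g_e^*$ is supported in $(LR)^c$ and commutes with every $S_f\otimes I$, $I\otimes S_f^*$ and $g_{e'}\otimes g_{e'}^*$ supported there. Yet it is not a stabilizer, because $g_e$ is only a weak symmetry. Truncations of the stabilizers cut by $\partial L$ and $\partial R$ give further examples; these are the analogues of the two types of logical operators the paper enumerates.

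For such $P$ there is no $T$ inside $(LR)^c$ with which to run your vanishing argument. The stabilizers that detect $P$ are cut by the bipartition, and $|\tilde\Phi\rangle$ need not be an eigenstate of them, since $\widetilde N$ acts nontrivially inside $L\cup R$. Your sentence saying that expectation values of restricted stabilizers are fixed by ``the eigenvalues of the cut stabilizers, which are precisely the loop operators'' conflates two different things. Truncated cut stabilizers are nontrivial logicals with expectation $0$ in $|\Phi\rangle$. $W_{\partial L}$ and $W_{\partial R}$ are genuine stabilizers supported in $(LR)^c$.

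What is missing is therefore the substantive half of the lemma. You must show that the marginal of $|\tilde\Phi\rangle$ is \emph{maximally mixed} within that subspace. Equivalently, given the fixed support, $S_{\tilde\Phi}((LR)^c)\ge S_{\Phi}((LR)^c)$, i.e.\ every nontrivial logical operator of $\mathcal{S}_{(LR)^c}$ has the same vanishing expectation in $|\tilde\Phi\rangle$ as in $|\Phi\rangle$. No cleaning property of the code can give this, because an arbitrary state in the code subspace need not be maximally mixed. You have to use how $\widetilde N$ acts.

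The paper does this with a logical twirl:
\begin{enumerate}
\item It shrinks $L,R$ to $L^-,R^-$ that still enclose the same charges.
\item It observes that twirling over the Pauli logicals of $\mathcal{S}_{(L^-R^-)^c}$ sends $\Tr_{L^-R^-}(|\tilde\Phi\rangle\langle\tilde\Phi|)$ to the maximally mixed code state $\Tr_{L^-R^-}(|\Phi\rangle\langle\Phi|)$.
\item It argues that inserting the twirl is invisible once the annuli $L_\Delta,R_\Delta$ are also traced out. Each logical is either supported in the annuli, or is a second-layer weak-symmetry operator that commutes with $\widetilde N\otimes I$.
\end{enumerate}
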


\begin{proof}
Observe that $\Tr_{LR}(|\Phi \rangle \langle \Phi|)$ is a stabilizer mixed state -- namely, the maximally mixed state supported within the stabilizer subspace, defined by the stabilizer group $\mathcal{S}_{(LR)^c}$ consisting of all the stabilizers of $|\Phi \rangle$ contained within $(LR)^c$.
Our goal is to show that $\Tr_{LR}(|\tilde{\Phi} \rangle \langle \tilde{\Phi}|)$ is the same stabilizer mixed state.

We first show that $\Tr_{LR}(|\tilde{\Phi} \rangle \langle \tilde{\Phi}|)$ is supported in the stabilizer subspace of $\mathcal{S}_{(LR)^c}$. 
None of the local stabilizer generators $S_f\otimes I$, $I\otimes S_f^{*}$, and $g_e\otimes g_e^{*}$ supported on $(LR)^c$ are violated by  $\widetilde{N}\otimes I$ or $M\otimes I$.
% All local stabilizers, $S_f\otimes I$, $I\otimes S_f^{*}$, and $g_e\otimes g_e^{*}$ supported on $(LR)^c$, are unaffected by either $\widetilde{N}\otimes I$ or $M\otimes I$.
Therefore,  $\Tr_{LR}(|\tilde{\Phi} \rangle \langle \tilde{\Phi}|)$ has eigenvalue $+1$ under all such local stabilizers. 
The only non-trivial stabilizer on $(LR)^c$, which cannot be generated by local stabilizers on $(LR)^c$, is the loop stabilizer $W_{\partial L}$. Since $W_{\partial R}$ can be generated from $W_{\partial L}$ (their product being locally generated), it suffices to check $W_{\partial L}$. 
By assumption, $\widetilde{N}$ and $M$ create identical anyon charges inside $L$ and $R$, so both yield the same expectation values for $W_{\partial L}$ and $W_{\partial R}$. 
This proves that $\Tr_{LR}(|\tilde{\Phi} \rangle \langle \tilde{\Phi}|)$ is supported in the same stabilizer subspace of $\mathcal{S}_{(LR)^c}$. For even $d$, the same reasoning applies using the extended loop stabilizer. 

The remaining task is to prove that $\Tr_{LR}(|\tilde{\Phi} \rangle \langle \tilde{\Phi}|)$ is maximally mixed inside this subspace. 
For this purpose, we introduce slightly smaller balls $L^-$ and $R^-$ contained inside $L$ and $R$ respectively, such that the loop operators $W_{L^-}$ and $W_{R^-}$ still measure the same anyon charges for both $|\tilde{\Phi} \rangle$ and $|\Phi \rangle$. 
Such regions can always be chosen by enlarging $L$ and $R$ if necessary.
We denote the annular regions surrounding $L^-$ and $R^-$ by $L_{\Delta}$ and $R_{\Delta}$, respectively, so that
\begin{align}
L = L^- \cup L_{\Delta}, \qquad
R = R^- \cup R_{\Delta}.
\end{align}

Let $\mathcal{L}_{(L^{-}R^{-})^c}$ be the set of Pauli logical operators associated with the stabilizer code $\mathcal{S}_{(L^{-}R^{-})^c}$. 
Consider the quantum channel defined by twirling over these Pauli logical operators:
\begin{align}
\mathcal{Q}_{(L^{-}R^{-})^c } (\cdot) = \frac{1}{|\mathcal{L}_{(L^{-}R^{-})^c}|} \sum_{P \in \mathcal{L}_{(L^{-}R^{-})^c}} P (\cdot) P^{\dagger}. 
\end{align}
This quantum channel acts as a depolarizing channel within the stabilizer subspace of  $\mathcal{S}_{(L^{-}R^{-})^c}$, and therefore produces the maximally mixed state in that subspace.
Hence, we have 
\begin{align}
\mathcal{Q}_{(L^{-}R^{-})^c } \big( \Tr_{L^{-}R^{-}}(|\tilde{\Phi} \rangle \langle \tilde{\Phi}|) \big) 
= \Tr_{L^{-}R^{-}}(|\Phi \rangle \langle \Phi|). 
\end{align}

Next, we classify the logical operators in $\mathcal{L}_{(L^{-}R^{-})^c}$.
There are two types of logical operators.
The first type consists of weak-symmetry operators supported entirely on the second layer.\footnote{Specifically, the elements of $\mathcal{G}^*$ (see Sec.~\ref{sec:2.2}) lying below the regions $L$ and $R$ are weak-symmetry operators.}
The second type consists of logical operators supported on $L_{\Delta}$ and $R_{\Delta}$, corresponding to truncated stabilizer and weak-symmetry operators (see Sec.~\ref{sec:4-partite_additional}).
The first-type logical operators are supported entirely on the second layer, and therefore commute with both $M\otimes I$ and $\widetilde{N}\otimes I$. The second-type logical operators are supported within $L_{\Delta}R_{\Delta}$ and therefore disappear after tracing out $L_{\Delta}R_{\Delta}$.
Thus, inserting the logical twirl before tracing out $L_{\Delta}R_{\Delta}$ leaves the reduced state unchanged:
\begin{align}
 \Tr_{LR}(|\tilde{\Phi} \rangle \langle \tilde{\Phi}|) = 
 \Tr_{L_{\Delta}R_{\Delta}} \Tr_{L^- R^- }(|\tilde{\Phi}  \rangle \langle \tilde{\Phi}|) 
 =  \Tr_{L_{\Delta}R_{\Delta}}  \mathcal{Q}_{(L^{-}R^{-})^c } \big( \Tr_{L^- R^- }(|\tilde{\Phi}  \rangle \langle \tilde{\Phi}|) \big).
\end{align}
Using the previous identity for the twirled state, we obtain
\begin{align}
 \Tr_{LR}(|\tilde{\Phi} \rangle \langle \tilde{\Phi}|)=  \Tr_{L_{\Delta}R_{\Delta}}  \Tr_{L^- R^- }(|\Phi  \rangle \langle \Phi|)  =  \Tr_{LR}(|\Phi \rangle \langle \Phi|).
\end{align}

\end{proof}

\subsubsection*{Factorization of $U_{LR}$}

Lemma~\ref{lemma:marginals}, together with Fact~\ref{lemma:Uhlmann}, implies the existence of a local unitary $U_{LR}$. 
It remains to show that there exists a $U_{LR}$ that factorizes as $U_{LR}=U_{L} \otimes U_R$. 
This follows from a refinement of Fact~\ref{lemma:Uhlmann}, applied to four partitions. 

\begin{lemma}\label{lemma:four_partition}
Consider a pure state $|\Psi\rangle$ on $LK_{L}K_{R}R$ satisfying
\begin{align}
I(L: K_{R}R) = I(R: K_{L}L) = 0.
\end{align}
Then there exist isometries $V_{K_{L}}: \hat{L} \otimes \hat{K}_{L} \rightarrow K_{L} $ and $V_{K_{R}}: \hat{R} \otimes \hat{K}_{R} \rightarrow K_{R}$ such that 
\begin{align}
|\Psi\rangle = (I_{L} \otimes V_{K_{L}} \otimes V_{K_{R}} \otimes I_{R} ) |\psi\rangle_{L\hat{L}} \otimes |\psi\rangle_{\hat{K}_{L}\hat{K}_{R}}  \otimes |\psi\rangle_{R\hat{R}} 
\end{align}
for some states $ |\psi\rangle_{L\hat{L}}$, $ |\psi\rangle_{\hat{K}_{L}\hat{K}_{R}}$ and $ |\psi\rangle_{R\hat{R}}$, or equivalently 
\begin{align}
|\Psi\rangle = 
\figbox{4.0}{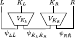} \ .
\end{align}
\end{lemma}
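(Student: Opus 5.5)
The proof applies the standard "zero mutual information implies product structure" argument twice, once on each side, and then recombines the pieces.

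\textbf{Step 1: decompose at $L$.} Since $|\Psi\rangle$ is pure on $L K_L K_R R$, the condition $I(L:K_RR)=0$ says that
\begin{align}
\rho_{LK_RR}=\rho_L\otimes\rho_{K_RR}.
\end{align}
Purify $\rho_L$ by $|\psi\rangle_{L\hat L}$ and $\rho_{K_RR}$ by some $|\phi\rangle_{K_RR\hat M}$, where the auxiliary systems $\hat L$ and $\hat M$ are chosen as the supports of the respective complementary marginals. The tensor product $|\psi\rangle_{L\hat L}\otimes|\phi\rangle_{K_RR\hat M}$ is then a purification of $\rho_{LK_RR}$, with purifying system $\hat L\hat M$. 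Now $|\Psi\rangle$ is another purification of the same state, with purifying system $K_L$. By uniqueness of purifications up to an isometry on the purifying system (Fact~\ref{lemma:Uhlmann}, in its isometric form), there is an isometry $W:\hat L\hat M\to K_L$ such that
\begin{align}
|\Psi\rangle=(W\otimes I)\,|\psi\rangle_{L\hat L}\otimes|\phi\rangle_{K_RR\hat M}.
\end{align}
The isometric form is valid because $\dim(\hat L\hat M)$ equals the rank of $\rho_{K_L}$, which is at most $\dim K_L$.

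\textbf{Step 2: decompose at $R$.} This is the main obstacle. We must apply the same reasoning to $R$ inside $|\phi\rangle$, which requires $I(R:K_R)_\phi=0$ in the reduced picture. Two facts deliver this:
\begin{itemize}
\item Applying the isometry $W$ does not change any marginal on $K_RR$, so $\rho_{K_RR}$ is the same whether computed from $|\phi\rangle$ or from $|\Psi\rangle$.
\item The hypothesis $I(R:K_LL)=0$ with purity gives $\rho_{RK_LL}=\rho_R\otimes\rho_{K_LL}$. Tracing out $K_LL$ and using monotonicity of mutual information, $I(R:K_R)\le I(R:K_RL)$, which vanishes because it is bounded by $I(R:K_LL)=0$ after purifying. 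To avoid relying on this inequality chain, it is cleaner to argue that $I(R:K_R\hat M)_\phi=0$ directly: $\hat M$ together with $\hat L$ maps isometrically onto $K_L$, so $\rho_{R\hat L\hat M L}$ is isometric to $\rho_{RK_LL}$, which is a product with $R$, and tracing out $\hat L L$ preserves the product form.
\end{itemize}
With $\rho_{R\hat M}=\rho_R\otimes\rho_{\hat M}$ in $|\phi\rangle$, we repeat Step 1 inside $|\phi\rangle$:
\begin{align}
|\phi\rangle=(I\otimes W')\,|\psi\rangle_{R\hat R}\otimes|\chi\rangle_{\hat M\hat K_R},\qquad W':\hat R\hat K_R\to K_R .
\end{align}

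\textbf{Step 3: reassemble.} Define $V_{K_R}=W'$, relabel $\hat M$ as $\hat K_L$, define $V_{K_L}=W$, and set $|\psi\rangle_{\hat K_L\hat K_R}=|\chi\rangle$. Substituting the expression for $|\phi\rangle$ from Step 2 into the expression for $|\Psi\rangle$ from Step 1 gives exactly the claimed form
\begin{align}
|\Psi\rangle=(I_L\otimes V_{K_L}\otimes V_{K_R}\otimes I_R)\,|\psi\rangle_{L\hat L}\otimes|\psi\rangle_{\hat K_L\hat K_R}\otimes|\psi\rangle_{R\hat R}.
\end{align}

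The only delicate points are the bookkeeping of auxiliary system dimensions, so that the purification maps are genuine isometries into $K_L$ and $K_R$, and the verification of the second decoupling condition carried out in Step 2.
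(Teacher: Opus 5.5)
Your proposal is correct and follows the paper's proof essentially step for step. You apply purification uniqueness to $\rho_{LK_RR}=\rho_L\otimes\rho_{K_RR}$ to get $V_{K_L}$, show that $R$ is decoupled from the auxiliary system $\hat K_L$ (your $\hat M$), and repeat the argument to get $V_{K_R}$.

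One cleanup in Step 2. The condition you need, and which your isometry argument via $W$ correctly establishes, is $\rho_{R\hat M}=\rho_R\otimes\rho_{\hat M}$; this is the paper's $I(R:\hat K_L)=0$. By contrast, $I(R:K_R)$ and $I(R:K_R\hat M)_\phi$ both equal $2S(R)$ and generally do not vanish. So delete the misstated requirement and the monotonicity chain.
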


\begin{proof}
Since $I(L:K_R R)=0$, the reduced density matrix of $|\Psi\rangle$ on $LK_RR$ factorizes as $\rho_{L K_R R} = \rho_L \otimes \rho_{K_R R}$. One purification of this state is $|\Psi\rangle$. Another purification comes from purifying $\rho_L$ into $|\psi\rangle_{L\hat{L}}$ and $\rho_{K_R R}$ into $|\psi\rangle_{\hat{K}_L K_R R}$. 
Fact~\ref{lemma:Uhlmann} then implies the existence of an isometry 
$V_{K_L}:\hat{L} \otimes \hat{K}_L \to K_L$ such that
\begin{align}
|\Psi\rangle 
= (I_L \otimes V_{K_L} \otimes I_{K_R R}) 
\bigl( |\psi\rangle_{L\hat{L}} \otimes |\psi\rangle_{\hat{K}_L K_R R} \bigr).
\end{align}
Next, the condition $I(R:K_L L)=0$ implies $I(R:\hat{K}_L)=0$ in the state 
$|\psi\rangle_{\hat{K}_L K_R R}$.
Applying the same argument again, there exists an isometry 
$V_{K_R}:\hat{R} \otimes \hat{K}_R \to K_R$ such that
\begin{align}
|\psi\rangle_{\hat{K}_L K_R R}
= (I_{\hat{K}_L} \otimes V_{K_R} \otimes I_R)
\bigl( |\psi\rangle_{\hat{K}_L \hat{K}_R} \otimes |\psi\rangle_{R\hat{R}} \bigr).
\end{align}
Combining the two decompositions proves the claim.
\end{proof}

\begin{figure}
\centering
\raisebox{\height}{\hspace{5pt}}\raisebox{-0.85\height}{\includegraphics[width=0.35\textwidth]{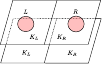}}
\hspace{10pt}
\caption{
The system is divided into four subsystems $L, K_{L}, K_{R}, R$ in the canonical purification picture. Here, $L$ ($R$) is decoupled from $K_{R}R$ ($K_{L}L$). 
}
\label{fig_four_regions}
\end{figure}
% \noindent 

Lemma~\ref{lemma:four_partition} is particularly useful for topologically ordered systems, since they naturally satisfy its assumptions. 
% For the purpose of showing that there exists a $U_{LR}$ that factorizes on $L$ and $R$, 
In particular, we consider a canonical purification on the four subsystems $L,R,K_{L},K_{R}$, as illustrated in Fig.~\ref{fig_four_regions}.
Here, $L$ and $R$ are balls on the top (first) layer while $K_{L}$ and $K_{R}$ occupy both layers such that $L\cup K_{L}$ and $R \cup K_{R}$ divide the system into two halves. 
In this setup, one can verify that $L$ is entangled only with $K_{L}$, and $R$ only with $K_{R}$:
\begin{align}
I(L:RK_R)=I(R:L K_L) = 0,\quad\text{for both $\ket{\Phi}$ and $\ket{\tilde\Phi}$}.
\end{align}
% so that $L$ is decoupled from $RK_{R}$ and $R$ is decoupled from $LK_{L}$. 

Given this decomposition into subsystems, we start by applying Lemma~\ref{lemma:four_partition} to $|\Phi\rangle=(M\otimes I) |\Psi_{\rho}\rangle$. This gives us
\begin{align}
V_{K_L}^\dagger V_{K_R}^\dagger|\Phi\rangle =  |\phi_l\rangle_{L\hat{L}} \otimes |\phi_k\rangle_{\hat{K}_{L}\hat{K}_{R}}  \otimes |\phi_r\rangle_{R\hat{R}}. 
\end{align}
To simplify the notation, we write this state as
\begin{align}
    |\chi\rangle = |\phi_l\rangle_{L\hat{L}} \otimes |\phi_k\rangle_{\hat{K}_{L}\hat{K}_{R}}  \otimes |\phi_r\rangle_{R\hat{R}}.
\end{align}
A similar decomposition holds for $|\tilde{\Phi}\rangle$, but with potentially different isometries $\tilde{V}_{K_L}, \tilde{V}_{K_R}$. %$V_{L}', V_{R}'$.
Our first goal is to show that, in fact, the same isometries $V_{K_L}, V_{K_R}$
%$V_{L}, V_{R}$ 
suffice for $|\tilde{\Phi}\rangle$. 

Applying $V_{K_L}^\dagger$ and $V_{K_R}^\dagger$ to $|\tilde{\Phi}\rangle$, we claim that: 
\begin{align}
V_{K_L}^{\dagger} V_{K_R}^{\dagger}|\tilde{\Phi}\rangle =  %(I_{L} \otimes V_{K_{L}} \otimes V_{K_{R}} 
%\otimes I_{R} ) 
 |\tilde{\phi}_{lr}\rangle_{L\hat{L}R\hat{R}} \otimes |\phi_k\rangle_{\hat{K}_{L}\hat{K}_{R}},
\end{align}
which we define as
\begin{align}\label{eq: tilde chi}
    |\tilde{\chi}\rangle = |\tilde{\phi}_{lr}\rangle_{L\hat{L}R\hat{R}} \otimes |\phi_k\rangle_{\hat{K}_{L}\hat{K}_{R}}.
\end{align}
To see that $|\tilde{\chi}\rangle$ admits this form, we first recall that the marginals of $|\Phi\rangle$ and $|\tilde{\Phi}\rangle$ agree on $K_LK_R$, i.e.,  $\rho_{K_LK_R}^\Phi = \rho_{K_LK_R}^{\tilde{\Phi}}$. Then, applying the isometries $V_{K_L}, V_{K_R}$ to both sides of the equation, we obtain $\rho_{\hat{L}\hat{K}_L\hat{K}_R\hat{R}}^\chi = \rho_{\hat{L}\hat{K}_L\hat{K}_R\hat{R}}^{\tilde{\chi}}$. Tracing out $\hat{L}\hat{R}$ gives us $\rho_{\hat{K}_L\hat{K}_R}^\chi = \rho_{\hat{K}_L\hat{K}_R}^{\tilde{\chi}}$. This shows us that $\rho_{\hat{K}_L\hat{K}_R}^{\tilde{\chi}}$ must be the density matrix for the pure state $|\phi_k\rangle_{\hat{K}_{L}\hat{K}_{R}}$. Thus, $|\tilde{\chi}\rangle$ must take the form in Eq.~\eqref{eq: tilde chi}.

We further need to show that $|\tilde{\chi}\rangle$ factorizes on $L\hat{L}$ and $R\hat{R}$. We accomplish this by showing that $S_{\tilde{\chi}}(L\hat{L})=0$. We begin by computing $S_{\tilde{\Phi}}(LK_L)$. Due to the invariance of this entropy under the isometry $V_{K_L}^\dagger$, we have
\begin{align} \label{eq: entropy 1}
    S_{\tilde{\Phi}}(LK_L) = S_{\tilde{\chi}}(L\hat{L}\hat{K}_L).
\end{align}
Given the factorization in Eq.~\eqref{eq: tilde chi}, the right-hand side decomposes as:
\begin{align} \label{eq: entropy 2}
    S_{\tilde{\chi}}(L\hat{L}\hat{K}_L) = S_{\tilde{\chi}}(L\hat{L})+S_{\phi_k}(\hat{K}_L)
\end{align}
Similarly, we can compute $S_{{\Phi}}(LK_L)$, which gives:
\begin{align} \label{eq: entropy 3}
    S_{{\Phi}}(LK_L) = S_{\phi_k}(\hat{K}_L).
\end{align}
Here, the factor $S_{{\chi}}(L\hat{L})$ vanishes, because it is a tensor product.
Putting together Eqs.~\eqref{eq: entropy 1}, \eqref{eq: entropy 2}, and \eqref{eq: entropy 3}, we have
\begin{align}
  S_{\tilde{\chi}}(L\hat{L}) =  S_{\tilde{\Phi}}(LK_L) -S_{{\Phi}}(LK_L).
\end{align}

Thus, it only remains to show that $S_{\tilde{\Phi}}(LK_L) = S_{{\Phi}}(LK_L)$. To do so, we consider the mutual information $I_\Phi(K_L:K_R)$:
\begin{eqs}
    I_\Phi(K_L:K_R) 
    &= S_\Phi(K_L) + S_\Phi(K_R)-S_\Phi(K_LK_R) \\
    &= S_\chi(\hat{L}\hat{K}_L) + S_\chi(\hat{K}_R\hat{R})-S_\chi(\hat{L}\hat{K}_L\hat{K}_R\hat{R}) \\
    &= \left[  S_{\phi_l}(\hat{L}) +S_{\phi_k}(\hat{K}_L) \right ] + \left[  S_{\phi_r}(\hat{R}) +S_{\phi_k}(\hat{K}_R) \right ] - \left[  S_{\phi_l}(\hat{L}) +S_{\phi_r}(\hat{R}) \right ] \\
    &= 2S_{\phi_k}(\hat{K}).
\end{eqs}
In the second equality, we used the invariance of the mutual information under the isometries $V^\dagger_{K_L}, V^\dagger_{K_R}$. In the third equality, we used the explicit form of $|\chi\rangle$. The last equality follows from the fact that $|\phi_k\rangle_{\hat{K}_L\hat{K}_R}$ is pure. 
Looking back at Eq.~\eqref{eq: entropy 3}, we can now write
\begin{align}
    S_{{\Phi}}(LK_L) = \frac12 I_\Phi(K_L:K_R).
\end{align}

The above equation follows from the structure established in Lemma~\ref{lemma:four_partition}, and that $\hat{L}$, $\hat{K_L}$, $\hat{R}$, $\hat{K_R}$ do not appear explicitly.
The same computation applies to $|\tilde\Phi\rangle$ (using the existence of isometries $\tilde{V}_{K_L}, \tilde{V}_{K_R}$ for $|\tilde{\Phi}\rangle$, which are a priori not required to coincide with $V_{K_L}, V_{K_R}$):
\begin{align}
     S_{{\tilde{\Phi}}}(LK_L) = \frac12 I_{\tilde{\Phi}}(K_L:K_R).
\end{align}
Finally, since the mutual information between $K_L$ and $K_R$ only depends on the reduced density matrix on $K_LK_R$, which is the same for $|\Phi\rangle$ and $|\tilde{\Phi}\rangle$, we arrive at
\begin{align}
    S_{{\Phi}}(LK_L) = S_{{\tilde{\Phi}}}(LK_L).
\end{align}
Thus, $S_{\tilde{\chi}}(L\hat{L})=0$, implying that $|\tilde{\chi}\rangle$ factorizes on $L\hat{L}$ and $R\hat{R}$:
\begin{align}
    |\tilde{\chi}\rangle = |\tilde{\phi}_l\rangle_{L\hat{L}} \otimes |\phi_k\rangle_{\hat{K}_{L}\hat{K}_{R}}  \otimes |\tilde{\phi}_r\rangle_{R\hat{R}}.
\end{align}

We are now able to argue that there is a unitary $U_{LR}$ that factorizes on $L$ and $R$. The states $|{\phi}_l\rangle_{L\hat{L}}$ and $|\tilde{\phi}_l\rangle_{L\hat{L}}$ have the same marginal on $\hat{L}$, as a consequence of $|\Phi\rangle$ and $|\tilde{\Phi}\rangle$ having the same marginal on $K_LK_R$. Therefore, Fact~\ref{lemma:Uhlmann} tells us that there exists a unitary $U_L$ that maps $|\tilde{\phi}_l\rangle_{L\hat{L}}$ to $|{\phi}_l\rangle_{L\hat{L}}$. Likewise, there is a unitary $U_R$ supported on $R$. Since the isometries are not supported on $L$ and $R$, we have our desired factorization of $U_{LR}$: 
\begin{align}
    U_L\otimes U_R |\tilde{\Phi}\rangle = |\Phi\rangle.
\end{align}

\subsection{Topological spin depends only on anyon type}\label{sec:invariance}

%We have established that LU-string operators create excitations with definite anyon total charges which can be concentrated into point-like anyons by a local unitary supported on balls localized on endpoints of the strings. 
Finally, we prove that the topological spin depends only on the anyon type. 
Here, we characterize the topological spin by the T-junction process. 
Let $R_0, R_1,R_2,R_3$ be balls of finite radius as shown in Fig.~\ref{fig_T_junction_balls}. 
We consider a triple of LU-transformed string operators $\widetilde{N}_1, \widetilde{N}_2, \widetilde{N}_3$ that transport excitations between $R_j$'s as follows:
\begin{align}
\widetilde{N}_1: R_1 \rightarrow R_0 \qquad  
\widetilde{N}_2: R_0 \rightarrow R_2 \qquad
\widetilde{N}_3: R_3 \rightarrow R_0.
\end{align}
Furthermore, we assume that these LU-string operators can be combined to form longer LU-string operators, namely 
\begin{align}
\widetilde{N}_2 \widetilde{N}_1: R_1 \rightarrow R_2 \qquad  
\widetilde{N}_1^{\dagger}\widetilde{N}_3: R_3 \rightarrow R_1 \qquad
\widetilde{N}_3^{\dagger}\widetilde{N}_2^{\dagger}: R_2 \rightarrow R_3.
\end{align}
Such a triple of LU-string operators $\widetilde{N}_1, \widetilde{N}_2, \widetilde{N}_3$ defines the topological spin:
\begin{align}
\theta(\widetilde{N}_1, \widetilde{N}_2, \widetilde{N}_3) \equiv \Tr\Big[ \widetilde{N}_3^{\dagger} \widetilde{N}_2^{\dagger} \widetilde{N}_1^{\dagger}  \widetilde{N}_3 \widetilde{N}_2 \widetilde{N}_1 \rho\Big]. 
\end{align}

\begin{lemma}[Topological spin]\label{lemma_spin}
Let $\widetilde{N}_j$ and $M_j$ ($j=1,2,3$) be triples of LU-string operators that create the same anyon charges inside balls $R_j$ (as measured by loop stabilizers $W_{R_j}$). Then, their topological spins match
\begin{align}
\theta(\widetilde{N}_1, \widetilde{N}_2, \widetilde{N}_3) = \theta(M_1, M_2, M_3).
\end{align}
\end{lemma}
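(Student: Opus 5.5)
The plan is to use Lemma~\ref{lemma_concentration} to replace each $\widetilde{N}_j$ by $M_j$ up to unitaries localized on the balls $R_0,\dots,R_3$, and then check that these local unitaries cancel in the T-junction expression.

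First, apply Lemma~\ref{lemma_concentration} to each of the six composite string operators $\widetilde{N}_1$, $\widetilde{N}_2\widetilde{N}_1$, $\widetilde{N}_3\widetilde{N}_2\widetilde{N}_1$, and so on. These are the partial products appearing in the T-junction word, read right to left, so each successive state $\widetilde{N}_{j}\cdots\widetilde{N}_1\rho$ is one of the combined LU-strings assumed in the setup. By hypothesis each partial product creates the same endpoint charges as the corresponding product of $M$'s. Hence there are unitaries $V^{(s)}=\bigotimes_i V^{(s)}_{R_i}$ with
\begin{align}
V^{(s)}\, \widetilde{P}_s\, \rho = P_s\, \rho ,
\end{align}
where $\widetilde{P}_s$ and $P_s$ denote the $s$-th partial products, for instance $\widetilde{P}_2=\widetilde{N}_2\widetilde{N}_1$ and $P_2=M_2M_1$. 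The factorization over the four balls follows as in the proof of Lemma~\ref{lemma_concentration}: Lemma~\ref{lemma:four_partition} generalizes to several mutually decoupled balls by iterating the same decoupling argument. The last partial product $\widetilde{N}_3^\dagger\widetilde{N}_2^\dagger\widetilde{N}_1^\dagger\widetilde{N}_3\widetilde{N}_2\widetilde{N}_1$ creates no net charge anywhere.

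Second, write $\theta$ as an overlap in the canonical purification:
\begin{align}
\theta(\widetilde{N}_1,\widetilde{N}_2,\widetilde{N}_3)=\langle\Psi_\rho|(\widetilde{N}_1\widetilde{N}_2\widetilde{N}_3\otimes I)^{\dagger}(\widetilde{N}_3\widetilde{N}_2\widetilde{N}_1\otimes I)|\Psi_\rho\rangle .
\end{align}
This compares the state reached by the path $\widetilde{N}_3\widetilde{N}_2\widetilde{N}_1$ with the state reached by $\widetilde{N}_1\widetilde{N}_2\widetilde{N}_3$. Both states have the same endpoint configuration: charges sit in $R_2$ and $R_3$, and $R_0$ and $R_1$ are neutral. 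Concentrate each state to the corresponding Pauli-string state using Lemma~\ref{lemma_concentration}, giving unitaries $V$ and $V'$ localized on the balls. The overlap becomes
\begin{align}
\langle \Psi_\rho|(M_1M_2M_3)^\dagger\, V' V^{\dagger}\,(M_3M_2M_1)|\Psi_\rho\rangle .
\end{align}

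The main obstacle is showing that the residual local unitaries do not contribute a phase, that is, $V'V^\dagger$ can be taken to act trivially. My approach is to fix the gauge consistently. Concentrating unitaries are only defined up to unitaries that stabilize the concentrated state locally, and on a ball containing a definite point-like anyon of a stabilizer state the residual freedom is a phase on that ball. I would show that such phases cancel between the two paths. Every ball appears with matching ``creation'' and ``annihilation'' factors in the commutator-like word $\widetilde{N}_3^\dagger\widetilde{N}_2^\dagger\widetilde{N}_1^\dagger\widetilde{N}_3\widetilde{N}_2\widetilde{N}_1$: each operator and its inverse both appear. So if I choose, for each individual $\widetilde{N}_j$, a single pair $(U^{(j)}_{\text{start}},U^{(j)}_{\text{end}})$ with $\widetilde{N}_j \simeq U^{(j)\dagger}_{\text{end}} M_j U^{(j)}_{\text{start}}$ on the relevant excited states, the $U$'s telescope. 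Concretely, I would prove by induction along the word that the local unitaries attached to each ball depend only on the ball and the charge present there. This uses the marginal argument of Lemma~\ref{lemma:marginals} on the complement of a single ball, combined with Fact~\ref{lemma:Uhlmann}. Each $U$ then appears once with its inverse, giving $\theta(\widetilde{N})=\theta(M)$. The delicate point I expect to spend the most effort on is justifying that the unitary concentrating a charge in $R_0$ is the same whether the charge arrived from $R_1$ or from $R_3$. This is exactly where uniqueness of the Schmidt decomposition on the complement of $R_0$ has to be applied to states differing elsewhere. I would handle it by restricting attention to $R_0$ and its decoupled complement via Lemma~\ref{lemma:four_partition}.
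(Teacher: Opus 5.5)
There is a genuine gap. You correctly identify the crux: the residual phases. But the tools you propose for fixing the gauge cannot do so.

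Fact~\ref{lemma:Uhlmann}, Lemma~\ref{lemma:marginals} and Lemma~\ref{lemma:four_partition} only assert the \emph{existence} of a local unitary mapping one state to another. Any global phase can be absorbed into that unitary, so these tools are blind to exactly the phase you need. The relation $\widetilde{N}_j \simeq U^{(j)\dagger}_{\text{end}} M_j U^{(j)}_{\text{start}}$ is not an operator identity; it holds only on particular states. Once you fix it, phase included, on one state, nothing in your argument controls its phase on another.

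For example, fixing the relation for $\widetilde{N}_1$ on $|\Psi_\rho\rangle$ says nothing about its phase on $\widetilde{N}_1^\dagger\widetilde{N}_3\widetilde{N}_2\widetilde{N}_1|\Psi_\rho\rangle$. That state carries charges in $R_2,R_3$ and a string through $R_0$, and it is what the $\widetilde{N}_1^\dagger$ factor of the word requires.

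Your heuristic that ``each operator and its inverse both appear'' does justify cancelling the ball-localized dressing unitaries once they are fixed. It does not cancel these state-dependent phases. The Pauli word $M_3^\dagger M_2^\dagger M_1^\dagger M_3M_2M_1$ also contains each operator with its inverse, yet evaluates to $\theta(a)\neq 1$.

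Likewise, concentrating $\widetilde{N}_3\widetilde{N}_2\widetilde{N}_1|\Psi_\rho\rangle$ and $\widetilde{N}_1\widetilde{N}_2\widetilde{N}_3|\Psi_\rho\rangle$ separately cannot work. Such $V,V'$ exist for any two states with the same charges, so $V'V^\dagger$ remains an arbitrary phase. Incidentally, those two states carry charges in all four balls, not only in $R_2,R_3$. Also, $\widetilde{N}_3\widetilde{N}_2\widetilde{N}_1$ is not one of the assumed two-endpoint LU-strings, so your first step already needs a four-ball version of Lemma~\ref{lemma_concentration}.

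The missing ingredient is a phase-exact way to transfer a relation from one excited state to another. In the paper the dressed strings are $\overline{N}_1=U_1U_0\widetilde{N}_1$, $\overline{N}_3=U_3U_0\widetilde{N}_3$ and $\overline{N}_2=U_2\widetilde{N}_2U_0^\dagger$. The phases are fixed once by $\overline{N}_1\rho=M_1\rho$, $\overline{N}_3\rho=M_3\rho$ and $\overline{N}_2\overline{N}_1\rho=M_2M_1\rho$. Both orderings are then derived with no further freedom.

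For $\overline{N}_3\overline{N}_2\overline{N}_1\rho$, one deforms $M_2M_1\rho=M_{21}\rho$ with $M_{21}$ disjoint from $\overline{N}_3$, and commutes $\overline{N}_3$ past it.

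For $\overline{N}_1\overline{N}_2\overline{N}_3\rho$, the key fact is that $M_2^\dagger\overline{N}_2$ acts as a strong symmetry on the marginal of $M_1\rho M_1^\dagger$ on $\mathrm{supp}(M_2^\dagger\overline{N}_2)$. This marginal coincides with that of $M_3\rho M_3^\dagger$. Lemma~\ref{lemma:LO_extension} then upgrades this to $\overline{N}_2M_3\rho=M_2M_3\rho$ exactly. Because a strong symmetry is an exact equation $Q\sigma=\sigma$, it carries the phase, which a Schmidt-decomposition argument does not.

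This is precisely the ``delicate point'' you flag: whether the charge in $R_0$ came from $R_1$ or from $R_3$. It has to be settled by this argument, not by uniqueness of the Schmidt decomposition on the complement of $R_0$. Once these identities hold, the telescoping of the $U_i$ that you describe completes the proof.
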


Combining with Lemma~\ref{lemma:anyon_charge}, this lemma rigorously proves that the topological spin depends only on the anyon type. 
This establishes the contradiction with Eq.~\eqref{eq:T-junction-conjugate}, and thus proves the absence of local unitary $U$ achieving $U\rho U^{\dagger}=\rho^*$.
This completes the proof of Theorem~\ref{thm:LO-chirality}. 

\begin{proof}
Without loss of generality, we take $M_j$ to be Pauli strings. 
We use local unitaries $U_0,U_1,U_2,U_3$ on these balls to concentrate anyon excitations. 
First, using lemma~\ref{lemma_concentration}, we construct 
\begin{align}
\overline{N}_1 = U_{1}U_{0} \widetilde{N}_1, \quad \mbox{such that} \quad \overline{N}_1 \rho = M_1 \rho
\end{align}
by choosing $U_0, U_1$ supported on $R_0, R_1$. 
% Here, we adjust the overall $U(1)$ phase of $U_1$ so that $\overline{N}_1 \rho$ matches $M_1 \rho$ exactly, including the phase. 
Similarly, define 
\begin{align}
\overline{N}_3 = U_{3}U_{0} \widetilde{N}_3, \quad \mbox{such that} \quad \overline{N}_3 \rho = M_3 \rho
\end{align}
by using $U_3$ on $R_3$. 
Here, we can reuse the same $U_0$ because $\overline{N}_1$ and $\overline{N}_3$ create the same excitation pattern on $R_{0}$. 
We adjust the $U(1)$ phase of $U_3$ so that $\overline{N}_3 \rho$ and $M_3 \rho$ agree exactly. 
Next, note that $\widetilde{N}_2 \widetilde{N}_1$ is a string operator connecting $R_1$ and $R_2$, and creates no excitations in $R_0$.
Define
\begin{align}
\overline{N}_2 \overline{N}_1 = U_{2}U_{1} \widetilde{N}_2\widetilde{N}_1, \quad \overline{N}_2 = U_{2} \widetilde{N}_2 U_{0}^{\dagger} \quad \mbox{such that} \quad \overline{N}_2 \overline{N}_1 \rho = M_2 M_1 \rho. 
\label{eq:N2N1-phase-fixing}
\end{align}
Once again we adjust the $U(1)$ phase of $U_2$ to match $M_2 M_1 \rho$ exactly. 

\begin{figure}[h!]
\centering
\raisebox{\height}{\hspace{5pt}}\raisebox{-0.85\height}{\includegraphics[width=0.35\textwidth]{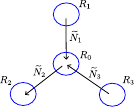}}
\hspace{10pt}
\caption{
Concentrating anyons in the T-junction.
}
\label{fig_T_junction_balls}
\end{figure}

% Recall that the string operator $\overline{N}_2 \overline{N}_1$ is deformable. 
% Namely, there exists another string operator $\overline{N}_{21}$ starting from $R_1$ and ending at $R_2$, with no overlap with $\overline{N}_3$ such that 
% \begin{align}
% \overline{N}_2 \overline{N}_1 \rho = \overline{N}_{21}\rho.
% \end{align}
% Using this, we have 
% \begin{eqs}
% \overline{N}_3 \overline{N}_2 \overline{N}_1 \rho &= \overline{N}_3 \overline{N}_{21}\rho 
% = \overline{N}_{21}\overline{N}_3 \rho 
% = \overline{N}_{21}M_3 \rho \\ 
% &=M_3  \overline{N}_{21}\rho = M_3   \overline{N}_2 \overline{N}_1\rho 
%  = M_3   M_2 M_1\rho. \label{eq:321}
%  \end{eqs}

Recall that the string operator $M_2 M_1$ is deformable. 
Namely, there exists another string operator $M_{21}$ starting from $R_1$ and ending at $R_2$, with no overlap with $\overline{N}_3$ such that 
\begin{align}
M_2M_1 \rho = M_{21}\rho.
\end{align}
Using this, we have 
\begin{eqs}
\overline{N}_3 \overline{N}_2 \overline{N}_1 \rho &= 
\overline{N}_3 M_2 M_1 \rho =
\overline{N}_3 M_{21}\rho 
= M_{21}\overline{N}_3 \rho 
= M_{21}M_3 \rho  
=M_3  M_{21}\rho \\
 &= M_3   M_2 M_1\rho. \label{eq:321}
 \end{eqs}

We also need to establish that 
\begin{align}
\overline{N}_1 \overline{N}_2 \overline{N}_3 \rho  
 = M_1   M_2 M_3\rho. \label{eq:123}
\end{align}
Using the same deformability argument for $M_2M_3$, it suffices to prove $\overline{N}_2 M_3 \rho = M_2 M_3\rho$. 
Observe that: 
\begin{equation}
    \begin{aligned}
\overline{N}_2 M_3 \rho  
 = M_2 M_3\rho \ 
 \quad&\text{if and only if}\quad \
M_2^{\dagger}  \overline{N}_2 M_3 \rho M_3^{\dagger} 
 = M_3\rho M_3^{\dagger}.
\label{eq:goal}
\end{aligned}
\end{equation}
The expressions above hold if and only if
\begin{align}
     M_2^{\dagger}  \overline{N}_2 [M_3\rho M_3^{\dagger}]_{\mathrm{supp}(M_2^\dagger \overline{N}_{2})} = [M_3\rho M_3^{\dagger}]_{\mathrm{supp}(M_2^\dagger \overline{N}_{2})}.
\end{align}
Here, we have reduced the state $M_3\rho M_3^{\dagger}$ to the geometric support of $M_{2}^{\dagger} \overline{N}_2$ and applied Lemma~\ref{lemma:LO_extension} below.
% Hence, our goal is to show that $M_2^{\dagger} \overline{N}_2$ is a strong symmetry unitary for $M_3 \rho M_3^{\dagger}$. 
On the other hand, applying the same equivalence to Eq.~\eqref{eq:N2N1-phase-fixing}, we get:
% One important implication of Eq.~\eqref{eq:N2N1-phase-fixing} is that $M_2^{\dagger} \overline{N}_2$ is a strong symmetry unitary for $M_1 \rho M_1^{\dagger}$: 
% \begin{align}
% M_2^{\dagger} \overline{N}_2 M_1 \rho M_1^{\dagger} = M_1 \rho  M_1^{\dagger}.
% \end{align} 
% Restricting this relation to the geometric support of $M_{2}^{\dagger} \overline{N}_2$, we obtain 
\begin{align}
M_2^{\dagger} \overline{N}_2 [M_1\rho M_1^\dagger]_{\mathrm{supp}(M_2^\dagger \overline{N}_{2})}
= [M_1\rho M_1^\dagger]_{\mathrm{supp}(M_2^\dagger \overline{N}_{2})}.
\end{align} 
Observing that 
\begin{align}
[M_1\rho M_1^\dagger]_{\mathrm{supp}(M_2^\dagger \overline{N}_{2})} = [M_3\rho M_3^\dagger]_{\mathrm{supp}(M_2^\dagger \overline{N}_{2})},
\end{align}
we have estabilished that $\overline{N}_2 M_3 \rho = M_2 M_3\rho$, hence Eq.~(\ref{eq:123}).
% Then, using lemma~\ref{lemma:LO_extension}, one can establish that $M_2^{\dagger} \overline{N}_2$ is also a strong symmetry unitary for $M_3 \rho M_3^{\dagger}$. 

Finally, using the identities in Eq.~\eqref{eq:321}~\eqref{eq:123} and the cyclicity of the trace, we have
\begin{align}
\Tr\Big[ \overline{N}_3^{\dagger} \overline{N}_2^{\dagger} \overline{N}_1^{\dagger}  \overline{N}_3 \overline{N}_2 \overline{N}_1 \rho\Big]  
=
\Tr\Big[ M_3^{\dagger} M_2^{\dagger} M_1^{\dagger}  M_3 M_2 M_1 \rho\Big] .
\end{align}
Here, we took the transpose conjugation of Eq.~\eqref{eq:123}. 
If we instead expand the $\overline{N}_j$ in terms of $\widetilde{N}_j$, we obtain
\begin{eqs}
\Tr\Big[ \overline{N}_3^{\dagger} \overline{N}_2^{\dagger} \overline{N}_1^{\dagger}  \overline{N}_3 \overline{N}_2 \overline{N}_1 \rho\Big]  
&=
\Tr\Big[  (\widetilde{N}_3^{\dagger} U_3^{\dagger} U_0^{\dagger}) (U_0\widetilde{N}_2^{\dagger} U_2^{\dagger}) (\widetilde{N}_1^{\dagger} U_1^{\dagger} U_0^{\dagger}) (U_3 U_0 \widetilde{N}_3)  (U_2 \widetilde{N}_2 U_0^{\dagger})(U_1 U_0 \widetilde{N}_1) \rho\Big]  
\\
&= 
\Tr\Big[ \widetilde{N}_3^{\dagger} \widetilde{N}_2^{\dagger} \widetilde{N}_1^{\dagger}  \widetilde{N}_3 \widetilde{N}_2 \widetilde{N}_1 \rho\Big].
\end{eqs}
Thus,
\begin{align}
\Tr\Big[ \widetilde{N}_3^{\dagger} \widetilde{N}_2^{\dagger} \widetilde{N}_1^{\dagger}  \widetilde{N}_3 \widetilde{N}_2 \widetilde{N}_1 \rho\Big] = 
\Tr\Big[ M_3^{\dagger} M_2^{\dagger} M_1^{\dagger} M_3 M_2 M_1 \rho\Big] = \omega.
\end{align}
\end{proof}

\subsection{LO generalization}\label{sec:LO_generalization}

The above proof of LU-chirality extends naturally to LO-chirality. 

We begin by showing that a strong symmetry unitary $Q_A$ of $\rho_{A}$ remains a strong symmetry for any of its extension state $\tau_{AR}$ (such that $\Tr_{R}(\tau_{AR}) = \rho_{A}$).

\begin{lemma}[Strong symmetry extension]\label{lemma:LO_extension}
Let a unitary operator $Q_A$ be a strong symmetry of $\rho_A$. Then, $Q_A \otimes I_{R}$ is a strong symmetry of $\tau_{AR}$:
\begin{align}
(Q_A \otimes I_{R})\tau_{AR} =\tau_{AR}.
\end{align}
\end{lemma}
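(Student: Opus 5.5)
The plan is to reduce the statement to a support argument. A strong symmetry $Q_A$ of $\rho_A$ must act as the identity on the support of $\rho_A$. Moreover, any extension $\tau_{AR}$ has its support contained in $\mathrm{supp}(\rho_A)\otimes\mathcal{H}_R$.

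First, I will show that $Q_A\rho_A=\rho_A$ means $Q_A$ fixes every vector in the range of $\rho_A$. Write the spectral decomposition $\rho_A=\sum_i p_i|i\rangle\langle i|$ with $p_i>0$. Applying $Q_A\rho_A=\rho_A$ to $|i\rangle$ gives $p_iQ_A|i\rangle=p_i|i\rangle$, so $Q_A|i\rangle=|i\rangle$. Equivalently, $Q_A\Pi_A=\Pi_A$, where $\Pi_A$ is the projector onto $\mathrm{supp}(\rho_A)$.

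Second, I will show that $(\Pi_A\otimes I_R)\tau_{AR}=\tau_{AR}$. Since $I-\Pi_A$ is a projector, $\tau_{AR}\ge0$, and the partial trace is taken over $R$,
\begin{align}
\Tr\big[((I-\Pi_A)\otimes I_R)\tau_{AR}\big]=\Tr\big[(I-\Pi_A)\rho_A\big]=0.
\end{align}
Write $P=(I-\Pi_A)\otimes I_R$. The quantity $\Tr[P\tau P]$ is the trace of a positive operator and vanishes, so $\sqrt{\tau}P=0$, and hence $\tau P=0=P\tau$. Therefore $\tau_{AR}=(\Pi_A\otimes I_R)\tau_{AR}$. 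Taking adjoints also gives $\tau_{AR}(\Pi_A\otimes I_R)=\tau_{AR}$.

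Combining the two steps,
\begin{align}
(Q_A\otimes I_R)\tau_{AR}=(Q_A\Pi_A\otimes I_R)\tau_{AR}=(\Pi_A\otimes I_R)\tau_{AR}=\tau_{AR}.
\end{align}
For the right-hand condition in the definition of a strong symmetry, I take the adjoint of this identity, which gives $\tau_{AR}(Q_A^\dagger\otimes I_R)=\tau_{AR}$.

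No step is a real obstacle. The only point needing care is the positivity argument in the second step, which is what turns a statement about the marginal into a statement about the full operator $\tau_{AR}$.
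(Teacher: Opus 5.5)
Your proof is correct, but it reaches the conclusion by a different mechanism than the paper.

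The paper passes to the canonical purification $|\sqrt{\tau}\rangle_{ARA'R'}$. It writes this as $\sum_j |j\rangle_A\otimes|\psi_j\rangle_{RA'R'}$, with $\{|j\rangle\}$ an orthonormal basis of the support of $\rho_A$; this is legitimate because the $A$-marginal of the purification is $\rho_A$. It then reads off $(Q_A\otimes I)|\sqrt{\tau}\rangle=|\sqrt{\tau}\rangle$ and traces out $A'R'$.

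You instead stay with the operator $\tau_{AR}$. You prove the support inclusion $(\Pi_A\otimes I_R)\tau_{AR}=\tau_{AR}$ directly from positivity and $\Tr[(I-\Pi_A)\rho_A]=0$.

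Both arguments rest on the same fact: every extension of $\rho_A$ is supported on $\mathrm{supp}(\rho_A)\otimes\mathcal{H}_R$. The paper's Schmidt-form step is the vector version of your positivity step, since a component $|j\rangle$ outside the support would have $\|\psi_j\|^2=\langle j|\rho_A|j\rangle=0$. The paper asserts that form without justification.

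Your route is more elementary and makes that step explicit. You also derive the right-multiplication condition $\tau_{AR}(Q_A^\dagger\otimes I_R)=\tau_{AR}$, which the paper leaves implicit. The paper's route is shorter and fits the purification picture it uses throughout (e.g.\ Fact~\ref{fact:strongweak}).

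Neither argument uses unitarity of $Q_A$, only that it fixes $\mathrm{supp}(\rho_A)$ pointwise.
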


\begin{proof}
Consider a canonical purification of $\tau_{AR}$ and Schmidt decompose it:
\begin{align}
|\sqrt{\tau}\rangle_{ARA'R'} = \sum_{|j\rangle \in \mathcal{H}(\rho_{A})} |j\rangle_{A} \otimes |\psi_j\rangle_{RA'R'}
\end{align}
where $\mathcal{H}(\rho_{A})$ denotes the support of $\rho_{A}$, $|j\rangle$ represents an orthonormal basis of this support, and the Schmidt coefficients are absorbed into the generally unnormalized states $|\psi_j\rangle$.
Recalling that $Q_A|j\rangle = |j\rangle$ for all $|j\rangle$ in $\mathcal{H}(\rho_{A})$, we find 
\begin{align}
(Q_A \otimes I_{R}\otimes I_{A'} \otimes I_{R'})|\sqrt{\tau}\rangle_{ARA'R'} = |\sqrt{\tau}\rangle_{ARA'R'}.
\end{align}
By tracing out $A'R'$, we obtain the desired result.
\end{proof}

Suppose that there exists a local operation achieving $\rho \rightarrow \rho^*$, namely
\begin{align}
U(\rho_A \otimes |0\rangle\langle 0|_R) U^{\dagger} = \sigma_{AR},\qquad 
\Tr_{R}(\sigma_{AR}) = \rho^*_{A}
\end{align}
where $U$ is a finite-depth local unitary and $R$ is an ancilla system initialized in the product state $|0\rangle\langle 0|_R$.  
(We assume that there are only $O(1)$ ancilla qubits per site.)

This Lemma implies that $U^{\dagger}(Q\otimes I)U$ is a strong symmetry of $\rho \otimes |0\rangle\langle 0|_R$, and thus $\rho \otimes |0\rangle\langle 0|_R$ would support string operators for both $\mathbb{Z}_3^{(1)}$ and $\mathbb{Z}_3^{(-1)}$ anyons. 
From this point, the remainder of the LU-chirality argument applies, completing the LO-chirality case. 

\subsection{Transversal Clifford for complex conjugation}
\label{sec3.6}
In the previous sections, we established that $\rho$ is LO-chiral if the anyon content is not mirror invariant, i.e. when $\mathcal{A}_d^{(k)}\not\simeq \mathcal{A}_d^{(-k)}$.
Here, we prove the converse:  $\rho$ is LO-non-chiral whenever the anyon content is mirror invariant, $\mathcal{A}_d^{(k)}\simeq \mathcal{A}_d^{(-k)}$.

\begin{lemma}[LO-non-chirality]
Consider the mixed state stabilizer model introduced in Definition~\ref{def:honeycomb-mixed} whose anyon content is mirror invariant $\mathcal{A}_d^{(k)}\simeq \mathcal{A}_d^{(-k)}$ with coprime $(d,k)$. 
Then there exists a depth-$1$ (transversal) Clifford transformation which transforms $\rho \rightarrow \rho^{*}$ for the $\mathbb{Z}_d^{(k)}$ stabilizer mixed states.
\end{lemma}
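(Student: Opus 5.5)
The plan is to realize the anyon relabeling $a^*\sim a^\lambda$ from the proof of Lemma~\ref{lemma:residue} as one single-qudit Clifford gate applied on every vertex.

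\textbf{Choice of $\lambda$.} By mirror invariance and Lemma~\ref{lemma:residue} (with $\gcd(d,k)=1$, so $d'=d$), I fix an integer $\lambda$ with
\begin{align}
\lambda^2 = -1 \pmod d, \qquad \gcd(\lambda,d)=1 .
\end{align}
The coprimality is guaranteed by the CRT step in that proof.

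\textbf{The Clifford gate.} Let $C$ be a single-qudit Clifford with
\begin{align}
C X C^\dagger = X^{\lambda}, \qquad C Z C^\dagger = Z^{-\lambda}.
\end{align}
On exponent vectors $(x,z)\in\mathbb{Z}_d^2$ this is the linear map $\mathrm{diag}(\lambda,-\lambda)$. Its determinant is $-\lambda^2=1 \pmod d$, so it preserves the symplectic form, and a Clifford $C$ realizing it exists. The candidate depth-$1$ circuit is $U=C^{\otimes n}$, with the same gate on black and white vertices.

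\textbf{Action on edge operators.} Using $X^*=X$ and $Z^*=Z^{-1}$, I check the three edge operators of Eq.~\eqref{eq:logicalop-endpointlabel}. Each satisfies $U g_e U^\dagger = (g_e^*)^{\lambda}$, at least at the level of Pauli exponents:
\begin{align}
X_\bullet X_\circ &\mapsto (X_\bullet X_\circ)^\lambda, \\
Z^k_\bullet Z^{-k}_\circ &\mapsto Z^{-k\lambda}_\bullet Z^{k\lambda}_\circ = \bigl((Z^k_\bullet Z^{-k}_\circ)^*\bigr)^\lambda, \\
X^\dagger Z^{\mp k} &\mapsto X^{-\lambda}Z^{\pm k\lambda} = \bigl((X^\dagger Z^{\mp k})^*\bigr)^\lambda .
\end{align}
As a consistency check, commutation phases are preserved. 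If $g_e g_{e'} = \omega^{c} g_{e'} g_e$, then $(g_e^*)^\lambda$ and $(g_{e'}^*)^\lambda$ pick up the phase $\omega^{-c\lambda^2}=\omega^{c}$.

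\textbf{Action on the stabilizer group.} Each plaquette generator $S_f$ is an ordered product of edge operators around $f$. Conjugation by $U$ is an algebra automorphism, so $U S_f U^\dagger = \prod (g_e^*)^\lambda$ taken in the same order. I would argue this equals $(S_f^*)^\lambda$: since the $(g_e^*)^\lambda$ satisfy the same commutation relations as the $g_e$, reordering them costs the same phases as reordering the $g_e$. Because $\gcd(\lambda,d)=1$, the set $\{(S_f^*)^\lambda\}$ generates the same group $\mathcal{S}^*$ as $\{S_f^*\}$. Hence $U\mathcal{S}U^\dagger=\mathcal{S}^*$ elementwise, and by Eq.~\eqref{eq:def-stabilizer-mixedstate},
\begin{align}
U\rho_{\mathcal{S}}U^\dagger = \frac{1}{d^n}\sum_{O\in\mathcal{S}} UOU^\dagger = \frac{1}{d^n}\sum_{O'\in\mathcal{S}^*} O' = \rho_{\mathcal{S}}^* .
\end{align}

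\textbf{Main obstacle: exact phases.} The step I expect to be delicate is whether $C$ maps each $g_e$ to $(g_e^*)^\lambda$ exactly, not just up to a root of unity. A stray phase would turn a $+1$ stabilizer into an $\omega^j$ one, and the image would then be a different syndrome state rather than $\rho^*$.

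For odd $d$, I would work with Weyl-ordered Paulis $T_{(x,z)}=\omega^{-xz/2}X^xZ^z$. The metaplectic Clifford then acts phase-free, $CT_vC^\dagger = T_{Fv}$. Each $g_e$ is a tensor product of single-site Paulis, so I only need to compare single-site Weyl phases, which are fixed by $F$. For example, $X^\dagger Z^{-k}$ has Weyl phase $\omega^{-k/2}$, and its image carries the factor $\omega^{-\lambda^2 k/2}$; since $-\lambda^2=1$, the two agree.

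For even $d$ (the case $d=2\prod p_j^{m_j}$), if a residual phase survives I would compose $C$ with a transversal Pauli correction. Such a correction shifts the eigenvalue of each generator independently, because the generators are independent. This keeps the circuit depth-$1$ and fixes all phases.
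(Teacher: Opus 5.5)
The construction is the same as the paper's: the single-site Clifford $X\mapsto X^{\lambda}$, $Z\mapsto Z^{-\lambda}$ with $\lambda^2\equiv-1 \pmod d$, applied on every vertex. You correctly flag the phase bookkeeping as the crux, but you handle it incorrectly, and the real difficulty is not where you place it.

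The per-edge identity $Ug_eU^\dagger=(g_e^*)^\lambda$ does hold exactly, but not for your reason. Your single-site identity $X^\dagger Z^{\mp k}\mapsto X^{-\lambda}Z^{\pm k\lambda}=((X^\dagger Z^{\mp k})^*)^\lambda$ is false, since $(X^{-1}Z^{\pm k})^\lambda=\omega^{\mp k\lambda(\lambda-1)/2}X^{-\lambda}Z^{\pm k\lambda}$ and $\omega^{k\lambda(\lambda-1)/2}\neq 1$ for $d>2$. For instance, $d=5$, $\lambda=2$ gives $\omega^{k}$.

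Your Weyl-ordering check misses this because it compares the prefactors of $P$ and of $CPC^\dagger$. Those agree trivially, since $\mathrm{diag}(\lambda,-\lambda)$ preserves $xz$. The comparison that matters is with $(P^*)^\lambda$. For odd $d$, write $P=\omega^{\alpha}T_{(x,z)}$ with $T_{(x,z)}=\omega^{xz/2}X^xZ^z$ (the symmetric ordering for $ZX=\omega XZ$), and use $T_{(x,z)}^*=T_{(x,-z)}$. Then $CPC^\dagger=\omega^{\alpha}T_{(\lambda x,-\lambda z)}$ but $(P^*)^\lambda=\omega^{-\lambda\alpha}T_{(\lambda x,-\lambda z)}$. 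Equality therefore requires $\alpha(1+\lambda)\equiv 0$, i.e. $\alpha\equiv 0$, because $(1+\lambda)(1-\lambda)=2$ is a unit. A single site $X^\dagger Z^{\mp k}$ has $\alpha\equiv\mp k/2\not\equiv 0$. The identity holds for $g_{e_{AB}}$ only because its black and white endpoints contribute opposite phases.

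The genuine gap is the step you treat as routine: passing from edges to faces. The map $P\mapsto (P^*)^\lambda$ is not multiplicative on non-commuting Paulis. So $Ug_eU^\dagger=(g_e^*)^\lambda$ for every edge does not by itself give $US_fU^\dagger=(S_f^*)^\lambda$. Preservation of commutation phases does not control the discrepancy between $(\prod_e g_e^*)^\lambda$ and $\prod_e (g_e^*)^\lambda$, nor the scalar relating $S_f$ to the ordered product of its edge operators. A minimal counterexample: $X$ and $Z$ each satisfy $CPC^\dagger=(P^*)^\lambda$, yet $C(XZ)C^\dagger=\omega^{\lambda(\lambda-1)/2}((XZ)^*)^\lambda$.

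What is missing is exactly the paper's direct check on the face generators as tensor products:
\begin{itemize}
\item $S_{f_A}$ contains only powers of $X$ and $S_{f_B}$ only powers of $Z$, so both map exactly to $(S_f^*)^\lambda$.
\item $S_{f_C}$ carries $XZ^{k}$ on one sublattice and $XZ^{-k}$ on the other, three vertices each. The single-site factors from $XZ^{k}\mapsto \omega^{k\lambda(\lambda-1)/2}(XZ^{-k})^\lambda$ and $XZ^{-k}\mapsto \omega^{-k\lambda(\lambda-1)/2}(XZ^{k})^\lambda$ cancel pairwise.
\end{itemize}
This argument works for every $d$, including $d=2\prod_j p_j^{m_j}$, so no Pauli correction is needed. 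Your justification for such a correction (``the generators are independent'') would fail anyway, since the face generators obey $\prod_f S_f=I$.
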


% \begin{proof} It suffices to focus on the $d=p^m$ cases with $p=1$ modulo $4$. From Lemma \ref{lemma:residue}, there exists $\lambda$ such that $\lambda^2 = -1$ modulo $d$. Thus, we can construct a single-qudit Clifford $U_j$ acting on a $j$th qudit by
% %For a qudit with prime Hilbert space dimension $p$, there exists a Clifford unitary $U$ satisfying the following:
% \begin{align}
% X_j \rightarrow X^\lambda_j \qquad Z_j \rightarrow Z^{-\lambda}_j \ ,
% \end{align}
% as the map preserves the commutation relation. Since the Clifford action maps, 
% \begin{align}
% X_jZ_j^k \rightarrow X_j^\lambda Z_j^{-k\lambda} = \omega^{k\lambda(\lambda-1)/2}(X_jZ_j^{-k})^\lambda \qquad \ X_jZ_j^{-k} \rightarrow X_j^\lambda Z_j^{k\lambda} = \omega^{-k\lambda(\lambda-1)/2}(X_jZ_j^k)^\lambda  
% \end{align}
% %Let $\lambda$ be an integer satisfying $\lambda^2 = -1$ modulo $p$. We then consider $U$ satisfying 
% %\begin{align}
% %X \rightarrow X^\lambda, \qquad Z \rightarrow Z^{-\lambda} \qquad XZ \rightarrow X^{\lambda}Z^{-\lambda}.
% %\end{align}
% the transversal Clifford $\overline{U} = \otimes_j U_j$ transforms $\rho$ to $\rho^*$.
% \end{proof}

\begin{proof} 
% It suffices to focus on the $d=p^m$ cases with $p=1$ modulo $4$. 
Define the map $U_j$ by
\begin{align}\label{eq:tranversalU}
X_j \rightarrow X^\lambda_j, \qquad Z_j \rightarrow Z^{-\lambda}_j \ .
\end{align}
The map is Clifford (i.e., invertible and preserves the commutation relations) if and only if $\lambda^2 = -1$ mod $d$. Such $\lambda$ always exists by Lemma \ref{lemma:residue}. 

We claim that the the transversal Clifford operator $\overline{U} \equiv \otimes_j U_j$ transforms stabilizers of the honeycomb model in Eq~\eqref{eq:stabilizer-generators-mixed_state} to stabilizers of $\rho^*$.
First, due to Eq.~\eqref{eq:tranversalU}, we have:
\begin{equation}
    S_{f_A} \rightarrow (S_{f_A}^*)^\lambda,\qquad
    S_{f_B} \rightarrow (S_{f_B}^*)^\lambda,
\end{equation}
so it remains to check $S_{f_C}$.
For this, we note that:
\begin{align}
X_jZ_j^k \rightarrow X_j^\lambda Z_j^{-k\lambda} = \omega^{k\lambda(\lambda-1)/2}(X_jZ_j^{-k})^\lambda, \qquad \ X_jZ_j^{-k} \rightarrow X_j^\lambda Z_j^{k\lambda} = \omega^{-k\lambda(\lambda-1)/2}(X_jZ_j^k)^\lambda.  
\end{align}
While there are extra factors, they cancel each other in $S_{f_C}$, leaving $S_{f_C} \rightarrow (S_{f_C}^*)^\lambda$.
This concludes the proof.
\end{proof}

\section{Four-partite chirality}
\label{sec4:four=partite-proof}

We have established that a stabilizer mixed state $\rho$ is LO-chiral if and only if its anyon content is not mirror invariant.
In this section, we extend the notion of LO-chirality to \emph{$n$-partite chirality}~\cite{vardhan2025chiralitymagicquantumcorrelations}. 

\begin{definition}
A mixed state $\rho$ is called \emph{$n$-partite chiral} under an $n$-partition into $A_1 A_2 \cdots  A_n$ if there does not exist a local quantum channel  
\begin{align}
\mathcal{Q} = \bigotimes_{j=1}^n \mathcal{Q}_{A_j}
\end{align}
such that $\mathcal{Q}(\rho)=\rho^*$.
\end{definition}

In this formulation, the quantum channel is required to act in a strictly transversal, factorized manner over the tensor product Hilbert space $\bigotimes_j \mathcal{H}_{A_j}$. 
At the same time, within each subset $A_j$ (which may contain multiple qubits), one allows completely general quantum channels, including those that may be highly complex or fine-tuned.

Conceptually, the question of $n$-partite chirality is more than an extension of the LO-chirality result. 
It directly probes a central goal of the broader research program, to characterize the static and dynamical properties of many-body systems using only the entanglement structure of their ground states. 
Concretely, it asks whether chirality, understood as an obstruction to complex conjugation, can be detected purely through $n$-partite entanglement. 
A previously proposed candidate is the modular commutator~\cite{PhysRevLett.128.176402}, which constitutes a three-partite entanglement measure. 
However, this quantity can be reproduced by trivial short-range entangled states, indicating that it fails to serve as a faithful diagnostic of chirality~\cite{Gass_2024}.
Moreover, for stabilizer states, the modular commutator vanishes identically.
These observations motivate a more general question of whether a genuine entanglement-based measure of chirality exists, and if so, what minimal degree of multipartiteness $n$ is required. 

In this section, we will determine whether the $\mathbb{Z}_{d}^{(k)}$ mixed state exhibits four-partite chirality. 
To this end, we consider a partition of the $\mathbb{Z}_d^{(k)}$ mixed state $\rho$ into five subsystems $A,B,C,D,E$, as depicted in Fig.~\ref{fig_5parties}, where $ABCD$ forms an open disk and $E$ denotes the complementary surrounding region.
In fact, the surrounding region $E$ may be empty in the subsequent discussions.
Although our arguments extend beyond specific lattice realizations, we focus here on the hexagonal lattice construction for concreteness.\footnote{
For instance, one may consider the mixed state $\rho_{ABCD}$ on a closed sphere with $E=\emptyset$. 
While a hexagonal lattice is not compatible with spherical topology, the construction of the $\mathbb{Z}_d^{(k)}$ mixed state can be generalized to any two-dimensional three-colorable lattice, following Ref.~\cite{lee2025chiralcolorcode}.
} 

In this section, we will establish that $\rho_{ABCD}$ is four-partite chiral if and only if its anyon content is not mirror invariant.

\begin{figure}
\centering
\raisebox{\height}{\hspace{5pt}}\raisebox{-0.85\height}{\includegraphics[width=0.3\textwidth]{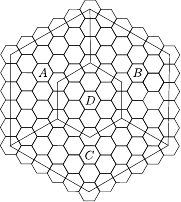}}
\hspace{10pt}
\caption{
Four-partite subsystems $ABCD$. The surrounding region $E$ may be empty. 
}
\label{fig_5parties}
\end{figure}

\begin{theorem}[Four-partite chiral]\label{thm:four-partite}
The $\mathbb{Z}_d^{(k)}$ mixed state $\rho_{ABCD}$ is four-partite LO-chiral if and only if $\mathcal{A}_d^{(k)}\not\simeq \mathcal{A}_d^{(-k)}$. 
\end{theorem}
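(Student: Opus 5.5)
The ``if'' direction (non-chirality when $\mathcal{A}_d^{(k)}\simeq\mathcal{A}_d^{(-k)}$) is immediate from the transversal Clifford of Sec.~\ref{sec3.6}. The single-qudit map $X\mapsto X^\lambda$, $Z\mapsto Z^{-\lambda}$ with $\lambda^2=-1$ mod $d$ is a tensor product over qudits. It therefore factorizes as $\bigotimes_{j}U_{A_j}$ for any partition, and sends $\rho_{ABCD}\mapsto\rho_{ABCD}^*$ (after tracing out $E$, or with $E$ empty). The substance is the ``only if'' direction. Here I would rerun the proof of Theorem~\ref{thm:LO-chirality}, replacing geometric locality (light cones of a depth-$r$ circuit) by the multipartite structure of $\mathcal{Q}=\bigotimes_j\mathcal{Q}_{A_j}$. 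By Stinespring, each $\mathcal{Q}_{A_j}$ is a unitary $U_{A_jR_j}$ on $A_j$ plus ancilla $R_j$, followed by a trace. By Lemma~\ref{lemma:LO_extension}, it suffices to work with $\rho\otimes|0\rangle\langle 0|_R$ and the unitary $U=\bigotimes_j U_{A_jR_j}$.

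Next I would choose Pauli string operators $N_1,N_2,N_3$ for $a^*$ in $\rho^*$ and pull them back, $\widetilde N_i=U^\dagger N_iU$. The key observation replacing the light cone is the following. If a Pauli string $N$ crosses the boundary between two regions, say $A$ and $B$, write $N=N_AN_B$. Then $\widetilde N=(U^\dagger N_AU)(U^\dagger N_BU)$, where each factor is supported on a single party, not on a small ball. Accordingly, I would place all string endpoints and the T-junction \emph{on the edges} between parties: each $N_i$ runs inside one region and terminates at, or passes through, an interface. The ``excitations'' of $\widetilde N_i$ then cannot be said to live in balls. Instead, I would define an \emph{edge charge}: the eigenvalue of the loop stabilizer $W_{\partial X}$ that encloses a whole region $X$ (or a union of regions) in $\rho^*$, pulled back by $U$. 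Since $U^\dagger W_{\partial X}U$ is a strong symmetry of $\rho$ supported on $X$'s parties, it measures a definite charge. I would then prove the analogue of Lemma~\ref{lemma:anyon_charge} (charges are definite and neutral) using decoupling between regions that share no boundary, i.e.\ vanishing mutual information for non-adjacent parties in the canonical purification. Charge conservation follows from multiplying loop stabilizers around unions of regions.

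Then I would prove the analogue of Lemma~\ref{lemma_concentration}. If $\widetilde N$ and a genuine Pauli string $M$ create the same edge charges, there should exist unitaries $U_X$ acting within single parties with $(\bigotimes_X U_X)\widetilde N\rho=M\rho$. The marginal argument of Lemma~\ref{lemma:marginals} goes through with $L,R$ replaced by parties. The factorization step, Lemma~\ref{lemma:four_partition}, needs the decoupling hypotheses $I(L:K_RR)=0$, which I would verify using the stabilizer structure of the canonical purification restricted to the regions. Finally, Lemma~\ref{lemma_spin} transfers verbatim. Its only inputs were concentration unitaries on the junction regions, deformability of Pauli strings, and Lemma~\ref{lemma:LO_extension}. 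It gives $\theta(\widetilde N_1,\widetilde N_2,\widetilde N_3)=\theta(a)=\omega^k$. This contradicts the value $\omega^{-k}$ forced by unitary equivalence with the $\rho^*$ T-junction, exactly as in Eq.~\eqref{eq:T-junction-conjugate}.

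The main obstacle is the T-junction geometry. With only three parties meeting at a point, the three strings' endpoints near the tri-junction cannot be separated into distinct parties. Concentration unitaries would then have to act on parties that also carry the other strings, and the commutation steps such as $\overline N_3M_{21}\rho=M_{21}\overline N_3\rho$ would fail. This is why four parties are needed, and I expect this to be the delicate point. I would arrange $A,B,C,D$ so that the junction $R_0$ and the three endpoints $R_1,R_2,R_3$ sit on four distinct interfaces. Each string operator then spans only two parties, and any two strings overlap only in the party hosting $R_0$. I would choose this layout first and verify the needed decoupling and deformability for it, since this is the step that should break for three parties.
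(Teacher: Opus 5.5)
Your ``if'' direction is correct: the transversal Clifford of Sec.~\ref{sec3.6} is a product of single-qudit unitaries, so it factorizes over $A,B,C,D$ and commutes with $\Tr_E$. Your skeleton for the converse is also the paper's: Stinespring plus Lemma~\ref{lemma:LO_extension}, pulled-back strings $\widetilde N=\widetilde N_A\widetilde N_B\cdots$ ending on interfaces, edge charges, concentration, and a T-junction contradiction. However, the steps you plan to port from Sec.~\ref{sec:3} fail as stated.

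\emph{Charges.} Since $\widetilde N_A=U_A^\dagger N_AU_A$ is spread over all of $A$, its excitations may lie anywhere along the $AB$ and $AD$ interfaces, including on the tri-junction face where $A,B,D$ meet. A loop around a region or a union of regions runs along exactly these interfaces and junctions, so it cannot assign the charge to one end. Likewise, no assignment of face stabilizers to edges yields separately conserved edge charges. A further problem is that pulling $W_{\partial X}$ back by $U$ only returns the $\rho^*$-charge of $N$, which is tautologically $a^*$. To compare $\widetilde N$ with $\rho$'s Pauli strings $M$, you need charges measured by $\rho$'s own stabilizers. The paper's fix is the decorated edge charge $W_{AB}=S_e^{-1}S_{e'}^{-1}(W^{\rm naive}_{AB}\otimes I)$, built from the weak-symmetry stabilizers $S_e=g_e\otimes g_e^*$ of the canonical purification. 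The redundancy $\prod_{e\in f}S_e^{-1}=S_f\otimes S_f^*$ transfers the junction charge onto the two adjacent edges (Lemma~\ref{lemma:edge_junction}). Conservation $\langle W_{AD}W_{AB}\rangle=1$ then requires pushing the loop into the second layer, where $\widetilde N_A\otimes I$ acts trivially, using the boundary logical operators created by tracing out $E$.

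\emph{Definiteness.} Decoupling of non-adjacent parties cannot give the analogue of Lemma~\ref{lemma:anyon_charge}. In the paper's geometry all six pairs of parties are adjacent. More to the point, the two charges of $\widetilde N_A$ sit on $AB$ and $AD$, both bordering $A$, which is the entire support of $\widetilde N_A$. Nothing about $\widetilde N_A$ alone therefore excludes a correlated superposition $\sum_jc_j|j\rangle_{AB}|{-j}\rangle_{AD}$, and neighboring edge regions are in addition correlated through $g_e\otimes g_e^*$. The missing idea has three parts. First, pass to the excitation code $\mathcal S_{\rm code}$ and embed it in $\mathcal H_{\rm ext}=\bigotimes_{\mathcal E}\mathcal H^{(\mathcal E)}$, where $|\Psi_\rho\rangle$ becomes the product state $|0\rangle^{\otimes N}$. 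Second, show by a commutant argument that each $\overline{\widetilde N_X}$ acts only on the edges of $X$. Third, use that the full loop is a strong symmetry, so $\widetilde N_A|\Psi_\rho\rangle=\widetilde N_C^\dagger\widetilde N_B^\dagger\widetilde N_D^\dagger|\Psi_\rho\rangle$, and run the lightcone argument on the right-hand side, along which $AB$ and $AD$ never become entangled.

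\emph{Concentration.} Unitaries that merely act ``within single parties'' are useless here; for a Pauli string $M_A$ on $A$, $U_A=M_A\widetilde N_A^\dagger$ does that trivially. The proof of Lemma~\ref{lemma_spin} needs the $U_j$ to commute with each other and with the other strings. Pulled-back strings that share a party, which is unavoidable in your layout as you note, overlap completely. For the same reason, the hypothesis $I(L:K_RR)=0$ of Lemma~\ref{lemma:four_partition} fails when $L,R$ are adjacent parties. So Lemma~\ref{lemma_spin} does not transfer verbatim. In the paper the concentrating unitary is an edge-local logical operator on $\mathcal H^{(AB)}$, a polynomial in truncated face stabilizers and weak stabilizers along that edge. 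That edge-locality is what lets it commute with the other string operators of the T-junction.
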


Throughout this section, we will focus on the cases with odd $d$, as the extension to even $d$ cases is straightforward. 

It is important to emphasize that Theorem~\ref{thm:four-partite} applies only to the \emph{mixed-state} realization of the $\mathbb{Z}_{d}^{(k)}$ anyon theory. 
In particular, our proof does not currently extend to the \emph{pure-state} realization of the $\mathbb{Z}_{p^2}^{(1)}$ anyon theory (Definition \ref{def:Zp1pure}), although it does cover the mixed-state version of the same theory.
This stands in contrast to the LO-chirality result, which applies to both mixed and pure states. 
At present, it remains an open question whether the pure state realization $\rho_{ABCDE} = |\psi_{ABCDE}\rangle\langle \psi_{ABCDE}|$ for $\mathbb{Z}_{p^2}^{(1)}$ leads to a four-partite chiral mixed state. 
This is essentially due to the extra edge strong symmetry operators of order $p$. 
At the end of the section, we briefly comment on the technical obstruction and a possible route to overcoming it.

\subsubsection*{Proof sketch}

Below, we sketch the proof by focusing on local unitaries, as the extension to local channels follows directly from Lemma~\ref{lemma:LO_extension}. 
The argument proceeds analogously to the LO-chirality proof, but with an important complication due to the fact that there is no natural notion of micro-locality in the $n$-partite setting.
In particular, we find that one cannot directly define anyon charges as being enclosed within spatial regions, due to the subtleties associated with boundaries and tri-junctions between subsystems (see Fig.~\ref{fig_5parties}). 
The key idea to circumvent this issue is to introduce the notion of \emph{edge charges}. 
Namely, instead of associating charges with local spatial regions, we assign them to the total charge shared across the edges (boundaries) of neighboring subsystems. 
This enables us to consistently redistribute the charges associated with tri-junctions into edge charges. 
We then establish the conservation of edge charges, as well as their factorization properties. 
The topological spin can be extracted from the T-junction process as depicted in Fig~\ref{fig_4_partite} where each string operator originates from an edge separating neighboring subsystems.

\subsection{Canonical purification, weak symmetry, and string operators}\label{sec:4-partite_additional}

As in the analysis of LO-chirality, it is convenient to work with the canonical purification. 
A key difference here is that we consider the reduced density matrix on $ABCD$
\begin{align}
\rho_{ABCD} = \Tr_{E}(\rho_{ABCDE}) 
\end{align}
and hence must construct the canonical purification of the local state $\rho_{ABCD}$, rather than that of the global state $\rho_{ABCDE}$ (unless $E$ is taken to be empty).
This leads to a slight modification in the structure of strong and weak symmetry operators in the canonical purification (equivalently, stabilizer generators and logical operators for the stabilizer code $\mathcal{S}_{ABCD}$).
Although this does not affect the overall logic of the argument, we analyze these modifications explicitly.

The stabilizer group $\mathcal{S}_{ABCD}$ for $\rho_{ABCD}$ is generated solely by face stabilizers $S_f$ that are fully supported within the region $ABCD$. 
Two-body edge operators supported on $ABCD$ (as defined in Eq.~\eqref{eq:edge_operators}), which are supported entirely on $ABCD$, remain logical operators since they commute with all elements of $\mathcal{S}_{ABCD}$. 
However, tracing out subsystem $E$ removes all stabilizers that had support across the $ABCD|E$ cut. 
As a consequence, additional logical operators emerge, which should be regarded as weak symmetry operators for $\rho_{ABCD}$.

We now systematically enumerate all such additional weak symmetry (logical) operators, which can be classified as follows:

\begin{enumerate}[(i)]
\item \textbf{One-body logical operators} \\
At certain boundary qudits, single-site Pauli operators commute with all stabilizers in $\mathcal{S}_{ABCD}$ and thus also serve as logical operators:
\begin{align}
\figbox{3.0}{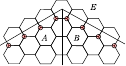}\ .
\end{align}
These can be interpreted as truncations of the two-body logical operators on the edges.
Note that the boundary sites are acted on by only a single face stabilizer. 
% Such qudits thus support one-body logical operators localized at the boundary.  

\item \textbf{Truncated face stabilizers on edges} \\
Face stabilizers $S_f$, supported across $ABCD$ and $E$, become truncated when restricted to $ABCD$:\footnote{
In the canonical purification picture, the truncated face stabilizer comes from an $S_f \otimes I$ term that intersected the $ABCD |E$ cut.
}
\begin{align}
\figbox{3.0}{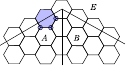}\ .
\end{align}
By construction, these truncated operators commute with all stabilizers in $\mathcal{S}_{ABCD}$, and therefore act as logical operators. However, they can be generated by multiplying one-body logical operators from (i) and two-body edge operators from Eq.~\eqref{eq:edge_operators}. They therefore \emph{do not} introduce additional independent logical degrees of freedom.
%Moreover, neighboring truncated operators fail to commute with each other, confirming they generate non-trivial logical degrees of freedom. 

\item \textbf{Truncated corner stabilizer} \\
Near the tri-junction corners of the $ABCD$ region, one finds truncated stabilizer of mixed support:
\begin{align}
\figbox{3.0}{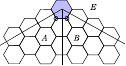}\ .
\end{align}
These corner operators can be generated by multiplying one-body logical operators from (i) and two-body edge operators from Eq.~\eqref{eq:edge_operators}.
They therefore \emph{do not} introduce additional independent logical degrees of freedom.  
\end{enumerate}

Thus, boundary operators in (i) and two-body edge operators from Eq.~\eqref{eq:edge_operators} generate all the additional weak symmetry (logical) operators for $\rho_{ABCD}$. In summary, we have
%Importantly, these additional boundary operators can be realized locally within each subsystem $A$, $B$, or $C$, as the corner operators from iii) can be generated from two-body edge operators and single-body logical operators from ii). 

\begin{observation}\label{obs:additional}
Logical operators of $\mathcal{S}_{ABCD}$ can be generated by two-body edge operators from Eq.~\eqref{eq:edge_operators}, as well as the additional single-site boundary operators in (i).
% that are locally supported within each subsystem $A$, $B$, or $C$. 
\end{observation}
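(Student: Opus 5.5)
The statement to prove is Observation~\ref{obs:additional}: the logical group of $\mathcal{S}_{ABCD}$ is generated by the two-body edge operators of Eq.~\eqref{eq:edge_operators} supported in $ABCD$, together with the single-site boundary operators of type (i). Since $\mathcal{S}_{ABCD}$ is generated only by face stabilizers fully contained in $ABCD$, I will phrase everything symplectically. A Pauli operator on $ABCD$, modulo phases, is a vector in $\mathbb{Z}_d^{2n}$. The logical group is the symplectic complement $\mathcal{S}_{ABCD}^{\perp}$.

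\textbf{Step 1: a local description of the complement.} In the bulk, the commutant of the full face group $\mathcal{S}$ is generated by the edge operators $g_e$. This is the statement that $\mathcal{G}$ is generated by the $g_e$, which I take from Definition~\ref{def:honeycomb-mixed}. For $\mathcal{S}_{ABCD}$, the only difference is that commutation with faces cut by the $ABCD|E$ boundary is no longer imposed. So I first take an arbitrary logical $P\in\mathcal{S}_{ABCD}^{\perp}$. At each boundary qudit $v$, exactly one surviving face acts on $v$. Hence the single-site Paulis at $v$ that commute with that face form an order-$d$ subgroup, and some of these are the operators in (i). Their products with neighboring restricted edge operators produce every local Pauli at $v$ up to elements constrained only by interior faces.

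\textbf{Step 2: cleaning to the bulk.} Using the type-(i) operators and the edge operators that touch the boundary, I will multiply $P$ by such generators, qudit by qudit along the boundary. The goal is to arrange that $P$ acts trivially on the outermost boundary qudits, or at least commutes with every face touching them, including the removed ones. After this cleaning, the residual $P'$ commutes with all faces of the original $\mathcal{S}$ that meet its support. Its support then lies in a region where it is effectively an element of $\mathcal{G}$, so $P'$ is a product of $g_e$. The cleaning uses only generators already on the list, which proves the claim.

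\textbf{Step 3: counting check, and the main obstacle.} To be safe I will also verify the claim by counting. I will compute $|\mathcal{S}_{ABCD}^{\perp}| = d^{2n}/|\mathcal{S}_{ABCD}|$, using the fact that the $S_f$ inside $ABCD$ are independent (no global relation survives once $E\neq\emptyset$). I will compare this with the order of the group generated by the proposed generators, modulo relations such as products of edge operators around interior faces equalling $S_f$. The hard step is the cleaning at tri-junction corners, where a boundary qudit may sit next to faces of two different colors from two subsystems. There one must check that the type-(i) operators, together with the edge operators, really span the required local subgroup. I would first do this by explicit enumeration on the $\mathbb{Z}_d$ Pauli exponents for the three possible corner configurations of face colors, which resolves cases (ii) and (iii). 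The same enumeration exhibits the truncated face and corner stabilizers as explicit products of these generators.
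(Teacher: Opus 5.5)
Your route differs from the paper's. The paper's justification is an enumeration. It lists the operators that become logical once $E$ is traced out: the one-body operators (i), the truncated faces (ii) and the truncated corners (iii). It then checks that (ii) and (iii) are products of (i) with two-body edge operators, taking for granted that this list is exhaustive. A cleaning argument that reduces every element of $\mathcal{S}_{ABCD}^{\perp}$ to the global commutant would be stronger. As written, however, your Step 2 asserts the nontrivial point rather than proving it. Restoring commutation with the removed (cut) faces cannot be organized qudit by qudit. A cut face also contains sites with two surviving faces, such as the valley sites of a zigzag cut, which carry no single-site logical at all. At sharp convex corners a site can have no surviving face, contrary to your ``exactly one'' claim. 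What must be shown is that a global linear system is solvable, and you never identify its obstruction.

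\textbf{The missing idea.} The single-site logical at a site $u$ with one surviving face is the truncation $g_e|_u$ of the crossing edge $e$ at $u$. The three face actions at a vertex multiply to a phase; this is the local content of $\prod_f S_f=I$. Hence the commutation phases of $g_e|_u$ with the two cut faces containing $e$ are mutually inverse units. Cancelling the commutators $\langle P,S_f\rangle$ for all cut faces is therefore an incidence-type system on the ring of cut faces. These faces are linked through consecutive crossing edges, and a zero-face site links all three of its faces. Such a system is solvable precisely when $\sum_{f\,\mathrm{cut}}\langle P,S_f\rangle=0$. This holds because $\prod_{f\cap ABCD\neq\emptyset}S_f$ acts on $ABCD$ as a phase, while $P$ commutes with every interior face.

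\textbf{Remaining points.} Writing the cleaned $P'$ as a product of $g_e$ in the global $\mathcal{G}$ may use crossing and outside edges. Restricting to $ABCD$ turns each crossing edge into a type-(i) operator and removes the outside ones; you need to say this explicitly. The ``hard step at tri-junction corners'' is a red herring, because $\mathcal{S}_{ABCD}$ and its commutant are blind to the internal partition into $A,B,C,D$ and only the outer $ABCD|E$ boundary matters. In Step 3, the identities ``edges around an interior face give $S_f$'' are not relations among your generators. The counting check would therefore require the actual kernel of the generator map, which you have not computed.
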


The fact that additional logical operators can be generated locally within subsystems $A$, $B$, and $C$ allows us to proceed analogously to the LO-chirality proof.
For notational simplicity, in what follows, we will write 
\begin{align}
\rho \equiv \rho_{ABCD}. 
\end{align}
This convention will be used throughout the remaining discussion.

\begin{figure}
\centering
\raisebox{\height}{\hspace{5pt}}\raisebox{-0.85\height}{\includegraphics[width=0.3\textwidth]{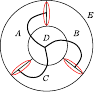}}
\hspace{10pt}
\caption{
A T-junction process in four-partite ($ABCD$) chiral mixed state. 
}
\label{fig_4_partite}
\end{figure}

\subsubsection*{Edge-local string operators}

As alluded to earlier, our proof proceeds by characterizing anyonic excitations associated with edges. 
To gain insight, let us study a string operator $N_A$ for $\rho^*$ that starts on the $AB$ edge and ends on the $AD$ edge, as shown in Fig.~\ref{fig_4_partite_string_a}.
This operator $N_A$ can be obtained by truncating a loop stabilizer $N$ that encircles the $A,D,B$ subsystems, around their tri-junction.
Since $N_A$ is a segment of a strong symmetry loop stabilizer for $\rho^*$ that does not cross the $AC$ edge, it commutes (when acting on $\rho^*$) with all strong and weak symmetry operators $S_f$ and $g_e$ supported entirely on $A$ as well as those across the $AC$ edge.\footnote{In the canonical purification picture $|\Psi_{\rho^*}\rangle$, this means that $N_A \otimes I$ commutes with $S_f\otimes I$ and $g_e\otimes g_e^*$ supported on $AC \otimes A'C'$. 
Consequently, $N_{A}\otimes I $ can only violate strong or weak symmetry operators located on the $AD$ and $AB$ edges in $|\Psi_{\rho^*}\rangle$. 

}

This motivates the following notion of edge-local string operators. 

\begin{definition}[Edge-local string operator]
An operator $N_{A}$ is called a string-like operator connecting edges $AB$ and $AD$ if it commutes (when acting on the state) with all strong and weak symmetry operators except those supported on $AB$ and $AD$. 
\label{def:Edge-local string}
\end{definition}

Now, for contradiction, suppose that there exists a unitary of the form
\begin{align}
U = U_A \otimes U_B \otimes U_C \otimes U_D \quad \text{such that} \quad U\rho U^{\dagger} = \rho^* .
\end{align}
We emphasize that each of $U_A, U_B, U_C, U_D$ is not necessarily a finite-depth unitary. 
For a loop stabilizer $N$ encircling the $A,D,B$ subsystems around their tri-junction, define 
\begin{align}
\widetilde{N} \equiv U^{\dagger} N U = \widetilde{N}_A \widetilde{N}_D \widetilde{N}_B, \qquad \widetilde{N}_R \equiv   U^{\dagger}_{R} N_R U_{R} 
\label{eq:unitary-on-regions}
\end{align}
where $R\in \{A,D,B\}$ as shown in Fig.~\ref{fig_4_partite_string}.
Here, the conjugated operator $\widetilde{N}_A$ may be spread over the whole region $A$ as graphically emphasized in Fig.~\ref{fig_4_partite_string_b}. 
Yet, one can still deduce key properties of $\widetilde{N}_A$ and conclude that $\widetilde{N}_A$ essentially acts like a string operator for $\rho$. 
Since $N$ is a strong symmetry operator for $\rho^*$, $\widetilde{N}$ is a strong symmetry operator for $\rho$. 
In particular, $\widetilde{N}$ commutes with all weak symmetry operators when acting on $\rho$.
It is worth emphasizing again that weak symmetry operators here are defined with respect to the reduced mixed state $\rho_{ABCD}$, rather than the global mixed state $\rho_{ABCDE}$. 
In particular, they include the additional boundary logical operators identified in the previous subsection. 

Since $\widetilde{N}=\widetilde{N}_A\widetilde{N}_D\widetilde{N}_B$, the individual operator $\widetilde{N}_A$ can create violations only on specific edges. 
Observe that, in the canonical purification, $\widetilde{N}_A\otimes I$ commutes with the stabilizers $g\otimes g^*$ corresponding to the additional boundary weak symmetry identified in Observation~\ref{obs:additional}.
This suggests that $\widetilde{N}_A\otimes I$ can violate only strong symmetry stabilizers and weak symmetry stabilizers of the form $g\otimes g^*$ near the edges $AD$ and $AB$, analogous to the string operator $N_A$ for $\rho^*$.\footnote{
Note that $\widetilde{N}_A\otimes I$ may violate the stabilizer at the tri-junction of $A$, $B$, and $D$. We return to this issue in the next subsection.
}

Our discussion can be summarized as follows. 

\begin{observation} \label{obs:edgelocal}
Given a string operator $N_A$ for $\rho^*$ connecting $AB$ and $AD$ edges, its conjugated operator $\widetilde{N}_A = U_A^{\dagger} N_A U_{A}$ also acts effectively as a string operator connecting $AB$ and $AD$ edges for $\rho$. 
In particular, $\widetilde{N}_{A}$ can only violate strong or weak symmetry operators located on the $AD$ and $AB$ edges. 
\end{observation}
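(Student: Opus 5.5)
The plan is to transport the edge-locality of $N_A$ through the conjugation by $U_A$, using only two facts. First, $U$ factorizes over the four subsystems. Second, strong and weak symmetries of $\rho^*$ pull back under $U$ to strong and weak symmetries of $\rho$. The precise claim is this: for every strong symmetry $S$ or weak symmetry $g$ of $\rho$ that is supported away from the $AB$ and $AD$ edges (in the sense of Observation~\ref{obs:additional}, including the one-body boundary logicals), we need
\begin{align}
[\widetilde{N}_A, S]\rho = 0, \qquad [\widetilde{N}_A, g]\rho = 0 .
\end{align}
Equivalently, $\widetilde{N}_A\otimes I$ commutes with $S\otimes I$ and $g\otimes g^*$ acting on $|\Psi_\rho\rangle$.

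The first step splits these generators by support. Take a generator supported entirely in $B$, $C$, $D$, or across the $BC$, $CD$, or $BD$ edges, with no qudits in $A$. It commutes with $\widetilde{N}_A=U_A^\dagger N_A U_A$ as an operator, because the supports are disjoint. So the only generators to check are those touching $A$ but not lying on the $AB$ or $AD$ edges. These are the generators inside $A$, across the $AC$ edge, and the boundary logicals of $A$ along the $ABCD|E$ cut.

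The second step treats such a generator $g$ (or $S$) with support $\mathrm{supp}(g)\subset A\cup C$. Write $g' \equiv U g U^\dagger$. Since $U_B,U_D$ act trivially on $\mathrm{supp}(g)$, this is $g'=(U_A\otimes U_C)\,g\,(U_A\otimes U_C)^\dagger$, supported on $A\cup C$. Because $U\rho U^\dagger=\rho^*$, $g'$ is a weak (respectively strong) symmetry of $\rho^*$. We then use that $N_A$ commutes, on $\rho^*$, with every symmetry of $\rho^*$ supported on $A\cup C$. This holds because $N_A$ is a truncation of a loop stabilizer that never crosses $AC$ and has its endpoints only on the $AB$ and $AD$ edges. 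Given this, $[N_A,g']\rho^*=0$, and conjugating back by $U$ gives $[\widetilde{N}_A,g]\rho=0$.

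The main obstacle is exactly that input for an \emph{arbitrary} symmetry $g'$ of $\rho^*$ supported on $A\cup C$. It is not just for the local generators, since $U_A, U_C$ can be highly nonlocal inside their regions. I would handle this in the canonical purification. Any weak symmetry $g'$ of $\rho^*$ on $AC$ satisfies $(g'\otimes g'^*)|\Psi_{\rho^*}\rangle=|\Psi_{\rho^*}\rangle$, and its action on the code space is a logical operator of $\mathcal{S}_{ABCD}$ restricted to $AC$. By Observation~\ref{obs:additional} and cleaning, this logical equals, on $\rho^*$, a product of edge operators and one-body boundary logicals supported in $A\cup C$ (times a strong symmetry). $N_A$ commutes with each of these on $\rho^*$, with the possible exception of the tri-junction site flagged in the footnote, which I would absorb into the $AB$/$AD$ edges. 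Strong symmetries act trivially on $\rho^*$, so they are immediate. Finally, since violations of the checked generators on $A$, $AC$ and the $A$-boundary are excluded, the remaining violated stabilizers of $\widetilde{N}_A\otimes I$ lie on the $AB$ and $AD$ edges, which is the observation.
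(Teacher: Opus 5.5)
Your Step~1 and the conjugation bookkeeping are fine, but Step~2 has a genuine gap.

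The cleaning argument only determines $g'$ on the support of $\rho^*$. Expanding $g'$ in the Pauli basis on $A\cup C$ and using $[g',\rho^*]=0$ gives $g'\rho^*=X\rho^*$. Here $X=\sum_Q c_Q Q$ is a \emph{linear combination} of Paulis on $A\cup C$ that commute with $\mathcal{S}_{ABCD}^*$. It is not a product, since $g'=(U_A\otimes U_C)g(U_A\otimes U_C)^\dagger$ is generically non-Pauli. Even granting $[N_A,Q]=0$ for each such $Q$, this yields only $N_A g'\rho^*=XN_A\rho^*$. To conclude $[N_A,g']\rho^*=0$ you would still need $XN_A\rho^*=g'N_A\rho^*$. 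That requires knowing how $g'$ acts on the excited state $N_A\rho^*$, which lies outside the code space, and the code space is the only place where $g'$ and $X$ are known to agree.

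Your remark that strong symmetries ``act trivially on $\rho^*$, so they are immediate'' has the same flaw: what is needed is that $S'$ acts trivially on $N_A\rho^*$. Separately, Observation~\ref{obs:additional} describes the global logical group. It does not by itself decompose a logical supported on $A\cup C$ into generators supported on $A\cup C$.

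The missing idea, which is what the paper uses, is the factorization of the strong symmetry itself. Since $N$ is a loop stabilizer of $\rho^*$, $\widetilde N=U^\dagger NU=\widetilde N_A\widetilde N_D\widetilde N_B$ is a strong symmetry of $\rho$. Its three factors have disjoint supports and hence commute, so $\widetilde N_A\rho=(\widetilde N_D\widetilde N_B)^\dagger\rho$.

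Now take any weak symmetry $g$ of $\rho$ supported away from $B\cup D$. It can be an arbitrary unitary, not just a local Pauli generator. Then $g\widetilde N_A\rho=(\widetilde N_D\widetilde N_B)^\dagger g\rho$, and also $\widetilde N_A g\rho=\widetilde N_A\rho g=(\widetilde N_D\widetilde N_B)^\dagger g\rho$. Hence $[\widetilde N_A,g]\rho=0$, and the strong case is identical. Equivalently, Fact~\ref{fact:strong_commutation} gives $[\widetilde N,g]\rho=0$, and one strips off $\widetilde N_D\widetilde N_B$.

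This makes both the push-forward to $\rho^*$ and the cleaning step unnecessary. Your sentence ``this holds because $N_A$ is a truncation of a loop stabilizer'' becomes exactly this argument on the $\rho^*$ side once made precise, and it already covers arbitrary $g'$. So the ``main obstacle'' you identify is not actually one. As written, however, you replace it with an argument that does not close.
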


% \begin{figure}
% \centering
% \raisebox{\height}{\hspace{5pt}}\raisebox{-0.85\height}{\includegraphics[width=0.45\textwidth]{fig_4_partite_string}}
% \hspace{10pt}
% \caption{
% A string operator $N_A$ for $\rho^*$, and the corresponding operator $\widetilde{N}_A$ for $\rho$. Although $\widetilde{N}_A$ may be supported globally on $A$ since there is no micro-locality constraint on $U_{A}$, it creates excitations only on $AB$ and $AD$ edges.\YP{Coming back to this.}}
% \label{fig_4_partite_string}
% \end{figure} 

\begin{figure}
\centering
\begin{subfigure}{0.22\textwidth}
    \centering
    \includegraphics[width=\textwidth]{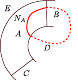}
    \caption{}
    \label{fig_4_partite_string_a}
\end{subfigure}
\hspace{10pt}
\begin{subfigure}{0.22\textwidth}
    \centering
    \includegraphics[width=\textwidth]{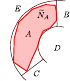}
    \caption{}
    \label{fig_4_partite_string_b}
\end{subfigure}
\caption{
A string operator $N_A$ for $\rho^*$, and the corresponding operator $\widetilde{N}_A$ for $\rho$. 
Although $\widetilde{N}_A$ may be supported globally on $A$ since there is no micro-locality constraint on $U_A$, it creates excitations only on the $AB$ and $AD$ edges.}
\label{fig_4_partite_string}
\end{figure}

\subsection{Edge charges}

We now introduce the notion of \emph{edge charges}. 
A natural first attempt is to define an edge charge by multiplying face stabilizers along a given edge, for example along the $AB$ edge:
\begin{align}
W_{AB}^{\text{naive}} &= \prod_f S_f = \figbox{3.5}{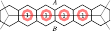}
\end{align}
where the ``$1$'' labels in the diagram indicate that each $S_f$ enters with exponent $1$. 

However, the definition of $W_{AB}^{\text{naive}}$ is problematic, since it does not include stabilizers located at the tri-junctions:
\begin{align}
\figbox{3.5}{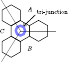} \ .
\end{align}
Namely, under the action of a local unitary $U_A$, such tri-junction stabilizers may also be violated. 
As a result, one cannot distinguish a charge localized on the $AB$ edge from contributions near the tri-junction.
Hence, $W_{AB}^{\text{naive}}$ alone is insufficient to establish an analogue of Lemma~\ref{lemma:anyon_charge}.

%\newtext{Thus, one would have to choose how to assign such tri-junction-supported operators to the neighboring edges. Such a choice would not guarantee that the charge is conserved independently on each edge, as is needed to establish an analogue of Lemma~\ref{lemma:anyon_charge}.} As a result, \newtext{one cannot distinguish a charge localized on the $AB$ edge from contributions near the tri-junction,}
%preventing us from establishing an analogue of Lemma~\ref{lemma:anyon_charge} using $W_{AB}^{\text{naive}}$ alone. 
%\TE{I do not know what is meant by mixing here. Maybe alternatively, we can say: However, this definition is problematic, since the tri-junction stabilizer is unaccounted for. We could pick an edge of the partition and associate it to that edge, but then we would have no guarantee that the charge is conserved at each edge, as is needed to establish an analogue of Lemma~\ref{lemma:anyon_charge}.}\YP{I agree and thanks for the suggestion. I changed it, what do you think?}

\subsubsection*{Decorated edge charges}

To avoid this issue, we introduce a \emph{decorated edge charge} operator.
From this point onward, we work in the canonical purification picture. We define
\begin{align}
W_{AB} \equiv S_e^{-1} S_{e'}^{-1} \prod_f S_f \otimes I = \figbox{3.5}{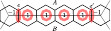}\ 
\end{align}
where the ``$-1$'' labels indicate the inverse powers of the weak symmetry stabilizers $S_e = g_e \otimes g_e^*$. 
In other words, 
\begin{align}
W_{AB} = S_e^{-1} S_{e'}^{-1}(W_{AB}^{\text{naive}} \otimes I). 
\end{align}
In the canonical purification picture, $W_{AB}$ acts on \emph{both layers}, since $S_e^{-1}$ and $S_{e'}^{-1}$ involve $g_e \otimes g_e^*$ terms. 
For all bipartite edges ($AB,AC,AD,BC,BD,CD$), we can define edge charges with decorations by weak symmetry stabilizers $S_e = g_e \otimes g_e^*$ that cross the corresponding edges, see Fig.~\ref{fig_4_partite_edge_charge} for illustration. We now consider excitations created by $\widetilde{N}_A $, a ``string-like'' operator connecting $AB$ and $AD$ edges. 
% {\newtext{The decoration removes the ambiguity of $W_{AB}^{\text{naive}}$ by making the edge-charge operator deformable near the adjacent tri-junctions, in the sense made precise in the lemma below.}
% The decoration removes the ambiguity of $W_{AB}^{\text{naive}}$ by making the edge-charge operator deformable near the adjacent tri-junctions. We make this statement precise in Lemma below.
% The decoration removes the ambiguity of $W_{AB}^{\text{naive}}$ by making the edge-charge operator deformable near the adjacent tri-junctions, in the sense that its local representative near a junction can be changed by multiplying stabilizers without changing its action on the relevant excited state. We make this statement precise in the lemma below.

\begin{figure}
\centering
\raisebox{\height}{\hspace{5pt}}\raisebox{-0.85\height}{\includegraphics[width=0.35\textwidth]{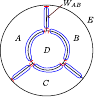}}
\hspace{10pt}
\caption{
Decorated construction of edge charges $W_{AB},W_{AC},\cdots$. 
}
\label{fig_4_partite_edge_charge}
\end{figure}

\begin{lemma}\label{lemma:edge_junction}
Given a string-like operator $\widetilde{N}_{A}$ for $\rho$ connecting $AB$ and $AD$ edges, in the canonical purification, we have 
% \begin{align}
% \Tr\big[ (S_{e_{AB}} S_{e_{AD}} S_{f}) \sigma \big] =1,
% \end{align}
% where $ \sigma = \widetilde{{N}}_A \rho \widetilde{{N}}_A^{\dagger}$, $S_{f}$ denotes the tri-junction stabilizer, and $S_{e_{AB}}$ and $S_{e_{AD}}$ represent the weak symmetry stabilizers crossing the $AB$ and $AD$ edges adjacent to the tri-junction, respectively. \TE{Sorry, what is meant by $S_{e_{AB}}$ and $S_{e_{AD}}$ here? These are defined for the canonical purification. All we need is the statement below in terms of the canonical purification (with $S_f$ replaced by $S_f \otimes I$ for consistency) -- or am I missing something? Since the section begins by saying that we will be working in the purification, we might as well state the lemma in terms of the purification anyways.}
% \ZL{I agree with the comment. I don't see why two equations are equivalent. }\YP{I agree, I removed Eq.136 and wrote the statement of the Lemma in canonical purification as Tyler suggested (see colored text).}
% Equivalently, in the canonical purification, we have 
\begin{align}
\langle S_{e_{AB}} S_{e_{AD}} (S_{f} \otimes I) \rangle =1 \quad \text{for} \quad  \mbox{$(\widetilde{N}_A \otimes I)|\Psi_{\rho}\rangle$ }.
\end{align}
where $S_{e_{AB}}$ and $S_{e_{AD}}$ represent the weak symmetry stabilizers crossing the $AB$ and $AD$ edges adjacent to the tri-junction, respectively.
\end{lemma}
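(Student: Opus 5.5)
The plan is to show that the tri-junction product $S_{e_{AB}} S_{e_{AD}} (S_f\otimes I)$ is a stabilizer of $|\Psi_\rho\rangle$ that can be rewritten, modulo stabilizers of $|\Psi_\rho\rangle$ which $\widetilde{N}_A\otimes I$ never violates, as an operator supported entirely outside $A\otimes A'$, or entirely on the side of $A$ that $\widetilde{N}_A$ does not disturb. Concretely, I would use the geometry of the tri-junction. The face $f$ at the $A,B,D$ junction is split among the three subsystems. Multiplying $S_f\otimes I$ by the doubled edge stabilizers $S_{e_{AB}}=g_{e_{AB}}\otimes g_{e_{AB}}^*$ and $S_{e_{AD}}=g_{e_{AD}}\otimes g_{e_{AD}}^*$ crossing into $A$ should cancel the first-layer part of $S_f$ on $A$. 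This uses the fact, stated after Eq.~\eqref{eq:logicalop-endpointlabel}, that plaquette stabilizers are products of edge operators around the plaquette. What remains is a product of $g_e\otimes g_e^*$ factors and truncated pieces whose $A$-part is of the form $P_A\otimes P_A^*$. The $A$-part is built from the two-body edge operators inside $A$ and the one-body boundary logical operators of Observation~\ref{obs:additional}. The first step is this explicit Pauli bookkeeping: write $S_{e_{AB}}S_{e_{AD}}(S_f\otimes I) = (g\otimes g^*)_A \cdot O_{BD}$, with $g$ a weak-symmetry (logical) operator of $\mathcal{S}_{ABCD}$ supported in $A$ and $O_{BD}$ supported on $(B\cup D)\otimes(B'\cup D')$.

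Next I evaluate the expectation value in $(\widetilde{N}_A\otimes I)|\Psi_\rho\rangle$. Since $\widetilde{N}_A$ acts only on the first layer of $A$, it commutes with $O_{BD}$. For the $A$-part, I use Observation~\ref{obs:edgelocal} together with the discussion preceding it. There, $\widetilde{N}_A\otimes I$ was shown to commute, on $|\Psi_\rho\rangle$, with $g\otimes g^*$ for every weak-symmetry operator $g$ of $\rho$ generated within $A$, including the boundary one-body logicals. The cleaner version of this step goes through the full strong symmetry $\widetilde{N}=\widetilde{N}_A\widetilde{N}_D\widetilde{N}_B$. By Fact~\ref{fact:strong_commutation} and Fact~\ref{fact:strongweak}, $\widetilde{N}$ commutes on $\rho$ with the weak symmetry $g$ supported in $A$, while $\widetilde{N}_B,\widetilde{N}_D$ commute with $g$ identically because their supports are disjoint from it. Hence $[\widetilde{N}_A,g]\rho=0$, and therefore $(\widetilde{N}_A\otimes I)$ commutes with $g\otimes g^*$ on $|\Psi_\rho\rangle$. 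Pushing both factors through $\widetilde{N}_A\otimes I$ then gives $\langle S_{e_{AB}}S_{e_{AD}}(S_f\otimes I)\rangle=\langle\Psi_\rho|S_{e_{AB}}S_{e_{AD}}(S_f\otimes I)|\Psi_\rho\rangle=1$.

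The main obstacle is the first step: verifying that the decoration by exactly $S_{e_{AB}}^{-1}$-type factors on the two edges adjacent to the junction makes the $A$-restricted remainder a genuine $g\otimes g^*$ with $g$ a logical operator of $\mathcal{S}_{ABCD}$. A plain truncated piece of $S_f$ would not suffice, because $\widetilde{N}_A$ might violate it, and that is exactly the naive failure. The exponents and signs must also match for general $k$, including the $\omega^{\pm k\lambda}$-type phases. I would check this directly on the honeycomb tri-junction using the endpoint-label form of the edge operators. If the exponents do not close exactly, I would instead absorb the leftover into the truncated corner stabilizers of item (iii), which Observation~\ref{obs:additional} shows are generated by logical operators local to each subsystem.
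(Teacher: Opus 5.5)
\textbf{Verdict: genuine gap in the first step.} Your overall plan is the right one: rewrite $S_{e_{AB}}S_{e_{AD}}(S_f\otimes I)$ as a product of stabilizers of $|\Psi_\rho\rangle$ that $\widetilde{N}_A\otimes I$ does not disturb. But the concrete first step fails, and it is the step that carries the content.

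Write $S_f\propto\prod_{e\in f}g_e^{-1}$. Multiplying by $S_{e_{AB}}S_{e_{AD}}$ does not cancel the first-layer part of $S_f$ on $A$. It only removes the factors $g_{e_{AB}}^{-1},g_{e_{AD}}^{-1}$ from the first layer, and it deposits $g_{e_{AB}}^*g_{e_{AD}}^*$ on the second layer.

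In fact the factorization $S_{e_{AB}}S_{e_{AD}}(S_f\otimes I)=(g\otimes g^*)_A\cdot O_{BD}$ is impossible. For a two-layer Pauli $Q_1\otimes Q_2$, consider the combination $Q_1(Q_2^*)^{-1}$. It has two properties:
\begin{enumerate}
\item It equals $I$ for every operator of the form $P\otimes P^*$, in particular for every product of doubled edge stabilizers.
\item Restricting to $A\otimes A'$ commutes with forming it, up to phases.
\end{enumerate}
For your operator the combination is $g_{e_{AB}}g_{e_{AD}}S_f(g_{e_{AB}}g_{e_{AD}})^{-1}=S_f$. This acts nontrivially on the $A$-vertices of $f$. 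So no $g$ supported in $A$ works, and no rearrangement of $g_e\otimes g_e^*$ factors can help.

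Your fallback fails for the same reason. The operators of items (i) and (iii) sit on the outer $ABCD|E$ boundary, not at the $ABD$ junction. They are also absent when $E=\emptyset$, where the lemma must still hold. And in the purification they too enter only as $g\otimes g^*$.

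The missing ingredient is a \emph{second-layer} stabilizer. It comes from the hexagonal-prism redundancy $\prod_{e\in f}S_e^{-1}=S_f\otimes S_f^*$, which is what the paper's proof uses. It gives
\[
S_{e_{AB}}S_{e_{AD}}(S_f\otimes I)=(I\otimes S_f^*)^{-1}\prod_{e\in f\setminus\{e_{AB},e_{AD}\}}S_e^{-1},
\]
a product of commuting stabilizers of $|\Psi_\rho\rangle$. Each factor stabilizes the excited state:
\begin{enumerate}
\item $I\otimes S_f^*$ acts on the second layer only. The $S_e$ for hexagon edges inside $B$, inside $D$, or across $B|D$ are disjoint from $A$. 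All of these commute with $\widetilde N_A\otimes I$ outright.
\item The $S_e$ for hexagon edges inside $A$ stabilize $(\widetilde N_A\otimes I)|\Psi_\rho\rangle=((\widetilde N_D\widetilde N_B)^{\dagger}\otimes I)|\Psi_\rho\rangle$, because their first-layer support avoids $B\cup D$. This is your second step stated cleanly. Note that $[\widetilde N,g]\rho=0$ by itself only gives $[\widetilde N_A,g]\,\widetilde N_A^{\dagger}\rho=0$; you need this route, or the same argument applied to $\widetilde N^{\dagger}$.
\end{enumerate}
Hence $S_{e_{AB}}S_{e_{AD}}(S_f\otimes I)$ stabilizes $(\widetilde N_A\otimes I)|\Psi_\rho\rangle$ and its expectation value is $1$. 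No phase bookkeeping beyond the prism identity is needed.
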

% \newtext{Let $\widetilde{N}_{A}$ be a string-like operator for $\rho$ connecting the $AB$ and $AD$ edges, and define
% \[
% |\Phi_A\rangle \equiv (\widetilde{N}_{A}\otimes I)|\Psi_{\rho}\rangle .
% \]
% Then
% \begin{align}
% \langle \Phi_A|
% S_{e_{AB}} S_{e_{AD}} (S_{f}\otimes I)
% |\Phi_A\rangle =1 .
% \end{align}
% Here, $S_{f}\otimes I$ denotes the first-layer tri-junction stabilizer, while
% \[
% S_{e_{AB}}=g_{e_{AB}}\otimes g_{e_{AB}}^*,
% \qquad
% S_{e_{AD}}=g_{e_{AD}}\otimes g_{e_{AD}}^*
% \]
% are the weak-symmetry stabilizers in the canonical purification crossing the $AB$ and $AD$ edges adjacent to the tri-junction, respectively.}

It is useful to graphically depict the statement of this lemma:
\begin{align}
\figbox{3.5}{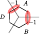}  \simeq \figbox{3.5}{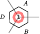}  \label{eq:tri-junction_decomposition}
\end{align}
where two operators have identical action on the excited state $(\widetilde{N}_A\otimes I) |\Psi_{\rho}\rangle$, with $\simeq$ denoting the equivalence on this state. 
This lemma essentially says that the tri-junction charge $S_f \otimes I$ can be split into two edge operators $S_{e_{AB}}^{-1}$ and $S_{e_{AD}}^{-1}$ without any leakage, when acting on the purified state $(\widetilde{N}_A\otimes I) |\Psi_{\rho}\rangle$. 
% \newtext{excited purified state $|\Phi_A\rangle$, with $\simeq$ denoting equivalence on this state. Equivalently,
% \[
% (S_f\otimes I)|\Phi_A\rangle
% =
% S_{e_{AD}}^{-1}S_{e_{AB}}^{-1}|\Phi_A\rangle .
% \]
% Thus, the tri-junction charge $S_f\otimes I$ can be absorbed into the two adjacent edge operators $S_{e_{AB}}^{-1}$ and $S_{e_{AD}}^{-1}$ without any leakage.}

% excited state $(\widetilde{N}_A\otimes I) |\Psi_{\rho}\rangle$, with $\simeq$ denoting the equivalence on this state. 
% This lemma essentially says that the tri-junction charge $S_f \otimes I$ can be split into two edge operators $S_{e_{AB}}^{-1}$ and $S_{e_{AD}}^{-1}$ without any leakage, when acting on the purified state $(\widetilde{N}_A\otimes I) |\Psi_{\rho}\rangle$. 

\begin{proof}
Examining the tri-junction hexagon gives the following redundancy condition for stabilizer generators:
\begin{align}
\prod_{e \in f} S_e^{-1} = \figbox{3.5}{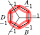} = S_{f}\otimes S_f^*. \label{eq:local_redundancy}
\end{align}
In other words, a product of eight stabilizer generators surrounding a hexagonal prism is an identity:
\begin{align}
\figbox{3.5}{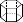} \  = I. \label{eq:local_redundancy_figure}
\end{align}
Since $\widetilde{N}_A\otimes I$ is a string-like operator connecting $AB$ and $AD$ edges, it commutes with all strong and weak symmetry operators supported on $A$.
Namely, it commutes with the operator
\begin{align}
\figbox{3.5}{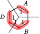}.
\end{align}
Also, since $\widetilde{N}_A\otimes I$ is a single-layer operator, it commutes with $(I\otimes S_f)$. 
Hence, it follows that 
two operators
\begin{align}
\figbox{3.5}{fig_tri_junction_two_edge.pdf}  \simeq \figbox{3.5}{fig_tri_junction_face.pdf} 
\end{align}
have identical action. 
\end{proof}

\subsubsection*{Conservation of edge charges}

We now claim that the decorated edge charges satisfy a conservation law analogous to the anyon charge conservation.

\begin{lemma}
Given a string-like operator $\widetilde{N}_{A}$ for $\rho$ connecting $AB$ and $AD$ edges, we have 
\begin{align}
\langle W_{AD}W_{AB} \otimes I \rangle =1 \quad \text{for} \quad \mbox{$(\widetilde{N}_A \otimes I)|\Psi_{\rho}\rangle$ }.\label{eq:partite_conservation}
\end{align}
\end{lemma}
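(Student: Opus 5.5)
The plan is to mimic the neutrality argument in the proof of Lemma~\ref{lemma:anyon_charge}, with the loop $W_L W_{\mathrm{int}} W_R$ replaced by a closed stabilizer of the canonical purification. That stabilizer should enclose the whole support of $\widetilde{N}_A$ within $A$, and its boundary should run along the $AB$ and $AD$ edges.

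The first step is to build the product of all first-layer face stabilizers $S_f\otimes I$ whose faces lie in $A$ or straddle the $AB$ and $AD$ edges on the $A$ side. We include the tri-junction face $S_f\otimes I$ of $A,B,D$, and multiply by the weak-symmetry stabilizers $S_e=g_e\otimes g_e^*$ needed to close it off near the junction. The aim is to write
\begin{align}
(W_{AD}W_{AB}\otimes I)\, \simeq\, \mathcal{O}_{A}\, \cdot X ,
\end{align}
where $\mathcal{O}_A$ is a product of stabilizers of $|\Psi_\rho\rangle$ supported in $A\otimes A'$, possibly together with operators $I\otimes S_f^*$ on the second layer. The factor $X$ collects the tri-junction pieces. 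The decorations $S_{e_{AB}}^{-1}$ and $S_{e_{AD}}^{-1}$ built into $W_{AB}$ and $W_{AD}$ are exactly the factors appearing in Lemma~\ref{lemma:edge_junction}. So at the $A,B,D$ junction, on the state $(\widetilde{N}_A\otimes I)|\Psi_\rho\rangle$, we can trade $S_{e_{AB}}^{-1}S_{e_{AD}}^{-1}$ for $S_f\otimes I$. The decorations at the opposite ends of the two edges, the $A,B,C$ and $A,C,D$ junctions, sit away from the violations of $\widetilde{N}_A$. There the analogous replacement follows from the local redundancy \eqref{eq:local_redundancy}, because $\widetilde{N}_A\otimes I$ commutes with every stabilizer supported on $A$ and across the $AC$ edge (Observation~\ref{obs:edgelocal}).

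With this identification the argument is short. We have $\langle W_{AD}W_{AB}\rangle_{\widetilde{N}_A\Psi}=\langle \Psi_\rho|(\widetilde{N}_A^\dagger\otimes I)\,\mathcal{W}\,(\widetilde{N}_A\otimes I)|\Psi_\rho\rangle$, where $\mathcal{W}$ is the enclosing stabilizer. We then need three facts:
\begin{itemize}
\item $\mathcal{W}$ is a product of strong-symmetry stabilizers $S_f\otimes I$ over a region containing the support of $\widetilde{N}_A$, together with second-layer factors $I\otimes S_f^*$ and doubled factors $g\otimes g^*$ lying in $A$.
\item $\widetilde{N}_A\otimes I$ commutes with the second-layer factors trivially.
\item The product of first-layer faces over the whole of $A$ (with boundary decorations) corresponds to $\widetilde{N}$ restricted to $A$ being conjugate to a piece of a strong symmetry. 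The cleanest route is to use $\widetilde{N}=\widetilde{N}_A\widetilde{N}_D\widetilde{N}_B$: it is a strong symmetry of $\rho$, so $(\widetilde{N}\otimes I)|\Psi_\rho\rangle=|\Psi_\rho\rangle$ by Fact~\ref{fact:strongweak}.
\end{itemize}
Given these, we commute $\mathcal{W}$ past $\widetilde{N}_A\otimes I$ and use $\mathcal{W}|\Psi_\rho\rangle=|\Psi_\rho\rangle$ to obtain expectation value $1$.

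The main obstacle is justifying that $\widetilde{N}_A\otimes I$ commutes with the full enclosing operator, since $\widetilde{N}_A$ may be spread over all of $A$. We only know that it commutes, on the state, with the strong and weak symmetries of $\rho_{ABCD}$ away from the $AB$ and $AD$ edges, including the boundary logicals of Observation~\ref{obs:additional}. The plan is to choose $\mathcal{W}$ so that every generator in it is either supported strictly inside $A$ or on the $AC$ edge, where commutation holds, or is one of the decorating edge stabilizers. The decorating stabilizers are handled by Lemma~\ref{lemma:edge_junction} and its analogues at the other junctions. The residual noncommuting pieces then reorganize exactly into $W_{AD}W_{AB}\otimes I$.

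If that bookkeeping fails, the fallback is to compute through the global relation instead. Under $U=\bigotimes_R U_R$, the decorated charges of $N_A$ in $\rho^*$ satisfy $\langle W^*_{AD}W^*_{AB}\rangle=1$ by direct Pauli calculation. We would then transport this identity to $\rho$, using that $U_A$ commutes with all operators supported outside $A$ and that the edge-charge operators split into $A$-parts and complementary parts.
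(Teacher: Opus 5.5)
Your overall strategy matches the paper's. You multiply $W_{AD}W_{AB}\otimes I$ by operators that act trivially on $(\widetilde N_A\otimes I)|\Psi_\rho\rangle$ until you reach a closed stabilizer that commutes with $\widetilde N_A\otimes I$. Your junction bookkeeping is also fine: Lemma~\ref{lemma:edge_junction} at $ABD$, and the redundancy \eqref{eq:local_redundancy} with the unexcited $AC$ edge at the far ends, which is equivalent to the paper's multiplying in $W_{AC}$.

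The gap is the step you yourself call the main obstacle, which you leave open; neither proposed resolution closes it. A generator-by-generator argument cannot work, because the faces along $AB$ and $AD$ inside $W_{AD}W_{AB}$ are exactly the stabilizers $\widetilde N_A$ may violate. The identity $(\widetilde N\otimes I)|\Psi_\rho\rangle=|\Psi_\rho\rangle$ for $\widetilde N=\widetilde N_A\widetilde N_D\widetilde N_B$ gives no commutation of $\mathcal W$ with $\widetilde N_A$; it only rewrites the excited state as $(\widetilde N_B^\dagger\widetilde N_D^\dagger\otimes I)|\Psi_\rho\rangle$. The missing observation concerns the \emph{support} of the product, not its factors. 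Multiply in every face of $A$, the $AC$-edge faces and the tri-junction faces. The product then reduces to a loop on the rim of that region, which on the first layer lies in $B\cup C\cup D$. It therefore commutes with $\widetilde N_A\otimes I$ by disjointness, however $U_A$ has smeared $\widetilde N_A$.

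That disjointness holds only if $A$ is bounded by $B,C,D$ alone, i.e.\ $E=\emptyset$. In the setting of this section, $ABCD$ is a disk inside $E$, and the far end of the $AB$ edge is the corner $ABE$, not $ABC$. Faces cut by the $ABCD|E$ boundary are not in $\mathcal S_{ABCD}$. So after multiplying the faces of $A$, a residual string of truncated faces remains along the $AE$ boundary on first-layer $A$, overlapping $\widetilde N_A$. Likewise, at the $ABE$ and $ACE$ corners the redundancy you invoke is unavailable.

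The paper's additional step handles this. The residual string is built from the truncated face stabilizers (ii) and, at the corners, the one-body boundary logicals (i) of Observation~\ref{obs:additional}. These are weak symmetries of $\rho_{ABCD}$, so multiplying by the corresponding $g\otimes g^*$ moves the string onto the second layer. Only then does the operator stop overlapping $\widetilde N_A\otimes I$; your proposal has no counterpart to this step.

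Your fallback does not substitute for it. Transporting $\langle W^*_{AD}W^*_{AB}\rangle=1$ from $\rho^*$ through $U$ computes the expectation of operators whose $A$-parts are conjugated by $U_A$, not of the decorated charges $W_{AB},W_{AD}$ of $\rho$. Moreover, the lemma is stated for an arbitrary string-like $\widetilde N_A$, not only conjugated ones.
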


This expresses the conservation of total edge charge, analogous to the LO-chirality case.

\begin{proof}
This lemma can be verified analogously to Lemma~\ref{lemma:anyon_charge} where we deform $W_{AD}W_{AB}\otimes I$ by multiplying strong and weak stabilizer generators that commute with $\widetilde{N}_A \otimes I$. 
First, since $W_{AC}$ commutes with $\widetilde{N}_A \otimes I$ when acting on $|\Psi_{\rho}\rangle$, we can perform the following deformation:
\begin{align}
\figbox{3.5}{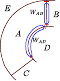} \ \simeq \ \figbox{3.5}{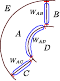} \ . 
\end{align}
Here, the operators shown in blue act only on the first layer while the operators shown in red act on both layers.

Next, the corner decorations can be deformed into the tri-junction stabilizers using Eq.~\eqref{eq:tri-junction_decomposition}, giving 
\begin{align}
\figbox{3.5}{fig_4_partite_deformation2.pdf} \ \simeq \ \figbox{3.5}{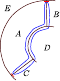} \ .
\end{align}

Since $\widetilde{N}_A$ commutes with all the face stabilizer operators on $A$, multiplying them yields:
\begin{align}
\figbox{3.5}{fig_4_partite_deformation3.pdf} \ \simeq \ \figbox{3.5}{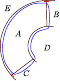} \ , 
\end{align}
producing a closed loop stabilizer decorated at the tri-junction corners ($ABE$ and $ACE$).

This operator can be further deformed into the second layer using the boundary weak symmetry stabilizers: 
\begin{align}
\figbox{3.5}{fig_4_partite_deformation4.pdf} \ \simeq \ \figbox{3.5}{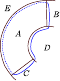} \ .
\end{align}
Here the dotted blue operators now act in the second layer while the solid blue operators still act on the first layer. 
Such deformation is possible because the boundary logical operators are truncated face stabilizers along the $AE$ boundary, as discussed in Sec.~\ref{sec:4-partite_additional}, namely those identified in (ii). 
Deformations near at the tri-junction corners (those involving the weak symmetry operators shown in red) can be performed by using the one-body boundary logical operators identified in (i). 

The resulting loop operator no longer overlaps with $\widetilde{N}_A\otimes I$, and thus Eq.~\eqref{eq:partite_conservation} follows. 
\end{proof}

\subsection{Edge charges are fixed}

We have established that edge charges $W_{AB},W_{AC},\cdots$ are indeed conserved when we focus on excitations created by string-like operators. 
We now establish that edge charges $W_{AB},W_{AC},\cdots$ are fixed, meaning that the excited states must be in eigenstates of edge charge operators. 

\begin{lemma}\label{lemma:charge_factorization_partite}
Consider a string-like operator $\widetilde{N}_{A}$ for $\rho$ connecting $AB$ and $AD$ edges, which is a part of a strong symmetry operator $\widetilde{N}$. 
We then have 
\begin{align}
 \langle W_{AB}\otimes I\rangle \langle W_{AD}\otimes I\rangle = 1 \quad \text{for} \quad \mbox{$(\widetilde{N}_A \otimes I)|\Psi_{\rho}\rangle$ }.
\end{align}
\end{lemma}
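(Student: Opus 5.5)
The plan mirrors the proof of Lemma~\ref{lemma:anyon_charge}, with spatial separation replaced by the tensor-product structure of the partition. Combined with the conservation law Eq.~\eqref{eq:partite_conservation}, the target statement gives $|\langle W_{AB}\otimes I\rangle| = |\langle W_{AD}\otimes I\rangle| = 1$, so the state is an eigenstate of each edge charge.

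The essential input is a factorization of the excited state across the two edges. Concretely, I aim to show
\begin{align}
\langle W_{AD}W_{AB}\otimes I\rangle = \langle W_{AD}\otimes I\rangle\,\langle W_{AB}\otimes I\rangle
\end{align}
on $|\Phi_A\rangle \equiv (\widetilde{N}_A\otimes I)|\Psi_\rho\rangle$. With this in hand, the conservation lemma gives the product of the two expectation values as $1$. Since each $W$ is unitary, both factors have modulus at most $1$, so each has unit modulus. This is exactly the final step of Lemma~\ref{lemma:anyon_charge}.

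To obtain the factorization, I use the hypothesis that $\widetilde{N}_A$ is part of the strong symmetry $\widetilde{N} = \widetilde{N}_A\widetilde{N}_D\widetilde{N}_B$. By Fact~\ref{fact:strongweak}, $(\widetilde{N}\otimes I)|\Psi_\rho\rangle = |\Psi_\rho\rangle$, and hence
\begin{align}
|\Phi_A\rangle = (\widetilde{N}_D^\dagger \widetilde{N}_B^\dagger \otimes I)|\Psi_\rho\rangle .
\end{align}
So the same state can be generated by an operator supported on $A$, or alternatively by a product of operators supported on $B$ and on $D$. The decorated operator $W_{AB}$ lives on $A\cup B$ (both layers), and $W_{AD}$ lives on $A\cup D$. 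Using the deformation freedom from Lemma~\ref{lemma:edge_junction} and the stabilizers of $|\Psi_\rho\rangle$, I move the $A$-parts of these operators. The aim is to rewrite $W_{AB}$ (on the state) as an operator $W'_{AB}$ supported on $B\cup B'$ together with a neighborhood of the $AB$ edge in the second layer, and similarly $W'_{AD}$ on $D\cup D'$ plus its edge neighborhood. One option is to push the first-layer $A$ portion into the second layer $A'$, since $\widetilde{N}_A\otimes I$ commutes with second-layer operators. Another is to use the $B$-representation of the state.

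Once the two operators are supported on regions $X_B$ and $X_D$, I evaluate them on $(\widetilde{N}_D^\dagger\widetilde{N}_B^\dagger\otimes I)|\Psi_\rho\rangle$. The relevant quantities become the expectation value of $\widetilde{N}_B W'_{AB}\widetilde{N}_B^\dagger$ (supported on $X_B$), of $\widetilde{N}_D W'_{AD}\widetilde{N}_D^\dagger$ (supported on $X_D$), and of their product, all in $|\Psi_\rho\rangle$. The factorization then reduces to $I(X_B : X_D) = 0$ in $|\Psi_\rho\rangle$. This I expect to verify for the stabilizer state, because $B$ and $D$ touch $A$ along different edges and meet only at the $ABD$ tri-junction.

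The main obstacle is exactly this tri-junction. Near the $ABD$ corner, the regions $X_B$ and $X_D$ are adjacent, and the stabilizer entanglement of $|\Psi_\rho\rangle$ across the $BD$ edge gives nonzero mutual information. My first attempt is to use the decorations $S_{e_{AB}}, S_{e_{AD}}$ and Lemma~\ref{lemma:edge_junction} to trim the supports of $W'_{AB}$ and $W'_{AD}$ away from the corner, so that their supports remain separated by a buffer. If that fails, the fallback is to compute directly with the stabilizer formalism. In that approach, any correlation between the two operators must come from a stabilizer of $|\Psi_\rho\rangle$ whose restriction to $X_B\cup X_D$ is not a product of its restrictions. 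I would then check that this stabilizer is a product of face and edge terms that commute with $\widetilde{N}_B$ and $\widetilde{N}_D$ individually. Checking that requires the same edge-locality property as Observation~\ref{obs:edgelocal}, applied now to $\widetilde{N}_B$ and $\widetilde{N}_D$.
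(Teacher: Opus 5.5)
Your reduction of the statement to conservation plus the factorization $\langle W_{AD}W_{AB}\otimes I\rangle=\langle W_{AD}\otimes I\rangle\langle W_{AB}\otimes I\rangle$ matches the paper. So does the idea of rewriting $|\Phi_A\rangle$ through the complementary pieces of the strong symmetry. The gap is the factorization step itself.

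You reduce it to $I(X_B:X_D)=0$ in $|\Psi_\rho\rangle$, on the grounds that $B$ and $D$ meet only at the $ABD$ tri-junction. In this partition, however, all six pairs are adjacent. $B$ and $D$ share the entire $BD$ edge, across which $|\Psi_\rho\rangle$ has of order $m_{BD}$ straddling stabilizers ($S_f\otimes I$ and $I\otimes S_f^*$ on cut faces, $g_e\otimes g_e^*$ on crossing edges). Because $U_B$ and $U_D$ are arbitrary unitaries on their subsystems, $\widetilde{N}_B$ and $\widetilde{N}_D$ may be spread over all of $B$ and $D$. Hence $X_B\supseteq B$ and $X_D\supseteq D$, so $I(X_B:X_D)>0$, and no trimming near the corner helps.

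Your fallback fails at the same place. With the three-part completion $\widetilde{N}=\widetilde{N}_A\widetilde{N}_D\widetilde{N}_B$, both $\widetilde{N}_B$ and $\widetilde{N}_D$ terminate on the $BD$ edge, since the compensating charge is routed $AB\to BD\to AD$. Individually, then, they do not commute with the $BD$-edge stabilizers that carry the correlation. In the edge decomposition, $\widetilde{N}_B$ acts on $\mathcal{H}^{(AB)}\otimes\mathcal{H}^{(BD)}$ and $\widetilde{N}_D$ on $\mathcal{H}^{(AD)}\otimes\mathcal{H}^{(BD)}$. They share the $BD$ factor, which is a channel through which the $AB$ and $AD$ charges could become correlated, and nothing in your argument rules this out.

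Two ingredients are missing. First, complete $N_A$ through $C$ instead. The paper uses $\widetilde{N}=\widetilde{N}_A\widetilde{N}_B\widetilde{N}_C\widetilde{N}_D$, crossing $AB,BC,CD,DA$, so that $|\Phi_A\rangle=(\widetilde{N}_C^\dagger\widetilde{N}_B^\dagger\widetilde{N}_D^\dagger\otimes I)|\Psi_\rho\rangle$. Here $\widetilde{N}_B$ touches only $AB,BC$, $\widetilde{N}_D$ only $CD,AD$, and $\widetilde{N}_C$ only $BC,CD$; the edges $BD$ and $AC$ are never excited.

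Second, even then the physical reduced states do not factorize. The $BD$ correlations remain, and weak stabilizers at the tri-junction couple the $AB$ and $AD$ regions, so mutual information in $|\Psi_\rho\rangle$ is the wrong tool. The paper instead embeds the code space of $\mathcal{S}_{\text{code}}$ into $\mathcal{H}_{\text{ext}}=\bigotimes_{\mathcal{E}}\mathcal{H}^{(\mathcal{E})}$, where $|\Psi_\rho\rangle$ becomes the exact product $|0\rangle^{\otimes N}$. A commutant argument then shows each $\overline{\widetilde{N}_X}$ acts only on its two excited edge factors. Consequently $AB$ can be entangled only with $BC$, $AD$ only with $CD$, and $\widetilde{N}_C$ cannot correlate them.

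Your idea of pushing $W_{AB},W_{AD}$ off $A$ combines well with this. With the four-part loop, $\widetilde{N}_C$ commutes with the deformed operators and drops out. You are left with $\langle\Psi_\rho|O_BO_D|\Psi_\rho\rangle$, where $O_B=\widetilde{N}_BW'_{AB}\widetilde{N}_B^\dagger$ acts on $\mathcal{H}^{(AB)}\otimes\mathcal{H}^{(BC)}$ and $O_D=\widetilde{N}_DW'_{AD}\widetilde{N}_D^\dagger$ acts on $\mathcal{H}^{(CD)}\otimes\mathcal{H}^{(AD)}$. This factorizes in the product reference state.
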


Consider transporting edge charges along a loop-like process decomposed into four string operators (Fig.~\ref{fig_4_partite_LR}):
\begin{align}
W_{AD} 
\underset{\widetilde{N}_A}{\longrightarrow} W_{AB}  
\underset{\widetilde{N}_B}{\longrightarrow} W_{BC}  
\underset{\widetilde{N}_C}{\longrightarrow} W_{CD}  
\underset{\widetilde{N}_D}{\longrightarrow} W_{AD}
\end{align}
Since $\widetilde{N}=\widetilde{N}_A\widetilde{N}_B\widetilde{N}_C\widetilde{N}_D$ is a strong symmetry, $\widetilde{N}_A$ and $\widetilde{N}_D^{\dagger}\widetilde{N}_C^{\dagger}\widetilde{N}_B^{\dagger}$ must generate the same state, creating excitations only on $W_{AD}$ and $W_{AB}$.

\begin{figure}
\centering
\raisebox{\height}{\hspace{5pt}}\raisebox{-0.85\height}{\includegraphics[width=0.35\textwidth]{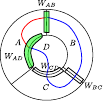}}
\hspace{10pt}
\caption{Edge charges $W_{AB}, W_{AD}$ can be created by $\widetilde{N_A}\otimes I$, as well as by $\widetilde{N}_B^{\dagger}\widetilde{N}_C^{\dagger}\widetilde{N}_D^{\dagger}\otimes I$.
}
\label{fig_4_partite_LR}
\end{figure}

One might attempt to argue that $\langle W_{AD} W_{AB} \rangle$ should factorize as $\langle W_{AD}\rangle \langle W_{AB} \rangle$ from a naive lightcone argument. 
The intuition is that each string operator only couples neighboring edge charges, so a local unitary of the form $\widetilde{N}_B \otimes \widetilde{N}_C \otimes \widetilde{N}_D$ should not generate correlations between $W_{AD}$ and $W_{AB}$.
However, this reasoning fails because the subsystems supporting neighboring edge charges are not fully decoupled. 
Specifically, if $R_{AD}$ and $R_{AB}$ are the regions supporting $W_{AD}$ and $W_{AB}$ (including second-layer qudits near the tri-junction), then a weak symmetry stabilizer $S_e = g_e\otimes g_e^*$ acts jointly on $R_{AD}$ and $R_{AB}$. This introduces correlations such that 
\begin{align}
\rho_{R_{AD}R_{AB}}\not=\rho_{R_{AD}}\otimes \rho_{R_{AB}}.
\end{align}

A hint for how this issue can be resolved comes from the following physical intuition. 
Although $S_e$ creates correlations between $R_{AD}$ and $R_{AB}$, these correlations are never ``utilized''.
Since $S_e$ is supported locally on $A\otimes A'$, it is naturally regarded as a local stabilizer constraint rather than an independent edge-charge degree of freedom. 
% Indeed, $\widetilde{N}_A \otimes I$ commutes with $S_e$, since the latter is supported locally on $A\otimes A'$.
Thus, while $S_e$ prevents strict factorization of the reduced density matrices, it does not obstruct factorization of the expectation values of edge-charge operators in the excited state created by $\widetilde{N}_A \otimes I$.
 
To formalize this intuition, we embed the excitation Hilbert space isometrically into a Hilbert space where the underlying state is initially factorized. 
Define the stabilizer code %\DL{I think we may need to include the boundary logical operators in $\mathcal{S}_{\text{code}}$ or mention $A$, $B$, $C$, $D$ are partitioning closed sphere?} \TE{We are back to working with the canonical purification. Those single-site logical operators don't all appear here, because they can fail to commute with the stabilizers $g_e \otimes g_e^*$.} \TE{We do need to include some truncations of the stabilizers of the purification, though. These are missing in the equation below. } \DL{I think we add them in a form of $g_e \otimes g_e^*$. In this form, all of boundary weak symmetry stabilizer commute (single-site logicals too).}
\begin{equation}
\mathcal{S}_{\text{code}} \equiv \Big\langle \ \text{$S_f \otimes I$ on $A,B,C,D$},  
\ \text{$g_e \otimes g_e^*$ on $AA',BB',CC',DD'$},
\ \text{all of $I \otimes S_f^*$} \ \Big\rangle. 
\label{eq:code}
\end{equation}
Namely, we only keep $S_f\otimes I$ that are supported locally in each subsystem $A$, $B$, $C$, or $D$ (and $g_e \otimes g_e^*$ in $AA'$, $BB'$, $CC'$, or $DD'$). 
Here, the generators of the form $g_e \otimes g_e^*$ include stabilizers associated to the boundary weak symmetry operators of the mixed state, as they are locally supported on each subsystem $A$, $B$, $C$, or $D$. 
%\TE{They should be included, but these are different from the boundary logical operators that were initially introduced, since here we are working with the reduced density matrix of the canonical purification. We also cannot add all of them, since they are not mutually commuting. The ones that are supported on a single-site in the top and bottom layer should be sufficient, but it would be good to do the counting to be  sure.} \DL{Here, again I think we include all boundary weak symmetry stabilizers in the form of $g_e \otimes g_e^*$. Maybe it would be better to call them boundary weak symmetry ``stabilizer" instead of boundary weak symmetry logical operator?} \TE{I changed it. Feel free to remove these comments.}
By construction, the conjugated string operators $\widetilde{N}_A, \widetilde{N}_B, \widetilde{N}_C, \widetilde{N}_D$ commute with all generators of $\mathcal{S}_{\text{code}} $. 
Hence, they preserve the code subspace and act as logical unitary operators within it. 
% We will establish the factorization of edge charges in this excitation Hilbert space, characterized as the code subspace of $\mathcal{S}_{\text{code}} $. 

\subsubsection*{Number of logical qudits}

To characterize the code $\mathcal{S}_{\text{code}}$ (equivalently, the excitation Hilbert space), we begin by counting the number of logical qudits it supports. 

\begin{lemma}
Let $m_{AB}, m_{AC}, \cdots$ be the number of faces cut by the $AB, AC, \cdots$ edges. 
Let
\begin{align}
N = (m_{AB}+1) +  (m_{AC}+1) + (m_{AD}+1) +  (m_{BC}+1) + (m_{BD}+1) +  (m_{CD}+1). 
\end{align}
Then, the code supports 
\begin{align}
k = N-1
\end{align}
logical qudits. 
\end{lemma}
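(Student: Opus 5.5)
The plan is to count logical qudits as (number of physical qudits) minus (number of independent generators of $\mathcal{S}_{\text{code}}$). I will not count from scratch. Instead I will compare $\mathcal{S}_{\text{code}}$ with the full stabilizer group $\mathcal{S}_{\Psi}$ of the canonical purification $|\Psi_{\rho}\rangle$ of $\rho_{ABCD}$. That group defines a pure state on the doubled system $ABCD\otimes A'B'C'D'$, so it encodes zero logical qudits. Hence the number of logical qudits of $\mathcal{S}_{\text{code}}$ equals the drop in rank (over $\mathbb{Z}_d$, with $d$ odd, so ranks behave as over a field when $d$ is prime, and via Smith normal form otherwise) when passing from $\mathcal{S}_{\Psi}$ to $\mathcal{S}_{\text{code}}\subseteq\mathcal{S}_{\Psi}$. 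The generators removed are exactly of two kinds: the first-layer face stabilizers $S_f\otimes I$ on faces cut by one of the six internal edges $AB,\dots,CD$ (including tri-junction faces), and the doubled edge stabilizers $g_e\otimes g_e^*$ on edges $e$ crossing an internal edge. All $I\otimes S_f^*$ are kept, as are all $g_e\otimes g_e^*$ (including the boundary ones from Observation~\ref{obs:additional}) lying within a single $XX'$.

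The first key step is to eliminate the removed face stabilizers using the local redundancy Eq.~\eqref{eq:local_redundancy}, $\prod_{e\in f}S_e^{-1}=S_f\otimes S_f^*$. For a cut face $f$, $I\otimes S_f^*$ is kept and the edge stabilizers of the non-crossing edges of $f$ are kept. So $S_f\otimes I$ lies in $\mathcal{S}_{\text{code}}$ times the group generated by the crossing edge stabilizers of $f$. Therefore the quotient $\mathcal{S}_{\Psi}/\mathcal{S}_{\text{code}}$ is generated by the crossing edge stabilizers alone.

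The second step is to count these crossing edges. Along an internal edge $XY$ cutting $m_{XY}$ faces, consecutive cut faces are separated by a crossing edge, and there is one more at each end. The end crossings are the ones adjacent to the tri-junctions, as in the decorated $W_{XY}$. Assigning the edges adjacent to a tri-junction to the appropriate internal edges, I expect exactly $m_{XY}+1$ crossing edges per internal edge, giving $N$ candidate generators of the quotient. I will fix this bookkeeping explicitly at each tri-junction and at the points where internal edges meet the outer boundary $E$.

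The third step is to show that these $N$ classes satisfy exactly one relation modulo $\mathcal{S}_{\text{code}}$, and have full order $d$ otherwise; this gives $k=N-1$. The relation should be the global one. Multiplying all crossing edge stabilizers, all first-layer face stabilizers, and the boundary stabilizers produces an element of $\mathcal{S}_{\text{code}}$, by $\prod_f S_f=I$ together with the redundancy above summed over all faces. This is the same mechanism that gives total charge neutrality in the conservation lemma for $W_{AD}W_{AB}$. For independence, I will exhibit dual logical operators: for each crossing edge, a product of edge charges $W_{XY}$, or a truncated face stabilizer supported near that edge, that lies in the normalizer of $\mathcal{S}_{\text{code}}$ and has a nontrivial phase with only that crossing stabilizer up to the global combination. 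Equivalently, I can check via a short Euler-characteristic count that the number of remaining relations is exactly one.

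I expect the main obstacle to be the tri-junction and boundary bookkeeping. One must make sure that the end-crossing edges at the junctions, and the boundary weak-symmetry stabilizers along $AE$ etc., neither create additional relations nor hide additional independent classes. That is, the local redundancy eliminates every cut face, including the tri-junction faces, without also eliminating a crossing edge. I would first verify this on the smallest configuration, with each $m_{XY}=1$, by explicit symplectic-rank computation, and then argue that each additional cut face along an edge adds one cut face and one crossing edge, so it contributes one extra logical qudit.
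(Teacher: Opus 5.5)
Your plan is essentially the paper's argument. You view $\mathcal{S}_{\text{code}}$ as the purification's stabilizer group with the cut generators removed, use the local redundancy $S_{e_{j-1}}S_{f_j}S_{e_j}\simeq I$ to discard the cut face stabilizers (tri-junction faces included), count $m_{\Edge}+1$ crossing edge stabilizers per internal edge, and subtract the single global relation from total edge-charge conservation; your independence check is something the paper only asserts implicitly. For that check, the $W_{\Edge}$ cannot serve as dual operators, since they are products of stabilizers of $|\Psi_\rho\rangle$ and commute with every crossing $S_e$; use truncated face stabilizers (each failing to commute with two adjacent crossing edges) and cross-edge strings, which together force any relation $\prod_e S_e^{n_e}\in\mathcal{S}_{\text{code}}$ to have all $n_e$ equal.
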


\begin{proof}
The code $\mathcal{S}_{\text{code}}$ is obtained by removing those stabilizer generators of $|\Psi_{\mathcal{S}}\rangle$ which are cut by the $AB, AC,\cdots$ edges. 
The number of logical qudits is therefore determined by the number of independent stabilizer generators that are removed. 
(Here, it suffices to consider stabilizers supported along edges, since tri-junction stabilizers can be generated from weak symmetry operators, as discussed in Lemma~\ref{lemma:edge_junction}.)

We begin by analyzing a single edge, say $AB$.
The potentially violated stabilizers consist of face stabilizers along the edge, together with weak symmetry stabilizers:
\begin{align}
S_{f_1}^{(AB)}\otimes I ,\cdots, S_{f_{m_{AB}}}^{(AB)}\otimes I, S_{e_0}^{(AB)}, \cdots, S_{e_{m_{AB}}}^{(AB)} \label{eq:edge_stabilizer}
\end{align}
where $m_{AB}$ represents the number of faces cut by the $AB$ edge. 
Graphically, these operators are illustrated as
\begin{align}
\figbox{3.5}{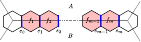}. 
\end{align}

However, these operators are not all independent. 
Using the local redundancy condition (see Eq.~\eqref{eq:local_redundancy} and  \eqref{eq:local_redundancy_figure}), we have
\begin{align}
S_{e_{j-1}}S_{f_j}S_{e_{j}} = \figbox{3.5}{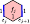}  \simeq I
\end{align}
within the codeword subspace (equivalently, the excitation Hilbert space).
As a result, $m_{AB}$ face stabilizers are actually redundant since they can all generated from the $m_{AB}+1$ weak symmetry stabilizers:
\begin{align}
\{ \ S_{e_0}^{(AB)}, \cdots, S_{e_{m_{AB}}}^{(AB)} \ \} \qquad \figbox{3.5}{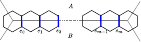} \ .
\end{align}

In addition, there is a  global constraint corresponding to conservation of total edge charge: %\TE{I don't understand this global constraint.} \DL{I think this global constraint is evaluated in the subspace stabilized by $\mathcal{S}_{\text{code}}$. Explicitly, multiple of all face stabilizer on top and bottom layers becomes identity. (We replace face stabilizers at the junction to product of gauge operators at edge charges)} \TE{Yes, we will just need to choose the boundary stabilizers correctly so that this is an identity.}

% \ZL{I changed the ``label" to $\mathcal{E}$. We can easily change it back by redefine the command \textbackslash Edge.}
\begin{align}
\prod_{\Edge}W_{\Edge} \simeq I, \qquad \Edge \in \{ AB, AC, AD, BC, BD, CD \} .
\end{align}
Each edge charge operator $W_{\Edge}$ can be expressed in terms of weak symmetry stabilizers. For example, 
\begin{align}
W_{AB} \simeq \prod_{j=0}^{m_{AB}} S_{e_j}^{-2}.
\end{align}
Therefore, the conservation law becomes\footnote{
Here, $d$ is assumed to be an odd integer. 
For even $d$, this instead yields a constraint of order $d/2$. 
} 
\begin{align}
\prod_{\Edge} \prod_{j=0}^{m_{\Edge}} S^{-2}_{e_j} \simeq I. 
\end{align}

Summing over all edges, we obtain
\begin{align}
N = (m_{AB}+1) +  (m_{AC}+1) + (m_{AD}+1) +  (m_{BC}+1) + (m_{BD}+1) +  (m_{CD}+1),
\end{align}
Taking into account the single global redundancy, we conclude that the code $\mathcal{S}_{\text{code}}$ supports
\begin{align}
k = N-1
\end{align}
logical qudits. 
\end{proof}

\subsubsection*{Logical operators}

It is convenient to identify the excited states code space $\mathcal{H}_{\text{code}}$, defined by $\mathcal{S}_{\text{code}}$, as a subspace of $\mathcal{H}_{\text{ext}}$ consisting of $N = k+1$ (logical) qudits and subject to a single global $\mathbb{Z}_d$ constraint. 
This embedding can be made explicit by constructing a complete set of Pauli logical operators.

The excitation code Hilbert space $\mathcal{H}_{\text{ext}}$ naturally decomposes into six subsystems, one associated with each bipartite edge:
\begin{align}
\mathcal{H}_{\text{ext}} = \bigotimes_{\Edge\in \{ AB,AC, \cdots \}} \mathcal{H}^{(\Edge)}, \qquad \dim \mathcal{H}^{(\Edge)} = d^{m_{\Edge}+1}.
\end{align}
In this new excitation code Hilbert space, logical qudits are located at the interfaces between sub-regions and the initial state is a product state. 
Logical $Z$ operators are identified with weak symmetry stabilizers: 
\begin{align}
S_{e^{(\Edge)}_j } = \figbox{3.5}{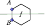} \ \rightarrow  \ \overline{Z_j}^{(\Edge)}.  \label{eq:Z} 
\end{align}
The global constraint, $\prod_{\Edge} \prod_{j=0}^{m_{\Edge}} S^{-2}_{e_j} \simeq I$, is realized as a global $\mathbb{Z}_d$ symmetry:
\begin{align}
\prod_{\Edge} \prod_{j=0}^{m_{\Edge}} \overline{Z_j}^{(\Edge)} |\psi\rangle = |\psi\rangle, \qquad |\psi\rangle \in \mathcal{H}_{\text{ext}}\label{eq:global_symmetry}
\end{align}
Furthermore, each edge charge $W_{\Edge}$ corresponds to the symmetry action localized on that edge:
\begin{align}
W_{\Edge} \rightarrow \prod_j \overline{Z_j^{-2}}^{(\Edge)}.
\end{align}
Hence, the excitation code subspace $\mathcal{H}_{\text{ext}}$ can be viewed as a $\mathbb{Z}_d$-symmetric Hilbert space decomposed into six subsystems, where the local $\mathbb{Z}_d$ charges are characterized by the edge charge operators $W_{\Edge}$.

Next, we specify the logical $X$ operators. 
Because of the constraint in Eq.~\eqref{eq:global_symmetry}, admissible logical $X$ operators must commute with the global $Z$ product. 
These are generated by operators of the form $X\otimes X^{\dagger}$, which transport anyon charges along or between edges. 
Here, truncated segments of face stabilizers on the $AB$ edge act as logical operators. 
Specifically, by examining the commutation relations with logical $Z$ operators, one can deduce the following identifications:
\begin{align}
\figbox{3.5}{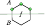} \ \rightarrow \ \overline{X_j}^{(\text{AB})}
\overline{X_{j+1}^{\dagger}Z_{j+1}^{\dagger}}^{(\text{AB})} \qquad 
\figbox{3.5}{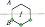} \  \rightarrow \ 
\overline{X_{j}^{\dagger}Z_j^{\dagger}}^{(\text{AB})}
\overline{X_{j+1} }^{(\text{AB})}
\label{eq:XXZ}
\end{align}
which shift charge by one site along the edge with a dressing by logical $Z$ operators.  
These logical operators are locally symmetric (i.e. neutral) with respect to the edge charge, commuting with $W_{AB}$. 

Here, the decoration by a logical $Z$ operator is essential, reflecting the fact that truncated face stabilizers supported within the same subsystem (e.g., $A$) do not, in general, commute with one another. 
This also ensures that truncated face stabilizers supported on the different subsystems ($A$ and $B$) commute with each other.

Together with logical $Z$ operators, these truncated face logical operators form a generating set for symmetric operators on each edge Hilbert space $\mathcal{H}^{(\Edge)}$.
Namely, for an arbitrary logical Pauli operator $O^{(\Edge)}$ supported on a particular edge, we have 
\begin{align}
\overline{O^{(\Edge)}} \in 
\left\langle \figbox{3.5}{fig_XX_logical.pdf} , \figbox{3.5}{fig_XX_logical2.pdf},  \figbox{3.5}{fig_Z_logical.pdf} \right\rangle,
\label{Eq:edgelogicalgenerator}
\end{align}
meaning that $\overline{O^{(\Edge)}}$ can be expressed as a product of truncated face stabilizers and weak stabilizers. 
These operators therefore generate the full algebra of locally symmetric operators on each edge. 

We have constructed a generating set for symmetric operators localized on each edge.
We also need to consider operators that are globally symmetric operators, but are not locally symmetric on each edge.
A representative example is
\begin{align}
\overline{X_j}^{(\text{AB})}\overline{X_{j'}^\dagger}^{(\text{AD})}.
\end{align}
Such operators can be constructed from Pauli strings connecting the $AB$ and $AD$ edges.
Together, these $X$ and $Z$ operators provide a complete set of logical Pauli operators, and hence specify the embedding of the excitation space into the $N=k+1$-qudit Hilbert space $\mathcal{H}_{\text{ext}}$.

\begin{observation}
The code Hilbert space decomposes into six subsystems with a global $\mathbb{Z}_d$ symmetry:
\begin{align}
\mathcal{H}_{\text{ext}} = \bigotimes_{\Edge\in \{ AB,AC, \cdots \}} \mathcal{H}^{(\Edge)}, \qquad U_{\mathrm{sym}}=\prod_{\Edge} \prod_{j=0}^{m_{\Edge}} \overline{Z_j}^{(\Edge)}.
\end{align}
Each edge charge corresponds to a local $\mathbb{Z}_d$ symmetry generator
\begin{align}
W_{\Edge} = \prod_{j=0}^{m_{\Edge}} \overline{Z_j^{-2}}^{(\Edge)}.
\end{align}
Moreover, the algebra of locally symmetric operators supported on $\mathcal{H}^{(\Edge)}$ is generated by operators localized near the corresponding edge.
\end{observation}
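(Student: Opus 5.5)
The plan is to make the embedding $\mathcal{H}_{\text{code}}\hookrightarrow\mathcal{H}_{\text{ext}}$ explicit at the level of Pauli groups, and then read off all three claims of the Observation from it. Let $\mathcal{N}$ be the normalizer of $\mathcal{S}_{\text{code}}$ in the Pauli group on both layers. The logical Pauli group is $\mathcal{N}/\mathcal{S}_{\text{code}}$, up to phases. I will produce a generating set of $\mathcal{N}/\mathcal{S}_{\text{code}}$ together with a group homomorphism $\phi$ into the Pauli group of $N$ qudits. This homomorphism should preserve commutation phases and have image equal to the commutant of $U_{\mathrm{sym}}=\prod_{\Edge}\prod_j \overline{Z_j}^{(\Edge)}$.

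On the $Z$ side, $\phi$ sends the removed weak symmetry stabilizers $S_{e_j}^{(\Edge)}$ to $\overline{Z_j}^{(\Edge)}$, as in Eq.~\eqref{eq:Z}. These operators commute with everything in $\mathcal{S}_{\text{code}}$ and with each other, since $g_e\otimes g_e^*$ operators always commute. By the counting lemma, exactly one product of them lies in $\mathcal{S}_{\text{code}}$, namely the global redundancy $\prod_{\Edge}\prod_j S_{e_j}^{-2}\simeq I$. Because $d$ is odd, this is equivalent to $\prod\overline{Z_j}=1$, which is Eq.~\eqref{eq:global_symmetry}. On the $X$ side, I take the truncated face operators of Eq.~\eqref{eq:XXZ} on each edge, together with one Pauli string per pair of edges, such as the string connecting $AB$ to $AD$. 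I will compute their commutation phases with every $S_{e_j}$ and with each other directly from the plaquette pictures. I will then check that these phases coincide with those of the proposed images $\overline{X_j}\,\overline{X_{j+1}^{\dagger}Z_{j+1}^{\dagger}}$ and $\overline{X_j}^{(AB)}\overline{X_{j'}^{\dagger}}^{(AD)}$.

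Counting then finishes the isomorphism. The image is the full group of Pauli operators commuting with $U_{\mathrm{sym}}$, modulo $U_{\mathrm{sym}}$ itself. This group has rank $2(N-1)$, matching $k=N-1$ logical qudits. Hence $\phi$ is an isomorphism of symplectic groups and induces the unitary embedding $\mathcal{H}_{\text{code}}\cong\{\psi\in\mathcal{H}_{\text{ext}}: U_{\mathrm{sym}}\psi=\psi\}$. The tensor factorization $\bigotimes_{\Edge}\mathcal{H}^{(\Edge)}$ is inherited from grouping the $\overline{Z_j}^{(\Edge)}$ by edge.

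Next comes the edge charge identification. Using the redundancy $S_{e_{j-1}}S_{f_j}S_{e_j}\simeq I$, I rewrite each $S_{f_j}\otimes I$ in $W_{\Edge}$ as $S_{e_{j-1}}^{-1}S_{e_j}^{-1}$. Combined with the two decorating factors $S_e^{-1}$, $S_{e'}^{-1}$, this gives $W_{\Edge}\simeq\prod_j S_{e_j}^{-2}\mapsto\prod_j\overline{Z_j^{-2}}^{(\Edge)}$.

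For the last claim, the symmetric subalgebra on $\mathcal{H}^{(\Edge)}$ is the commutant of $\prod_j\overline{Z_j}^{(\Edge)}$ inside the full matrix algebra on that factor. It is spanned by Paulis $\prod_j \overline{X_j^{a_j}Z_j^{b_j}}$ with $\sum_j a_j=0$ mod $d$. This group is generated by the $\overline{Z_j}$ and the nearest-neighbor hops $\overline{X_j X_{j+1}^{\dagger}}$. The $Z$ dressings in Eq.~\eqref{eq:XXZ} can be stripped off by multiplying by weak stabilizers. This yields Eq.~\eqref{Eq:edgelogicalgenerator}, all of whose generators are supported near the edge.

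I expect the main obstacle to be the commutation check for the truncated face operators. Two truncated faces inside the same subsystem do not commute, while those in different subsystems must commute. The $Z$ dressing must reproduce both facts simultaneously, and the sign and exponent conventions ($Z^{\pm k}$, black/white sublattice) have to come out consistently. Additional care is needed at the tri-junctions and at the boundary weak stabilizers near $E$. There I will verify that the boundary $g_e\otimes g_e^*$ generators really commute with one another, and that they do not create extra logical qudits beyond the count $N-1$. If they did, the rank comparison above would fail.
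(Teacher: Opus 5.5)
Your proposal is correct and follows essentially the same route as the paper. In both, the logical $Z$'s are the removed weak stabilizers $S_{e_j}$, the global redundancy becomes $U_{\mathrm{sym}}$, $W_{\Edge}$ reduces to $\prod_j S_{e_j}^{-2}$ via the local redundancy, and the logical $X$'s are the $Z$-dressed truncated faces plus inter-edge strings, identified by their commutation phases. Your one addition is the rank count, $2(N-1)$ on both sides, which turns the paper's bare assertion that these operators form a complete logical Pauli set into an argument; it still inherits the counting lemma $k=N-1$, which you rightly flag as needing the tri-junction and boundary checks.
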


\subsubsection*{Locality of logical operators}

With the above definitions of $\mathcal{H}_{\text{code}}$ and $\mathcal{H}_{\text{ext}}$ in mind, let us revisit the edge-local string operators (Def.~\eqref{def:Edge-local string}) $\widetilde{N}_{A}$.
Since $\widetilde{N}_{A}$ is a logical operator of $\mathcal{S}_{\text{code}}$, it preserves the code subspace and defines a map 
\begin{align}
\widetilde{N}_{A}: \mathcal{H}_{\text{code}} \to \mathcal{H}_{\text{code}}.
\end{align}
Through the isometric embedding of $\mathcal{H}_{\text{code}}$ into $\mathcal{H}_{\text{ext}}$, we can represent it by a symmetric operator on $\mathcal{H}_{\text{ext}}$, which we denote by $\overline{\widetilde{N}_{A}}$.
We now claim that $\overline{\widetilde{N}_{A}}$ can always be chosen to act non-trivially only on $\mathcal{H}^{(AB)} \otimes \mathcal{H}^{(AC)} \otimes \mathcal{H}^{(AD)}$.
In other words, the locality of $\widetilde{N}_{A}$ in the original Hilbert space is preserved in $\mathcal{H}_{\text{code}}$ where it may act non-trivially only on the edges which $\widetilde{N}_{A}$ may touch. 

For ease of notation, let
\begin{align}
\mathcal{H}_1=\mathcal{H}^{(AB)} \otimes \mathcal{H}^{(AC)} \otimes \mathcal{H}^{(AD)}, \qquad \overline{S}_1=W_{AB}\otimes W_{AC} \otimes W_{AD},
\end{align}
and define $\mathcal{H}_2$ and $\overline{S}_2$ analogously for the edges $\{BC, BD, CD\}$.
The key observation is that every symmetric operator on $\mathcal{H}_2$ admits a realization by a physical operator supported entirely on the subsystem $BCD$. 
More concretely, the algebra of symmetric operators on $\mathcal{H}_2$ is generated by operators of the form $\overline{Z_j}^{\Edge}$ and $\overline{X_j}^{\Edge} \overline{X_{j'}^\dagger}^{\Edge'}$, where $\Edge, \Edge'\in\{BC,BD,CD\}$, which correspond to physical operators supported on $BCD$ as discussed above.  
Since $\widetilde{N}_{A}$ is supported entirely on $A$, it commutes with all physical operators supported on $BCD$. Therefore, its representative $\overline{\widetilde{N}_{A}}$ commutes with every symmetric operator acting on $\mathcal{H}_2$.

Such commutativity suggests that there exists a symmetric operator acting solely on $\mathcal{H}_a$ whose action matches that of $\overline{\widetilde{N}_{A}}$. 
(Note that this statement holds even if $\overline{\widetilde{N}_{A}}$ is a non-Pauli operator.)
To see this, decompose $\mathcal{H}_{1}$ and $\mathcal{H}_2$ into eigenspaces of $\overline{S}_{1}$ and $\overline{S}_2$ respectively. Then 
\begin{equation}
    \mathcal{H}_{\text{code}} =\bigoplus_{q\in\mathbb{Z}_d}\mathcal{H}_1^{(q)} \otimes \mathcal{H}_2^{(-q)}. 
\end{equation}
The algebra of symmetric operators on $\mathcal{H}_2$ can be identified with
\begin{align}
\mathcal M_2 = 
\bigoplus_q
I_{\mathcal H_1^{(q)}}
\otimes
\mathcal B(\mathcal H_2^{(-q)}).
\end{align}
Since $\overline{\widetilde{N}_{A}}$ commutes with all such operators, it belongs to the commutant of $\mathcal{M}_2$ inside $\mathcal{B}(\mathcal{H}_{\text{code}})$:
\begin{equation}
    \mathcal{M}_2' = \bigoplus_q \mathcal{B}(\mathcal{H}_1^{(q)})\otimes I_{\mathcal{H}_2^{(-q)}}.
\end{equation}
This algebra is naturally identified with the algebra of symmetric operators acting on $\mathcal{H}_1$.
Hence, $\overline{\widetilde{N}_{A}}$ admits a representation $\overline{\widetilde{N}_{A}}$ supported on $\mathcal{H}_1$ only.

Moreover, Observation \ref{obs:edgelocal} implies that $\overline{\widetilde{N}_{A}}$ cannot create excitations on the $AC$ edge. 
Consequently, if
\begin{align}
\overline{Z}_j^{(AC)}=1 \qquad \text{for all } j,
\end{align}
then the same relations continue to hold after acting with $\overline{\widetilde{N}_{A}}$.
Therefore, upon restricting to the subspace of $\mathcal{H}_{\text{ext}}$ satisfying $\overline{Z}_j^{(AC)}=1$ for all $j$, the operator $\overline{\widetilde{N}_{A}}$ acts effectively only on
\begin{align}
\mathcal{H}^{(AB)}\otimes \mathcal{H}^{(AD)}.
\end{align}
In this sense, $\overline{\widetilde{N}_{A}}$ is effectively a two-local operator on the edge decomposition. 

Similar conclusions hold for the other edge-local string operators, such as $\widetilde{N}_{B}$, $\widetilde{N}_{C}$, and $\widetilde{N}_{D}$, as well as their conjugates.

\subsubsection*{Factorization}

We are now ready to establish the factorization of edge charges (Lemma~\ref{lemma:charge_factorization_partite}). 

\begin{proof}
The canonical purification $|\Psi_{\mathcal{S}}\rangle$ corresponds to the trivial code state 
\begin{align}
|\psi\rangle = |0\rangle^{\otimes N} \in \mathcal{H}_{\text{ext}},
\end{align}
which is the simultaneous $+1$ eigenstate of all stabilizers. 
Crucially, this state factorizes across the six edge subsystems, and therefore exhibits no correlations between distinct edges.
(This makes precise the intuition that correlations between, e.g., $R_{AB}$ and $R_{AD}$ are not ``utilized''.) 

In the excitation code Hilbert space $\mathcal{H}_{\text{ext}}$, we have showed that the conjugated string operators (e.g., $\widetilde{N}_{A}$) act as 2-local logical operators on the edge decomposition, when a relevant edge (e.g., edge $AC$ for $\widetilde{N}_{A}$) is unexcited.
Now we apply a standard lightcone argument to establish the factorization.
First note that $\tilde{N}_A\ket{\Psi_\rho}=\tilde{N}_C^\dagger \tilde{N}_B^\dagger \tilde{N}_D^\dagger\ket{\Psi_\rho}$.
On $\mathcal{H}_{\text{ext}}$, this becomes:
\begin{equation}
\overline{\tilde{N}_A}\ket{\psi}=\overline{\tilde{N}_C^\dagger} \overline{\tilde{N}_B^\dagger} \overline{\tilde{N}_D^\dagger}\ket{\psi}.
\end{equation}
In the state $\overline{\tilde{N}_B^\dagger} \overline{\tilde{N}_D^\dagger}\ket{\psi}$, the edge $AB$ can only be entangled with $BD$, and the edge $AD$ can only be entangled with $CD$.

After applying $\overline{\tilde{N}_C^\dagger}$, $AB$ and $AD$ remain unentangled. (Here, note that edge $AC$ remains in the state $\ket{0}^{\otimes(m_{AC}+1)}$ throughout this process. Similar observations apply for edge $BD$).
Therefore,  
\begin{align}
\langle W_{AB} W_{AD}\rangle =  \langle W_{AB}\rangle \langle W_{AD}\rangle 
\end{align}
which establishes the factorization of edge charges.
\end{proof}

\subsubsection*{Concentrating edge charges}

With the factorization of edge charges established, the final step in the proof of Theorem~\ref{thm:four-partite} is to show that edge charges can be concentrated into point-like edge anyons by applying suitable local unitaries, without changing the corresponding topological spin.

Consider the string operator $\widetilde{N}_A$ illustrated in Fig.~\ref{fig_4_partite_string}, which creates edge charges $W_{AB}$ and $W_{AD}$. 
We have established that the excited state $(\widetilde{N}_A \otimes I) |\Psi_{\rho}\rangle$ can be represented as a product state in the code Hilbert space $\mathcal{H}_{\text{ext}}$:
\begin{align}
(\widetilde{N}_A \otimes I) |\Psi_{\rho}\rangle \rightarrow  |\psi\rangle_{AB} \otimes |\phi\rangle_{AD} \otimes |0\rangle_{AC} \otimes |0\rangle_{BC} \otimes |0\rangle_{BD} \otimes |0\rangle_{CD}.
\end{align}
This state satisfies $\langle W_{AB}\otimes I \rangle = \omega^j$ and $\langle W_{AD}\otimes I \rangle = \omega^{-j}$ for some $j$. 

Now consider a Pauli string operator $M_A$ that creates the same pair of edge charges:
\begin{align}
(M_A \otimes I) |\Psi_{\rho}\rangle \rightarrow  |\psi'\rangle_{AB} \otimes |\phi'\rangle_{AD} \otimes |0\rangle_{AC} \otimes |0\rangle_{BC} \otimes |0\rangle_{BD} \otimes |0\rangle_{CD}.
\end{align}
Since $|\psi\rangle$ and $|\psi'\rangle$ have the same conserved $\mathbb{Z}_d$ charge, there exists a unitary operator $V_{AB}$ such that
\begin{align}
|\psi'\rangle = V_{AB} |\psi\rangle
\end{align}
within the excitation code Hilbert space $\mathcal{H}_{\text{ext}}$. 
We then consider the corresponding logical operator $\widetilde{V}_{AB}$, acting as $V_{AB}$ within the code Hilbert space. 
Because the logical operators living on each edge form a complete operator basis for all unitaries that commute with the edge charge operator, $\widetilde{V}_{AB}$ can be expanded as a polynomial in these generators. 
Consequently, $\widetilde{V}_{AB}$ acts locally along the $AB$ edge and maps the extended edge charge into a point-like anyonic excitation localized near the edge. 

Moreover, this edge-local modification by $\widetilde{V}_{AB}$ commutes with other string operators such as $\widetilde{N}_B$ and $\widetilde{N}_C$, as depicted in Fig.~\ref{fig_4_partite}, and therefore does not affect the topological spin.
Repeating the same argument for $\widetilde{N}_B$ and $\widetilde{N}_C$, we establish that the topological spin depends only on the edge anyon type. 
The remainder of the argument then proceeds identically to the LO-chirality case.
This completes the proof of Theorem~\ref{thm:four-partite}.

\subsection{Three-partite non-chirality}

% \ZL{My comments in this subsection are just for clarity; I think the argument is sound and I understand it's a sketch anyway. Given the timeframe, we can simply proceed with the current version. Alternatively, I can try to write a slightly expanded version.}

We have established that $\rho$ is four-partite chiral whenever its anyon content is not mirror invariant. 
A natural question is whether $\rho$ can exhibit three-partite chirality. 
Here, we show that the state is three-partite non-chiral with respected to a class of natural tri-partition Fig.~\ref{fig_3_partite}.

\begin{theorem}\label{thm:tri-partite}
The $\mathbb{Z}_d^{(k)}$ mixed state $\rho$, as defined in Definition \ref{def:honeycombmodel}, is not three-partite chiral for the $Y$-shaped partition in Fig.~\ref{fig_3_partite}.
\end{theorem}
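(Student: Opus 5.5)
Since the theorem concerns stabilizer mixed states, the plan is to build the conjugating channel explicitly rather than argue abstractly. For a Y-shaped partition $A|B|C$ with a single tri-junction, I would use the structure theorem for tripartite stabilizer states. For odd prime-power (and, by Lemma~\ref{lemma:factorization}, general) qudit dimension, any tripartite stabilizer state is equivalent under local Clifford unitaries $U_A\otimes U_B\otimes U_C$ to a tensor product of a few building blocks: single-party states, bipartite EPR-type pairs, GHZ-type triples, and local maximally mixed factors. The first step is to apply this decomposition to the stabilizer group $\mathcal{S}$ restricted to $ABC$, or equivalently to the canonical purification with the $A'$, $B'$, $C'$ layers grouped with their respective parties. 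This is the statement that ``tripartite entanglement reduces to EPR- and GHZ-type correlations.''

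The second step is to check that each building block is non-chiral under local operations. A product or single-party block $\tau_X$ can be mapped to $\tau_X^*$ by a unitary on $X$ alone, since $\tau_X$ and $\tau_X^*$ have the same spectrum. For a bipartite block $\tau_{XY}$, I would take its Schmidt (stabilizer) basis to be real. Concretely, the generalized EPR state $\sum_j |j\rangle|j\rangle$ is real in the computational basis. A mixed block of the form EPR dephased by a subgroup is also real, because its stabilizers $X\otimes X$ and $Z\otimes Z^{-1}$ give a real projector; $Z\otimes Z^{-1}$ is not real entrywise, but the projector $\frac1d\sum_a (Z\otimes Z^{-1})^a$ is real. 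The GHZ-type blocks $\sum_j|jjj\rangle$, and their dephased versions generated by $X\otimes X\otimes X$ and $Z^{a}\otimes Z^{b}\otimes Z^{c}$ with $a+b+c=0$, are likewise real. With the decomposition $\rho = (U_A\otimes U_B\otimes U_C)\,\tau\,(U_A\otimes U_B\otimes U_C)^\dagger$ where $\tau=\tau^*$, we get
\begin{align}
\rho^* = (U_A^*U_A^{T}\otimes U_B^*U_B^{T}\otimes U_C^*U_C^{T})\,\rho\,(\cdots)^\dagger,
\end{align}
which is a tripartite product unitary. This would establish three-partite non-chirality, even at the LU level.

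The main obstacle is the decomposition step. Known tripartite stabilizer decompositions (Bravyi–Fattal–Gottesman, Looi–Griffiths) are for pure states with prime $d$, or they yield only GHZ/EPR building blocks up to local Cliffords. For composite $d$ or mixed states, extra blocks can appear, for example EPR pairs of order $p^r<d$ on $\mathbb{Z}_d$ qudits. Two approaches could handle this. The first is to apply the pure-state theorem to the canonical purification, treating $XX'$ as the party. This works if the decomposition extends to $\mathbb{Z}_{p^m}$ with blocks of the form $\sum_j|j,j,j\rangle$ over subgroups $p^s\mathbb{Z}_{p^m}$, and such blocks are still real. The second is to exploit the geometry directly. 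Away from the junction, the Y-partition only has bipartite edges, and the edge analysis of Sec.~\ref{sec4:four=partite-proof} shows the correlations across each edge are carried by $g_e\otimes g_e^*$-type pairs, which can be disentangled by Clifford circuits within each party into EPR pairs. The remaining constant-size junction region is handled as a finite tripartite stabilizer state. I would try the first approach and fall back on the second if the composite-$d$ decomposition is unavailable.

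A final consistency check is that the argument must fail for four parties, as required by Theorem~\ref{thm:four-partite}. This is expected: four-partite stabilizer states admit genuinely non-real blocks, whereas no tripartite structure can encode the topological spin.
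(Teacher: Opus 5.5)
\paragraph{Gap: the decomposition step.} The decomposition step is the entire content of the theorem, and neither of your routes to it works. Your primary route applies the pure tripartite theorem to the canonical purification, with $XX'$ grouped as one party. This route is vacuous. For \emph{any} $\rho$, the layer swap factorizes as $U_{AA'}\otimes U_{BB'}\otimes U_{CC'}$ and satisfies
\[
\Bigl(\bigotimes_x \mathrm{SWAP}_{xx'}\Bigr)(\sqrt{\rho}\otimes I)|\mathrm{EPR}\rangle=(I\otimes\sqrt{\rho})|\mathrm{EPR}\rangle=(\sqrt{\rho^{*}}\otimes I)|\mathrm{EPR}\rangle=|\Psi_\rho\rangle^{*}.
\]
This is the observation behind the paper's ``purification is not sufficient'' corollary. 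So the grouped purification is never $n$-partite chiral, for any $n$. That includes $\rho_{ABCD}$, which Theorem~\ref{thm:four-partite} shows is four-partite chiral.

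The underlying failure is that $U_{AA'}$ acts on the primed layer. That layer is correlated with $B'C'$ through the $g_e\otimes g_e^*$ stabilizers and cannot be regenerated from a local ancilla. Hence $U_{AA'}$ induces no channel $\mathcal{Q}_A$ on $\rho$.

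What you actually need is a decomposition of the mixed state $\rho_{ABC}$ itself by $U_A\otimes U_B\otimes U_C$. No such structure theorem is available for tripartite \emph{mixed} stabilizer states; including the purifier, these are four-party pure states. For composite $d$ the claim is false. The paper's three-qudit $d=p^2$ state, with stabilizers $XZ\otimes XZ\otimes XZ$, $X^p\otimes Z^p\otimes I$, etc., is three-partite chiral under Clifford for $p\equiv 3 \pmod 4$. Since $d=9$ is covered by the theorem, the argument must use the specific structure of $\mathcal{S}$.

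\paragraph{Your fallback.} It points in the right direction, but it misidentifies the ingredients. In $\rho_{ABC}$ the cross-cut correlations are carried by the face stabilizers straddling each cut. The $g_e$ are only weak symmetries of $\rho$, and $g_e\otimes g_e^*$ lives in the purification, which brings back the problem above. Consecutive straddling faces have non-commuting restrictions to the two sides, so they form a chain rather than independent EPR pairs.

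The paper resolves the chain with a qudit Jordan--Wigner recombination. It pairs $S_1^{(AB)}S_3^{(AB)}\cdots S_{2j-1}^{(AB)}$ with $S_{2j}^{(AB)}$, which gives mutually independent EPR-like pairs. The unpaired odd product left on each of the three edges, together with the tri-junction stabilizer $T_0$, forms a GHZ-type set; one cross-edge element is absent when an edge has even $m$. Local Cliffords map this set to a conjugation-invariant GHZ stabilizer group. This explicit junction analysis is exactly the step you defer to ``a finite tripartite stabilizer state,'' and the $d=p^2$ example shows it cannot be delegated to a generic result.

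\paragraph{Minor slip.} If $\rho=U\tau U^\dagger$ with $\tau=\tau^*$, the conjugating unitary is $U^*U^\dagger$. Your $U_A^*U_A^{T}$ is the identity.
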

% \YP{Maybe adding the qualifier ``For any connected disk-like $Y$-partition of the form in Fig.~\ref{fig_3_partite}...'' }
This theorem essentially implies that tri-partite entanglement measures cannot detect chirality in stabilizer mixed states. 

Below we present a proof sketch of this result. 
The basic strategy is to show that $\rho_{ABC}$ decomposes into EPR entanglement or GHZ-type classical correlation, neither of which can be three-partite chiral. \footnote{This fact can be also verified by considering the excitation Hilbert space with respect to the tripartition $ABC$.}
Recall that an EPR pair can be characterized by the following stabilizer generators:
\begin{align}
X\otimes X, \quad Z\otimes Z^{\dagger} \label{eq:EPR}
\end{align}
whose single-qudit restrictions do not commute, while the full operators mutually commute.
We will show that stabilizer generators with support across multiple subsystems can be organized into sets that satisfy the above commutation relations.

Let us focus on face stabilizers $S_{f}$ supported along the edge $AB$. 
We first assume that there are an \emph{odd number} of face stabilizers $S_{f}$ on the edge.
Let $T_{0}$ denote the stabilizer at the tri-junction (see Fig.~\ref{fig_3_partite}).
Along the $AB$ edge, face stabilizers can be arranged into pairs where each column (except the last) has the commutation structure of an EPR pair:
\begin{align}
\begin{bmatrix}
S_1^{(AB)}, & S_1^{(AB)} S_3^{(AB)},& S_1^{(AB)} S_3^{(AB)} S_5^{(AB)}, & \cdots, & S_1^{(AB)}S_3^{(AB)}\cdots S_{m_{AB}}^{(AB)} \\
S_2^{(AB)}, & S_4^{(AB)},& S_6^{(AB)}, & \cdots , & T_0  \\
\end{bmatrix} \label{eq:JW}
\end{align}
For instance, in the first column, the pair $S_1^{(AB)}$ and $S_2^{(AB)}$ has the property that their restrictions to each subsystem ($A$ and $B$) do not commute, while the full operators commute. 
The above arrangement is chosen so that different columns correspond to independent EPR-like commutation relations.
This construction can be viewed as a qudit generalization of the Jordan–Wigner transformation, which organizes non-commuting operators into mutually commuting pairs.
It follows that there are $\frac{m_{AB}-1}{2}$ independent EPR pairs along the $AB$ edge.
The same argument applies to the $AC$ and $BC$ edges. 

The final column generates classical GHZ-like correlation among $A,B,C$.
Namely, these stabilizer generators on the $AB$, $BC$, and $AC$ edges, and the tri-junction stabilizer $T_0$, can be transformed into the following set of the GHZ-state stabilizer generators
\begin{align}
\figbox{3.5}{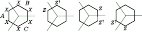}
\end{align}
after applying appropriate three-partite Clifford unitary. 
This set of stabilizers cannot generate chirality as the stabilizer group generated by these four stabilizers is invariant under complex conjugation. 
Note that these GHZ-like entanglement are fully decoupled from other EPR-like entanglement. 

Next, assume that there are an \emph{even number} of face stabilizers on the $AB$ edge while other edges have odd numbers of face stabilizers.
We then have
\begin{align}
\begin{bmatrix}
S_1^{(AB)}, & S_1^{(AB)} S_3^{(AB)},&  \cdots, & S_1^{(AB)}S_3^{(AB)}\cdots S_{m_{AB}-1}^{(AB)}, &  S_1^{(AB)}S_3^{(AB)}\cdots S_{m_{AB}-1}^{(AB)} T_0 \\
S_2^{(AB)}, & S_4^{(AB)}, & \cdots , & S_{m_{AB}}^{(AB)},&   \emptyset \\
\end{bmatrix} 
\end{align}
where the tri-junction $T_0$ is decorated by face stabilizers so that the decorated $T_0$ is decoupled from other EPR-like entanglement. 
In this case, a stabilizer across the $AB$ edge in the GHZ-state stabilizer set is absent:
\begin{align}
\figbox{3.5}{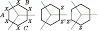}.
\end{align}
This set of stabilizers cannot generate chirality. 
The same argument applies to the cases where the $AC$ and $AB$ edges have even numbers of face stabilizers, leading to Theorem~\ref{thm:tri-partite}.

\begin{figure}[h!]
\centering
\raisebox{\height}{\hspace{5pt}}\raisebox{-0.85\height}{\includegraphics[width=0.35\textwidth]{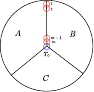}}
\hspace{10pt}
\caption{Partition into three subsystems with the tri-junction stabilizer $T_0$ at the center.  
}
\label{fig_3_partite}
\end{figure}

\subsection{Toward four-partite chirality in pure state realizations}

So far, we have focused on establishing four-partite chirality (and three-partite non-chirality) for the \emph{mixed state} realizations of the $\mathbb{Z}_{d}^{(1)}$ models. 
A natural question is whether similar characterizations hold for \emph{pure state} realizations of the $\mathbb{Z}_{p^{2k}}^{(1)}$ anyon theories. 

Given that the LO-chirality proof extends straightforwardly to pure-state realizations, one might expect that the four-partite chirality result also extends. 
However, the situation is more subtle.
In this subsection, we point out an important obstruction in the $\mathbb{Z}_{p^{2k}}^{(1)}$ models that prevents a straightforward extension of our proof to pure states.

\subsubsection*{$\mathbb{Z}_p$ v.s. $\mathbb{Z}_{p^2}$ charges}

To recap, we consider the pure-state realization $|\psi_{ABCD}\rangle$ of the $\mathbb{Z}_{p^{2}}^{(1)}$ anyons with $p=3$ (mod $4$) where additional edge stabilizers are introduced. 
These edge stabilizers are constructed by taking the $p$th power of weak symmetry edge operators in the mixed-state realization. 
Due to this modification, the definition of edge charge operators changes accordingly:
\begin{align}
W_{AB} \equiv g_e^{-p} g_{e'}^{-p} \prod_f S_f^p \otimes I = \figbox{3.5}{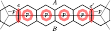}\ \end{align}
where the $p$th powers of face stabilizers are used to match the modified edge stabilizers.
The key point is that these edge operators now characterize $\mathbb{Z}_p$ charges rather than $\mathbb{Z}_{p^2}$ charges.
In other words, edge charge operators detect only the charge modulo $p$, rather than the full $\mathbb{Z}_{p^2}$ charge.

Given these order-$p$ edge charges, one can still establish a form of edge charge conservation.
This reflects the fact that charge can leak through tri-junctions, while edge operators are sensitive only to its mod-$p$ component.
However, this loss of information is precisely what obstructs our previous argument.
Since edge charges no longer fully resolve the anyon type, the factorization and concentration arguments used in the mixed-state case no longer apply directly.
As a result, our proof does not currently extend to pure-state realizations.

% \YP{Please check if this change is ok.}
One possible route to overcoming this obstruction is to exploit the remaining spin information contained in the mod-$p$ charge.
Namely, although the charge modulo $p$ does not determine the full topological spin, it determines the reduced quantity
\begin{align}
\vartheta(\bar a):=\theta(a)^p
=
\exp\left(\frac{2\pi i \bar a^2}{p}\right),
\qquad
\bar a := a \quad (\text{mod $p$}).
\end{align}
This reduced quantity is still not invariant under complex conjugation for $p=3$ (mod $4$).
This suggests that knowledge of the charge modulo $p$ may already be sufficient to detect the obstruction to complex conjugation.
At present, however, it remains an open problem whether this idea can be developed into a complete proof.

\subsubsection*{Three-partite chiral mixed state under Clifford}

Related to the aforementioned problem, an interesting connection arises between pure-state and mixed-state chirality. 
We begin by recalling the following result~\cite{vardhan2025chiralitymagicquantumcorrelations}.

\begin{lemma}
An $n$-partite pure state $|\psi_{A_1 \cdots A_n}\rangle$ is $n$-partite LO-chiral if and only if its $(n-1)$-partite reduced density state $\rho_{A_2\cdots A_n}$ is $(n-1)$-partite LO-chiral. 
\end{lemma}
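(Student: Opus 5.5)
We prove the two directions separately, proving the contrapositive in each: any local channel implementing conjugation on one side will be converted into one on the other side. Throughout we use the definition of $n$-partite LO-chirality, with product channels $\bigotimes_j \mathcal{Q}_{A_j}$, equivalently product unitaries $\bigotimes_j U_{A_jR_j}$ acting on ancillas. We also use Uhlmann's theorem in the form of Fact~\ref{lemma:Uhlmann}. Note that $(|\psi\rangle\langle\psi|)^* = |\psi^*\rangle\langle\psi^*|$ and that $\Tr_{A_1}(|\psi^*\rangle\langle\psi^*|) = \rho^*_{A_2\cdots A_n}$.

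\emph{Pure non-chiral $\Rightarrow$ reduced non-chiral.} Suppose $\bigotimes_{j=1}^n \mathcal{Q}_{A_j}(|\psi\rangle\langle\psi|) = |\psi^*\rangle\langle\psi^*|$. We trace out $A_1$ on both sides. Since $\mathcal{Q}_{A_1}$ is trace preserving, $\Tr_{A_1}\circ\mathcal{Q}_{A_1} = \Tr_{A_1}$. This gives $\bigotimes_{j=2}^n \mathcal{Q}_{A_j}(\rho_{A_2\cdots A_n}) = \rho^*_{A_2\cdots A_n}$, so the reduced state is not $(n-1)$-partite chiral.

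\emph{Reduced non-chiral $\Rightarrow$ pure non-chiral.} This direction is the main step. Suppose $\mathcal{Q}=\bigotimes_{j\ge 2}\mathcal{Q}_{A_j}$ satisfies $\mathcal{Q}(\rho_{A_2\cdots A_n})=\rho^*_{A_2\cdots A_n}$. We dilate each $\mathcal{Q}_{A_j}$ to a unitary $U_{A_jR_j}$ with ancilla $|0\rangle_{R_j}$. Applying $\bigotimes_{j\ge2}U_{A_jR_j}$ to $|\psi\rangle$ gives a pure state $|\Phi\rangle$ on $A_1A_2R_2\cdots A_nR_n$ whose marginal on $A_2\cdots A_n$ equals $\rho^*_{A_2\cdots A_n}$. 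The target is $|\psi^*\rangle\otimes|\eta\rangle_{R}$ for some state on the ancillas.

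The obstacle is that the ancilla systems $R_j$ are spread over several parties and may hold correlations with $A_1$, so the target cannot be produced by a unitary on $A_1$ alone. We resolve this by allowing party $A_1$ its own ancilla $R_1$ and a channel, rather than insisting on a unitary.

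Consider $\sigma_{A_1 A_2\cdots A_n R_2\cdots R_n}$, the reduced state of $|\Phi\rangle$. Both $|\Phi\rangle$ and $|\psi^*\rangle_{A_1A_2\cdots A_n}\otimes|\eta\rangle_{R_2\cdots R_nR'}$ have the same marginal $\rho^*$ on $A_2\cdots A_n$. The complement of $A_2\cdots A_n$ is $A_1R_2\cdots R_n$, which does not fit the Uhlmann setup directly. So we instead purify $|\Phi\rangle$ with its ancilla part viewed as an environment, and reverse the roles.

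We apply Uhlmann with respect to the cut $(A_2\cdots A_n) \,|\, (A_1 R_2\cdots R_n)$. There is an isometry $V$ from $A_1 R_2\cdots R_n$ into $A_1\otimes(\text{junk})$ mapping $|\Phi\rangle$ to $|\psi^*\rangle\otimes|\text{junk}\rangle$. However, $V$ acts on the $R_j$, which are owned by other parties, so this does not immediately respect the product structure.

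To fix this we avoid purifying the channels at all. We simply apply $\mathcal{Q}_{A_j}$ as channels. The output $\tau=(\mathrm{id}_{A_1}\otimes\mathcal{Q})(|\psi\rangle\langle\psi|)$ has marginal $\rho^*$ on $A_2\cdots A_n$ and is in general mixed on $A_1A_2\cdots A_n$.

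We then use the purification $|\Phi\rangle$ of $\tau$ given by the Stinespring environments $R_2,\dots,R_n$, together with the fact that $|\psi^*\rangle$ purifies $\rho^*$ with purifying system $A_1$. Uhlmann yields an isometry $W: A_1\to A_1R_2\cdots R_n$ with $(W\otimes I)|\psi^*\rangle=|\Phi\rangle$. Hence $\tau=\Tr_{R}\big[(W\otimes I)|\psi^*\rangle\langle\psi^*|(W^\dagger\otimes I)\big]$, which means $\tau$ is obtained from $|\psi^*\rangle$ by a channel $\mathcal{E}_{A_1}$ acting on $A_1$ alone.

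We therefore have $(\mathrm{id}_{A_1}\otimes\mathcal{Q})(\psi)=\mathcal{E}_{A_1}(\psi^*)$. This is not yet the desired identity, since we want $\psi^*$ as an output. To get there, we also run the reduced non-chirality on $\rho^*$, whose conjugating channel $\mathcal{Q}^*$ maps $\rho^*\to\rho$, and compose. We also use that $\mathcal{Q}$ preserves the purity structure: since $\psi^*$ is pure, the rank of $\tau$ on $A_1A_2\cdots A_n$ is at most the rank of $\mathcal{E}_{A_1}$'s action.

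I expect the cleanest route is to choose $\mathcal{E}_{A_1}$ invertible. Concretely, we replace $W$ by the Uhlmann isometry in the opposite direction, mapping $A_1R$ back to $A_1$ plus junk, and apply it as a channel $\mathcal{Q}_{A_1}$ on $A_1$ with $R$ traced out. This works provided the $R_j$ can be taken decoupled from $A_1$. Establishing that decoupling, or circumventing it, is the crux. I would first attempt to verify it from the identity $\mathcal{Q}(\rho)=\rho^*$ being exactly satisfied, so that all Stinespring environments carry no information correlated with $A_1$. If that fails, I would fall back on the citation~\cite{vardhan2025chiralitymagicquantumcorrelations}, whose argument is available.
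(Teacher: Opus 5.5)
Your first direction is correct: tracing out $A_1$ and using $\Tr_{A_1}\circ\mathcal{Q}_{A_1}=\Tr_{A_1}$ shows that pure non-chirality implies non-chirality of the reduced state. The second direction has a genuine gap, and it is exactly the step you call the crux.

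Everything you derive there is correct, including the Uhlmann isometry $W:A_1\to A_1R_2\cdots R_n$ and the identity $(\mathrm{id}_{A_1}\otimes\mathcal{Q})(\psi)=\mathcal{E}_{A_1}(\psi^*)$. But it is tied to the particular channel $\mathcal{Q}$ supplied by the hypothesis, and that channel may be irreversible on the purification. Exactness of $\mathcal{Q}(\rho)=\rho^*$ does not force the Stinespring ancillas to decouple from $A_1$. For example, take $|\psi\rangle=|\Phi^+\rangle_{A_1A_2}$, so $\rho_{A_2}=I/2=\rho_{A_2}^*$, and let $\mathcal{Q}_{A_2}$ be the completely depolarizing channel. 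The identity holds exactly, yet $(\mathrm{id}\otimes\mathcal{Q})(\psi)=I/4$. Your $\mathcal{E}_{A_1}$ is then itself completely depolarizing, and the ancilla carries all the entanglement with $A_1$. So you cannot choose an ``invertible'' $\mathcal{E}_{A_1}$, and no channel on $A_1$ alone restores $\psi^*$. You would have to replace the channels on $A_2,\dots,A_n$, and the proposal gives no construction for that; falling back on the citation is not a proof.

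This step is not a technicality about ancillas. Because the output $\psi^*$ is pure, any product dilation achieving $\psi\mapsto\psi^*$ must give $(\bigotimes_j V_j)|\psi\rangle=|\psi^*\rangle\otimes|\eta\rangle_R$. Local isometries leave the Schmidt spectrum across every party cut unchanged, which forces $|\eta\rangle=\bigotimes_j|\eta_j\rangle$. Then each $(I\otimes\langle\eta_j|)V_j$ is isometric on $\mathrm{supp}(\rho_{A_j})$ and extends to a unitary. Hence for pure states LO- and LU-non-chirality coincide, and tracing out $A_1$ shows $\rho_{A_2\cdots A_n}$ is then non-chiral under product unitaries. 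Every $(n-1)$-partite mixed state is the marginal of its purification. So your missing direction is equivalent to the claim that LO-non-chirality of a mixed state implies LU-non-chirality, which is a substantive statement in its own right.

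If the reduced state is non-chiral via product unitaries $U_S$ with $U_S\rho U_S^\dagger=\rho^*$, the direction is one line. $(I_{A_1}\otimes U_S)|\psi\rangle$ and $|\psi^*\rangle$ purify the same $\rho^*$ with the same purifying system $A_1$, so Fact~\ref{lemma:Uhlmann} supplies $U_{A_1}$. For the statement as written, with LO on the reduced state, you need a separate argument that some product channel mapping $\rho\to\rho^*$ admits a dilation with $V_S\rho V_S^\dagger=\rho^*\otimes|\eta\rangle\langle\eta|$. The proposal does not provide one.
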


This lemma provides a useful perspective.
Consider the pure-state realization of $\mathbb{Z}_{p^2}^{(1)}$, partitioned into four subsystems $A,B,C,D$.
The question of whether this state is four-partite chiral is therefore equivalent to asking whether its reduced density matrix on $A,B,C$ is three-partite chiral.

While we do not resolve this question in full generality, progress can be made by restricting to Clifford operations.
Specifically, we ask whether four-partite chirality of $|\psi_{ABCD}\rangle$ (equivalently, three-partite chirality of $\rho_{ABC}$) can be established under tensor-product Clifford operations.
To this end, we construct a three-qudit mixed state $\rho_{123}$ for $d=p^2$ with $p \equiv 3 \ (\mathrm{mod}\ 4)$.
We then prove that it is not possible to bring $\rho_{123}$ into $\rho_{123}^*$ by any tensor-product Clifford operations.

\begin{lemma}
Consider the three-qudit $d=p^2$ mixed state stabilized by:
\begin{align}
S_{123} &= XZ \otimes XZ \otimes XZ \\
S_{12} &= X^p \otimes Z^p \otimes I \\
S_{23} &= I \otimes X^p \otimes Z^p  \\
S_{31} &=Z^p\otimes I \otimes  X^p  
\end{align}
This state is three-partite chiral under Clifford if $p=3 \text{~mod~} 4$. 
\end{lemma}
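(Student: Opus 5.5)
The plan is to turn the question into a statement about symplectic linear algebra over $\mathbb{Z}_{p^2}$. Local Clifford unitaries preserve the Pauli group site by site, so the question reduces to whether certain local commutation data can be matched. Suppose $U=U_1\otimes U_2\otimes U_3$ is a tensor-product Clifford with $U\rho_{123}U^\dagger=\rho_{123}^*$. Then conjugation by $U$ gives a group isomorphism $\phi:\mathcal{S}\to\mathcal{S}^*$, up to phases that play no role below.

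Here $\mathcal{S}^*$ is generated by the conjugated generators, using $X^*=X$ and $Z^*=Z^{-1}$:
\begin{align}
S_{123}^*=XZ^{-1}\otimes XZ^{-1}\otimes XZ^{-1}, \qquad S_{12}^*=X^p\otimes Z^{-p}\otimes I,
\end{align}
and similarly for $S_{23}^*$ and $S_{31}^*$. Each $U_j$ induces a symplectic automorphism $M_j\in Sp(2,\mathbb{Z}_{p^2})$, that is, a $2\times 2$ matrix of determinant $1$, on the Pauli labels at site $j$. Consequently, for any $s,t\in\mathcal{S}$, the \emph{local} commutator phase $c_j(s,t)$ of the restrictions $s|_j,t|_j$ is preserved:
\begin{align}
c_j(\phi(s),\phi(t))=c_j(s,t).
\end{align}
These local phases are the invariants I would use, playing the role of the T-junction spin in the many-body argument.

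First I would pin down the form of $\phi$. The elements of order dividing $p$ in $\mathcal{S}$ are generated by $S_{12},S_{23},S_{31}$ and $S_{123}^{p}$. An isomorphism must map this subgroup onto the corresponding subgroup of $\mathcal{S}^*$, and it must send $S_{123}$ to $(S_{123}^*)^\lambda\, T$ with $\lambda$ a unit mod $p^2$ and $T$ of order $p$.

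Next I would use locality. $S_{12}$ acts trivially on site $3$, and $U$ is a tensor product, so $\phi(S_{12})$ must also act trivially on site $3$. Together with the explicit form of the order-$p$ subgroup, this forces
\begin{align}
\phi(S_{12})=(S_{12}^*)^{\mu_{12}}\,(\text{powers of } (S^*_{123})^p \text{ trivial on site } 3),
\end{align}
and similarly for the other pairs. I expect this to leave only scalars $\mu_{12},\mu_{23},\mu_{31}\in\mathbb{Z}_p^\times$ together with $\lambda \bmod p$.

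Then I would compare local commutators. On site $1$,
\begin{align}
c_1(S_{12},S_{123})=\omega^{-p}, \qquad c_1(S_{31},S_{123})=\omega^{p}.
\end{align}
For the conjugated generators the same pairs give the inverse phases, since the exponent of $Z$ is flipped in $S_{123}^*$ but not in $X^p$. Preservation of $c_j$ at sites $1,2,3$ should therefore give relations of the form
\begin{align}
\mu_{12}\lambda\equiv-1,\qquad \mu_{23}\lambda\equiv-1,\qquad \mu_{31}\lambda\equiv-1 \pmod p.
\end{align}
A second constraint comes from the determinant-$1$ condition on $M_j$ reduced mod $p$. At site $1$, the restrictions $X^p=S_{12}|_1$ and $Z^p=S_{31}|_1$, paired against $S_{123}|_1$, span a copy of the mod-$p$ symplectic plane. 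On this plane $M_1$ acts by scaling with factors $\mu_{12},\mu_{31}$, twisted by $\lambda$ from the $XZ\mapsto (XZ^{-1})^\lambda$ part. Requiring $\det(M_1 \bmod p)=1$ should yield $\mu_{12}\mu_{31}\equiv\lambda^{2}$ up to sign, or an equivalent relation. Combining the two sets of relations, I expect to arrive at
\begin{align}
\lambda^2\equiv-1 \pmod p .
\end{align}
By the number-theoretic fact in Sec.~\ref{sec:2.4}, this has no solution for $p\equiv 3 \pmod 4$, which is the desired contradiction. This mirrors the quadratic-residue criterion of Lemma~\ref{lemma:residue}, applied to the reduced spin $\vartheta$.

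The main obstacle will be the step that fixes the form of $\phi$ on the order-$p$ part. A priori $\phi(S_{12})$ could pick up factors of $(S^*_{123})^p$ or of the other $S^*_{ij}$, and $\phi(S_{123})$ could pick up order-$p$ corrections. These corrections could shift the local commutators and spoil the clean relations above. I would handle this by working entirely mod $p$ with the restricted Pauli labels at each site, arguing that any such correction contributes to $c_j$ only a term that is either trivial or fixed by the support constraints. If that bookkeeping becomes unwieldy, the fallback is a direct enumeration: parametrize $M_j \bmod p$ (valid because $S_{ij}$ only see labels mod $p$) and solve the linear conditions imposed by $\phi$ explicitly.
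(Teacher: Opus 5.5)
Your framework is sound, and the step you expect to be the main obstacle is actually immediate. Since $(S_{123}^*)^p=S_{12}^*S_{23}^*S_{31}^*$ up to phase, the elements of $\mathcal{S}^*$ acting trivially on site $3$ are exactly the powers of $S_{12}^*$, with no extra factors. An order-$p$ correction $T$ in $\phi(S_{123})$ also never shifts a local commutator with $S_{ij}^*$: all local Pauli labels of order-$p$ elements lie in $p\,\mathbb{Z}_{p^2}^2$, where the symplectic form vanishes mod $p^2$. So your relations $\mu_{12}\lambda\equiv\mu_{23}\lambda\equiv\mu_{31}\lambda\equiv-1 \pmod p$ are correct.

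The real problem is the decisive determinant step, which you guessed rather than computed, and the guess is wrong. From $\phi(S_{12})\propto(S_{12}^*)^{\mu_{12}}$ and $\phi(S_{31})\propto(S_{31}^*)^{\mu_{31}}$ you can read off $U_1X^pU_1^\dagger\propto X^{p\mu_{12}}$ and $U_1Z^pU_1^\dagger\propto Z^{-p\mu_{31}}$. Hence $M_1 \bmod p=\mathrm{diag}(\mu_{12},-\mu_{31})$, and $\det M_1\equiv 1$ gives
\begin{align}
\mu_{12}\mu_{31}\equiv -1 \pmod p ,
\end{align}
with no $\lambda$ at all. The images of $X^p$ and $Z^p$ are fixed entirely by the two-site stabilizers, so there is no ``twist by $\lambda$.'' Your stated relation $\mu_{12}\mu_{31}\equiv\pm\lambda^2$ is not what the determinant gives. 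With the $+$ sign, combined with $\mu_{ij}\equiv-\lambda^{-1}$, it yields only $\lambda^4\equiv 1$, which is solvable (e.g.\ $\lambda=1$). So the argument as written does not close.

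With the correct relation it does close: $\mu_{12}\mu_{31}\equiv\lambda^{-2}\equiv-1$, hence $\lambda^2\equiv-1 \pmod p$, which is impossible for $p\equiv 3 \pmod 4$. The paper's proof is the same locality-plus-determinant argument, but shorter. In its notation, $b_j\equiv c_j\equiv 0$, $a_jd_j\equiv 1$, and $a_1\equiv-d_2$, $a_2\equiv -d_3$, $a_3\equiv -d_1$. It applies the determinant condition at all three sites, giving $\mu_{12}\mu_{31}\equiv\mu_{12}\mu_{23}\equiv\mu_{23}\mu_{31}\equiv-1$. This forces $\mu_{31}\equiv\mu_{23}$ and then $\mu_{23}^2\equiv-1$ directly. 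Thus $S_{123}$ and $\lambda$ are never needed, as the paper remarks after the proof. Your local commutators with $S_{123}$ are a valid but redundant substitute for two of the three determinant conditions.
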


This construction is motivated by the small-system limit, and it is notable that the same number-theoretic condition appears here. 
This may hint that the corresponding pure-state statement also holds, although this remains open.\footnote{ 
This construction is closely related to additional algebraic structure specific to $\mathbb{Z}_{p^2}$ stabilizers. 
Tripartite entanglement structures in pure $\mathbb{Z}_{p^2}$ stabilizer states have been recently studied in Ref.~\cite{wong2025}, which proved that the states contain GHZ- or EPR-like entanglement only.
However, in our understanding, it remains open whether such reductions can be performed by Clifford operations or require non-Clifford operations.
}

\begin{proof}
A single-qudit Clifford operation can be characterized (up to phases) as follows:
\begin{align}
X_j \longrightarrow X_j^{a_j} Z_j^{b_j}\\
Z_j \longrightarrow X_j^{c_j} Z_j^{d_j}
\end{align}
for $j=1,2,3$, with 
\begin{align}
a_j d_j - b_j c_j = 1 \qquad \text{mod $p^2$}.
\end{align}

Suppose that a tensor-product Clifford operation implementing $\rho \rightarrow \rho^*$ exists.
The two-body stabilizer $S_{12}$ is transformed (up to phases) as
\begin{align}
S_{12} = (X_1)^p \otimes (Z_2)^p \longrightarrow (X_{1})^{p a_1} (Z_{1})^{p b_1} \otimes (X_2)^{p c_2} (Z_2)^{p d_2}.
\end{align}
The resulting operator must be a product of $S_{12}^*, S_{13}^*, S_{23}^*$ and $S_{123}^*$.
Since it supports only on 1 and 2, it must be $S_{12}^*$ or its powers (note that $S_{123}^p=S_{12}S_{23}S_{13}$).
Similarly conclusions hold for $S_{23}$ and $S_{31}$.
This implies
\begin{align}\label{eq:p2qudit-1}
p b_j = p c_j = 0 \qquad \text{mod $p^2$}, \qquad j=1,2,3,
\end{align}
and
\begin{align}\label{eq:p2qudit-2}
p a_1 = - p d_2,\quad
p a_2 = - p d_3,\quad
p a_3 = - p d_1 \qquad \text{mod $p^2$}.
\end{align}

Equation~\eqref{eq:p2qudit-1} implies
\begin{align}
b_j = c_j = 0 \qquad \text{mod $p$}, \qquad j=1,2,3,
\end{align}
and hence
\begin{align}
a_j d_j = 1 \qquad \text{mod $p^2$}, \qquad j=1,2,3.
\end{align}
In particular, $d_j = (a_j)^{-1}$ mod $p^2$, and therefore also mod $p$.

Equation~\eqref{eq:p2qudit-2} implies
\begin{align}
a_1 = - d_2,\quad
a_2 = - d_3,\quad
a_3 = - d_1 \qquad \text{mod $p$}.
\end{align}
Combining these relations yields
\begin{align}
a_1 = -d_2 = -a_2^{-1} = d_3^{-1} = a_3 = -d_1 = -a_1^{-1} \qquad \text{mod $p$},
\end{align}
which implies
\begin{align}
a_1^2 = a_2^2 = a_3^2 = -1 \qquad \text{mod $p$}.
\end{align}
However, this equation has no solution when $p \equiv 3 \ (\mathrm{mod}\ 4)$.
\end{proof}

One interesting feature of this result is that three-partite chirality under Clifford can be established without including $S_{123}$, as seen in the proof. 
Since the remaining stabilizer generators are two-local, this suggests that chirality can arise from purely bipartite entanglement, at least within the Clifford setting. 

\section{Imaginarity of wavefunctions}
\label{sec5}

So far, we have examined LO-chirality and $n$-partite chirality by asking whether a stabilizer mixed state $\rho$ can be transformed into its complex conjugate $\rho^*$ via local quantum channels:
\begin{align}
\rho \xlongrightarrow[?]{LO}\rho^* .
\end{align}
In this section, we turn to a related, but conceptually distinct question: whether $\rho$ can be represented using only real-valued matrix elements in some tensor product basis
\begin{align}
\rho \xlongrightarrow[?]{LU} \text{real \ } \sigma .
\end{align}
While LO-chirality concerns the relation between a state $\rho$ and its complex conjugate $\rho^*$, the question here concerns whether the many-body state itself can be made entirely real by local operations or not.
Here, we restrict ourselves to local unitaries (LU) instead of local channels (LO). 
This is because a complete depolarizing channel transforms any state into the maximally mixed state, which is real-valued. 
Therefore, this question is non-trivial only for LU transformations. 
\footnote{Alternatively, one may formulate the notion using two-way LO connectivity between $\rho$ and a real-valued state $\sigma$.}

\begin{definition}
A density matrix $\rho$ is said to be \emph{LU-real} when there exists a finite-depth local unitary $U$ which transforms $\rho$ such that
\begin{align}
\sigma = U\rho U^{\dagger}, \qquad \sigma = \sigma^*
\end{align}
in some tensor product basis. 
If no such local unitary exists, $\rho$ is said to be \emph{LU-imaginary}.
\end{definition}

To gain intuition for this notion, let us begin with some prototypical examples. 
Consider the $\mathbb{Z}_d$ toric code, defined by vertex $X$-type stabilizers $A_v$ and plaquette $Z$-type stabilizers $B_p$. 
One of its ground states can be written as
\begin{align}
|\psi_{\text{t.c.}}\rangle \propto \prod_v (I + A_v) |0\rangle^{\otimes n} 
= \sum_{\gamma \in \text{loops}} |\gamma\rangle,
\end{align}
which is a superposition of all closed loop configurations in the dual lattice. 
In the computational basis $|j\rangle$ ($j=0,\dots,d-1$), all amplitudes are non-negative real numbers. 
Thus, the toric code ground state provides an example of a topologically ordered state that is manifestly LU-real.

A contrasting example is the double semion model~\cite{Levin_2012}. 
In this case, a ground state takes the form
\begin{align}
|\psi\rangle \propto \sum_{\gamma \in \text{loops}} (-1)^{s(\gamma)} |\gamma\rangle,
\end{align}
where $s(\gamma)=0,1$ depends on the loop configuration~\cite{Levin_2012}. 
The wavefunction is still real, but now necessarily contains both positive and negative amplitudes. 
Indeed, Hastings proved that the double semion ground state has an \emph{intrinsic sign problem}, meaning that no local unitary transformation can render all amplitudes non-negative~\cite{Hastings_2015}. 
Subsequent works related such sign obstructions to the braiding properties of anyons~\cite{Golan:2020kyp, Seo:2025gvq}. 
Here, we extend this line of characterization further and ask whether local unitary circuits can remove complex phases altogether.  
In other words, we ask whether certain topological phases possesses \emph{intrinsic many-body imaginarity}.

In particular, we study whether mixed states realizing $\mathbb{Z}_d^{(1)}$ topological order can be made real by local unitaries. 
This notion is closely related to LU-chirality. 
Indeed, if a state were LU-real, it could be mapped to a real density matrix $\sigma=\sigma^*$, which would imply that $\rho$ and $\rho^*$ are related by local unitaries. 
Therefore, any LU-chiral state is necessarily LU-imaginary. 
% However, the converse need not hold as a state may fail to be LU-real even if it is not LU-chiral.
% A priori, one might expect LU-imaginarity to coincide with LU-chirality. 
% Our main result of this section shows that this is not the case. 
% Even certain LU-non-chiral $\mathbb{Z}_d^{(1)}$ states possess intrinsic imaginarity that cannot be removed by local unitaries.
However, it is unclear whether the converse statement holds: does LU-nonchiral also implies LU-real?
% A priori, one might expect LU-imaginarity to coincide with LU-chirality. 

Our main result of this section shows that this is not the case: certain LU-non-chiral $\mathbb{Z}_d^{(1)}$ states possess intrinsic imaginarity that cannot be removed by local unitaries.

\begin{theorem}\label{thm:real}
The $\mathbb{Z}_{d}^{(k)}$ mixed state $\rho$ is LU-imaginary for any $d>2$ and coprime $(k,d)$. 
\end{theorem}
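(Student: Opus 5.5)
We argue by contradiction, reusing the machinery of Sec.~\ref{sec4:LO chirality proof}. Suppose a finite-depth local unitary $U$ of depth $r$ gives $\sigma = U\rho U^\dagger$ with $\sigma=\sigma^*$ in some tensor-product basis. Then $\sigma$ is a real state that is LU-equivalent to $\rho$, so it inherits the anyon content $\mathcal{A}_d^{(k)}$. The conjugated Pauli strings $U N U^\dagger$ are fattened string operators for $\sigma$, and Lemmas~\ref{lemma:anyon_charge}, \ref{lemma_concentration} and \ref{lemma_spin} apply to them after conjugating back by $U$. The idea is that complex conjugation in the real basis is an antiunitary symmetry $K$ of $\sigma$, and $K$ maps string operators to string operators. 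It must therefore act on anyon labels by an automorphism $a\mapsto a^{\lambda}$ that conjugates all statistics. Unlike the chirality case, we cannot ask for $\lambda^2=-1$ to fail, since that is exactly the mirror-invariant case. Instead we must extract a finer invariant that rules out a \emph{real} representative even when $\mathcal{A}\simeq\mathcal{A}^*$.

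\textbf{Step 1: expectation values are real.} Take a loop stabilizer $W_\ell$ of $\rho$ enclosing an anyon $a$, created by a string $M$. Set $\widehat W = U W_\ell U^\dagger$ and $\widehat M = U M U^\dagger$. Then $\langle \widehat M^\dagger \widehat W \widehat M\rangle_\sigma = B(a,a)=\omega^{2k}$.

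\textbf{Step 2: use reality.} Because $\sigma=\sigma^*$, we have $\Tr[O\sigma]^* = \Tr[O^*\sigma]$ for any operator $O$. The operators $\widehat W^*$ and $\widehat M^*$ are again a strong-symmetry loop and a string operator for $\sigma$, with the same locality, since entrywise conjugation in a tensor-product basis preserves support. Hence they define an anyon $a'=\overline{a}$ carried by $\widehat M^*$, with $B(a',a')=\omega^{-2k}$ and $\theta(a')=\omega^{-k}$. Lemma~\ref{lemma:anyon_charge} and Lemma~\ref{lemma_spin} show that $a'$ is a genuine anyon of $\sigma$, so $a'=a^{\lambda}$ with $\lambda^2\equiv -1$ mod $d$. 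This alone is consistent when $\mathcal{A}$ is mirror invariant, so more is needed.

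\textbf{Step 3: the key extra constraint.} The loop $\widehat W$ measures the charge $\lambda$-twisted. Complex conjugation of a loop operator in the real basis produces another loop $\widehat W^*$ whose eigenvalue on $a^j$ is $\omega^{-2kj}$. We must compare it with $\widehat W$ itself. Both $\widehat W$ and $\widehat W^*$ are strong symmetries supported on the same annulus. By the uniqueness argument underlying Lemma~\ref{lemma:marginals}, they must act on charge sectors as powers of the same generator, so $\widehat W^* \simeq \widehat W^{\mu}$ on the relevant states for some integer $\mu$. Applying conjugation twice gives $\mu^2\equiv 1$. Consistency with the eigenvalues, $\omega^{-2kj}=\omega^{2kj\mu}$ for all $j$, forces $\mu\equiv -1$. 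So $\widehat W^*\simeq \widehat W^{-1}$, and the antiunitary $K$ must fix each charge sector rather than permute sectors by $\lambda$. The same reasoning applied to the string $\widehat M$ shows that $K$ sends the $a$-string to an $a$-string. Then $\theta(a)=\theta(a)^*$ by Lemma~\ref{lemma_spin}, that is $\omega^{k}=\omega^{-k}$, which fails for $d>2$. This is the contradiction.

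\textbf{Main obstacle.} The hard step is Step 3, the claim that the conjugated loop equals a power of the original loop on the excited states, with no relabeling. The worry is that $K$ could map the loop algebra to itself through a nontrivial automorphism, namely the $\lambda$ above. What distinguishes the two is that $\widehat W^*$ and $\widehat W$ live on the \emph{same} support and both come from the same real $\sigma$. The first attempt will be to show that, in the real basis, the Pauli-like loop algebra on an annulus is generated by a single unitary whose spectrum is $\{\omega^{2kj}\}$, and that entrywise conjugation sends it to its inverse. If this fails, the fallback is a weaker target. Suppose conjugation only gives some automorphism $\lambda$ with $\lambda^2=-1$. We would then need a second invariant, such as the action on the extended loop $\tilde W_\ell$ or the higher Gauss sums, to obtain an obstruction. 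In that case I would expect the theorem to need the additional structure of $\sigma$ being real rather than merely conjugation-symmetric.
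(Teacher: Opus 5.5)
Your Steps 1--2 match the paper's setup. Assuming $\sigma=U\rho U^\dagger$ is real, the conjugated string $\widehat M^*$ is again a string for $\sigma$ carrying a definite charge $a^\lambda$, and spin reversal forces $\lambda^2\equiv -1 \pmod d$. That alone already settles the non-mirror-invariant cases.

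The gap is in Step 3. The only information you have about $\widehat W^*$ comes from conjugating $\Tr[\widehat W\,\widehat M_j\sigma\widehat M_j^\dagger]=\omega^{2kj}$. This gives $\Tr[\widehat W^*\,\widehat M_j^*\sigma\widehat M_j^{T}]=\omega^{-2kj}$, which is the eigenvalue of $\widehat W^*$ on the \emph{conjugated} excitation. By your own Step 2, that excitation carries charge $a^{\lambda j}$, not $a^j$. So if $\widehat W^*\simeq\widehat W^\mu$, the correct consistency condition is $\omega^{2k\lambda j\mu}=\omega^{-2kj}$. This gives $\lambda\mu\equiv -1$, hence $\mu\equiv\lambda$, not $\mu\equiv -1$. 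Your condition $\omega^{-2kj}=\omega^{2kj\mu}$ silently sets $\lambda=1$. The conclusions that $K$ fixes every charge sector and that $\theta(a)=\theta(a)^*$ are therefore circular. Under your hypothesis, conjugation does relabel sectors by $\lambda$. The contradiction to aim for is that this relabeling would have to be both involutive and spin-reversing.

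You already have the needed ingredient. Combining the corrected $\mu\equiv\lambda$ with your observation that conjugating twice gives $\mu^2\equiv 1$ yields $\lambda^2\equiv 1$. Together with $\lambda^2\equiv -1$, this forces $d\mid 2$. No fallback to extended loops or higher Gauss sums is needed.

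The paper makes this involution argument directly with strings, without asserting anything about conjugated loops. Note also that Lemma~\ref{lemma:marginals} does not by itself give $\widehat W^*\simeq\widehat W^\mu$. In the paper's version, $\tilde\gamma^*$ and $\tilde\gamma^q$ both carry $a^q$ with $q^2\equiv -1$. By Lemma~\ref{lemma_concentration} they therefore agree on $\sigma$ up to endpoint unitaries. Complex-conjugating this identity, using $\sigma=\sigma^*$, shows that $\tilde\gamma$ (which carries $a$) agrees up to endpoint unitaries with $(\tilde\gamma^*)^q$ (which carries $a^{q^2}=a^{-1}$). This is impossible for $d>2$.
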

\noindent Note that the above result applies to pure state realizations of  $\mathbb{Z}_{d}^{(1)}$ models (e.g. $\mathbb{Z}_{p^2}^{(1)}$) as well.

Here, it is important to emphasize that non-zero chiral central charge does not necessarily imply the LU-imaginarity of $\rho$. 
For instance, the 3F topological order (with $c_{-}=4$) admits a mixed state realization with real-valued density matrices~\cite{Ellison_2023}. 
On the other hand, the $\mathbb{Z}_5^{(1)}$ topological order has $c_{-} = 0$, but any mixed state realizing this phase is LU-imaginary according to the above theorem. 
Thus, intrinsic imaginarity reflects a distinct topological obstruction beyond chirality.

\begin{proof}
%Since LU-chiral states are LU-imaginary, we will focus on LU-non-chiral states. 
%Namely, due to Lemma~\ref{lemma:factorization} and Theorem~\ref{thm:invcondition}, it suffices to establish the LU-imaginarity of the $\mathbb{Z}_{p^m}^{(1)}$ mixed states for the cases with $p=1$ (mod $4$).
%\ZL{It's not clear how the reduction goes. Logically, one can worry about a situation where LU-imaginarity $\otimes$ LU-imaginarity is somehow LU-real. Suggested fix: the argument works directly for all $d$, right?}
%\ZL{same argument works for all $\mathbb{Z}_d^{(k)}$, not just $\mathbb{Z}_d^{(1)}$, right?}

Since LU-chiral states are LU-imaginary, we will focus on LU-non-chiral states. 
Suppose, for contradiction, that a mixed state $\rho$ realizing the $\mathbb{Z}_d^{(k)}$ anyons is LU-real.
Then, there exists a local unitary $U$ such that 
\begin{align}
\sigma = U \rho U^{\dagger}, \qquad \sigma^* = \sigma,
\end{align}
in the computational product basis. 
Recall that $\rho$ admits a (strong-symmetry) string operator $\gamma$ that creates an anyon $a$ at one endpoint and $a^{-1}$ at the other. 
Under a finite-depth LU transformation, string operators are transformed into ``fattened'' string operators with the same statistical properties. 
Hence $\sigma$ admits a string operator
\begin{align}
\tilde{\gamma} = U \gamma U^\dagger
\end{align}
which creates an anyon of type $a$ at its endpoint. 
Here, a string operator means that it is a part of a strong symmetry operator for $\sigma$, in the sense of Lemma~\ref{lemma:anyon_charge}. 

Since $\sigma$ is real in the computational basis, complex conjugation leaves it invariant. 
Then, if $\tilde{\gamma}$ is a string operator for $\sigma$, then its complex conjugate $\tilde{\gamma}^*$ is also a string operator for $\sigma$. 
One can then verify that $\tilde{\gamma}^*$ must create a definite anyon type at its endpoints for $\sigma$.\footnote{More precisely, given $\tilde{\gamma}^*$, consider an operator $U^{\dagger} \tilde{\gamma}^* U$ by conjugating it with $U^{\dagger}$.
One can then verify that this operator $U^{\dagger} \tilde{\gamma}^* U$ is a string operator for $\rho$. 
Due to Lemma~\ref{lemma:anyon_charge}, $U^{\dagger} \tilde{\gamma}^* U$ must create a definite anyon type at its endpoints for $\rho$.
Hence, $\tilde{\gamma}^*$ creates a definite anyon type for $\sigma$ as well. 
}
Denote this anyon by $b$.

Complex conjugation reverses the topological spin:
\begin{align}
\theta(b) = \theta(a)^* .
\end{align}
For the $\mathbb{Z}_d^{(k)}$ theory, the anyons are $\{a^j\}_{j=0}^{d-1}$ with topological spin
\begin{align}
\theta(a^j) = \exp\!\left(\frac{2\pi i kj^2}{d}\right).
\end{align}
Thus $\theta(b) = \theta(a)^*$ implies
\begin{align}
b = a^q, \qquad q^2 \equiv -1 \pmod d.
\end{align}
Such a $q$ exists only in the LU-non-chiral cases discussed earlier ($d=p^m$ with $p=1$ mod $4$).

Next, we note that 
since $\tilde{\gamma}$ creates an anyon $a$, $\tilde{\gamma}^q$ creates an anyon $a^q$ by fusion.
Also, $\tilde{\gamma}^*$ creates $a^q$. 
Hence, $\tilde{\gamma}^q$ and $\tilde{\gamma}^*$ creates the same anyon type at their endpoints (up to a local unitary transformation acting exclusively on the neighborhood of the endpoints). That is, up to unitaries at the endpoints, $\sigma$ satisfies 
% \ZL{up to local unitaries at the end points?}
\begin{align} \label{eq: sigma imaginarity}
    \tilde{\gamma}^q\sigma \left( \tilde{\gamma}^q \right)^\dagger =  \tilde{\gamma}^*\sigma \left( \tilde{\gamma}^* \right)^\dagger.
\end{align}

We now derive a contradiction by  
complex conjugating both sides of Eq.\eqref{eq: sigma imaginarity}. This gives us
\begin{align} \label{eq: imaginarity contradiction}
    \left(\tilde{\gamma}^q\right)^*\sigma \left( \tilde{\gamma}^q \right)^\mathrm{T} =  \left(\tilde{\gamma}^*\right)^*\sigma \left( \tilde{\gamma}^* \right)^\mathrm{T}.
\end{align}
Since $(\tilde{\gamma}^*)^* = \tilde{\gamma}$, $(\tilde{\gamma}^*)^*$ creates an anyon $a$ at its endpoint. 
On the other hand, we have
\begin{align}
(\tilde{\gamma}^q)^* = (\tilde{\gamma}^*)^q
\end{align}
suggesting that $(\tilde{\gamma}^q)^*$ creates anyon ${(a^{q})}^q$ at its endpoints, which equals $a^{-1}$ because $q^2 \equiv -1 \pmod d$.
Therefore, Eq.~\eqref{eq: imaginarity contradiction} cannot be true, since $a^{-1}\neq a$.
Hence the assumption that $\rho$ can be made real by local unitaries leads to a contradiction. Therefore $\rho$ is LU-imaginary.
\end{proof}

Finally, it is useful to note that any qubit stabilizer state, whether pure or mixed, is not LU-imaginary.
This follows from the fact that any qubit stabilizer state is local Clifford equivalent to a graph state, and graph states admit a representation with real amplitudes~\cite{Nest:2004khg}.
For mixed states, this can be seen by considering their canonical purification, which is itself a stabilizer state and hence can be brought to a real form by local Clifford transformations.
This observation further suggests that the 3F theory is not LU-imaginary, as it admits a qubit stabilizer mixed-state realization.

%========================
\section{Discussion and outlook}
\label{sec6:discussion}
%======================== 

% \subsection{Relations to other chirality characterizations}

In this work, we have assessed the many-body chirality of stabilizer mixed and pure states realizing the $\Z_d^{(k)}$ and $\Z_{p^{2m}}^{(1)}$ abelian anyon theories, respectively. We found that both the mixed states and the pure states are LO-chiral whenever the anyon theory fails to be time-reversal symmetric, meaning that there is an obstruction to mapping the states to their complex conjugate with a local quantum channel. We also considered the $n$-partite chirality of the states, as introduced in Ref.~\cite{vardhan2025chiralitymagicquantumcorrelations}. We found that the $\Z_d^{(k)}$ mixed states exhibit four-partite chirality but not three-partite chirality, when the theory is not time-reversal symmetric. This implies that at least four parties are needed to diagnose the chirality of mixed states, demonstrating a fundamental limitation of the modular commutator. Finally, we considered LU-imaginarity and established that the states associated to $\Z_d^{(1)}$ are LU-imaginary, i.e., there is no finite-depth unitary that transforms the state to a time-reversal symmetric state.

To provide context for our results, it is insightful to compare them with a more conventional signature of chirality: the chiral central charge $c_{-}$. In two-dimensional topological orders, $c_-$ measures the net chirality of gapless edge modes, which manifests as a quantized thermal Hall conductance. Moreover, a nonzero $c_{-}$ implies an anomalous edge theory that cannot be realized as a purely local $(1+1)$-dimensional system, and therefore forbids a fully gapped boundary. This is illustrated by $\mathbb{Z}_3^{(1)}$ and the 3F, both of which have $c_{-}\neq 0$ and admit no fully gapped boundary. Conversely, a fully gapped boundary requires $c_{-}=0$, as in the toric code and $\mathbb{Z}_9^{(1)}$. However, this condition is not sufficient: $\mathbb{Z}_5^{(1)}$ has $c_{-}=0$ but does not admit a fully gapped boundary.

LO-chirality, by contrast, depends on the full anyon data, namely the braiding and topological spin structure.
This distinction is evident in the $3$F theory, which has $c_{-}=4$ but is LO-non-chiral since its anyon theory is equivalent to its mirror image.
Conversely, $\mathbb{Z}_9^{(1)}$ has $c_{-}=0$ and admits a gapped boundary, yet remains LO-chiral.
Thus, even in the absence of anomaly or boundary obstruction, a topological phase may retain intrinsic chirality.
For four-partite chirality, we find that it coincides with LO-chirality for mixed-state realizations.
For pure-state realizations (e.g.\ $\mathbb{Z}_9^{(1)}$), it remains an open question whether the state is four-partite chiral. 

We summarize the comparison in the following table:

% Here, we compare LO-chirality with other characterizations of chirality, including the chiral central charge, boundary gappability, four-pte chirality, and LU-imaginarity.
% Table~\ref{tab:lochiral-comparison} summarizes these for several representative examples. 

\begin{table}[H]
\centering
\renewcommand{\arraystretch}{1.25}
\setlength{\tabcolsep}{6pt}
\resizebox{\columnwidth}{!}{%
\begin{tabular}{|c|c|c|c|c|c|c|}
\hline
 & \begin{tabular}{c}\textbf{Chiral central}\\ \textbf{charge (mod 8)}\end{tabular}
 & \begin{tabular}{c}\textbf{Ungappable}\\ \textbf{boundary}\end{tabular}
 & \begin{tabular}{c} \textbf{LO chiral} \end{tabular}
  & \begin{tabular}{c} \textbf{Four-partite}\\ \textbf{chiral}\end{tabular}
 & \begin{tabular}{c}\textbf{Non-real wave function}\\ \textbf{(in local basis)}\end{tabular}\\
\hline
Toric code        & 0  & \No & \No & \No  & \No  \\
\hline
3F        & 4  & \Yes & \No  & \No & \No  \\
\hline
$\mathbb{Z}_3^{(1)}$ & 2 &\Yes  & \Yes  & \Yes & \Yes  \\
\hline
$\mathbb{Z}_5^{(1)}$ & 0  & \Yes  & \No & \No  & \Yes  \\
\hline
$\mathbb{Z}_9^{(1)}$ &
0
& \No  & \Yes &  \begin{tabular}{c} \Yes \ (mixed) \\ ? \ (pure) \end{tabular} & \Yes  \\
\hline
\end{tabular}%
}
% \caption{Different notions of chirality across representative topological orders. }
\label{tab:lochiral-comparison}
\end{table}

We conclude by commenting on the implications of our results for the chirality of three-dimensional models hosting $\Z_d^{(k)}$ theories on their boundary and the connection between symmetry-protected chirality and LO-chirality.

% We begin with the chiral central charge $c_{-}$, which characterizes the response of a $(2+1)$-dimensional system to background gravitational fields and manifests as the thermal Hall response.
% A nonzero $c_{-}$ implies an anomalous edge theory that cannot be realized as a purely local $(1+1)$-dimensional system, and therefore forbids a fully gapped boundary. This is illustrated by $\mathbb{Z}_3^{(1)}$ and the $3$-fermion theory, both of which have $c_{-}\neq 0$ and admit no fully gapped boundary.
% Conversely, a fully gapped boundary requires $c_{-}=0$, as in the toric code and $\mathbb{Z}_9^{(1)}$. However, this condition is not sufficient: $\mathbb{Z}_5^{(1)}$ has $c_{-}=0$ but does not admit a fully gapped boundary.

% It is also natural to compare these criteria with the existence of a real wavefunction in a local basis. Such a representation implies invariance under complex conjugation and therefore LO non-chirality, although the converse need not hold.
% For example, the $3$F theory admits a real representative, while $\mathbb{Z}_5^{(1)}$ is LO-non-chiral yet admits no real representative and is LU-imaginary.

\subsection{Boundary realizations of $\mathbb{Z}_d^{(k)}$ anyons}

Our discussion has focused on two-dimensional mixed-state realizations of $\mathbb{Z}_d^{(k)}$ anyon theories.
A natural question is how LO-chirality manifests in higher-dimensional systems, particularly in three-dimensional bulk phases that realize these states at their boundaries.

Recall that the $\mathbb{Z}_d^{(k)}$ anyon theories arise naturally on the boundary of three-dimensional stabilizer models~\cite{walker201131tqftstopologicalinsulators,lee2025chiralcolorcode}. In particular, as noted in Sec.~\ref{sec2}, a thin chiral color code of Ref.~\cite{lee2025chiralcolorcode} is equivalent to a canonically purified $\mathbb{Z}_d^{(k)}$ mixed state.
% This suggests that our characterization of LO-chirality extends to such three-dimensional pure-state realizations when the system is defined on a manifold with boundary.
% In particular, the Walker–Wang construction provides an explicit mechanism for generating surface mixed states with the same anyon content as the $\mathbb{Z}_d^{(k)}$ theory.
% Similarly, the chiral color code realizes boundary states that coincide with the models analyzed in this work. 
The proof of LO-chirality thus extends straightforwardly to the three-dimensional pure state by focusing on boundary-localized string operators.

\begin{observation}[Heuristic]
A Walker-Wang-type three-dimensional realization of $\mathbb{Z}_d^{(1)}$ anyon theories with boundaries is LO-chiral if its anyon content is not mirror invariant. 
\end{observation}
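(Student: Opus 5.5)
The plan is to reduce the heuristic statement to Theorem~\ref{thm:LO-chirality} by localizing everything near the boundary. I would argue by contradiction: suppose a finite-depth local unitary $U$ of depth $r$, with constant-density ancillas, maps the three-dimensional pure state $|\Psi\rangle$ on a slab (or half-space) to $|\Psi^*\rangle$. By Lemma~\ref{lemma:LO_extension}, strong symmetries of $|\Psi^*\rangle$ pulled back by $U$ remain strong symmetries of $|\Psi\rangle\otimes|0\rangle$. The boundary supports string operators for $\mathbb{Z}_d^{(1)}$, and the conjugate state supports strings for $\mathbb{Z}_d^{(-1)}$. I would take the boundary-localized Pauli strings $N_j$ of $|\Psi^*\rangle$, obtained by truncating boundary loop stabilizers (membranes terminating on the surface), and conjugate them by $U$. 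The resulting $\widetilde N_j$ are fattened strings of width $\sim 2r$ near the surface.

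The key steps, in order, are as follows.

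First, I would identify the boundary layer with the canonical purification picture. In the thin-slab limit the chiral color code equals $|\Psi_{\rho_\mathcal{S}}\rangle$, and for a thick bulk one disentangles the bulk away from the surface by noting that the bulk of Walker--Wang models supports only confined or transparent excitations.

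Second, I would repeat Lemma~\ref{lemma:anyon_charge}. Surface loop stabilizers $W_L, W_R$ enclosing hemispherical balls at the endpoints factorize by the lightcone argument, since the mutual information between distant balls vanishes in the stabilizer pure state. Conservation then follows from deforming $W_LW_{\rm int}W_R$ into a large loop. This fixes the charges in $\mathcal{A}_d^{(1)}$, which requires that the bulk carries no deconfined charge able to leak away.

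Third, I would prove concentration as in Lemma~\ref{lemma_concentration}. The marginals on $(LR)^c$ agree by the logical twirling argument, now using the bulk stabilizers in place of the second layer. Factorization $U_L\otimes U_R$ then follows from Lemma~\ref{lemma:four_partition}, with $K_L,K_R$ being half-spaces including bulk.

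Fourth, I would apply Lemma~\ref{lemma_spin} to a T-junction of surface strings. This yields $\theta=\omega^{k}$ from the $\rho$ side and $\omega^{-k}$ from the $\rho^*$ side, a contradiction when $\mathcal{A}_d^{(1)}\not\simeq\mathcal{A}_d^{(-1)}$.

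I expect the main obstacle to be the second and third steps: controlling the bulk. For even $d$, or generally, transparent charges such as $a^{d/2}$ can propagate into the bulk. Hemispherical surface loops then fail to measure the full charge, and the marginal argument needs bulk membrane operators that close around the endpoint region in three dimensions. I would first try to restrict to odd $d$, where all surface anyons are detected by surface loops and the bulk is trivial (a Walker--Wang model of a modular theory). For that case I would use that the bulk is LU-equivalent to a product state in the interior, so that the relevant stabilizer group on $(LR)^c$ is generated locally plus the single surface loop. This is why the statement is labeled heuristic; a complete argument would need an explicit enumeration of logical operators of the region complementary to half-balls, analogous to the classification used in Lemma~\ref{lemma:marginals}.
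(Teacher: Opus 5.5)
Your overall route is the paper's. Its justification is only that the thin chiral color code coincides with the canonically purified mixed state, so the LO-chirality argument (Lemma~\ref{lemma:LO_extension}, charge fixing, concentration, T-junction invariance) can be rerun on boundary-localized strings. Your Steps 2--4 are exactly that, and your caveat about transparent charges for even $d$ is apt. However, Step 1 cannot serve as the reduction, and it hides the one hypothesis the argument actually needs.

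\emph{The thin-slab case is a counterexample.} On a thin slab, the two layers of $|\Psi_{\rho_{\mathcal S}}\rangle$ are the two opposite boundaries, carrying $\mathbb{Z}_d^{(1)}$ and $\mathbb{Z}_d^{(-1)}$. The depth-one interlayer SWAP sends $\sum_j \sqrt{p_j}\,|\psi_j\rangle|\psi_j^*\rangle$ to its complex conjugate. This is the computation behind the paper's corollary that purification is not sufficient, since $|\sqrt{\rho^*}\rangle=|\sqrt{\rho}\rangle^*$. So the thin slab is LU-non-chiral for every $d$, and an argument that covers it proves too much. Concretely, with $U=\mathrm{SWAP}$ the conjugated top-surface strings $\widetilde N_j$ live on the bottom layer and create $\mathbb{Z}_d^{(-1)}$ charges there. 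The top-surface loops $W_L,W_R$ do not see these charges, so no contradiction arises.

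\emph{Why the 2D proof does not transfer directly.} The 2D proof works only because the channel acts on the first layer while the second layer is an untouched reference. In the 3D pure state the channel acts everywhere. You therefore must assume that the surface carrying the strings is at distance $\gg r$ from any boundary carrying the conjugate theory, i.e.\ a half-space, or a slab whose thickness grows with system size. Then the far bulk and the opposite boundary take over the role of the untouched second layer. Your ``no leakage'' requirement in Step 2 becomes precisely this separation, together with the absence of deconfined bulk charges. The thin-slab identity should be used only to identify the local boundary physics, not as the reduction itself.

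\emph{Drop the LU-triviality claim.} Do not rely on the odd-$d$ bulk being finite-depth LU-equivalent to a product state in the interior. The paper asserts only that the bulk can be prepared by a QCA or a local channel. It also conjectures LU-chirality on closed manifolds, which is incompatible with LU-triviality: if $|\Psi\rangle=V|0\cdots0\rangle$ with $V$ finite-depth, then $|\Psi\rangle^*=V^*V^\dagger|\Psi\rangle$. You do not need the claim anyway. The description of the stabilizer group on $(LR)^c$ as local generators plus one surface loop can be checked directly in the stabilizer model, or transported by a locality-preserving QCA.
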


The situation is more subtle when the system is defined on a closed manifold, i.e., without boundaries.
We again focus on the $\mathbb{Z}_d^{(1)}$ models with odd $d$.\footnote{When $d$ is even, the three-dimensional model contains transparent fermion or boson, depending on $d=0,2 \ \mathrm{mod}\ 4$, and thus includes a decoupled three-dimensional fermionic or bosonic toric code. After removing this sector, an analogous argument applies.}
On a closed manifold, the three-dimensional $\mathbb{Z}_d^{(1)}$ model is short-range entangled for odd $d$, in the sense that its ground state can be prepared by a quantum cellular automaton (QCA)~\cite{Haah_2021}, or more generally by a local quantum channel~\cite{lee2025chiralcolorcode}.
Such a preparation effectively creates a pair of $\mathbb{Z}_d^{(1)}$ and $\mathbb{Z}_d^{(-1)}$ three-dimensional states across the system and environment.
Therefore, these systems are \emph{LO-non-chiral}.

\begin{observation}[Heuristic]
A Walker-Wang-type three-dimensional realization of $\mathbb{Z}_d^{(1)}$ (odd $d$) anyon theories without boundaries is LO-non-chiral. 
\end{observation}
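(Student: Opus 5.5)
The plan is to exhibit an explicit local channel that takes the closed-manifold Walker--Wang-type ground state $|\Psi\rangle$ of the $\mathbb{Z}_d^{(1)}$ model ($d$ odd) to its complex conjugate. The route goes through a short-range-entangled reference state. By the result quoted above~\cite{Haah_2021}, there is a locality-preserving unitary (a QCA) $\alpha$ with $|\Psi\rangle=\alpha|0^n\rangle$. Complex conjugation in the computational basis then gives $|\Psi^*\rangle=\alpha^*|0^n\rangle$, and $\alpha^*$ is again a QCA with the same spread. Formally, $|\Psi^*\rangle=\alpha^*\alpha^{-1}|\Psi\rangle$.

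The first obstacle is that $\alpha^*\alpha^{-1}$ is a QCA but not necessarily a finite-depth circuit, whereas Definition~\ref{def:lu-lo-chirality} requires $U$ to be finite depth. This is the step I expect to be hardest. I would resolve it by adding ancillas and using the channel freedom, rather than trying to show $\alpha^*\alpha^{-1}$ is circuit-trivial. Adjoin an ancilla layer in $|0^n\rangle$. Stacking $\alpha$ with its inverse is circuit-implementable: $\alpha\otimes\alpha^{-1}$ is a finite-depth circuit, by the standard swap trick $\alpha\otimes\alpha^{-1}=(\alpha\otimes 1)\,\mathrm{SWAP}\,(\alpha^{-1}\otimes 1)\,\mathrm{SWAP}$, with each layer localized. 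Similarly $\alpha^*\otimes(\alpha^*)^{-1}$ is finite depth.

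The channel I would use is
\begin{align}
\mathcal{E}(\rho)=\Tr_{A}\Big[(\alpha^*\otimes(\alpha^*)^{-1})\,\mathcal{R}\big((\alpha^{-1}\otimes\alpha)(\rho\otimes|0^n\rangle\langle0^n|)(\alpha\otimes\alpha^{-1})\big)(\ldots)^\dagger\Big].
\end{align}
Here $\mathcal{R}$ is a local reset channel that replaces the ancilla layer by $|0^n\rangle$. Acting on $|\Psi\rangle\otimes|0\rangle$, the first step produces $|0\rangle\otimes\alpha|0\rangle$. The reset turns this into $|0\rangle\otimes|0\rangle$, and the last circuit produces $|\Psi^*\rangle\otimes(\alpha^*)^{-1}|0\rangle$. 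Tracing out the ancilla leaves $|\Psi^*\rangle\langle\Psi^*|$.

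Each ingredient is a finite-depth unitary, a product-state ancilla, or a local reset and partial trace. By a Stinespring dilation of the reset with fresh ancillas, the whole map takes the form required in Definition~\ref{def:lu-lo-chirality}. This matches the heuristic in the text: the preparation effectively pairs the $\mathbb{Z}_d^{(1)}$ bulk state with a $\mathbb{Z}_d^{(-1)}$ partner in the environment.

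The remaining checks are routine. First, $\alpha^*$ is a QCA with the same radius, since entrywise conjugation preserves operator support. Second, the swap-trick circuit has depth bounded by a constant times the spread of $\alpha$. Third, the argument uses only that the closed-manifold ground state is QCA-preparable, so no anyon data enter, which is consistent with the absence of boundaries. If one prefers to avoid QCAs, the same template works with the local-channel preparation of~\cite{lee2025chiralcolorcode}: prepare $|\Psi\rangle$ from a product state with a channel $\mathcal{P}$, and $|\Psi^*\rangle$ with $\mathcal{P}^*$. The map from $|\Psi\rangle$ back to a product state is still provided by the reset step, which is what makes the combined operation one-way LO.
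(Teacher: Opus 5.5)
Your proposal is correct and takes the same route as the paper's heuristic: closed-manifold QCA (or local-channel) preparability, with the swap trick leaving the compensating inverse partner in a discarded environment, here made fully explicit. The initial $\alpha^{-1}\otimes\alpha$ stage is harmless but unnecessary, since an LO channel may simply discard its input, so the only real content is that $|\Psi^*\rangle$ can be prepared by LO from a product state using the finite-depth circuit $\alpha^*\otimes(\alpha^*)^{-1}$.
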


However, it is an open question whether the states on a closed manifold remain \emph{LU-chiral}, in the sense that no finite-depth local unitary can map the three-dimensional $\mathbb{Z}_p^{(1)}$ state to its complex conjugate for $p=3$ \ mod $4$ (where the anyon content is not mirror invariant).
Below, we sketch a heuristic argument supporting the statement and point out its limitation.

Suppose, for contradiction, that there exists a finite-depth local unitary $U$ that maps the three-dimensional $\mathbb{Z}_p^{(1)}$ pure state to the $\mathbb{Z}_p^{(-1)}$ pure state on a closed manifold.
Now consider two copies of the $\mathbb{Z}_p^{(1)}$ model defined on a manifold with boundaries. By applying a truncated version of $U$ in the bulk, one can locally convert a region of one copy into the $\mathbb{Z}_p^{(-1)}$ state, thereby creating adjacent regions of $\mathbb{Z}_p^{(1)}$ and $\mathbb{Z}_p^{(-1)}$ states. 
% \TE{The unitary $U$ is a map of the states. For the modified argument below, we would need the stronger statement that the unitary maps the Hamiltonians, right?}

Since these two states for $\mathbb{Z}_p^{(1)}$ and $\mathbb{Z}_p^{(-1)}$ are inverses, we expect that they can be locally disentangled by a finite-depth circuit in the bulk. If the circuit can be extended to disentangle the respective Hamiltonians, that is, map them into a paramagnet Hamiltonian, then the disentangler would leave behind a two-dimensional boundary described by the product theory $\mathbb{Z}_p^{(1)} \times \mathbb{Z}_p^{(1)}$.
% (Here, recall that there exists a finite-depth unitary circuit preparing a pair of $\mathbb{Z}_p^{(1)}$ and $\mathbb{Z}_p^{(-1)}$.)
This would imply the existence of a two-dimensional frustration-free commuting-projector Hamiltonian realizing the $\mathbb{Z}_p^{(1)} \times \mathbb{Z}_p^{(1)}$ topological order.
% \TE{The circuit that disentangles the pair of $\mathbb{Z}_p^{(1)}$ and $\mathbb{Z}_p^{(-1)}$ states maps them to a product state, but it is not obvious that it maps the Hamiltonian to a trivial Hamiltonian. We need the latter to be able to leverage an obstruction to having a commuting projector Hamiltonian for a chiral phase. We can likely make a slightly weaker claim -- that there is no finite-depth Clifford circuit $U$.}  \TE{In the case of a Clifford circuit, we can strengthen the statement below. I am assuming that the circuit that prepares the $\mathbb{Z}_p^{(1)}$ and $\mathbb{Z}_p^{(-1)}$ states is translation invariant. Then we have a translation invariant topological stabilizer code on qudits. These are classified, at least for prime qudits. I would need to double check for odd in general.} \DL{I think we can disentangle $\mathbb{Z}_p^{(1)} \times \mathbb{Z}_p^{(-1)}$ (odd $p$) Hamiltonian by finite-depth circuit. Since we know that we can disentangle the bulk Hamiltonian of $\mathbb{Z}_p^{(1)}$ WW model using (possibly non-trivial) QCA, the bulk of $\mathbb{Z}_p^{(1)} \times \mathbb{Z}_p^{(-1)}$ can be disentangled by the corresponding QCA with its inverse, which can be realized by a finite-depth circuit?}
However, such a realization is believed to exist only when the corresponding anyon theory admits a Lagrangian subalgebra, which $\mathbb{Z}_p^{(1)} \times \mathbb{Z}_p^{(1)}$ does not. Therefore the LU-chirality distills down to a question of whether there exists a LU that disentangles not only the ground state but the full spectrum of the Hamiltonian for a stack of $\mathbb{Z}_p^{(1)}$ and $\mathbb{Z}_p^{(-1)}$. We leave the status of this problem as the following conjecture.

% We therefore conclude that the three-dimensional $\mathbb{Z}_p^{(1)}$ model is LU-chiral on a closed manifold whenever its anyon content is not mirror invariant, even though it remains not LO-chiral.

\begin{conjecture}[Heuristic]
A Walker-Wang-type three-dimensional realization of $\mathbb{Z}_p^{(1)}$ anyon theories without boundaries is LU-chiral when the anyon content is not mirror invariant.
\end{conjecture}

While the above discussion was phrased for $\mathbb{Z}_p^{(1)}$ with $p=3 \ \mathrm{mod}\ 4$, the argument extends more generally to $\mathbb{Z}_d^{(1)}$ anyon theories whose anyon content is not mirror invariant, provided that the product theory $\mathbb{Z}_d^{(1)} \times \mathbb{Z}_d^{(1)}$ does not admit a Lagrangian subalgebra.
An important class of exceptions is given by $\mathbb{Z}_{p^{2m}}^{(1)}$ models, which themselves admit a Lagrangian subgroup.
For these cases, the above obstruction does not apply. Indeed, it has been shown that the bulk Hamiltonian of such models can be disentangled by a finite-depth circuit with the assistance of ancillas~\cite{bauer2023disentangling}.\footnote{We note that this approach to using ancillas is distinct from our LO channel definition, as the ancillary state must remain invariant under the action of the finite-depth circuit.} By coarse-graning the bulk Hamiltonian, we expect that such models can be disentangled by a finite-depth circuit without the need for ancillas. Consequently, we expect that the bulk state is LU non-chiral. 
%and it remains an open question whether the corresponding three-dimensional realizations without boundaries are LU-chiral. \TE{In that case, isn't there a finite-depth disentangler for the bulk, so they can be mapped to the complex conjugate on a closed manifold?} \DL{I was thinking if we can cite Bauer's work to claim that this case is not LU chiral. Since his construction used ancilla, strictly speaking, it does not match with our LU definition?} %\DL{It is a bit surprising but it seems that for non-prime qudit dimension, Witt group classification of anyon theory and Clifford QCA classification does not match. The recent result of arXiv:2504.14811 points out that 3D QCA classification for $d=9$ qudit dimension is given by $\mathbb{Z}_4$, where the $\mathbb{Z}_9^{(1)}$ WW bulk disentangler corresponds to the root class. Apparently, $\mathbb{Z}_9^{(1)}$ anyon theory contains a lagrangian subgroup, but it seems that there is another algebraic obstruction to disentangle the bulk strictly using Clifford. I think non-clifford case remains to be open.   }

\subsection{LO vs.\ LU: symmetry and gauging}
\label{subsec:LOvsLu}
%\YP{Below is an alternative version, previous version is commented out}

For the family of two-dimensional stabilizer mixed states with abelian anyons
considered in this work, the characterization of chirality is identical whether
one considers LU circuits or more general LOs. This appears to
contrast with the chirality in invertible states, such as the $p+ip$ topological superconductor in two dimensions, where LU and LO notions can differ.

For invertible phases, it is more natural to use quasi-local versions of these
operations. We will write qLU for a quasi-local unitary, and qLO for the
corresponding channel obtained by adding local ancillas, applying a qLU, and
tracing out part of the enlarged system.

% \DL{Claim 1. $p + ip$ is qLU chiral} 
Consider the $p + ip$ topological superconductor in a two dimensional fermionic system. We argue that this phase is qLU-chiral based on a symmetry-gauging argument. Here, by a qLU for a fermionic system, we take each of the local gates to independently preserve fermion parity. Suppose there exists a qLU circuit that maps $p+ip$ superconductor to its complex conjugate, the $p-ip$ superconductor. Upon gauging fermion parity symmetry, $p+ip$ superconductor maps to Ising topological order, while $p-ip$ superconductor maps to its complex conjugate~\cite{Kitaev_2006}. Crucially, the Ising topological order realizes an anyon theory that is not invariant under complex conjugation. The existence of such qLU symmetric circuit would then imply that two distinct topological orders are connected by qLU circuit. We expect this to be impossible, although a rigorous proof in terms of anyons remains open due to the non-Abelian nature of the underlying anyon theory.\footnote{One could alternatively appeal to the thermal Hall conductivity and use the results of Ref.~\cite{Kapustin_2020}.}
The situation changes drastically if one allows for qLO channels. Namely, a $p+ip$ and
a $p-ip$ superconductor together have vanishing net chirality and can be generated from a product state via a qLU circuit~\cite{fidkowski2024pumpingchiralitydimensions}. Consequently, one can convert a $p+ip$ state into a $p-ip$ state by starting with an initial $p+ip$ state, introducing trivial ancilla, applying a qLU circuit to create a $p+ip$ and $p-ip$ state pair, swapping the newly generated $p-ip$ state with the original $p+ip$ state, and then tracing out the ancillary degrees of freedom.
%state. Thus the inverse chirality has not been removed, but has been placed in the environment. This is the sense in which the $p+ip$ phase is protected-unitary chiral, but not intrinsically qLO-chiral under unconstrained local channels.

% \DL{Claim 3. $p+ip$ is qLO chiral with strong symmetry of the system.}
This distinction suggests that, for non-trivial invertible phases, the allowed  qLO channel should respect the symmetry of the system to reproduce the result of the qLU classification. Namely, if the quasi-local unitary dilation respects the fermion parity of the system, i.e., the unitary is strongly symmetric, one can gauge the symmetry to demonstrate that qLO chirality of the Ising topological order implies the qLO chirality of $p+ip$ superconductor. We formalize this as a conjecture below.

\begin{comment}
gaugeable sense. With this restriction, one can no longer simply create the pair and throw away
the opposite copy. The copy that is traced out carries the chirality that
cancels the original one, and a gaugeable symmetric channel should keep track
of this sector.
\end{comment}

\begin{comment}
A useful way to make this distinction sharper is to gauge fermion parity.
Gauging a $p+ip$ superconductor gives Ising topological order, while gauging
$p-ip$ gives its complex conjugate. For example, the $\sigma$ anyon has
topological spin $\theta_\sigma=e^{i\pi/8}$ in one theory and
$\theta_\sigma^*=e^{-i\pi/8}$ in the other. Thus the chirality that could be
hidden in an environment before gauging reappears, after gauging, as part of
the anyon data. This example is outside the abelian stabilizer setting of this
paper, but it suggests the following conjectural extension of our results.
\end{comment}

\begin{conjecture}[Symmetry-protected qLO chirality of $p+ip$]
A $p+ip$ superconductor cannot be mapped to $p-ip$ by a local quantum
channel whose quasi-local unitary dilation respects the fermion parity symmetry of the system. Equivalently, after gauging fermion
parity, Ising topological order is not qLO-equivalent to its complex conjugate.
\end{conjecture}

\begin{comment}
The gauged $p+ip$ theory lies outside the theorem proved here, since it is
non-abelian rather than an abelian stabilizer realization of $Z_d^{(k)}$
anyons. It nevertheless illustrates what changes under gauging. In the ungauged
system, a local channel may compensate a chiral phase by leaving its inverse in
degrees of freedom that are eventually discarded. In the gauged theory, the
distinction between the phase and its complex conjugate is part of the
topological order itself, appearing for example in the topological spins.
\end{comment}

% \DL{Discussion on SPT phase with chirality}
An analogous argument also applies to a stack of two $p+ip$ states, which belongs to the same phase as the integer quantum Hall state in the absence of symmetries. Gauging the fermion parity symmetry results in the $U(1)_4$ abelian topological order. The $U(1)_4$ anyon theory is not mirror invariant, and thus we expect that it is qLU-chiral. For this reason, we further expect that the stack of two $p+ip$ states is symmetry-protected qLO-chiral.

More generally, assessing the qLO-chirality of a stack of $p+ip$ states requires caution. Gauging the fermion parity of eight copies of $p+ip$, for example, yields a state characterized by the 3F anyon theory, which is mirror invariant. Similarly, sixteen copies behave, after gauging, like the toric code stacked with an $E_8$ invertible state, so the anyon theory alone
does not determine the chiral edge content.
% These examples suggest several natural questions beyond the stabilizer setting studied here.
% % For continuum theories, it would be interesting to formulate a
% % gaugeable qLO notion that treats the Chern--Simons response on the same footing
% % as the anyon data in a lattice topological order.
For a stack of an even number of $p+ip$ states, 
gauging the fermion parity results in a state characterized by an abelian anyon theory. It would be interesting to understand whether these states can be related by a qLO to the zero-correlation-length mixed state studied in this work.
% , and whether qLO-chirality can
% be established for the $E_8$ state. More generally, one may hope that the right LO notion for symmetry-protected phases is a symmetry-respecting qLO whose action remains local after the symmetry is gauged. 
This would help to establish the connection between the symmetry-protected qLO-chirality of the ungauged state and the intrinsic chirality of the gauged topological order.

\subsection*{Acknowledgment}

We thank 
Yu-An Chen,
Tarun Grover,
Tim Hsieh,
Michael Levin,
Bowen Shi,
Shreya Vardhan,
Chong Wang, 
and
Yijian Zou
for useful discussions.
Research at Perimeter Institute is supported in part by the Government of Canada through the Department of Innovation, Science and Economic Development and by the Province of Ontario through the Ministry of Colleges and Universities. 
This work is supported by the Applied Quantum Computing Challenge Program at the National Research Council of Canada and by the Natural Sciences and Engineering Research Council of Canada (NSERC) through Discovery Grants.

\appendix

\section{$\mathbb{Z}_d^{(k)}$ anyons}

We recall that the braiding and self statistics in the $\mathbb{Z}_d^{(k)}$ anyon theory are given by (Eq.~\eqref{def:anyontheory}):
\begin{align}
B(a^{j_1}, a^{j_2}) = (\omega)^{2k j_1 j_2}, \qquad \theta(a^j) = (\omega)^{k j^2},\qquad \omega=e^{\frac{2\pi i}{d}}.
\end{align}

% \ZL{I moved the example subsection to after the Chiral central charge subsection, because we mention $G_d^{(k)}$ extensively in the examples, but this notation and its physical meaning is first discussed in the chiral central charge subsection. 
% Alternatively, we can keep the original order but we need to either define $G_d^{(k)}$ first (or don't mention them).}

\subsection{Proof of Lemma~\ref{lemma:factorization}}\label{A6}

\noindent\textbf{Lemma \ref{lemma:factorization}}~(Factorization).
\textit{
The $\mathbb{Z}_d^{(k)}$ anyon theory is equivalent to the following tensor product:
\begin{align}
\mathcal{A}_d^{(k)} \simeq \boxtimes_{j} \mathcal{A}_{d_j}^{(k e_j)}
\end{align}
where $d_j \equiv p_j^{m_j}$ and $e_j \equiv \frac{d}{d_j}$.}

Let us begin with the $k=1$ case. 
The anyon set $\mathcal{A}_d^{(1)}= \{ a^i \}_{i=0}^{d-1}$ can be decomposed as
\begin{align}
\mathcal{A}_d^{(1)}= \bigtimes_{j=1}^{n} \mathcal{A}(j), \qquad \mathcal{A}(j) \equiv \big\{ (a^{e_j})^i  \big\}_{i=0}^{d_{j}-1}. \label{eq:decompose}
\end{align}
Indeed, the following equation always admits a solution $\{x_j\}$  due to the CRT:
\begin{align}
1 = \sum_{j} x_j e_j, \qquad (\mbox{mod $d$}). \label{eq:CRT}
\end{align}
Hence, the RHS of Eq.~\eqref{eq:decompose} contains the elementary anyon $a$, namely $a = a^{\sum_j x_j e_j}$, which generates the entire set $\mathcal{A}_d^{(1)}$.
Furthermore, different $\mathcal{A}(j)$ are decoupled in $\mathcal{A}_d^{(1)}$, since (observe that $\mathcal{A}(j)$ is generated by $a^{e_j}$, it suffices to check $a^{e_j}$ and $a^{e_{j'}}$):
\begin{equation}
    B(a^{e_j}, a^{e_{j'}}) = \omega^{2e_je_{j'}} = 1~~~~(j\neq j').
\end{equation}

We claim that
\begin{align}\label{eq:a2theta}
\mathcal{A}(j) \simeq \mathcal{A}_{d_j}^{(e_j)}.
\end{align} 
To show it, we compute the self statistics of $a^{e_j}$ (recall Eq.~(\ref{def:anyontheory})): 
\begin{align}
\theta(a^{e_j}) = (\omega_d)^{e_j^2} = (\omega_{d_j})^{e_j}
\end{align}
%\ZL{I think it still remains to verify that different $\mathcal{A}(j)$ decouple and together they make up $\mathcal{A}_d^{(1)}$} \DL{Sorry, I am not sure if I understood this comment. Eq.(222) implies ${\cal A}_d^{(1)} \subseteq \otimes_{j=1}^n {\cal A}(j)$, and the converse holds trivially. So Eq.(220) holds? }
%\ZL{In principle, the CRT only proves (221) as sets, not as anyon theories. In particular, we need to validate the ``$\bigtimes_{j=1}^{n}$". This is indeed very very trivial, but to make a clear proof, I just added one line (223) anyways. Feel free to delete our comments if you confirm it.}
which matches with the self statistics of an elementary anyon in the $\mathbb{Z}_{d_j}^{(e_j)}$ anyon theory. 
For abelian anyon theories, the braiding statistics is fully determined by the exchange statistics.
\footnote{For anyons $a$ and $b$, we have $B(a,b) = \theta(ab) \theta(a)^{-1}\theta(b)^{-1}$.~\cite{Kitaev_2006}}
Hence, the set of anyons in $\mathcal{A}(j)$ is isomorphic to the set $\mathcal{A}_{d_j}^{(e_j)}$. 

For $k>1$, one can similarly show that ${\cal A}(j) \simeq {\cal A}^{(ke_j)}_{d_j}$, and this concludes the proof.
%by multiplying Eq.~\eqref{eq:CRT} 
%\ZL{why multiply this by $k$? shouldn't we play with \eqref{eq:a2theta} instead?}, 
% we arrive at the desired statement.

\subsection{Chiral central charge of $\mathbb{Z}_d^{(1)}$ anyon theory}
\label{A7}

In this appendix, we examine the problem of the mirror invariance, $\mathcal{A}\simeq \mathcal{A}^*$ or  $\mathcal{A}\not\simeq \mathcal{A}^*$, from the perspective of the chiral central charge $c_{-}\equiv c_L - c_R$. 
Since the chiral central charge $c_{-}$ is additive under tensor product $\boxtimes$, it makes sense to focus on indecomposable anyon theories.
For the $\mathbb{Z}_d^{(1)}$ anyon theories, due to the factorization property (Lemma~\ref{lemma:factorization}), we can focus on the $d=p^m$ cases for some prime $p$. 

The normalized sum of the self statistics of anyons gives a probe of the chiral central charge $c_{-}$. 
Namely, when the anyon theory is abelian and modular (not containing any transparent anyon), the \emph{Gauss-Milgram formula} gives:
\begin{align}
e^{\frac{2\pi i}{8}c_{-}} = \frac{1}{\sqrt{|\mathcal{A}|}}\sum_{a \in \mathcal{A}}\theta(a) \label{eq:Gauss-Milgram}
\end{align}
which fixes (defines) the chiral central charge $c_{-}$ modulo $8$. 
The formula suggests a useful criterion for verifying $\mathcal{A}\simeq \mathcal{A}^*$ or $\mathcal{A}\not\simeq \mathcal{A}^*$. Namely, if $\mathcal{A}\simeq \mathcal{A}^*$, then Eq.~\eqref{eq:Gauss-Milgram} must be invariant under complex conjugation, implying
\begin{align}
\mathcal{A}\simeq \mathcal{A}^* \quad \Rightarrow \quad c_{-}=0 \quad (\mbox{mod $4$}). \label{eq:Right-only}
\end{align}
Below, we will demonstrate that the converse does \emph{not} hold in general for the $\mathbb{Z}_d^{(1)}$ anyon theories.

\subsubsection{Odd $d$}

For odd $d$, the $\mathbb{Z}_d^{(1)}$ anyon theory does not have transparent anyon. 
The normalized sum of the self statistics is given by the quadratic Gauss sum:
\begin{align}
e^{\frac{2\pi i}{8}c_{-}} = \frac{1}{\sqrt{d}} \sum_{j=1}^d \theta(a^j) = G_{d}^{(k)} , 
\qquad G_{d}^{(k)} \equiv \frac{1}{\sqrt{d}} \sum_{j=1}^d \omega^{k j^2}.
\label{Eq:gauss sum}
\end{align}
For $k=1$, the following relation is well known:
\begin{align}
G_{d}^{(1)} =
\begin{cases} 
& 1   \qquad\quad d =1 \quad (\mathrm{mod\ } 4) \\
  & i  \qquad\quad d =3 \quad (\mathrm{mod\ } 4)
  \end{cases}
\end{align}
suggesting that the chiral central charges are given by
\begin{align}
c_{-} = 
\begin{cases} 
& 0 \quad (\mathrm{mod\ } 8)  \qquad\quad d =1 \quad (\mathrm{mod\ } 4) \\
  & 2 \quad (\mathrm{mod\ } 8)  \qquad\quad d =3 \quad (\mathrm{mod\ } 4).
  \end{cases}
\end{align}
This calculation can be generalized to $G_{d}^{(k)}$ with $k>1$ and coprime $(d,k)$, with $c_{-}=\pm 2$ modulo $8$ 
% depending on whether $d$ possesses a quadratic residue or not. 
depending on the prime factors of $d$ and whether they have $k$ as a quadradic residue.
% \ZL{I think this is still a bit confusing, since $d$ always possesses a quadratic residue (e.g., 1).\\
% A precise result would be: 
% \begin{align}
% G_{d}^{(1)} =
% \begin{cases} 
% & \left(\frac{k}{d}\right)   \qquad\quad d =1 \quad (\mathrm{mod\ } 4) \\
%   & i\left(\frac{k}{d}\right)  \qquad\quad d =3 \quad (\mathrm{mod\ } 4)
%   \end{cases},
% \end{align}
% \begin{align}
%     c_{-} = \begin{cases} 
% & 0, 4 \quad (\mathrm{mod\ } 8)  \qquad\quad d =1 \quad (\mathrm{mod\ } 4) \\
%   & 2, -2 \quad (\mathrm{mod\ } 8)  \qquad\quad d =3 \quad (\mathrm{mod\ } 4).
%   \end{cases}
% \end{align}
% Here $\left(\frac{k}{d}\right)$ is the Jacobi symbol, valued in $\pm 1$, depending on the prime factors of $d$ and whether they have $k$ as a quadradic residue.
% }

%$k$ is a quadradic residue modulo $d$ or not.
% \ZL{I think this is not true... According to wiki page on Quadratic\_Gauss\_sum, the sign depend on the Jacobi symbol. Quadradic residue implies Jacobi symbol=1, but not vice versa (for $d=$prime, yes).}

{\color{blue}

}

Here, we notice that the $d=3$ (mod $4$) cases always lead to $\mathcal{A}\not\simeq \mathcal{A}^*$. 
It turns out, however, that the $d=1$ (mod $4$) cases do not necessarily yield $\mathcal{A}\simeq \mathcal{A}^*$. 
Namely, letting $p$ be an odd prime with $p=3$ (mod $4$), consider the $d = p^{2m}$ cases. 
Then, the chiral central charge is $c_{-}=0$ mod $8$, but $\mathcal{A}_{p^{2m}}^{(1)}\not\simeq \mathcal{A}_{p^{2m}}^{(-1)}$, due to Theorem \ref{thm:invcondition}.
As such, the chiral central charge $c_{-}$ modulo $4$ fails to faithfully characterize the mirror invariance of the $\mathbb{Z}_d^{(1)}$ anyon theory, negating the converse of Eq.~\eqref{eq:Right-only}.

\subsubsection{Even $d$}

For even $d$, the $\mathbb{Z}_d^{(1)}$ anyon theory contains a transparent boson or fermion. Namely, the transparent particle is given by $t \equiv a^{\frac{d}{2}}$, and we find
\begin{align}
\theta(t) =
\begin{cases} & 1     \ \ \qquad d =0 \quad (\mathrm{mod\ } 4) \\
& -1     \qquad d =2 \quad (\mathrm{mod\ } 4)
\end{cases},
\end{align}
The sum of self statistics (the Gauss sum), is given by 
\begin{align}
G_{d}^{(1)} = 
\begin{cases}
&1 + i \ \quad d =0  \quad (\mathrm{mod\ } 4) \\
 & 0  \qquad\quad d =2 \quad (\mathrm{mod\ } 4) 
 \end{cases},
\end{align}
but we cannot use the Gauss-Milgram formula to determine the chiral central charge. 

This issue can be circumvented by condensing the transparent boson or fermion and forming a reduced anyon theory $\hat{\mathbb{Z}}_{d}^{(1)}$ with a reduced anyon set $\hat{\mathcal{A}}_{d}^{(1)} = \{ a^j \}_{j=0}^{j=\frac{d}{2}-1}$. Focusing on the $d=2^m$ cases for simplicity, the normalized self statistics gives: 
\begin{align}
\hat{G}_{d}^{(1)} \equiv
\frac{1}{\sqrt{d/2}} \sum_{j=0}^{\frac{d}{2}-1} \omega^{j^2}=
\begin{cases}
&\frac{1}{\sqrt{2}}(1 + i) \ \quad d = 2^m \quad (m\geq 2) \\
 & 1  \qquad\qquad\quad d =2 
 \end{cases},
\end{align}
suggesting that the chiral central charges are given by
\begin{align}
c_{-} = 
\begin{cases} 
& 1 \quad (\mathrm{mod\ } 8)   \quad \quad d = 2^m \quad (m\geq 2) \\
  & 0 \quad (\mathrm{mod\ } 8)  \qquad d =2 .
  \end{cases}.
\end{align}
Hence, we find that, for $d=2^m$, the chiral central charge $c_{-}$ modulo $4$ faithfully characterize the mirror invariance of the $\mathbb{Z}_d^{(1)}$ anyon theory.

\subsection{Examples}\label{app:A1}

\subsubsection*{-- $\mathbb{Z}_3^{(1)}$ anyon}

The self statistics for $\mathbb{Z}_3^{(1)}$ and $\mathbb{Z}_3^{(-1)}$ anyon theories are given by
\begin{center}
\begin{tabular}{ c | c c }
 $j$ & $\log(\theta)$ & $\log(\theta^*)$\\ \hline
 $0$ & $0$ & $0$ \\  
 $1$ & $1$ & $2$ \\    
 $2$ & $1$ & $2$ 
\end{tabular}
\end{center}
Here, $j$ represents the exponent in $a^j$ while ``$\log(\theta)$'' is a shorthand notation to denote the exponent in the topological spin, namely
\begin{align}
\theta = \omega^{\log(\theta)}. 
\end{align}
The self statistics do not match for $\mathbb{Z}_3^{(1)}$ and $\mathbb{Z}_3^{(-1)}$, suggesting that $\mathbb{Z}_3^{(1)}$ is not mirror invariant.
We also have 
\begin{align}
G_{3}^{(1)} = i, \qquad G_{3}^{(-1)} = -i. 
\end{align}

\subsubsection*{-- $\mathbb{Z}_5^{(1)}$ anyon}

The self statistics for $\mathbb{Z}_5^{(1)}$ anyon theory is given by
\begin{center}
\begin{tabular}{ c | c c }
 $j$ & $\log(\theta)$ & $\log(\theta^*)$\\ \hline
 $0$ & $0$ & $0$ \\  
 $1$ & $1$ & $4$ \\    
 $2$ & $4$ & $1$ \\
 $3$ & $4$ & $1$ \\
 $4$ & $1$ & $4$ 
\end{tabular}
\end{center}
The self statistics are equivalent up to the following relabeling
\begin{align}
a^* \sim a^2
\end{align}
suggesting that $\mathbb{Z}_5^{(1)}$ is mirror invariant.
We also have 
\begin{align}
G_{5}^{(1)} = 1, \qquad G_{5}^{(-1)} =1. 
\end{align}

\subsubsection*{-- $\mathbb{Z}_9^{(1)}$ anyon}

The exchange statistics for $\mathbb{Z}_9^{(1)}$ anyon theory is given by
\begin{center}
\begin{tabular}{ c | c c }
 $j$ & $\log(\theta)$ & $\log(\theta^*)$\\ \hline
 $0$ & $0$ & $0$ \\  
 $1$ & $1$ & $8$ \\    
 $2$ & $4$ & $5$ \\
 $3$ & $0$ & $0$ \\
 $4$ & $7$ & $2$ \\  
 $5$ & $7$ & $2$ \\    
 $6$ & $0$ & $0$ \\
 $7$ & $4$ & $5$ \\
 $8$ & $1$ & $8$ 
\end{tabular}
\end{center}
suggesting that $\mathbb{Z}_9^{(1)}$ is not mirror invariant.
We however have 
\begin{align}
G_{9}^{(1)} = 1, \qquad G_{9}^{(-1)} =1. 
\end{align}
The $\mathbb{Z}_9^{(1)}$ anyon theory contains a non-transparent boson $b = a^3$.

\subsubsection*{-- $\mathbb{Z}_2^{(1)}$ anyon}

The exchange statistics for $\mathbb{Z}_2^{(1)}$ anyon theory is given by
\begin{center}
\begin{tabular}{ c | c c }
 $j$ & $\log (\theta)$ & $\log (\theta^*)$\\ \hline
 $0$ & $0$ & $0$ \\  
 $1$ & $1$ & $1$ 
\end{tabular}
\end{center}
suggesting that $\mathbb{Z}_2^{(1)}$ is mirror invariant.
We have %\ZL{0?}
\begin{align}
G_{2}^{(1)} = 0 .
%, \qquad G_{2}^{(-1)} =0. 
\end{align}
The $\mathbb{Z}_2^{(1)}$ anyon theory consists only of a fermion. 

\subsubsection*{-- $\mathbb{Z}_4^{(1)}$ anyon}

The exchange statistics for $\mathbb{Z}_4^{(1)}$ anyon theory is given by
\begin{center}
\begin{tabular}{ c | c c }
 $j$ & $\log(\theta)$ & $\log(\theta^*)$\\ \hline
 $0$ & $0$ & $0$ \\  
 $1$ & $1$ & $3$ \\
  $2$ & $0$ & $0$ \\
 $3$ & $1$ & $3$ 
\end{tabular}
\end{center}
suggesting that $\mathbb{Z}_4^{(1)}$ is not mirror invariant. After removing a transparent boson $t = a^2$, we have a reduced $\hat{\mathbb{Z}}_4^{(1)}$ anyon theory
\begin{center}
\begin{tabular}{ c | c c }
 $j$ & $\omega(\theta)$ & $\omega(\theta^*)$\\ \hline
 $0$ & $0$ & $0$ \\  
 $1$ & $1$ & $3$ 
\end{tabular}
\end{center}
which lead to 
\begin{align}
\hat{G}_{2}^{(1)} = \frac{1}{\sqrt{2}}(1+i), \qquad \hat{G}_{2}^{(-1)} =\frac{1}{\sqrt{2}}(1-i). 
\end{align}
The $\hat{\mathbb{Z}}_4^{(1)}$ anyon theory consists of a chiral semion.

\section{Stacking $\mathbb{Z}_d^{(k)}$ anyons}
\label{app:stacking-Zdk}

We have established the necessary and sufficient condition for when the $\mathbb{Z}_d^{(k)}$ anyon theory is mirror invariant.
A single copy of $\mathbb{Z}_d^{(k)}$, however, does not capture the full landscape of modular abelian anyon theories. 
Fortunately, the Witt class classification implies that, by taking multiple copies of $\mathbb{Z}_d^{(k)}$ theories and allowing for anyon condensation, one can realize every modular abelian anyon theory \cite{Davydov2010TheWG, Davydov2013}.
This observation naturally motivates us to investigate the mirror invariance under stacking of $\mathbb{Z}_d^{(k)}$ anyon theories. 

\subsection{Stacking two copies}

We begin with the odd $d$ cases. 
We find that two copies of $\mathbb{Z}_d^{(1)}$ anyon theories are always mirror invariant.

\begin{theorem}[Stacking odd $d$]
The $\mathbb{Z}_d^{(1)} \boxtimes \mathbb{Z}_d^{(1)}$ anyon theory is mirror invariant for odd $d$. 
\end{theorem}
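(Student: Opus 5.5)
We will construct an explicit relabeling of $\mathbb{Z}_d^{(1)} \boxtimes \mathbb{Z}_d^{(1)}$ that sends every topological spin to its complex conjugate. Write the anyons as $a_1^{x} a_2^{y}$ with $(x,y) \in \mathbb{Z}_d^2$. The spin is $\theta(a_1^x a_2^y) = \omega^{x^2+y^2}$, i.e.\ it is given by the quadratic form $q(x,y)=x^2+y^2$ on $\mathbb{Z}_d^2$. By the remark in Appendix~\ref{A6}, for abelian theories the braiding is determined by the spins through $B(a,b)=\theta(ab)\theta(a)^{-1}\theta(b)^{-1}$. It therefore suffices to find a group automorphism $L \in GL_2(\mathbb{Z}_d)$ with
\begin{align}
q(L(x,y)) = -q(x,y) \pmod d \quad \text{for all } (x,y).
\end{align}
Such an $L$ defines the isomorphism $\mathbb{Z}_d^{(1)}\boxtimes\mathbb{Z}_d^{(1)} \simeq \mathbb{Z}_d^{(-1)}\boxtimes\mathbb{Z}_d^{(-1)}$, the latter being the mirror image.

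The key number-theoretic input is that $-1$ is a sum of two squares modulo any odd $d$. That is, there exist $u,v$ with $u^2+v^2 \equiv -1 \pmod d$. Given such $u,v$, we take the rotation-like matrix
\begin{align}
L=\begin{pmatrix} u & -v \\ v & u\end{pmatrix}.
\end{align}
Then $q(L(x,y)) = (ux-vy)^2+(vx+uy)^2 = (u^2+v^2)(x^2+y^2) = -q(x,y)$, which is the Brahmagupta--Fibonacci identity. Moreover $\det L = u^2+v^2 \equiv -1$ is a unit mod $d$, so $L$ is invertible. Since $L$ is linear it is a group automorphism, so it respects fusion, and it conjugates all spins and hence all braidings.

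The main step is establishing the existence of $u,v$. We would reduce via the CRT (in the spirit of Lemma~\ref{lemma:factorization}) to $d=p^m$ with $p$ an odd prime, and then combine the solutions for the different prime powers componentwise.

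For an odd prime $p$, the sets $\{u^2\}$ and $\{-1-v^2\}$ each contain $(p+1)/2$ residues. Their sizes sum to $p+1 > p$, so by pigeonhole they intersect, giving $u^2+v^2\equiv -1 \pmod p$. Note that $u$ and $v$ cannot both be $0$ mod $p$.

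To lift to $p^m$ we use Hensel's lemma on $f(u)=u^2+v^2+1$ with $v$ fixed. The derivative $f'(u)=2u$ is nonzero mod $p$ if $u \not\equiv 0$; otherwise we swap the roles of $u$ and $v$, which is possible since they are not both zero. This lifting step, handling the degenerate case where the derivative vanishes, is the only place requiring care. Oddness of $d$ is used in the fact that $2$ is invertible and in the pigeonhole count. We would remark that for $d$ even the argument fails, consistent with the later treatment of even $d$ separately.
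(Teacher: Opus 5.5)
Your proof is correct and is essentially the paper's argument. The paper likewise sets $b_1=a_1^{x_1}a_2^{x_2}$ and $b_2=a_1^{x_2}a_2^{-x_1}$ with $x_1^2+x_2^2\equiv-1 \pmod{p^m}$, which is your matrix $L$ up to replacing $b_2$ by $b_2^{-1}$; the differences are only in the number theory: you find a solution mod $p$ by pigeonhole where the paper counts solutions with Gauss sums ($N=p+1$), you make explicit the Hensel lift the paper calls straightforward, and by applying the CRT directly to $(u,v)$ you avoid the twisted factors $\mathbb{Z}_{d_j}^{(e_j)}$ that a reduction through Lemma~\ref{lemma:factorization} would actually produce.
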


This is consistent with the fact that a single copy of the theory has $c_{-}= \pm 2$ modulo $8$ when $d=3$ modulo 4, and stacking two copies yield $c_{-}=4$ modulo $8$.

The proof follows from an elementary number theory argument.

\begin{proof}
It suffices to consider the cases where $d = p^m$ with $p=3$ mod $4$. 
Let $a_1, a_2$ be elementary anyons for $\mathbb{Z}_d^{(1)} \boxtimes \mathbb{Z}_d^{(1)}$. 
Let $b_1, b_2$ be elementary anyons for $\mathbb{Z}_d^{(-1)} \boxtimes \mathbb{Z}_d^{(-1)}$. 
We show
\begin{align}
\mathcal{A}_d^{(1)} \times \mathcal{A}_d^{(1)} \simeq \mathcal{A}_d^{(-1)} \times \mathcal{A}_d^{(-1)}. 
\end{align}
For this to hold,  $b_1$ and $b_2$ need to be expressed in terms of $a_1$ and $a_2$:
\begin{align}
b_1 = a_1^{x_1} a_2^{x_2}, \qquad b_2 = a_1^{y_1} a_2^{y_2}.
\end{align}
This requires 
\begin{align}
x_1^2 + x_2^2 = -1, \qquad y_1^2 + y_2^2 = -1, \qquad 2(x_1y_1 + x_2 y_2) = 0.
\end{align}
Let us assume that $x_1^2 + x_2^2 = -1$ has a solution. We then choose $y_1=x_2$ and $y_2 = - x_1$, which satisfies all the above equations. Thus, the remaining task is to prove that the following equation has a solution:
\begin{align}
x_1^2 + x_2^2 = -1 \qquad \mathrm{mod \ } p^m.
\end{align}
Here, we provide a proof for $m=1$. Lifting it to $m>1$ is straightforward. Let $N$ be the total number of solutions to this equation. We then have
\begin{align}
N = \frac{1}{p} \sum_{t=0}^{p-1} \sum_{x=0}^{p-1} \sum_{y=0}^{p-1} \omega^{t(x^2 + y^2 +1)} 
=  \sum_{t=0}^{p-1}  \omega^{t} G_{p}^{(t)}G_{p}^{(t)},
\end{align}
For $p=3$ mod $4$, we have 
\begin{align}
G_{p}^{(t)} =
\begin{cases}& \pm i \qquad t\not=0 \\
            & \sqrt{p}  \qquad t=0.
            \end{cases}
\end{align}
Hence, we have 
\begin{align}
N =  p -  \sum_{t\not=0} \omega^{t}  
=  p+1.
\end{align}
This suggests that the equation indeed has solutions. 
\end{proof}

\subsection{Stacking four copies}

Next, we show that four copies of $\mathbb{Z}_{d}^{(1)}$ theories are mirror invariant when $d$ is even. 

\begin{theorem}[Stacking even $d$]
The $\mathbb{Z}_d^{(1)} \times \mathbb{Z}_d^{(1)} \times \mathbb{Z}_d^{(1)} \times \mathbb{Z}_d^{(1)}$ anyon theory is mirror invariant for $d=2^m$. 
\end{theorem}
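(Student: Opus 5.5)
The plan is to build an explicit relabeling $\mathcal{A}_d^{(1)\,\boxtimes 4}\to\mathcal{A}_d^{(-1)\,\boxtimes 4}$ for $d=2^m$, in the same spirit as the two-copy theorem above. Write an anyon of the four-fold stack as $a_1^{x_1}a_2^{x_2}a_3^{x_3}a_4^{x_4}$ with $x=(x_1,\dots,x_4)\in\mathbb{Z}_d^4$. By Eq.~\eqref{def:anyontheory} its spin is $\theta(x)=\omega^{x_1^2+x_2^2+x_3^2+x_4^2}$, and in the conjugate stack the spin is $\omega^{-|x|^2}$. As noted in the proof of Lemma~\ref{lemma:factorization}, braiding is fixed by spins through $B(a,b)=\theta(ab)\theta(a)^{-1}\theta(b)^{-1}$. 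It therefore suffices to find an invertible matrix $M\in GL_4(\mathbb{Z}_d)$ with
\begin{align}
|Mx|^2 \equiv -|x|^2 \pmod{d} \qquad \text{for all } x\in\mathbb{Z}^4 .
\end{align}
Then the identification $b^{x}\sim a^{Mx}$ is the desired isomorphism.

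The key step is to take $M$ to be the matrix of left multiplication by a Lipschitz quaternion $q=q_0+q_1 i+q_2 j+q_3 k$ acting on $\mathbb{Z}^4\cong\mathbb{Z}[i,j,k]$. Multiplicativity of the quaternion norm gives the exact integer identity $|qx|^2=N(q)\,|x|^2$, i.e.\ $L_q^{T}L_q=N(q)\,I$ over $\mathbb{Z}$. I therefore need $q$ with $N(q)=q_0^2+q_1^2+q_2^2+q_3^2\equiv-1\pmod{2^m}$. Lagrange's four-square theorem supplies one: write $2^m-1$ as a sum of four squares. Reducing the integer identity mod $d$ then yields $|L_q x|^2\equiv-|x|^2$.

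Invertibility follows from $\det L_q=N(q)^2$, which is odd and hence a unit mod $2^m$. Explicitly, $L_q^{-1}=-L_q^{T}$ mod $d$, since $L_q^{T}L_q\equiv -I$. I will also remark why four copies are needed. With fewer copies one would need $-1$ to be a sum of one, two, or three squares mod $2^m$. For $m\ge 3$ this fails, because $2^m-1\equiv7\pmod 8$ is not a sum of three squares. For $m=1,2$ smaller stacks may suffice, but the theorem only claims four.

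The main obstacle is the $2$-adic subtlety. Spins live mod $d$ while braiding phases $\omega^{2xy}$ only see $xy$ mod $d/2$, so a naive argument that matches only the symmetric bilinear form could miss the quadratic refinement. Working with an exact integral identity $|qx|^2=N(q)|x|^2$ avoids this, since it controls the quadratic form itself and not just its polarization. I would double-check this point, and also check that $L_q$ sends generators to generators of the correct order, which follows from invertibility.
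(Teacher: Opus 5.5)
Your proposal is correct and is essentially the paper's proof: the columns of your $L_q$ are exactly the paper's vectors $v_\ell$ defined by $q(v_\ell)=q(v)\,u_\ell$ with $u_\ell\in\{1,i,j,k\}$, and your integer identity $L_q^{T}L_q=N(q)\,I$ packages the paper's two checks, $v_\ell\cdot v_\ell\equiv -1 \pmod{2^m}$ and $v_\ell\cdot v_{\ell'}=0$, into one statement. Your invertibility check via $\det L_q=N(q)^2$ being odd makes explicit a point the paper leaves implicit, and the quadratic-refinement issue you flag is handled equally well on the paper's route, since the generator spins together with $2\,v_\ell\cdot v_{\ell'}\equiv 0 \pmod{2^m}$ already fix $\theta$ on every anyon.
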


This is consistent with the fact that a single copy of the theory has chiral central charge  $c_{-}=1$ modulo $8$, and stacking the four copies yields $c_{-}=4$ modulo $8$.

The proof follows from elementary number theory techniques, namely by Lagrange's four-square theorem and quaternions. 

\begin{proof}
Our goal is to show
\begin{align}
\mathcal{A}_d^{(1)} \times \mathcal{A}_d^{(1)} \times \mathcal{A}_d^{(1)} \times \mathcal{A}_d^{(1)}\simeq \mathcal{A}_d^{(-1)} \times \mathcal{A}_d^{(-1)} \times \mathcal{A}_d^{(-1)}\times \mathcal{A}_d^{(-1)}. 
\end{align}
Letting $a_1,a_2,a_3, a_4$ be elementary anyons for $\mathcal{A}_d^{(1)}$, and $b_1,b_2,b_3, b_4$ be elementary anyons for $\mathcal{A}_d^{(-1)}$, we set
\begin{align}
b_1 &= a_1^{x_1} a_2^{y_1} a_3^{z_1} a_4^{w_1} \\
b_2 &= a_1^{x_2} a_2^{y_2} a_3^{z_2} a_4^{w_2} \\
b_3 &= a_1^{x_3} a_2^{y_3} a_3^{z_3} a_4^{w_3} \\
b_4 &= a_1^{x_4} a_2^{y_4} a_3^{z_4} a_4^{w_4}.
\end{align}
To produce the correct topological spin, they need to satisfy 
\begin{align}
v_\ell\cdot v_\ell = x_\ell^2 + y_\ell^2 + z_\ell^2 + w_\ell^2 = -1 \qquad \mathrm{mod \ } 2^m,\label{eq:square}
\end{align}
where $\cdot$ represents the standard inner product and $v_\ell = (x_\ell,y_\ell,z_\ell,w_\ell) $.
Also, $b_\ell$ need to braid trivially with each other, which implies  $ v_\ell \cdot v_{\ell'} =0$ mod $2^{m-1}$.
Here, it is sufficient to impose the stronger condition
\begin{align}
 v_\ell \cdot v_{\ell'} =0\qquad (\ell\neq \ell').\label{eq:square_ortho}
\end{align}

Recall that due to Lagrange's four square theorem, for an arbitrary positive integer $n$, the following equation always has a solution
\begin{align}
x^2 + y^2 + z^2 + w^2 = n. 
\end{align}
Hence, by reducing this to modulo $2^m$, Eq.~\eqref{eq:square} always has a solution.
Let $v =(x,y,z,w)$ be such a solution. Here, it will be convenient to introduce a quaternion representation of the vector:
\begin{align}
q(v) = x + y i + z j + w k
\end{align}
where $1,i,j,k$ are the basis of quaternion:
\begin{align}
i^2 = j^2 = k^2 = -1, \qquad ijk = -1. 
\end{align}
In the vector representation, $1,i,j,k$ correspond to $(1,0,0,0),(0,1,0,0),(0,0,1,0),(0,0,0,1)$. 
Here we construct $v_j$ from the following constraint:
\begin{align}
q(v_1) = q(v) 1, \qquad q(v_2) = q(v) i, \quad q(v_3) = q(v) j, \quad q(v_4) = q(v) k.
\end{align}
That is, we compute $q(v_2) = q(v) i$ by using the quaternion rule, and then convert it back to a vector to construct $v_2$.

Recalling that the norm is multiplicative in the quaternion representation and $1,i,j,k$ have unit norm, we find $v_\ell\cdot v_\ell = v\cdot v = -1 ~(\text{mod~} 2^m)$.
This gives us vectors $v_\ell$ satisfying Eq.~\eqref{eq:square}. 
On the other hand,
\begin{align}
v_\ell\cdot v_{\ell'}
=\operatorname{Re}\left(q(v_\ell)\overline{q(v_{\ell'})}\right)
=\operatorname{Re}\left(q(v) u \overline{q(v)}\right),
\end{align}
where $u\in\{i,j,k\}$ since $\ell\neq \ell'$. 
Note that $\overline{q\,u\,\bar q}=q\,\bar u\,\bar q=-\,q\,u\,\bar q$, so $q\,u\,\bar q$
is purely imaginary and its real part vanishes.
Therefore, $v_\ell\cdot v_{\ell'}\equiv0\pmod{2^m}$, satisfying Eq.~\eqref{eq:square_ortho}.
\end{proof}

\subsection{Examples}

\subsubsection{$\mathbb{Z}_3^{(1)}\times \mathbb{Z}_3^{(1)}$}

The $\mathbb{Z}_3^{(1)}\times \mathbb{Z}_3^{(1)}$ anyon theory can be written as 
\begin{align}
\{ 1, a_1, a_1^2 \} \times \{ 1, a_2, a_2^2 \}. 
\end{align}
Defining
\begin{align}
b_1 \sim a_1 a_2, \qquad b_2 \sim a_1 a_2^{2},
\end{align}
the $\mathbb{Z}_3^{(-1)}\times \mathbb{Z}_3^{(-1)}$ anyon theory can be expressed as
\begin{align}
\{ 1, b_1, b_1^2 \} \times \{ 1, b_2, b_2^2 \}. 
\end{align}
Hence, two theories are isomorphic. 

\subsubsection{$\mathbb{Z}_4^{(1)}\times \mathbb{Z}_4^{(1)}\times \mathbb{Z}_4^{(1)}\times \mathbb{Z}_4^{(1)}$}

The $(\mathbb{Z}_4^{(1)})^{\times 4}$ anyon theory can be written as 
\begin{align}
\{ 1, a_1, a_1^2, a_1^3 \} \times \{ 1, a_2, a_2^2, a_2^3 \} \times
\{ 1, a_3, a_3^2, a_3^3 \} \times \{ 1, a_4, a_4^2, a_4^3 \} . 
\end{align}
Defining 
%\ZL{note that they are not the solutions given by the quaternions, and do not satisfy Eq.\ref{eq:square_ortho} mod $2^m$?}
\begin{align}
b_1 &= a_2 a_3 a_4 \\ 
b_2 &= a_1 a_3 a_4 \\ 
b_3 &= a_1 a_2 a_4 \\ 
b_4 &= a_1 a_2 a_3,
\end{align}
the $(\mathbb{Z}_4^{(-1)})^{\times 4}$ anyon theory can be written as 
\begin{align}
\{ 1, b_1, b_1^2, b_1^3 \} \times \{ 1, b_2, b_2^2, b_2^3 \} \times
\{ 1, b_3, b_3^2, b_3^3 \} \times \{ 1, b_4, b_4^2, b_4^3 \} . 
\end{align}
Indeed, one can check that $\theta(b_j) = \omega^{-1}$ and $b_j$s' commute with each other.\footnote{
This choice does not satisfy the stronger orthogonality condition in Eq.~\eqref{eq:square_ortho} modulo $4$, but it satisfies the weaker condition required for trivial mutual braiding. The choice satisfying the stronger condition, corresponding to the quaternions, is given by
$v_1 = (1,1,1,0)$, $v_2 = (3,1,0,3)$, $v_3 = (3,0,1,1)$,
$v_4 = (0,1,3,1)$.
}

\subsubsection{Three fermion}

% The $(\mathbb{Z}_4^{(1)})^{\times 4}$ anyon theory generates the so-called 3 fermion theory after condensing the following bosons:
The three-fermion theory is obtained from the $(\mathbb{Z}_4^{(1)})^{\times 4}$ anyon theory after condensing the following bosons:
\begin{align}
a_1^2 = a_2^2 = a_3^2 = a_4^2 \underset{\text{condense}}{=} 1 \cr 
b = a_1 a_2 a_3 a_4  \underset{\text{condense}}{=} 1
\end{align} 
%\ZL{(finally, a physics question...) we said above that $a_1^2$ is a transparent boson, do they survive after condensing $b$? } \DL{Actually, I think they survive and we further need to condense transparrent $a_i^2$ anyon to get 3F I guess? I modified the above equation.}
Namely, one obtains the condensed theory
\begin{align}
\{ 1, f_1, f_2, f_3 \}
\end{align}
where fermions are given by
\begin{align}
f_1 = a_1 a_4, \quad f_2 = a_2 a_4, \quad f_3 = a_3 a_4.
\end{align}
% That three fermion theory emerges in four copies of $(\mathbb{Z}_4^{(1)})$ suggests that it is mirror invariant. 
The fact that the three-fermion theory is obtained from condensation in four copies of $\mathbb{Z}_4^{(1)}$ suggests that it is mirror invariant, since four copies of $\mathbb{Z}_4^{(1)}$ is itself mirror invariant.
% \ZL{(not sure whether this question makes sense) should I think 3F as a ``subtheory" of $(\mathbb{Z}_4^{(1)})^{\times 4}$ or a ``quotient theory" of it? which one preserve the nonchiral property (I guess the quotient one)?}\YP{I think you're right, I made some changes here.}

%\mciteSetMidEndSepPunct{}{\ifmciteBstWouldAddEndPunct.\else\fi}{\relax}
\bibliographystyle{JHEP}
\bibliography{references.bib}
%\printbibliography

\end{document}